\pdfoutput=1
\documentclass[11pt,letterpaper]{article}

\usepackage{hyperref}
\hypersetup{bookmarks, breaklinks, urlcolor=blue, colorlinks, citecolor=green!50!black, linkcolor=blue}
\usepackage[letterpaper, left=1in, right=1in, top=0.9in, bottom=0.9in]{geometry}
\usepackage[utf8]{inputenc}
\usepackage[american]{babel}
\usepackage[normalem]{ulem}
\usepackage{amsmath, amssymb, cases, amsthm}
\usepackage{thmtools}
\usepackage[shortlabels]{enumitem}
\usepackage{mdframed}
\usepackage{bbm}
\usepackage{bm}
\usepackage{microtype}
\usepackage{xcolor}
\usepackage{makecell}
\usepackage{mathtools}
\usepackage{float}
\usepackage{modletters}
\usepackage{comment}
\usepackage{multirow}
\usepackage[capitalize,noabbrev]{cleveref}
\usepackage{graphics}

\usepackage{footnotehyper}
\makesavenoteenv{tabular}

\usepackage{pifont}
\newcommand{\cmark}{\ding{51}}%
\newcommand{\xmark}{\ding{55}}%

\declaretheorem[numberwithin=section,refname={Theorem,Theorems},Refname={Theorem,Theorems}]{theorem}

\declaretheorem[numberlike=theorem]{lemma}

\declaretheorem[numberlike=theorem]{corollary}
\declaretheorem[numberlike=theorem,style=definition]{definition}
\declaretheorem[numberlike=theorem]{claim}
\declaretheorem[numberlike=theorem,style=remark]{remark}
\declaretheorem[numberlike=theorem,refname={Fact,Facts},Refname={Fact,Facts},name={Fact}]{fact}

\declaretheorem[numberlike=theorem, refname={Observation,Observations},Refname={Observation,Observations},name={Observation}]{observation}

\def\final{1}  
\def\iflong{\iffalse}
\ifnum\final=0  
\newcommand{\yonggang}[1]{{\color{blue}[{\tiny Yonggang: \bf #1}]\marginpar{*}}}
\newcommand{\danupon}[1]{{\color{red}[{\tiny Danupon: \bf #1}]\marginpar{\color{red}*}}}
\newcommand{\sagnik}[1]{{\color{green!50!black}[{\tiny Sagnik: \bf #1}]\marginpar{\color{green!50!black}*}}}
\newcommand{\todo}[1]{{\color{red}[{\tiny TODO: \bf #1}]\marginpar{\color{red}*}}}
\newcommand{\yuval}[1]{{\bf \color{red!50!black} YUVAL: #1}}
\newcommand{\jan}[1]{{\bf \color{green!50!black} Jan: #1}}
\newcommand{\blikstad}[1]{\textup{\color{magenta} [\textbf{Joakim}: #1]}}
\newcommand{\tawei}[1]{{\color{blue} [{\bf Ta-Wei:} #1]}}
\newcommand{\TODO}[1]{{\color{blue!50!black} [{\bf Todo:} #1]}}
\else 
\newcommand{\yonggang}[1]{}
\newcommand{\danupon}[1]{}
\newcommand{\sagnik}[1]{}
\newcommand{\todo}[1]{}
\newcommand{\yuval}[1]{}
\newcommand{\jan}[1]{}
\newcommand{\blikstad}[1]{}
\newcommand{\tawei}[1]{}
\newcommand{\TODO}[1]{}
\fi  
\usepackage[ruled,vlined,linesnumbered]{algorithm2e}

\bibliographystyle{alpha}

\newcommand{\polylog}{\mathrm{polylog}}
\newcommand{\poly}{\mathrm{poly}}

\newcommand{\set}[2][ ]{\{#2 \ifthenelse{\equal{#1}{ }}{ }{~|~#1}\}}

\newcommand{\ith}{i^{\scriptsize \mbox{{\rm th}}}}

\newcommand{\T}{{\mathcal{T}}}

\SetKwComment{Comment}{/* }{ */}

\newcommand{\rank}{\mathsf{rank}}
\newcommand{\sspan}{\mathsf{span}}

\Crefname{algocf}{Algorithm}{Algorithms}

\usepackage[most]{tcolorbox}

\usepackage{xparse}
\usepackage{lipsum}

\makeatletter
\newcommand\footnoteref[1]{\protected@xdef\@thefnmark{\ref{#1}}\@footnotemark}
\makeatother

\renewcommand{\paragraph}[1]{\medskip\noindent{\bf #1}\xspace}

\newcommand{\ground}{U}

\title{Fast Algorithms via Dynamic-Oracle Matroids}

\author{Joakim Blikstad\thanks{KTH Royal Institute of Technology, Stockholm, Sweden, \texttt{blikstad@kth.se}. Supported by the Swedish Research Council (Reg. No. 2019-05622). Work done while at the Max Planck Institute for Informatics, Saarbr\"{u}cken, Germany.} \and Sagnik Mukhopadhyay\thanks{University of Sheffield, United Kingdom, \texttt{s.mukhopadhyay@sheffield.ac.uk}. The second-named author maintains that his role has been limited to the lower bound part.} \and Danupon Nanongkai\thanks{Max Planck Institute for Informatics, Saarbr\"{u}cken, Germany \& KTH, \texttt{danupon@gmail.com}. Partially supported by the Swedish Research Council (Reg. No. 2019-05622). } \and Ta-Wei Tu\thanks{Max Planck Institute for Informatics, Saarbr\"{u}cken, Germany, \texttt{tu.da.wei@gmail.com}.} }
\date{}

\begin{document}
	
	\begin{titlepage}
		\maketitle \pagenumbering{roman}
		
		\begin{abstract}

We initiate the study of matroid problems in a new oracle model called {\em dynamic oracle}. Our algorithms in this model lead to new bounds for some classic problems, and a ``unified'' algorithm whose performance matches previous results developed in various papers for various problems. We also show a lower bound that answers some open problems from a few decades ago. Concretely, our results are as follows.

\paragraph{Improved algorithms for matroid union and disjoint spanning trees.} We show an algorithm with $\tO_k(n+r\sqrt{r})$ dynamic-rank-query and time complexities for the matroid union problem over $k$ matroids, where $n$ is the input size,  $r$ is the output size, and
$\tO_k$ hides $\poly(k, \log(n))$. This implies the following consequences.
{\bf (i)} An improvement over the $\tO_k(n\sqrt{r})$ bound implied by [Chakrabarty-Lee-Sidford-Singla-Wong FOCS'19] for matroid union in the traditional rank-query model.
{\bf (ii)} An $\tO_k\left(|E|+|V|\sqrt{|V|}\right)$-time algorithm for the $k$-disjoint spanning tree problem. This is nearly linear for moderately dense input graphs and improves the 
$\tO_k\left(|V|\sqrt{|E|}\right)$ bounds of Gabow-Westermann [STOC'88] and Gabow [STOC'91]. 
Consequently, this gives improved bounds for, e.g., Shannon Switching Game and Graph Irreducibility.

\paragraph{Matroid intersection.} We show a matroid intersection algorithm with $\tO(n\sqrt{r})$ dynamic-rank-query and time complexities. This implies new bounds for some problems (e.g. maximum forest with deadlines) and bounds that match the classic ones obtained in various papers for various problems, e.g.  
colorful spanning tree [Gabow-Stallmann ICALP'85], 
graphic matroid intersection [Gabow-Xu FOCS'89], 
simple job scheduling matroid intersection [Xu-Gabow ISAAC'94],
and Hopcroft-Karp combinatorial bipartite matching.
More importantly, this is done via a ``unified'' algorithm in the sense that an improvement over our dynamic-rank-query algorithm would 
imply improved bounds for {\em all} the above problems simultaneously.

\paragraph{Lower bounds.} We show simple super-linear ($\Omega(n\log n)$) query lower bounds for matroid intersection and union problems in our dynamic-rank-oracle and the traditional independence-query models; the latter improves the previous $\log_2(3)n - o(n)$ bound by Harvey [SODA'08] and answers an open problem raised by, e.g., Welsh [1976] and CLSSW [FOCS'19].
\end{abstract}

		\setcounter{tocdepth}{3}
		\newpage
		\tableofcontents
		\newpage
	\end{titlepage}
	
	\newpage
	\pagenumbering{arabic}

\section{Introduction} \label{sec:intro}

Via reductions to the max-flow and min-cost flow problems, exciting progress has been recently made for many graph problems such as maximum matching, vertex connectivity, directed cut, and Gomory-Hu trees \cite{Madry13,LeeS14,Madry16,
BrandLNPSSSW20,KathuriaLS20,LiP20,
AbboudKT21stoc,BrandLLSS0W21-maxflow,LiNPSY21,
AbboudKT21focs,AxiotisMV21,Cen0NPSQ21,0006PS21,GaoLP21,
AbboudKT22,CenLP22,
BrandGJLLPS22,
AbboudK0PST22,ChenKLPGS22,
CenHLP23}. 
However, many basic problems still witness no progress since a few decades ago. These problems include $k$-disjoint spanning trees \cite{gabow1988forests,Gabow91}, colorful spanning tree \cite{GabowS85}, arboricity \cite{Gabow95}, spanning tree packing \cite{gabow1988forests}, graphic matroid intersection \cite{GabowS85,GabowX89}, and simple job scheduling matroid intersection \cite{XuG94}. 
For example, in the {\em $k$-disjoint spanning trees problem} \cite[Chapter 51]{schrijver2003}, 
we want to find $k$ edge-disjoint spanning trees in a given input graph $G=(V,E)$. When $k=1$, this is the spanning tree problem and can be solved in linear time. 
For higher values of $k$, the best runtime remains 
$\tO(k^{3/2}|V|\sqrt{|E|})$-time algorithm from around 1990 \cite{gabow1988forests}\footnote{The stated bound was due to Gabow and Westermann \cite{gabow1988forests}, which was usually referred to as the state of the art (e.g.\ in \cite{schrijver2003,BlumenstockF20,Quanrud23,HarrisSV21}). Note that Gabow \cite{Gabow91} announced an improved bound of $O(k|V|\sqrt{|E|+k|V|\log(|V|)})$ but this bound was later removed from the journal version of the paper. After our paper was accepted to STOC'23, we are aware of the paper by Karger \cite{Karger98} that also studies this problem. The paper claims the runtime of $\tO(|E|+k^{5/2}|V|^3)$ (via matroid union), but it seems that the technique in the paper may imply $\tO(|E|+\poly(k)|V|^{3/2})$ runtime when combined with \cite{gabow1988forests}. We discuss a relevant concurrent result \cite{Quanrud23} later in this section.}, which is also the best runtime for its applications such as Shannon Switching Game \cite{Gardner61} and Graph $k$-Irreducibility \cite{Whiteley88rigidity,graver1993combinatorial}. 
No better runtime was known even for the special case of $k=2$.

{\em Can we improve the 
bounds of $k$-disjoint spanning trees and other problems?}
More importantly, since it is very unclear if these problems can be reduced to max-flow or min-cost flow\footnote{For example, the best-known number of max-flow calls to decide whether there are $k$ disjoint spanning trees and to find the $k$ spanning trees are $O(n)$ and $O(n^2)$ respectively.}, 
{\em is there an alternative approach to designing fast algorithms for many problems simultaneously?}
Fortunately, many of the above problems can be modeled as {\em matroid problems}, giving hope that solving matroid problems would solve many of these problems in one shot. Unfortunately, this is not true in the traditional model for matroid problems---even the most efficient algorithm possible for a matroid problem does not 
necessarily give a faster algorithm for any of its special cases. 
We discuss this more below.

\paragraph{Matroid Problems.} A matroid $\cM$ is a pair $(U, \cI)$ where $U$ is a finite set (called the ground set) and $\cI$ is a family of subsets of $U$ (called the independent sets) satisfying some constraints (see \cref{def:matroid}; these constraints are not important in the following discussion).
Since $\cI$ can be very large, problems on matroid $\cM$ are usually modeled with {\em oracles} that answer {\em queries}. Given a set $S\subseteq U$, {\em independence queries} ask if $S\in \cI$ and {\em rank queries} ask for the value of $\max_{I\in \cI, I\subseteq S} |I|.$
Two textbook examples of matroid problems are {\em matroid intersection} and {\em union}\footnote{\emph{Matroid union} is also sometimes called \emph{matroid sum}.} (e.g., \cite[Chapters~41-42]{schrijver2003}).  
We will also consider the special case of matroid union called {\em $k$-fold matroid union}.

\begin{definition}[Matroid intersection and ($k$-fold) matroid union]\label{def:intro:matroid intersection union}
(I) In matroid intersection, we are given two matroids $(U, \cI_1)$ and $ (U, \cI_2)$ and want to find a set of maximum size in $\cI_1\cap \cI_2.$ 
(II) In matroid union, we are given $k$ matroids $(U_1, \cI_1), (U_2, \cI_2), \ldots, (U_k, \cI_k)$, and want to find the set $S_1\cup S_2 \cup \cdots \cup S_k$, where $S_i \in \cI_i$ for every $i$, of maximum size. 
(III) Matroid union in the special case where $U_1=U_2=\cdots = U_k$ and $\cI_1=\cI_2=\cdots = \cI_k$ is called  $k$-fold matroid union.  

{\em Notations:} Throughout, for problems over matroids  $(U_1, \cI_1), (U_2, \cI_2), \ldots, (U_k, \cI_k)$, we define  $n:=\max_i{|U_i|}$ and $r:=\max_{i}\max_{S \in \cI_i}|S|$. \qed
\end{definition}

Matroid problems are powerful abstractions that can model many fundamental problems. 
For example, the $2$-disjoint spanning tree problem can be modeled as a $2$-fold matroid union problem: 
\begin{tcolorbox}[breakable,boxrule=0pt,frame hidden,sharp corners,enhanced,borderline west={.5 pt}{0pt}{red},colback=white]
\vspace{-.2cm}
Given a graph $G=(V, E)$, let $\cM=(U, \cI)$ be the corresponding {\em graphic matroids}, i.e. $U=E$ and $S\subseteq E$ is in $\cI$ if it is a forest in $G$. (It is a standard fact that such an $\cM$ is a matroid.) 
The $2$-fold matroid union problem with input $\cM$ is a problem of finding two forests $F_1\subseteq E$ and $F_2\subseteq E$ in $G$ that maximizes $|F_1\cup F_2|$. This is known as the {\em $2$-forest} problem which clearly generalizes $2$-disjoint spanning tree  (a $2$-forest algorithm will return two disjoint spanning trees in $G$ if they exist). 
\end{tcolorbox}
Observe that this argument can be generalized to modeling the $k$-disjoint spanning trees problem by $k$-fold matroid union.  
Other problems that can be modeled as matroid union (respectively, matroid intersection) include arboricity,  spanning tree packing, $k$-pseudoforest, and mixed $k$-forest-pseudoforest (respectively, bipartite matching and colorful spanning tree).

The above fact makes matroid problems a {\em unified} approach for showing that many problems, including those mentioned above, can be solved in polynomial time. This is because (i) the matroid union, intersection, and other problems can be solved in polynomial time and rank/independence queries, and (ii) for most problems  queries can be answered in polynomial time. For example, when we model $k$-disjoint spanning trees as $k$-fold graphic matroid union like above, the corresponding rank query is: given a set $S$ of edges, find the size of a spanning forest of $S$. This can be solved in $O(|S|)$ time. 

When it comes to more fine-grained time complexities, such as nearly linear and sub-quadratic time, matroid algorithms in the above model are not very helpful. This is because simulating a matroid algorithm in this model causes too much runtime blow-up. 
For example, even if we can solve $k$-fold matroid union over $(U, \cI)$ in {\em linear} ($O(|U|)$) rank query complexity, it does not necessarily imply that we can solve its special case of $2$-disjoint spanning tree any faster.
This is because each query about a set $S$ of edges needs at least $O(|S|)$ time even to specify $S$, which can be as large as the number of edges in the input graph.
In other words, even a matroid union algorithm with linear complexities may only imply $O(|E|^2)$ time for solving 2-disjoint spanning trees on graphs $G=(V,E)$. 
This is also the case for other problems that can be modeled as matroid union and intersection. 
Because of this, previous works obtained improved bounds by simulating an algorithm for matroid problems and coming up with clever ideas to speed up the simulation for each of these problems one by one (e.g., \cite{GabowT79,RoskindT85,GabowS85,gabow1988forests,FredericksonS89,GabowX89,Gabow91,XuG94}).
It cannot be guaranteed that recent and future improved algorithms for matroid problems (e.g., \cite{chakrabarty2019faster,quadratic2021,blikstad21}) would imply improved bounds for any of these problems.

\paragraph{Dynamic Oracle.}
The main conceptual contribution of this paper is an introduction of a new matroid model called {\em dynamic oracle}
and an observation that, using dynamic algorithms, 
solving a matroid problem efficiently in our model immediately implies efficient algorithms for many problems it can model. 
In contrast to traditional matroids where a query can be made with an arbitrary set $S$, our model only allows queries made by slightly modifying the previous queries.\footnote{The ``cost'' of a query in our dynamic model is the distance (size of the symmetric difference) from some (not necessarily the last) previous query.} More precisely, the dynamic-rank-oracle model, which is the focus of this work, is defined as follows.\footnote{One can also define the dynamic-independence-oracle model where $\textsc{Query}(i)$ returns only the independence of $S_i$.}

\begin{definition}[Dynamic-rank-oracle model]
\label{def:dyn-oracle}
  For a matroid $\cM = (\ground, \cI)$, starting from $S_0 = \emptyset$ and $k = 0$, the algorithm can access the oracle via the following three operations.
  \begin{itemize}[noitemsep]
    \item $\textsc{Insert}(v, i)$: Create a new set $S_{k + 1} := S_i \cup \{v\}$ and increment $k$ by one.
    \item $\textsc{Delete}(v, i)$: Create a new set $S_{k + 1} := S_i \setminus \{v\}$ and increment $k$ by one.
    \item $\textsc{Query}(i)$: Return the rank of $S_i$, i.e., the size of the largest independent subset of $S_i$.
  \end{itemize}
 We say that a matroid algorithm takes $t$ time and dynamic-rank-query complexities if its time complexity and required number of operations are both at most $t$. \qed
 \end{definition}

We emphasize that a query can be obtained from \emph{any} previous query, not just the last one.

\begin{table*}[ht]
    \centering

{\small
    \caption{Examples of implications of dynamic-rank-oracle matroid algorithms. The complexities in the first column are in terms of time and dynamic-rank-query complexities. Notations $n$, $r$, and $k$ are as in \Cref{def:intro:matroid intersection union}.
    In the second column, the $\polylog{n}$ factors are hidden in $\tT$ and
    subpolynomial factors are hidden in $\hat{T}.$ Details are in 
    \cref{sec:packing,appendix:applications}.
    }
    \begin{tabular}{c|l}
        {\bf matroid problems} & 
        \begin{tabular}{l}
        {\bf special cases}
        \end{tabular}
        \\\hline
        \begin{tabular}{c}
        $k$-fold matroid union in $T(n, r, k)$ \\
        \end{tabular} & \begin{tabular}{l}
             \\
             $k$-forest in $\tT(|E|, |V|, k))$ \\
             $k$-pseudoforest in $\tT(|E|, |V|, k))$ \\
             $k$-disjoint spanning tree in $\tT(|E|, |V|, k)$ (randomized) \\
             \phantom{$k$-disjoint spanning tree in} $\hat{T}(|E|, |V|, k)$ (deterministic) \\
             arboricity in $\tT(|E|, |V|, \sqrt{|E|}))$ \\
             tree packing in $\tT(|E|, |V|, |E|/|V|))$ \\
             Shannon Switching Game in $\tT(|E|, |V|, 2))$ \\
             graph $k$-irreducibility in $\tT(|E|, |V|, k))$ \\
             \\
        \end{tabular}\\\hline
        \begin{tabular}{c}
          matroid union in $T(n, r, k)$ \\
        \end{tabular}& \begin{tabular}{l}
          \\
          $(f, p)$-mixed forest-pseudoforest in $\tT(|E|, |V|, f + p))$ \\
          \\
        \end{tabular}\\\hline
        \begin{tabular}{c}
          matroid intersection in $T(n, r)$ \\
        \end{tabular} & \begin{tabular}{l}
          \\
          bipartite matching in $\tT(|E|, |V|)$ \\
          colorful spanning tree in $\tT(|E|, |V|)$\\
          graphic matroid intersection in $\tT(|E|, |V|)$ \\
          simple job scheduling matroid intersection in $\tT(n, r)$ \\
          convex transversal matroid intersection in $\tT(|V|, \mu)$ \\
          \\
        \end{tabular}\\\hline
    \end{tabular}
}
    \label{tab:intro:dyn-oracle-implies-fast-algorithms}
\end{table*}

\begin{observation}[Details in \cref{sec:packing,appendix:applications}]
\label{thm:intro:dyn-oracle implies fast alg}
Algorithms for the $k$-fold matroid union, matroid union, and matroid intersection problems imply algorithms for a number of problems with time complexities shown in \Cref{tab:intro:dyn-oracle-implies-fast-algorithms}.
\end{observation}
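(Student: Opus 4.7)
The plan is to establish, for every row of \Cref{tab:intro:dyn-oracle-implies-fast-algorithms}, two independent ingredients: (a) a reduction of the concrete problem to the corresponding matroid problem, and (b) an efficient simulation of the dynamic-rank-oracle of \Cref{def:dyn-oracle} for the matroid in question, with $\polylog(n)$ amortized overhead per operation. Composing (a) and (b) with the hypothesized matroid algorithm of complexity $T(n,r,k)$ immediately yields the bounds in the table.

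For (a), each row is a folklore matroid modeling. The graph-theoretic rows use the graphic matroid (ground set $E$, independent sets equal to forests) and the pseudoforest matroid: $k$-forest, $k$-disjoint spanning tree, tree packing, arboricity, Shannon switching, and graph $k$-irreducibility are all $k$-fold unions (for suitable $k$) of graphic matroids; $(f,p)$-mixed forest-pseudoforest is a union of $f$ graphic and $p$ pseudoforest matroids; colorful spanning tree and graphic matroid intersection are intersections of the graphic matroid with a partition matroid; bipartite matching is the intersection of two partition matroids; and the convex-transversal and job-scheduling rows are intersections involving the corresponding transversal matroid. See \cite[Chapters~41--42]{schrijver2003} for each modeling.

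For (b), I would maintain online a \emph{query tree} $\Qcal$ whose root represents $S_0=\emptyset$ and whose every non-root node representing $S_{k+1}$ is attached as a child of the node representing $S_i$, where $i$ is the parent index passed by the matroid algorithm. The simulation reduces to maintaining a fully persistent, fully-dynamic rank data structure indexed by nodes of $\Qcal$. For the graphic matroid, the rank of an edge set $S$ equals $|V|-\cc(V,S)$, so I plug in a persistent fully-dynamic connectivity structure with $\polylog(n)$ amortized cost per operation; pseudoforest matroids add a per-component ``has cycle'' flag; partition matroids reduce to counters; and the convex-transversal and simple-scheduling matroids are handled by standard dynamic bipartite-matching-on-convex-graphs structures, again at $\polylog$ amortized cost per operation. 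The $\hat{T}$ entry for deterministic $k$-disjoint spanning trees reflects the best known deterministic dynamic connectivity being subpolynomial rather than polylogarithmic.

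The main obstacle is the persistence requirement, because the matroid algorithm may branch off \emph{any} previous $S_i$ rather than only the latest one; a naive ``current-state'' structure with a walk in $\Qcal$ could blow up to $\Omega(T^2)$ operations in the worst case. I would address this by either (i) using fully persistent balanced-BST representations of the internal spanning-forest hierarchy inside the dynamic connectivity structure, which adds only a $\polylog$ factor per operation, or (ii) performing an online Euler-tour walk on $\Qcal$ with a link-cut tree to locate the current node and applying inverse Insert/Delete operations on the ambient nonpersistent structure. Approach (ii) is cleaner whenever the concrete $k$-fold union and intersection algorithms developed later in the paper traverse $\Qcal$ in an Euler-tour-like fashion of total length $O(T)$; approach (i) gives the bound unconditionally. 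Either way, the overall simulation cost is $T\cdot\polylog(n)$, justifying the $\tT$ notation in \Cref{tab:intro:dyn-oracle-implies-fast-algorithms}.
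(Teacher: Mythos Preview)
Your proposal is essentially correct and follows the same high-level plan as the paper: model each concrete problem as a matroid problem, then simulate the dynamic-rank oracle by a fully-persistent dynamic data structure for rank (dynamic connectivity for graphic matroids, counters for partition matroids, the Brodal--Georgiadis--Hansen--Katriel structure for convex transversal matroids, etc.).

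The one place where the paper is cleaner than your writeup is the persistence step. You propose either (i) hand-rolling persistent versions of the internal spanning-forest hierarchy, or (ii) an Euler-tour walk on the query tree $\Qcal$ that is conditional on the matroid algorithm's access pattern. The paper instead observes that the Driscoll--Sarnak--Sleator--Tarjan / Dietz machinery converts \emph{any} pointer-machine data structure with \emph{worst-case} update time into a fully-persistent one with only an $O(\log n)$ overhead per operation, completely in a black-box manner. This is why the paper insists on worst-case (not amortized) bounds for the underlying dynamic structures---your phrasing ``$\polylog(n)$ amortized cost per operation'' would not suffice for the black-box conversion, since one could branch repeatedly off a state just before an expensive amortized step. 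Your approach (ii) sidesteps this but is not unconditional; the paper's approach (i) via DSST is unconditional and avoids opening the connectivity data structure. One further subtlety the paper addresses and you do not: the randomized connectivity structures of Kapron--King--Mountjoy only work against an \emph{oblivious} adversary, but this is fine here because the matroid algorithm only ever sees the rank (a deterministic function of the queried set), not any internal randomness of the data structure.
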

\begin{proof}[Proof Idea]
As an example, we sketch the proof that if $k$-fold matroid union can be solved in $T(n, r, k)$ then $k$-disjoint spanning trees can be found in  $T(|E|, |V|, k)\cdot\polylog(|V|)$ time. 
Recall that in the traditional rank-oracle model, the algorithm can ask an oracle for the size of a spanning forest in an arbitrary set of edges $S$, causing $O(|S|)$ time to simulate. 
In our dynamic-rank-oracle model, an algorithm needs to modify some set $S_i$ to the desired set $S$ using the {\sc Insert} and {\sc Delete} operations before asking for the size of a spanning forest in $S$. We can use a spanning forest data structure to keep track of the size of the spanning forest under edge insertions and deletions. This takes $\polylog(|V|)$ time per operation \cite{KapronKM13,GibbKKT15}.\footnote{The dynamic spanning forest algorithms of \cite{KapronKM13,GibbKKT15} are randomized and assume the so-called oblivious adversary (as opposed to, e.g.,   \cite{NanongkaiS17,Wulff-Nilsen17} which work against adaptive adversaries). This is not a problem because we only need to report the size of the spanning forest and not an actual forest. We can also use a deterministic algorithm from \cite{ChuzhoyGLNPS20,NanongkaiSW17} which requires $|V|^{o(1)}$ time per operation.}
So, if $k$-fold matroid union can be solved in $T(n, r, k)$ time and dynamic rank queries, then $k$-disjoint spanning trees can be solved in  $\tO(T(n, r, k))=\tO(T(|E|, |V|, k))$ time, where the equality is because the ground set size is the number of edges ($|U|=|E|$) and the rank $r$ is equal to the size of a spanning forest (thus at most $|V|$).\footnote{Note that we also need a fully-persistent data structure \cite{DriscollSST86,Dietz89} to maintain the whole change history in our argument.}
\end{proof}

Observe that designing efficient algorithms in our dynamic-oracle model is {\em not} easier than in the traditional model: a dynamic-oracle matroid algorithm can be simulated in the traditional model within the same time and query complexities. Naturally, the first challenge of the new model is this question: {\em Can we get matroid algorithms in the new model whose performances match the state-of-the-art algorithms in the traditional model?}
Moreover, for the new model to provide a unified approach to solve many problems simultaneously, one can further ask: {\em Would these new matroid algorithms imply state-of-the-art bounds for many problems?}

\paragraph{Algorithms.}  
In this paper, we provide algorithms in the new model whose complexities not only match those in the traditional model but sometimes even improve them.  These lead to new bounds for some problems and, for other problems, a unified algorithm whose performance matches previous results developed in various papers for various problems.

More precisely, the best time and rank-query complexities for matroid intersection on input $(U, \cI_1)$ and $(U, \cI_2)$ were $\tO(n\sqrt{r})$ by Chakrabarty, Lee, Sidford, Singla, and Wong \cite{chakrabarty2019faster} (improving the previous $\tO(nr)$ bound based on Cunningham's classic algorithm \cite{cunningham1986improved,LeeSW15,nguyen2019note}).
Due to a known reduction, this implies $\tO(k^2\sqrt{k}n\sqrt{r})$ bound for $k$-fold matroid union and matroid union. 
In this paper, we present algorithms in the dynamic-oracle model that imply improved bounds in the traditional model for $k$-fold matroid union and matroid union and match the bounds for matroid intersection.

Here, we only state our dynamic-rank-query complexities as they are the main focus of this paper, and for all the applications we have, answering (and maintaining dynamically) independence queries does not seem to be significantly easier.
Note that we also obtain dynamic-independence-query algorithms that match the state-of-the-art traditional ones~\cite{blikstad21} which we defer to
\cref{appendix:dynamic-matroid-intersection-ind}.
\begin{theorem}  \label{thm:main}
(I)  $k$-fold matroid union over input $(\ground, \cI)$ can be solved in $\tO(n + kr\sqrt{\min(n, kr)} + k \min(n, kr))$ time and dynamic rank queries.
(II) Matroid union over input $(\ground_1, \cI_1), (\ground_2, \cI_2), \ldots, (\ground_k, \cI_k)$ can be solved in $\tO\left(\left(n + r\sqrt{r}\right) \cdot \poly(k)\right)$ time and dynamic rank queries.
(III) Matroid intersection over input $(\ground, \cI_1)$ and $(\ground, \cI_2)$ can be solved in $\tO(n\sqrt{r})$ time and dynamic rank queries.
\end{theorem}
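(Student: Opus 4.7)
The plan is to adapt the augmenting-path / blocking-flow framework behind the $\tO(n\sqrt{r})$ matroid intersection algorithm of Chakrabarty--Lee--Sidford--Singla--Wong to the dynamic-rank-oracle model for part~(III), and then derive parts~(I) and~(II) by applying the classical reduction from matroid union to matroid intersection together with a problem-specific greedy preprocessing.

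For part~(III), I maintain a common independent set $S \in \cI_1 \cap \cI_2$ and augment along shortest augmenting paths in the exchange graph $G(S)$. In Hopcroft--Karp fashion, I run $O(\sqrt{r})$ phases: each phase builds the BFS layering of $G(S)$ and extracts a maximal vertex-disjoint family of shortest augmenting paths via a blocking-flow DFS; once the shortest augmenting length exceeds $\sqrt{r}$, I finish with at most $O(\sqrt{r})$ further single augmentations. The BFS and DFS rely on the CLSSW binary-search primitive, which locates a useful exchange-graph neighbour out of a given vertex using $O(\log n)$ rank queries on sets of the form $S \triangle X$ with $|X|=O(1)$. The static rank-query count is $\tO(n\sqrt{r})$; the new ingredient is to order these queries so that consecutive queries differ in only $O(1)$ elements, so that the dynamic oracle charges only $\polylog(n)$ \textsc{Insert}/\textsc{Delete} operations per logical query. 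I achieve this by keeping one ``anchor'' working set per BFS layer, phrasing each binary search as a telescoping chain of single-element edits off the appropriate anchor, and making the anchors fully persistent so that a DFS backtrack can revert to them cheaply.

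For parts~(I) and~(II), I use the standard reduction of matroid union to intersection between the direct sum $\cI_1 \oplus \cdots \oplus \cI_k$ on $\bigsqcup_i \ground_i$ and a partition matroid that keeps at most one copy of each original element; simulating one rank query on the direct sum costs $k$ dynamic rank queries on the base matroids, which accounts for the $\poly(k)$ loss in~(II). A black-box call to part~(III) gives only $\tO(kn\sqrt{kr})$, so to obtain the $n + r\sqrt{r}$ dependence I first spend $\tO(n)$ queries on a greedy sweep that, for each $u \in \ground$ in turn, assigns $u$ to some matroid that currently accepts it; this produces a starting union $S_0$ of size $r_0 \le kr$. Any subsequent augmenting path only reroutes elements inside the support of $S_0$ together with the boundary elements reachable from it by one exchange, of total size $O(kr)$; running the intersection algorithm from~(III) on this reduced instance yields~(II). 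For~(I), the $k$-fold structure allows the exchange graph to be compressed to one vertex per original element with $k+1$ possible ``slot'' states and $k$ incident edges, so the exchange graph has $\min(n, kr)$ active vertices; the Hopcroft--Karp analysis then yields the three-term bound in the theorem statement.

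The step I expect to be the main obstacle is the ordering discipline underlying part~(III): naively, a single DFS step in a blocking flow could demand a rank query on a working set that differs by $\Omega(r)$ elements from the last one queried, which would inflate the simulation overhead to $\tO(nr)$ and destroy the bound. Avoiding this requires a persistent-data-structure implementation in which each exchange-graph edge traversed by the DFS corresponds to a single \textsc{Insert} or \textsc{Delete}, and each binary search is issued monotonically along a root-to-leaf path of a persistent decision tree pointing into the anchor set of its layer. Once this simulation overhead is bounded by $\polylog(n)$ per logical rank query, the stated bounds follow by multiplying the static query counts by this overhead.
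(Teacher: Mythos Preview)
Your plan for~(III) has the right shape but a real gap. The ``persistent anchor per layer plus decision tree'' idea is essentially the paper's binary-search-tree data structure; the obstacle, however, is not DFS backtracking but the fact that $S$ itself changes after every augmentation within a blocking-flow phase. Persistence lets you revisit an old version of an anchor, but what you need is the \emph{new} version: after augmenting along a path of length $d$, every node of every layer's tree stores a set of the form $S + X_v$ that is now stale by $d$ elements, and you do not say how these are refreshed. The paper handles this with two ingredients you are missing: (i)~a batch-and-periodically-rebuild strategy with parameter $\beta$, so that each tree query pays an additive $\tO(\beta)$ and the trees are rebuilt from scratch only every $\beta$ accumulated symmetric-difference updates; and, crucially, (ii)~the augmenting-sets structural lemma showing that when searching for edges out of layer~$\ell$ one need only reflect the two changes $\{a_{\ell-1},a_\ell\}$ from the last augmentation, not the full $\Delta S$. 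Point~(ii) is what brings the bound from $\tO(nr^{2/3})$ (which is all that batching alone yields) down to $\tO(n\sqrt{r})$ via a harmonic sum over phase lengths. Your claim of a uniform $\polylog(n)$ overhead per logical rank query is not substantiated and, as far as one can tell, is not achievable by the scheme you describe.

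For~(I) and~(II), the greedy-sweep-then-restrict plan is incorrect. After the greedy pass, the first BFS layer $L_1$ of the exchange graph in the intersection instance consists of all copies $(u,i)$ with $u\notin S$, and this set still has size $\Theta(k(n-|S|))$, which is $\Omega(n)$ whenever $n\gg kr$. Every augmenting path must start in $L_1$, so you cannot discard these elements or ``reduce the instance'' to size $O(kr)$; your assertion that the boundary reachable by one exchange has size $O(kr)$ is false in general. The paper's route is entirely different: it observes that only a \emph{basis} of $L_1$ (of size at most $r$) suffices to witness all first-step exchanges, and it builds a separate data structure---a matroid analogue of Frederickson's dynamic MST combined with Eppstein--Galil--Italiano--Nissenzweig sparsification---to maintain such a basis under deletions in $\tO(\sqrt{r})$ dynamic rank queries per update. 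The blocking-flow computation then runs on an exchange graph of size $O_k(r)$ rather than $\Theta(n)$, which is what produces the $n + r\sqrt{r}$ shape. Your compression claim for~(I) (``$\min(n,kr)$ active vertices'') has the same flaw: without the dynamic-basis sparsification of $L_1$, the union exchange graph is not small.
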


\definecolor{applegreen}{rgb}{0.55, 0.71, 0.0}

\begin{savenotes}
\begin{table*}[ht]
    \centering
    \small
    \caption{Implications of our matroid algorithms stated in \cref{thm:main} in comparison with previous results. Results marked with a {\color{green!50!black}\cmark} improve over the previous ones. Results marked with a {\color{red}\xmark} are worse than the best time bounds. Other results match the currently best-known algorithms up to poly-logarithmic factors. Details can be found in 
    \cref{appendix:applications}.
    }
    \begin{tabular}{c|l|l}
      {\bf problems} & {\bf our bounds} & {\bf state-of-the-art results} \\\hline
      {\tiny (Via $k$-fold matroid union)}& & \\
      $k$-forest\footnote{For $k$-forest and its related graph problems in the table, we can assume that $k \leq |V|$, and thus the $k^2r$ (where $r = \Theta(|V|)$) term in \cref{thm:main} is dominated by the $(kr)^{3/2}$ term.} & $\tO(|E| + (k|V|)^{3/2})$ {\color{green!50!black}\cmark} & $\tO(k^{3/2}|V|\sqrt{|E|})$ \cite{gabow1988forests} \\
      $k$-pseudoforest & $\tO(|E| + (k|V|)^{3/2})$ {\color{red}\xmark} & $|E|^{1 + o(1)}$ \cite{ChenKLPGS22} \\
      $k$-disjoint spanning trees & $\tO(|E| + (k|V|)^{3/2})$ {\color{green!50!black}\cmark} & $\tO(k^{3/2}|V|\sqrt{|E|})$ \cite{gabow1988forests} \\
      arboricity\footnote{Here we use the bound that $\alpha \leq \sqrt{E}$~\cite{Gabow95}.} & $\tO(|E||V|)$ {\color{red}\xmark} & $\tO(|E|^{3/2})$ \cite{Gabow95} \\
      tree packing & $\tO(|E|^{3/2})$ & $\tO(|E|^{3/2})$ \cite{gabow1988forests} \\
      Shannon Switching Game & $\tO(|E| + |V|^{3/2})$ {\color{green!50!black}\cmark} & $\tO(|V|\sqrt{|E|})$ \cite{gabow1988forests} \\
      graph $k$-irreducibility & $\tO(|E| + (k|V|)^{3/2} + k^2|V|)$ {\color{green!50!black}\cmark} & $\tO(k^{3/2}|V|\sqrt{|E|})$ \cite{gabow1988forests} \\ & & \\ \hline
      {\tiny (Via matroid union)} & & \\
      $(f, p)$-mixed forest-pseudoforest & $\tO_{f,p}(|E| + |V|\sqrt{|V|})$ {\color{green!50!black}\cmark} & $\tO((f + p)|V|\sqrt{f|E|})$ \cite{gabow1988forests} \\ & & \\\hline
      {\tiny (Via matroid intersection)} & & \\
      bipartite matching (combinatorial\footnoteref{foot:combinatorial}) & $\tO(|E|\sqrt{|V|})$ & $O(|E|\sqrt{|V|})$ \cite{HopcroftK73} \\
      bipartite matching (continuous) & $\tO(|E|\sqrt{|V|})$ {\color{red}\xmark} & $|E|^{1 + o(1)}$ \cite{ChenKLPGS22} \\
      graphic matroid intersection & $\tO(|E|\sqrt{|V|})$ & $\tO(|E|\sqrt{|V|})$ \cite{GabowX89} \\
      simple job scheduling matroid intersection & $\tO(n\sqrt{r})$ & $\tO(n\sqrt{r})$ \cite{XuG94} \\
      convex transversal matroid~\cite{edmonds1965transversals} intersection & $\tO(|V|\sqrt{\mu})$ & $\tO(|V|\sqrt{\mu})$ \cite{XuG94} \\
      linear matroid intersection\footnote{Our bound is with respect to the current value of $\omega < 2.37286$~\cite{AlmanW21}. If $\omega = 2$, then our bound becomes $\tO(n^{2.5}\sqrt{r})$.} & $\tO(n^{2.529}\sqrt{r})$ {\color{red}\xmark} & $\tO(nr^{\omega - 1})$ \cite{Harvey09} \\
      colorful spanning tree & $\tO(|E|\sqrt{|V|})$ & $\tO(|E|\sqrt{|V|})$ \cite{GabowS85} \\
      maximum forest with deadlines & $\tO(|E|\sqrt{|V|})$ {\color{green!50!black}\cmark} & (no prior work) \\
    \end{tabular}
    \label{tab:intro:implications}
\end{table*}
\end{savenotes}

Combined with \Cref{thm:intro:dyn-oracle implies fast alg}, the above theorem immediately implies fast algorithms for many problems. \Cref{tab:intro:implications} shows some of these problems. 
One of our highlights is the improved bounds for $k$-forest and $k$-disjoint spanning trees. Even for $k=2$, there was no runtime better than the decades-old $\tO(k^{3/2}|V|\sqrt{|E|})$ runtime \cite{gabow1988forests,Gabow91}. Our result improves this to  $\tO(|E| + (k|V|)^{3/2})$.
This is nearly linear for dense input graphs and small $k$.
This also implies a faster runtime for, e.g., Shannon Switching Game (see \cite{shannon1955game,gabow1988forests}) which is a special case of $2$-disjoint spanning trees.

Our matroid intersection algorithm gives a unified approach to achieving time complexities that were previously obtained by various techniques in many papers. 
Thus, improving this algorithm would imply breakthrough runtimes for many of these problems simultaneously.
Moreover, in contrast to the previous approach where matroid algorithms have to be considered for each new problem one by one, our approach has the advantage that it can be easier to derive new bounds.
For example, say we are given a graph $G = (V, E)$, where the edge $e$ will stop functioning after day $d(e)$.
Every day we can ``repair'' one functioning edge. 
Our goal is to make the graph connected in the long run  (an edge will work forever once it has been repaired).
This is the \emph{maximum forest with deadlines} problem.
Formally speaking, the goal is to construct a spanning tree or a forest of the maximum size at the end, by selecting an edge $e$ with $d(e) \geq t$ in the $t^{\scriptsize \mbox{{\rm th}}}$ round.\footnote{It is tempting to believe that we can use a greedy algorithm where we always select an edge $e$ with the smallest $d(e)$ to the solution. The following example shows why this does not work: There are three vertices $V=\{a,b,c,d\}$. Edges $e_1$ and $e_2$ between $a$ and $b$ have $d(e_1)=1$ and $d(e_2)=3$. Edges $e_3=(b,c)$ and $e_4=(c,d)$ have $d(e_3)=d(e_4)=2$.
}
Our result implies a runtime of  $\tO(|E|\sqrt{|V|})$ for this problem. The runtime holds even for the harder case where each edge is also associated with an {\em arrival time} (edges cannot be selected before they arrive).

We also list some problems where our bounds cannot match the best bounds in \Cref{tab:intro:implications}. Improving our matroid algorithms to match these bounds is a very interesting open problem. A particularly interesting case is the maximum bipartite matching problem. Our dynamic-oracle matroid intersection algorithm  implies a runtime that matches the runtime from the best combinatorial algorithm of Hopcroft and Karp \cite{HopcroftK73} which has been recently improved via continuous optimization techniques (e.g., \cite{Madry13,Madry16,CohenMSV17,AxiotisMV20,BrandLNPSSSW20,BrandLLSS0W21-maxflow,ChenKLPGS22}).\footnote{\label{foot:combinatorial}The term ``combinatorial'' is vague and varies in different contexts. Here, an algorithm is ``combinatorial'' if it does not use any of the continuous optimization techniques such as interior-point methods (IPMs).}
There are barriers to using continuous optimization even to solve some special cases of matroid intersection (e.g. colorful spanning tree's linear program requires exponentially many constraints). Thus, improving our matroid intersection algorithm requires either a new way to use continuous optimization techniques or a breakthrough idea in designing combinatorial algorithms that would improve the Hopcroft-Karp algorithm.

\paragraph{Lower Bounds.} 
Another advantage of our dynamic-oracle matroid model is that it can be easier to prove lower bounds than the traditional model. As a showcase, we show a simple super-linear rank-query lower bound in our new model. In fact, our argument also implies the first super-linear independence-query lower bound in the traditional model.
The latter result might be of independent interest. 
\begin{theorem}\label{thm:intro:lowerbound}
(I) Any deterministic algorithms require $\Omega(n\log n)$ dynamic rank queries to solve the matroid union and matroid intersection problems.
(II) Any deterministic algorithms require $\Omega(n\log n)$ (traditional) independence queries to solve the matroid union and matroid intersection problems.
\end{theorem}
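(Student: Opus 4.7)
The plan is to prove both parts via a reduction from a known $\Omega(n\log n)$-hard problem (most naturally sorting), together with an adversary-style argument adapted to the matroid oracle models. I would focus on Part (II) first; Part (I) should follow by essentially the same argument carried out in the dynamic-rank-oracle model.

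The reduction approach: given an instance of sorting $n$ distinct elements accessed through a comparison oracle, I would construct a matroid union (or intersection) instance on a ground set of size $O(n)$ such that (i) each independence query on the constructed matroid can be simulated with $O(1)$ comparisons of the underlying elements, and (ii) any correct solution to the matroid problem reveals the sorted order. Natural candidate constructions use partition matroids or graphic matroids whose structure (e.g., a hidden matching or a path) depends on the hidden total order among the input elements, so that distinguishing independence queries effectively probes the order.

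The adversary alternative: I would construct a hard family $\mathcal{F}$ of matroid instances of size $n^{\Omega(n)}$---for instance, parameterized by hidden partitions of $[n]$ into pairs, of which there are $\frac{n!}{2^{n/2}(n/2)!} = n^{\Omega(n)}$---and maintain via an adversary a set $\mathcal{L} \subseteq \mathcal{F}$ of live instances consistent with all query answers so far. I would argue that each independence query can be answered so as to retain at least a constant fraction of $\mathcal{L}$, so that after $q$ queries $|\mathcal{L}|\geq |\mathcal{F}|/2^{O(q)}$; since the algorithm cannot safely output while $\mathcal{L}$ still contains instances with genuinely different correct solutions, this would force $q = \Omega(\log|\mathcal{F}|) = \Omega(n\log n)$. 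For Part (I), each dynamic-rank operation is charged analogously against the live-set shrinkage, giving the same lower bound on operations.

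The main obstacle: the output of matroid intersection on a size-$n$ ground set lies in a space of at most $2^n$ subsets, so a naive ``distinguish outputs'' decision-tree argument yields only $\Omega(n)$. Overcoming this to $\Omega(n\log n)$ requires either (a) the sorting-reduction approach, where the hardness is inherited from the source problem rather than from an output-counting argument, or (b) an adversary construction that exploits the algorithm's need to \emph{certify} its output against many matroids whose optimal solutions would differ if the hidden structure were slightly perturbed. The sorting-reduction route is likely the cleanest and aligns best with the paper's description of the bound as ``simple''; the core technical work is designing the matroid encoding so that queries truly cost only $O(1)$ comparisons and the optimal solution genuinely forces the sorted order to be recoverable.
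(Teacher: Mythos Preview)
Your proposal misses the paper's actual approach and, more importantly, neither of your two suggested routes is made to work past the obstacle you yourself identify.

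The paper proves the bound via \emph{communication complexity}, not via sorting or a direct adversary on matroid queries. Alice is given $\cM_1$ and Bob is given $\cM_2$; any independence query can be simulated with one bit of communication, and any dynamic rank query with $O(1)$ bits (the rank changes by at most one under a single insertion/deletion). Thus a query lower bound follows from a communication lower bound. For the latter, the paper reduces from deterministic $(s,t)$-connectivity, which has an $\Omega(n\log n)$ communication lower bound due to Hajnal--Maass--Tur\'an. The key technical device is \emph{strict gammoids}: given any directed bipartite graph $G$ (with designated $a,b$), one can build two gammoids $\cM_1,\cM_2$ so that a specific common independent set $S$ has $G$ (plus $s\to a$ and $b\to t$) as its exchange graph. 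Then ``is $S$ maximum?'' is exactly ``is there an $(s,t)$-path in $G$?''. The matroid-union bound follows from the standard reductions between union and intersection.

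Your sorting-reduction sketch does not clear the hurdle you flag: the output of matroid intersection is a subset of an $O(n)$-element ground set, so the algorithm's transcript need only pin down one of $2^{O(n)}$ answers; there is no evident way to force it to recover a permutation, and you give no construction where each independence query costs $O(1)$ comparisons while the optimum encodes the sorted order. Your adversary sketch has the same missing piece in a different guise: even with a family of $n^{\Omega(n)}$ hidden instances, you must simultaneously ensure (a) every binary query can be answered so as to keep a constant fraction alive and (b) the surviving instances have pairwise distinct optimal solutions---otherwise the algorithm may safely output while many instances remain. You do not supply such a family, and the ``hidden perfect matching into a partition matroid'' idea does not obviously satisfy (b). The communication-complexity route sidesteps all of this: the $\Omega(n\log n)$ information is needed not to name the output but to decide a single bit ($(s,t)$-reachability), and the reduction to matroid intersection is what converts that bit into the optimality question.
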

Our first lower bound suggests that the dynamic-oracle model might at best give nearly linear (and not linear) time algorithms.
Prior to this paper, only a $\log_2(3)n - o(n)$ independence-query lower bound for deterministic algorithms was known for (traditional) independence queries, due to Harvey \cite{harvey2008matroid}.\footnote{To the best of our knowledge, this lower bound does not hold for rank queries.}
Our lower bound in the traditional model improves this decade-old bound. 
Moreover, showing super-linear independence-query lower bounds in the traditional model for matroid intersection is a long-standing open problem considered since 1976 (e.g. \cite{Welsh1976matroid_book,chakrabarty2019faster}).\footnote{As noted by Harvey, Welsh asked about the number of queries
needed to solve the matroid partition problem, which is equivalent to matroid union and intersection.} 
Our lower bound in the traditional model answers this open problem for deterministic algorithms. The case of randomized algorithms would be resolved too if an $\omega(|V|)$ lower bound was proved for the communication complexity for computing connectivity of an input graph $G=(V,E)$. (It was conjectured to be $\Omega(n\log n)$ in \cite{ApersEGLMN22}.)

\paragraph{Independent Work.}
Concurrently and independently to our work, Quanrud~\cite{Quanrud23} also studied the $k$-fold matroid union and related problems and obtained a similar running time of $\tO(\min(n, kr)^{3/2})$ to ours in the \emph{traditional independence-oracle model}.
By specializing the algorithm to graphic matroids, Quanrud also obtained an $\tO(\min(|E|, k|V|)^{3/2})$ algorithm for $k$-disjoint spanning tree. The techniques used in these two works are different, however, and our main contribution remains the introduction of \emph{dynamic} oracles and efficient matroid algorithms in this model.

\subsection{Techniques}
In this section we briefly discuss our technical contributions. For a more in-depth overview of our algorithms, see the technical overview (\cref{sec:overview}).

\paragraph{Exchange Graph \& Blocking Flow.} Our algorithms and lower bounds are based on the notion of finding \emph{augmenting paths} in the \emph{exchange graph}, due to \cite{edmonds1970submodular,Lawler75,aignerD}.
Given a common independent set $S\in \cI_1\cap \cI_2$, the exchange graph $G(S)$ is a directed graph where finding an $(s,t)$-path corresponds to increasing the size of $S$ by one. Starting with the work of Cunningham \cite{cunningham1986improved}, modern matroid intersection algorithms (including the state-of-the-art \cite{chakrabarty2019faster,blikstad21}) are based on a ``Blocking Flow'' idea inspired by the Hopcroft-Karp's \cite{HopcroftK73} bipartite matching and Dinic's \cite{dinic1970algorithm}  max-flow algorithms.

\paragraph{Matroid Intersection with Dynamic Oracle.} 
Our matroid intersection algorithms are implementations of the state-of-the-art $\tO(n\sqrt{r})$ rank-query algorithm of \cite{chakrabarty2019faster} and
the $\tO(n r^{3/4})$ independence-query algorithm of \cite{blikstad21}. Our contribution here is to show that versions of them can be implemented also in the \emph{dynamic}-oracle model. 

These algorithms explore the exchange graph efficiently in the classic non-dynamic models by performing binary searches with the oracle queries to find useful edges. However, such a binary search is very expensive in the dynamic-oracle model (as the queries differ by a lot): a single such binary search might cost up to $O(n)$ in the dynamic-oracle model instead of just $O(\log n)$.

Our contribution is to design a binary-tree data structure that supports finding these useful edges efficiently also in the dynamic-oracle model.
Note that after each augmentation the underlying exchange graph changes, so the data structure must also support these dynamic updates efficiently. Some updates can just be propagated up the tree, while others we handle by batching them and rebuilding the tree periodically.
We also rely on a structural result ``Augmenting Sets'' by \cite{chakrabarty2019faster} which states that the updates to the exchange graph are local, which helps us reduce the number of updates we need to make to our data structure, and achieve the final time bound.

\paragraph{Matroid Union with Dynamic Oracle.} 
Our $\tO(n + r\sqrt{r})$ matroid union algorithm with dynamic rank oracle is based on our $\tO(n \sqrt{r})$ matroid intersection algorithm (indeed, matroid union is a special case of matroid intersection).
We are able to obtain a more efficient algorithm by 
taking advantage of the additional structure of the exchange graph in the case of matroid union.
The main idea is to run the blocking flow algorithm only on a dynamically-changing subgraph of size $\Theta(r)$, instead of on the full exchange graph of size $\Theta(n)$.

A crucial observation is that all but $O(r)$ elements will be directly connected to the source vertex~$s$. To ``sparsify'' this first layer in the breadth-first-search tree, we argue that one only needs to consider a basis of it (this basis will have size at most $r$ as opposed to $n$). After an augmentation, this first layer changes, so we design a \emph{dynamic algorithm to maintain a basis of a matroid}\footnote{For example, maintaining a spanning forest in a dynamically changing graph.}, with $\tO(\sqrt{r})$ update time and $O(n)$ pre-computation.
Our algorithm to maintain this basis dynamically is inspired by the
 dynamic minimum spanning tree algorithm of \cite{Frederickson85} ($O(\sqrt{|E|})$ update time), in combination with the sparsification trick of \cite{EppsteinGIN97} ($\tO(\sqrt{|V|})$ update time).
 We believe that our dynamic algorithm to maintain a (min-weight) basis of a matroid might also be of independent interest.

\paragraph{Lower Bounds.} Our super-linear $\Omega(n \log n)$ query lower bound comes from studying the communication complexity of matroid intersection. The matroids $\cM_1$ and $\cM_2$ are given to two parties Alice and Bob respectively and they are asked to solve the matroid intersection problem using as few bits of communication between them. We show that even if Alice and Bob know some common independent set $S\in \cI_1\cap \cI_2$, they need to communicate $\Omega(n\log n)$ bits to see if $S$ is optimal. Essentially, they need to determine if there is an augmenting path in the exchange graph. Using a class of matroids called \emph{gammoids} (see e.g.~\cite{perfect1968applications,mason1972class}), we show a reduction from the $(s,t)$-connectivity problem which has a deterministic $\Omega(n\log n)$ communication lower bound \cite{HajnalMT88}.

\subsection{Organization}

The rest of the paper is organized as follows.
We first give a high-level overview of how we obtain our algorithms in \cref{sec:overview}.
In \cref{sec:prelim}, we provide the necessary preliminaries.
We then construct the binary search tree data structure in \cref{sec:bst}, followed in \cref{sec:matroid-intersection} by how to use it to implement our $\tO(n\sqrt{r})$ matroid intersection algorithm in the new dynamic-rank-oracle model (the dynamic-\emph{independence}-oracle algorithm is in \cref{appendix:dynamic-matroid-intersection-ind}).
In \cref{sec:decremental-basis} we describe our data structure to maintain a basis of a matroid dynamically, and then we use this in our $\tO_k(n+r\sqrt{r})$ matroid union algorithm in \cref{sec:matroid-union} (the special case of $k$-fold matroid union is in \cref{appendix:matroid-union-fold}).
We show our super-linear lower bound in \cref{sec:lowerbound}.
We end our paper with a discussion of open problems in \cref{sec:openproblems}.
In \cref{appendix:applications} we mention how to implement different matroids oracles in the dynamic-oracle model, and discuss some problems we can solve with our algorithms.

\section{Technical Overview of Algorithms}
\label{sec:overview}

\subsection{The Blocking-Flow Framework}
In this section, we give a high-level overview of our algorithms.
We will focus on the dynamic-rank-oracle model (\cref{def:dyn-oracle}), and sketch how to efficiently implement the ``blocking flow''\footnote{Similar to the Hopcroft-Karp's \cite{HopcroftK73} bipartite matching and Dinic's \cite{dinic1970algorithm} maximum flow algorithms.} matroid intersection algorithms of \cite{GabowS85,cunningham1986improved,chakrabarty2019faster,nguyen2019note} in this model.
As such, we briefly recap how the $\tO(n\sqrt{r})$ rank-query (in the traditional oracle model) algorithm of \cite{chakrabarty2019faster} works first, and then explain how to implement their framework in the new dynamic oracle model with the same cost.

Their algorithm, like most of the matroid intersection algorithms, is based on repeatedly finding \emph{augmenting paths} in \emph{exchange graphs} (see \cref{sec:prelim} for a definition). Say we have already found some common independent set $S\in \cI_1\cap \cI_2$ (we start with $S = \emptyset$). Then the \emph{exchange graph} $G(S)$ is a directed bipartite graph in which finding an $(s,t)$-path exactly corresponds to increasing the size of $S$ by one.
According to Cunningham's blocking-flow argument \cite{cunningham1986improved}, if we always augment along the \emph{shortest} augmenting path, the lengths of such augmenting paths are non-decreasing.
Moreover, if the length of the shortest augmenting path in $G(S)$ is at least $1/\epsilon$, then the size of the current common independent set $S$ must be at least $\left(1 - O(\epsilon)\right) r$ (i.e.\ it is only $O(\epsilon r)$ away from optimal).
Thus, the ``blocking flow''-style algorithms consists of two stages:
\begin{enumerate}
\item
In the first stage, they obtain a $(1 - \epsilon)$-approximate solution by finding augmenting paths until their lengths become more than $O(1/\epsilon)$. This is done by running in phases, where in phase~$i$ they eliminate all augmenting paths of length $2i$ by finding a so-called ``blocking flow''---a maximal (not necessarily maximum) collection of compatible augmenting paths. Each such phase can be implemented using only $\tO(n)$ rank queries, as shown in \cite{chakrabarty2019faster}. This means that the first stage needs a total of $\tO(n/\epsilon)$ rank queries (in the classic non-dynamic model).

\item
In the second stage, they find the remaining $O(\epsilon r)$ augmenting paths one at a time. Each such augmentation can be found in $\tO(n)$ rank queries, for a total of $\tO(\epsilon nr)$ queries for this stage.
\end{enumerate}
Using $\epsilon = 1/\sqrt{r}$, \cite{chakrabarty2019faster} obtains their $\tO(n \sqrt{r})$ rank-query exact matroid intersection algorithm.
The crux of how to implement the stages efficiently is a binary search trick to explore useful edges of the exchange graph quickly (for e.g.\ to implement a breadth-first-search on the graph). The exchange graph can have up to $\Theta(nr)$ edges in total, but it is not necessary to find all of them. We will argue that this binary search trick (which issues queries far away from each other) can still be implemented in the \emph{dynamic-oracle model}, with the use of some data structures.

\subsection{Matroid Intersection}

\paragraph{Binary Search Tree.}
The crux of why a breadth-first-search (BFS) and augmenting path searching can be implemented efficiently (in terms of the number of traditional queries) in \cite{chakrabarty2019faster} is that they show how to, for $S \in \cI$, $u \in S$, and $X \subseteq \bar{S}$, discover an element $x \in X$ with $(S \setminus \{u\}) \cup \{x\} \in \cI$ in $O(\log{n})$ rank queries using binary search (such a pair $(u,x)$ is called an \emph{exchange pair}, and corresponds to an edge in the exchange graph).
The idea is that such an $x$ exists in $X$ if and only if $\rank((S \setminus \{u\}) \cup X) \geq |S|$.
Thus, we can do a binary search over $X$: we split $X$ into two equally-sized subsets $X_1$ and $X_2$, and check if such an $x$ exists in $X_1$ via the above equation. If it does, then we recurse on $X_1$ to find $x$. Otherwise, such an $x$ must exist in $X_2$ (as it does in $X$), and so we recurse on $X_2$.
To make this process efficient in our new model, we pre-build a binary search tree over the elements of $X$, where the internal nodes contain all the query-sets we need. That is, in the root node we have the query-set for $S\cup X$, and in its two children for $S\cup X_1$ respectively $S\cup X_2$.

Using this binary tree, one can simulate the binary search process as described above.
Since what we need to do in a BFS is to (i) find a replacement element $x$ and (ii) mark $x$ as visited (thus effectively ``deactivate'' $x$ in $X$), each time we see $x$, we just need to remove $x$ from the $O(\log{n})$ nodes on a root-to-leaf path, and thus the whole BFS algorithm runs in near-linear time as well.

\paragraph{Batching, Periodic Rebuilding, and Augmenting Sets.}
The above binary search tree is efficient when the common independent set $S$ is static. However, once we find an augmenting path, we need to update $S$. This means that every node in the binary search tree needs to be updated.
If done naively, this would need at least $\Omega(nr)$ time, as there are up to $r$ augmentations, and rebuilding the tree takes $O(n)$ time.
Therefore, we employ a batching approach here.
That is, we do not walk through every node and update them immediately when we see an update to $S$.
Instead, we batch $k$ updates (for $k$ to be decided later) and pay an additional $O(k)$-factor every time we want to do a query in our tree. In other words, at some point, we might want to search for exchanges for a common independent set $S'$ (by doing queries like $(S' \setminus \{u\}) \cup X$ to find edges incident to $u$). Our binary tree might only have an outdated version $S$ (i.e.\ store sets like $S\cup X$). Then the cost of converting $S\cup X$ to $(S'\setminus \{u\}) \cup X$ is $|S\oplus S'| + 1$, which we assert is less than $k$.
When this number exceeds $k$, we rebuild the binary search tree completely using the up-to-date $S'$ instead, in $\tO(n)$ time.

Over the whole run of the algorithm, there are only $O(r \log r)$ updates to our common independent set $S$ (see, e.g., \cite{cunningham1986improved,nguyen2019note}).
Hence, the total running time becomes
\[ \underbrace{\tO(nk\epsilon^{-1})}_{\text{Blocking-flow for $\tO(\epsilon^{-1})$ iterations}} + \underbrace{\tO(\epsilon rn)}_{\text{Remaining $\tO(\epsilon r)$ augmenting paths}} + \underbrace{\tO(nrk^{-1})}_{\text{Rebuilding binary search trees}}, \]
which is $\tO(nr^{2/3})$ for $k = r^{1/3}$ and $\epsilon =r^{-1/3}$.

To achieve the $\tO(n\sqrt{r})$ bound in our dynamic-rank-oracle model, there is one additional observation we need.
By the ``Augmenting Sets'' argument~\cite{chakrabarty2019faster}, for each element $v$ that we want to query our tree, it suffices to consider changes to $S$ that are in the same distance layer as $v$ is (in a single blocking-flow phase).
Since changes to $S$ are uniformly distributed among layers, when the $(s, t)$-distance in $G(S)$ is $d$, we only need to spend an additional $O(\frac{k}{d})$-factor (instead of an $O(k)$-factor) when querying the binary search tree.
This brings our complexity down to
\[ \tO\left(n\epsilon^{-1} + \sum_{d = 1}^{1/\epsilon}\frac{nk}{d}\right) + \tO(\epsilon rn) + \tO(nrk^{-1}), \]
where the first part is a harmonic sum which makes for $\tO(n\epsilon^{-1} + nk)$, and the total running time is $\tO(n\sqrt{r})$ for $k = r^{1/2}$ and $\epsilon = r^{-1/2}$.

\subsection{Matroid Union}

For simplicity of the presentation in this overview, let's assume we are solving the $k$-fold matroid union problem and that $k$---the number of bases\footnote{As an example, consider the problem of finding $k$ disjoint spanning trees of a graph.} we want to find---is constant.
A standard black-box reduction from matroid intersection, combined with our algorithm outlined above, immediately gives us an $\tO(n\sqrt{r})$ bound in the dynamic-rank-oracle model.
Nevertheless, we show how to exploit certain properties of matroid union (specifically, the structure of the exchange graphs~\cite{edmonds1968matroid,cunningham1986improved} resulted from the reduction below) to speed this up to $\tO(n + r\sqrt{r})$, i.e.\ \emph{near-linear time} for sufficiently ``dense''\footnote{We call matroids with $n \gg r$ ``dense'' by analogy to the graphic matroids where $n$ denotes the number of edges and $r$ the number of vertices.} matroids.

Suppose $\cM = (\ground,\cI)$ is the matroid we want to find $k$ disjoint bases for. The standard reduction to matroid intersection is that we create $k$ copies of all elements $u\in \ground$. Then we define two matroids as follows:
\begin{itemize}
\item The first matroid $\cM_1$ says that we only want to use one version of each element. We set $\cM_1 = (U\times \{1,\ldots,k\}, \cI_\text{part})$ to be the partition matroid defined as $S\in \cI_{\text{part}}$ if and only if $|\{(u,i)\in S : i = x\}| \le 1$ for all $x$.
\item The second matroid $\cM_2$ says that for each copy of the ground set $\ground$ we must pick an independent set according to $\cM$. That is set $\cM_2 = (U\times \{1,\ldots,k\},\hat{\cI})$ to be the \emph{disjoint} $k$-fold union of $\cM$, i.e.\ $S\in \hat{\cI}$ if and only if $\{u : (u,i)\in S\}$ is independent in $\cM$ for all $i$.
\end{itemize}
For a set $S$ which can be partitioned into $k$ disjoint independent sets, notice that in the exchange graph, the number of elements \emph{not} in the first layer is bounded by $O(r)$.
This is because every $u\in \ground$ who is not represented in $S$ will be in the first layer $L_1$ of the BFS tree.
As such, we can build the BFS layers starting from the second layer if we can identify all the elements in this second layer.
This can be done by checking for each $y$ \emph{not} in the first layer whether $L_1$ contains an exchange element of $y$ (via computing the rank of $(S \setminus \{y\}) \cup L_1$; no need to do a binary search).
Although binary search is not needed when identifying elements in the second layer, when going backward among layers to find an augmenting path $P$, we still have to find the exact element in the first layer which can be the first element of $P$ since it will decide which augmenting paths remain ``compatible'' later.
This inspires us to maintain two separate binary search trees: one, of size $O(r)$, for finding edges from the second layer and onward, and the other, of size $O(n)$, for finding the first elements of the augmenting paths.
Still, doing a binary search for each element in the first layer results in a total number of $O(r\epsilon^{-1})$ queries to the binary search tree, which is too much.
To reduce the number of queries down to $\tO(r)$, we note that only binary searches which correspond to the actual augmenting paths will succeed, i.e., reach the leaf nodes of the binary search tree.
Since there are at most $O(r/d)$ augmenting paths when the $(s, t)$-distance in $G(S)$ is $d$, we only need to do $O(r/d)$ queries to the binary search tree; other queries can be blocked by first checking if their corresponding exchange elements exist in the first layer.
This results in a running time of $\tO(n + r\sqrt{n})$ (note: $\sqrt{n}$ and not $\sqrt{r}$), which already matches Gabow's algorithm for $k$-disjoint spanning tree~\cite{gabow1988forests}. 

\paragraph{Toward $\tO(n + r\sqrt{r})$ for Matroid Union.}
The bottleneck of the above algorithm is that we need to do binary searches over (and hence rebuild periodically) the tree data structure for the first layer (of size $\Omega(n)$).
If we can reduce the size of this tree down to $O(r)$, then the running time would be $\tO(n + r\sqrt{r})$.
This suggests that we might want to somehow ``sparsify'' the first layer.
Indeed, for a single augmenting path, we only need a basis of the first layer.
As a concrete example, consider the case of a graphic matroid: Given a forest $S$, an edge $e \in S$, and the set of non-tree edges $E \setminus S$, we want to find a ``replacement'' edge $e^\prime$ in $E \setminus S$ for $e$ which ``restores'' the connectivity of $S - e + e^\prime$.
In this case, it suffices to only consider a spanning forest (i.e.\ ``basis'') $B$ of $E \setminus S$, in the sense that such a replacement edge exists in $E \setminus S$ if and only if it exists in this spanning forest $B \subseteq E\setminus S$.

Moreover, note that after each augmentation a single element will be removed from the first layer.
Thus, if we can maintain a decremental basis of the first layer, we can build our binary search tree data structure dynamically on top of this basis and get the desired time bound.

\paragraph{Maintaining a Basis in a Matroid.}
Our data structure for maintaining a basis is inspired by the dynamic minimum spanning tree algorithm of \cite{Frederickson85}, in combination with the sparsification trick of \cite{EppsteinGIN97}. It uses $\tO(n)$ time to initialize, and then $\tO(\sqrt{r})$ dynamic rank queries\footnote{In the application of our matroid union algorithm, there will only be $\tO(r)$ updates, so this is efficient enough for our final $\tO(n+r\sqrt{r})$ algorithm.} per deletion.
It also supports maintaining a min-weight basis.

Let $L \subseteq \ground$ for $|L| = O(n)$ be the first layer in which we want to maintain a dynamic basis.
In the preprocessing stage, we split $L$ into $\sqrt{n}$ blocks $L_1, L_2, \ldots, L_{\sqrt{n}}$ of size roughly $\sqrt{n}$ and compute the basis of $L$ from left to right.
We also build the ``prefix sums'' of these $\sqrt{n}$ blocks so that we can quickly access/query sets of the form $L_1 \cup L_2 \cup \cdots \cup L_k$ for all values of $k$.
When we remove an element $x$ from $L_i$, we first update the prefix sums in $O(\sqrt{n})$ time.
If $x$ is not in the basis we currently maintain, then nothing additional needs to be done.
Otherwise, we have to find the ``first'' replacement element, which is guaranteed to be located in blocks $L_i, \ldots, L_{\sqrt{n}}$.
The block $L_j$ in which the replacement element lies can be identified simply by inspecting the ranks of the prefix sums, and after that, we then go through elements in that block to find the exact element.
Note that blocks after $L_j$ need not be updated, as for them it does not matter what basis we picked among blocks $L_1$ to $L_j$. 
This gives us an $O(\sqrt{n})$-update-time algorithm for maintaining a basis of a matroid.

To get a complexity of $O(\sqrt{r}\log n)$, we show that a similar sparsification structure as that of \cite{EppsteinGIN97} for dynamic graph algorithms also works for arbitrary matroids. The sparsification is a balanced binary tree over the $n$ elements, where in each node we have an instance of our (un-sparsified) underlying data structure to maintain a basis consisting of elements in the subtree rooted at the node. Only elements part of the basis of a node are propagated upwards to the parent node. This means that in each instance of our underlying data structure we work over a ground set of size at most $2r$.
Thus, each update corresponds to at most two updates (a single insertion and deletion) to at most $O(\log n)$ (which is the height of the tree) nodes of the tree, each costing $O(\sqrt{r})$ dynamic rank queries in order to maintain the basis at this node.
This results in the desired time bound.

\section{Preliminaries} \label{sec:prelim}

\paragraph{Notation.}
We use standard set notation.
In addition to that, for two sets $X$ and $Y$, we use $X + Y$ to denote $X \cup Y$ (when $X \cap Y = \emptyset$) and $X - Y$ to denote $X \setminus Y$ (when $Y \subseteq X$).
For an element $v$, $X + v$ and $X - v$ refer to $X + \{v\}$ and $X - \{v\}$, respectively.
Let $X \oplus Y$ denote the symmetric difference of $X$ and $Y$.

\paragraph{Matroid.}
In this paper, we use the standard notion of matroids which is defined as follows.

\begin{definition}
A \emph{matroid} $\cM = (\ground, \cI)$ is defined by a tuple consisting of a finite \emph{ground set} $\ground$ and a non-empty family of \emph{independent sets} $\cI \subseteq 2^\ground$ such that the following properties hold.
\begin{itemize}
  \item \textbf{Downward closure:} If $S \in \cI$, then any subset $S^\prime \subseteq S$ is also in $\cI$.
  \item \textbf{Exchange property:} For any two sets $S_1, S_2 \in \cI$ with $|S_1| < |S_2|$, there exists an $x \in S_2 \setminus S_1$ such that $S_1 + x \in \cI$.
\end{itemize}
\label{def:matroid}
\end{definition}

Let $\ground$ be the ground set of a matroid $\cM$.
For $S \subseteq \ground$, let $\bar{S}$ denote $\ground \setminus S$.
For $X \subseteq \ground$, the \emph{rank} of $X$, denoted by $\rank(X)$, is the size of the largest independent set contained in $X$, i.e., $\rank(X) = \max_{S \in \cI}|X \cap S|$.
The \emph{rank} of a matroid $\cM = (\ground, \cI)$ is the rank of $\ground$.
We let $r \leq n$ denote the rank of the input matroids.
When the input consists of more than one matroid (e.g., in the matroid union problem), let $\rank_i$ denote the rank function of the $\ith$ matroid.
A \emph{basis} of $X$ is an independent set $S \subseteq X$ with $|S| = \rank(X)$.
A basis of $\cM$ is a basis of $\ground$.
The \emph{span} of $X$ contains elements whose addition to $X$ does not increase the rank of it, i.e., $\sspan(X) = \{u \in \ground \mid \rank(X \cup \{u\}) = \rank(X)\}$.

\begin{fact}
  The rank function is submodular.
  That is, $\rank(X) + \rank(Y) \geq \rank(X \cap Y) + \rank(X \cup Y)$ holds for each $X, Y \subseteq \ground$.
\end{fact}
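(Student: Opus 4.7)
The plan is to follow the classical argument based on simultaneously certifying the ranks of $X \cap Y$ and $X \cup Y$ with a single nested pair of independent sets. First I would choose a basis $B$ of $X \cap Y$, i.e., an independent set $B \subseteq X \cap Y$ with $|B| = \rank(X \cap Y)$. The crucial step is then to \emph{extend} $B$ to a basis $B'$ of $X \cup Y$, so that $B \subseteq B' \subseteq X \cup Y$ and $|B'| = \rank(X \cup Y)$. This extension is a direct consequence of the exchange property in \cref{def:matroid}: starting from $B$ and any basis of $X \cup Y$ of size $\rank(X \cup Y)$, repeatedly apply the exchange property to augment $B$ by elements of the larger set until it reaches size $\rank(X \cup Y)$, while staying inside $X \cup Y$ and independent.

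Given such a $B'$, both $B' \cap X$ and $B' \cap Y$ are independent by downward closure and lie in $X$ and $Y$ respectively, so $\rank(X) \geq |B' \cap X|$ and $\rank(Y) \geq |B' \cap Y|$. Summing these and using the elementary set identity $|B' \cap X| + |B' \cap Y| = |B' \cap (X \cup Y)| + |B' \cap (X \cap Y)|$, we obtain $\rank(X) + \rank(Y) \geq |B'| + |B' \cap (X \cap Y)|$.

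To finish, I would identify the two terms on the right with the desired ranks. By construction $|B'| = \rank(X \cup Y)$. For the other term, I claim $B' \cap (X \cap Y) = B$: one inclusion follows because $B \subseteq B'$ and $B \subseteq X \cap Y$; for the reverse, $B' \cap (X \cap Y)$ is an independent subset of $X \cap Y$ (again by downward closure), hence of size at most $\rank(X \cap Y) = |B|$, forcing equality since it already contains $B$. Therefore $|B' \cap (X \cap Y)| = \rank(X \cap Y)$, and combining everything yields $\rank(X) + \rank(Y) \geq \rank(X \cup Y) + \rank(X \cap Y)$.

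The only place where the matroid axioms are genuinely used is the extension step producing $B'$; everything else is set-theoretic bookkeeping. I do not anticipate any real obstacle, since the extension lemma follows routinely from iterating the exchange property, and the submodularity identity then drops out of counting elements of $B'$ in the two ways indicated above.
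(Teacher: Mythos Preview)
Your argument is correct and is exactly the classical proof of rank submodularity; the paper itself states this result as a standard fact without proof, so there is nothing to compare against. One minor remark: in step~7 you do not actually need the equality $B' \cap (X \cap Y) = B$, only the inequality $|B' \cap (X \cap Y)| \geq |B|$, which is immediate from $B \subseteq B' \cap (X \cap Y)$.
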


\begin{fact}[see, e.g., {\cite[Lemma 1.3.6]{price2015}}]
  $\rank(A) = \rank(\sspan(A))$ holds for every $A \subseteq \ground$.
  \label{fact:span-does-not-increase-rank}
\end{fact}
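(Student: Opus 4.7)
The plan is to prove the two inequalities $\rank(A) \le \rank(\sspan(A))$ and $\rank(\sspan(A)) \le \rank(A)$ separately. The first direction is essentially immediate: for every $u \in A$ we have $\rank(A \cup \{u\}) = \rank(A)$ trivially, so by the definition of $\sspan$, we get $A \subseteq \sspan(A)$. Combined with monotonicity of the rank function (which follows directly from downward closure of $\cI$), this yields $\rank(A) \le \rank(\sspan(A))$.

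The non-trivial direction is $\rank(\sspan(A)) \le \rank(A)$, and my approach will be to show that any basis $B$ of $A$ is already a basis of $\sspan(A)$. The key intermediate claim is: for every $x \in \sspan(A)$, the set $B + x$ is dependent. To see this, note $B + x \subseteq A \cup \{x\}$, and by the definition of $\sspan$ we have $\rank(A \cup \{x\}) = \rank(A) = |B|$; hence $A \cup \{x\}$ contains no independent set of size $|B| + 1$, which forces $B + x \notin \cI$.

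With this claim in hand, I would conclude by contradiction using the exchange property. Suppose $\rank(\sspan(A)) > \rank(A) = |B|$; then there exists some independent set $B^\star \subseteq \sspan(A)$ with $|B^\star| > |B|$. The exchange property applied to $B$ and $B^\star$ gives an element $x \in B^\star \setminus B \subseteq \sspan(A)$ such that $B + x \in \cI$, contradicting the claim above. Therefore $\rank(\sspan(A)) \le |B| = \rank(A)$, completing the proof.

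The only mildly subtle step is the intermediate claim, and it is routine once one recognizes that the defining equation of $\sspan(A)$ caps $\rank(A \cup \{x\})$ at $\rank(A)$; no further matroid machinery beyond downward closure and the exchange property is required.
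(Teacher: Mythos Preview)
Your argument is correct. Note, however, that the paper does not supply its own proof of this statement: it is recorded as a \emph{Fact} with an external citation (to \cite[Lemma~1.3.6]{price2015}) and no proof in the text. So there is nothing to compare against; you have simply written out a standard self-contained verification of a result the paper takes for granted. Both directions you give are fine, and the use of the exchange axiom to rule out a strictly larger independent set inside $\sspan(A)$ is the usual way to finish.
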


\begin{lemma}
  For two sets $X, Y$ and their bases $S_X, S_Y$, it holds that $\rank(S_X + S_Y) = \rank(X + Y)$.
  \label{lemma:basis-rank}
\end{lemma}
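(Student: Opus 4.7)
The plan is to prove the two inequalities $\rank(S_X + S_Y) \leq \rank(X + Y)$ and $\rank(X + Y) \leq \rank(S_X + S_Y)$ separately, using the span machinery recorded just above the lemma.

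First, the easy direction. Since $S_X \subseteq X$ and $S_Y \subseteq Y$, we have $S_X + S_Y \subseteq X + Y$, and monotonicity of rank (an immediate consequence of downward closure, since any independent subset of the smaller set is an independent subset of the larger one) yields $\rank(S_X + S_Y) \leq \rank(X + Y)$.

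For the other direction, the key observation is that $X \subseteq \sspan(S_X)$ and $Y \subseteq \sspan(S_Y)$. Indeed, for any $u \in X$ the set $S_X + u$ is a subset of $X$, so $\rank(S_X + u) \leq \rank(X) = |S_X| = \rank(S_X)$; combined with monotonicity in the other direction this forces equality, so $u \in \sspan(S_X)$ by definition. The analogous argument works for $Y$. Now I would briefly note that $\sspan$ is monotone, i.e.\ $A \subseteq B$ implies $\sspan(A) \subseteq \sspan(B)$: if $u \in \sspan(A)$ then $\rank(A+u) = \rank(A)$, and submodularity gives $\rank(B+u) + \rank(A) \leq \rank(A+u) + \rank(B) = \rank(A) + \rank(B)$, hence $\rank(B+u) \leq \rank(B)$, forcing equality. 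Applying this with $A = S_X$ and $A = S_Y$ against $B = S_X + S_Y$, we get $\sspan(S_X) \cup \sspan(S_Y) \subseteq \sspan(S_X + S_Y)$, so altogether $X + Y \subseteq \sspan(S_X + S_Y)$.

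Finally, applying monotonicity of rank followed by \Cref{fact:span-does-not-increase-rank} gives $\rank(X+Y) \leq \rank(\sspan(S_X + S_Y)) = \rank(S_X + S_Y)$, which completes the proof. The only step requiring any care is the monotonicity of $\sspan$, which is an easy submodularity computation; everything else is a direct application of the definitions and the stated fact.
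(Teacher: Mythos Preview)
Your proof is correct and follows essentially the same route as the paper's: both establish $X + Y \subseteq \sspan(S_X + S_Y)$ and then invoke \Cref{fact:span-does-not-increase-rank}. The paper simply asserts $X, Y \subseteq \sspan(S_X + S_Y)$ without further comment, whereas you spell out the intermediate steps (first $X \subseteq \sspan(S_X)$, then monotonicity of $\sspan$), which is exactly the justification that assertion needs.
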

\begin{proof}
  Since $X, Y \subseteq \sspan(S_X + S_Y)$, we have $S_X + S_Y \subseteq X + Y \subseteq \sspan(S_X + S_Y)$.
  The lemma then follows from $\rank(S_X + S_Y) = \rank(\sspan(S_X + S_Y))$ using \cref{fact:span-does-not-increase-rank}.
\end{proof}

\paragraph{Exchange Graph.}
Our algorithms for matroid intersection and union will be heavily based on finding augmenting paths in exchange graphs.

\begin{definition}[Exchange Graph]
For two matroids $\cM_1 = (\ground, \cI_1)$ and $\cM_2 = (\ground, \cI_2)$ over the same ground set and an $S \in \cI_1 \cap \cI_2$, the \emph{exchange graph} with respect to $S$ is a directed bipartite graph $G(S) = (\ground \cup \{s, t\}, E_S)$ with $s, t \not\in \ground$ being two distinguished vertices and $E_S = E_1 \cup E_2 \cup E_s \cup E_t$, where
\begin{align*}
  E_1 &= \{(x, y) \mid x \in S, y \not\in S,\;\text{and}\;S - x + y \in \cI_1\}, \\
  E_2 &= \{(y, x) \mid x \in S, y \not\in S,\;\text{and}\;S - x + y \in \cI_2\}, \\
  E_s &= \{(s, x) \mid S + x \in \cI_1 \},\;\text{and} \\
  E_t &= \{(x, t) \mid S + x \in \cI_2 \}.
\end{align*}
\label{def:exchange-graph}
\end{definition}

The \emph{distance layers} of $G(S)$ is the sets $L_1, \ldots, L_{d_{G(S)}(s, t) - 1}$, where $L_\ell$ consists of elements in $\ground$ that are of distance $\ell$ from $s$ in $G(S)$.
Most matroid intersection algorithms including ours are based on augmenting a common independent set with an augmenting path in $G(S)$ until such a path does not exist.
The following lemma certifies the correctness of this approach.

\begin{lemma}[Augmenting Path]
  Let $P$ be a shortest $(s, t)$-path\footnote{In fact, $P$ only needs to be ``chordless''~\cite{quadratic2021}, i.e., without shortcuts. Nonetheless, a shortest $(s, t)$-path suffices for our rank-query algorithms.} of $G(S)$.
  Then, the set $S^\prime := S \oplus (V(P) \setminus \{s, t\})$ is a common independent set with $|S^\prime| = |S| + 1$.
  On the other hand, if $t$ is unreachable from $s$ in $G(S)$, then $S$ is a largest common independent set.
  \label{lemma:augmenting-path}
\end{lemma}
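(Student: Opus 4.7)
The plan is to prove the two assertions separately, using the shortest-path structure of $P$ together with a classical unique-matching argument for matroids.

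For the first assertion, write the shortest path as $P = (s, y_1, x_1, y_2, \ldots, x_{k-1}, y_k, t)$ with $y_j \notin S$ and $x_i \in S$, so that $S' = (S \setminus \{x_1, \ldots, x_{k-1}\}) \cup \{y_1, \ldots, y_k\}$ and $|S'| = |S|+1$ is immediate. I will show $S' \in \cI_1$; the proof that $S' \in \cI_2$ is symmetric (swapping the roles of $s, E_s, E_1$ with $t, E_t, E_2$). A BFS layering around $s$ places $y_j$ in layer $2j - 1$ and $x_i$ in layer $2i$; since any edge advances the BFS distance by at most one layer, every edge of $E_1 \cup E_s$ incident to vertices of $P$ must satisfy $(s, y_j) \in E_s \Rightarrow j = 1$ and $(x_i, y_j) \in E_1 \Rightarrow j \leq i + 1$ (any violation would shortcut $P$). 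On the bipartite graph with parts $\{s, x_1, \ldots, x_{k-1}\}$ and $\{y_1, \ldots, y_k\}$ and these edges, a left-to-right peeling (the unique neighbour of $s$ is $y_1$; after removing $y_1$ the only remaining neighbour of $x_1$ is $y_2$; and so on) shows that $(s, y_1), (x_1, y_2), \ldots, (x_{k-1}, y_k)$ is the \emph{unique} perfect matching. I then invoke the classical Krogdahl/Brualdi unique-matching lemma to conclude $S' \in \cI_1$, handling the virtual source $s$ by formally adjoining a coloop $s^*$ to $\cM_1$ so that $S + s^*$ is independent and the standard equal-size-swap statement applies.

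For the second assertion, the plan is to produce a dual certificate: I will exhibit a partition $\ground = A \cup B$ such that every $T \in \cI_1 \cap \cI_2$ obeys $|T| \leq \rank_1(A) + \rank_2(B) = |S|$. Let $U \subseteq \ground$ be the elements reachable from $s$ in $G(S)$ and set $A := \ground \setminus U$, $B := U$. For any $y \in A \setminus S$, $(s, y) \notin E_s$ (else $y \in U$), so $S + y \notin \cI_1$; the unique $\cI_1$-circuit $C \subseteq S + y$ contains $y$ and satisfies $(x, y) \in E_1$ for every $x \in C \cap S$, and unreachability of $y$ from $s$ then forces each such $x$ into $A$ as well. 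Hence $C \subseteq (S \cap A) + y$, showing that $S \cap A$ spans $A$ in $\cM_1$ and $\rank_1(A) = |S \cap A|$. The symmetric argument using $t \notin U$ (which rules out $(y, t) \in E_t$ for $y \in B \setminus S$, since any such edge would yield an $s$-$t$ path) yields $\rank_2(B) = |S \cap B|$, and summing gives $|T| = |T \cap A| + |T \cap B| \leq \rank_1(A) + \rank_2(B) = |S|$ for every $T \in \cI_1 \cap \cI_2$.

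The main obstacle will be the first part: the unique-matching lemma does not literally apply because $s$ and $t$ are virtual vertices and the endpoint edges $(s, y_1), (y_k, t)$ are not in-matroid exchange edges. I expect the coloop-adjunction trick (or, equivalently, a short direct proof tracking the unique circuits created by the swaps) to handle this cleanly, after which the BFS-layer observation and the partition argument for maximality are both routine.
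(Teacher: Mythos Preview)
The paper does not actually prove this lemma: it is stated in the Preliminaries as a standard fact about matroid intersection (implicitly attributed to the classical literature, with the footnote pointing to \cite{quadratic2021} for the chordless refinement), and no proof is given in the paper. So there is no ``paper's own proof'' to compare against.

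That said, your proposal is a correct and complete outline of the standard textbook argument. The first part---BFS layering to force the triangular shape of the $E_1\cup E_s$ edges among path vertices, the peeling argument to get a unique perfect matching, and the Krogdahl/Brualdi exchange lemma---is exactly the classical route; your coloop-adjunction trick to absorb the virtual source $s$ into an honest equal-size swap is clean and works as stated (the extended matroid agrees with $\cM_1$ on sets not containing $s^*$, so independence of $S'$ transfers back). The second part is the usual min--max certificate: your choice $A=\ground\setminus U$, $B=U$ with $U$ the set reachable from $s$ is the standard one, and your circuit-chasing to show $\rank_1(A)=|S\cap A|$ and $\rank_2(B)=|S\cap B|$ is correct. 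One tiny point worth making explicit in the write-up: when you say ``the symmetric argument using $t\notin U$,'' you should note that for $y\in B\setminus S$ the edge $(y,x)\in E_2$ goes \emph{from} $y$ \emph{to} $x$, so reachability of $y$ from $s$ directly gives reachability of $x$; this is what forces $x\in B$.
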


We write $S \oplus P$, where $P$ is an augmenting path in $G(S)$, for the common independent set $S^\prime := S \oplus (V(P) \setminus \{s, t\})$ obtained by augmenting $S$ along $P$.
Let $d_{G(S)}(u, v)$ denote the $(u, v)$-distance in $G(S)$.
When $S$ is clear from context, let $d_t$ denote $d_{G(S)}(s, t)$.
The following lemma states that if $d_{G(S)}(s, t)$ is large, then $S$ is close to being optimal.

\begin{lemma}[\cite{cunningham1986improved}]
  If $S \in \cI_1 \cap \cI_2$ satisfies $d_{G(S)}(s, t) \geq d$, then $|S|$ is at least $\left(1 - O(\frac{1}{d})\right)r$.
  \label{lemma:approx}
\end{lemma}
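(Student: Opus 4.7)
The plan is to control the deficit $r^{\ast} - |S|$, where $r^{\ast} := \max_{T \in \cI_1 \cap \cI_2} |T| \le r$, by exhibiting many long, internally vertex-disjoint augmenting paths in $G(S)$, and then counting their internal vertices.

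First, I would argue that $G(S)$ contains at least $r^{\ast} - |S|$ internally vertex-disjoint $(s,t)$-augmenting paths whose internal vertices all lie in $S \oplus T$, for $T$ any optimal common independent set with $|T| = r^{\ast}$. The proof is the standard symmetric-difference decomposition: applying the matroid exchange property of $\cM_1$ and $\cM_2$ separately to pair up elements of $S \setminus T$ with elements of $T \setminus S$, the structure $S \oplus T$ can be traced as $r^{\ast} - |S|$ alternating $(s,t)$-paths (plus possibly some alternating cycles, which are discarded) inside $G(S)$; each such path uses $k$ elements from $T \setminus S$ and $k - 1$ elements from $S \setminus T$, contributing $+1$ to the common independent set size.

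Second, by the hypothesis $d_{G(S)}(s, t) \ge d$, each of these paths has length at least $d$ and hence uses at least $d - 1$ internal vertices. Since the paths are internally disjoint and their internal vertices all lie in $S \oplus T$,
\[
(r^{\ast} - |S|)(d - 1) \;\le\; |S \oplus T| \;\le\; |S| + |T| \;\le\; 2r.
\]
Rearranging gives $r^{\ast} - |S| \le 2r/(d - 1) = O(r/d)$, and hence $|S| \ge r^{\ast} - O(r/d) \ge (1 - O(1/d))\, r$, where the final step uses $r^{\ast} \le r$ and absorbs the constant into the $O(\cdot)$; in the regimes where this lemma is invoked we moreover have $r^{\ast} = \Theta(r)$, so the bound is tight up to constants.

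The main obstacle is step one: producing the disjoint augmenting paths inside $G(S)$ from $S \oplus T$. This is not immediate from \cref{def:exchange-graph} alone, since the definitions of $E_1$ and $E_2$ only provide ``rank-one'' exchanges. I would proceed by induction on $r^{\ast} - |S|$: at each step, use the exchange property of $\cM_1$ between $S$ and $T$ to find a starting edge $(s, y)$ with $y \in T \setminus S$ and $S + y \in \cI_1$, then repeatedly alternate between $\cM_2$- and $\cM_1$-exchange edges to extend the path until a $(y, t)$-edge can be closed. Verifying that the paths thus extracted remain internally disjoint and together exhaust $S \oplus T$ is the technical crux, but this is a standard exchange-argument bookkeeping going back to~\cite{cunningham1986improved}; the remaining counting in the second paragraph is then a short computation.
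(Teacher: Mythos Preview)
The paper does not prove this lemma; it is stated with a citation to Cunningham~\cite{cunningham1986improved} and no argument is given. Your proposal is the classical Hopcroft--Karp/Cunningham counting argument, and the overall strategy is correct.

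Two remarks. First, the existence of $r^\ast - |S|$ internally vertex-disjoint $(s,t)$-paths in $G(S)$ is indeed the only non-trivial step, but the inductive, edge-by-edge extraction you sketch is not the clean way to get it. The standard routes are either (i) Menger's theorem on $G(S)$ combined with the matroid-intersection min--max formula (an $(s,t)$-vertex cut in $G(S)$ yields a partition witnessing $r^\ast \le |S| + |\text{cut}|$), or (ii) the strong-exchange / perfect-matching lemma applied separately in $\cM_1$ and $\cM_2$ to the pair $(S,T)$, which directly decomposes $S\oplus T$ into alternating paths and cycles. Either approach avoids the ad hoc induction. Second, your counting actually yields $|S| \ge r^\ast - O(r/d)$ with $r^\ast$ the optimum; the lemma as printed uses $r$ (the matroid rank), and you are right to note that this only coincides with $(1-O(1/d))r$ when $r^\ast = \Theta(r)$, which is how the lemma is applied in the paper.
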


The following bound on the total length of shortest augmenting paths will be useful for our analysis.

\begin{lemma}[\cite{cunningham1986improved}]
  If we solve matroid intersection by repeatedly finding the shortest augmenting paths, then the sum of the lengths of these augmenting paths is $O(r\log r)$.
  \label{lemma:augmenting-path-lengths}
\end{lemma}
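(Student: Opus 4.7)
The plan is to combine two classical ingredients: (i) Cunningham's monotonicity result that the lengths of successively found shortest augmenting paths are non-decreasing, and (ii) the already-stated \cref{lemma:approx}, which says that if the current shortest augmenting path has length at least $d$, then the current solution has size at least $(1 - c/d)r$ for some constant $c$, so at most $cr/d$ further augmentations remain.

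Concretely, let the algorithm perform $k \le r$ augmentations along shortest augmenting paths of lengths $d_1 \le d_2 \le \cdots \le d_k$ (the ordering is guaranteed by monotonicity). When the algorithm is about to perform the $i$-th augmentation, the shortest $(s,t)$-distance in $G(S)$ is exactly $d_i$, and the number of augmentations still to be done (including this one) is $k - i + 1$. Since each augmentation raises $|S|$ by one and $|S|$ cannot exceed $r$, applying \cref{lemma:approx} at this moment gives
\[
k - i + 1 \;\le\; r - |S| \;\le\; \frac{c\,r}{d_i},
\]
so $d_i \le c\,r/(k-i+1)$.

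Summing, the total length is bounded by
\[
\sum_{i=1}^{k} d_i \;\le\; \sum_{i=1}^{k} \frac{c\,r}{k-i+1} \;=\; c\,r \sum_{j=1}^{k} \frac{1}{j} \;=\; O(r \log k) \;=\; O(r \log r),
\]
which is the desired bound.

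The only subtle point — and the one I would be most careful about — is the monotonicity claim $d_1 \le \cdots \le d_k$, which is Cunningham's classical lemma for matroid intersection (the analogue of the Hopcroft--Karp fact for bipartite matching). The statement of \cref{lemma:approx} already invokes \cite{cunningham1986improved}, so I would simply cite the same paper for monotonicity; alternatively, one can bypass monotonicity entirely by noting that whenever the algorithm finds a path of length $d$, the current $|S|$ satisfies $|S| \ge (1 - c/d)r$ by \cref{lemma:approx}, and then grouping augmentations by the dyadic interval $d \in [2^j, 2^{j+1})$: each such interval contributes at most $O(r/2^j)$ augmentations of length at most $2^{j+1}$, giving total contribution $O(r)$ per level and $O(r \log r)$ overall. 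Either route is short; the harmonic-sum version above is the cleanest.
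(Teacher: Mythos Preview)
Your proof is correct and is exactly the standard argument for this result. Note, however, that the paper does not actually give its own proof of this lemma: it is stated with a citation to \cite{cunningham1986improved} and left unproved, so there is nothing to compare against. Your harmonic-sum derivation (and the alternative dyadic-grouping argument you sketch) are both standard ways to obtain the $O(r\log r)$ bound from \cref{lemma:approx} together with the monotonicity of shortest augmenting path lengths; either would be a perfectly acceptable justification if one were asked to fill in the omitted proof.
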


\begin{lemma}[\cite{cunningham1986improved,price2015,chakrabarty2019faster}]
  If we augment along a shortest $(s, t)$-path in $G(S)$ to obtain $S^\prime$, then for each $u \in \ground$, the following hold (let $d := d_{G(S)}$ and $d^\prime := d_{G(S^\prime)})$.
  \begin{enumerate}
    \item If $d(s, u) < d(s, t)$, then $d^\prime(s, u) \geq d(s, u)$. If $d(u, t) < d(s, t)$, then $d^\prime(u, t) \geq d(u, t)$.
    \item If $d(s, u) \geq d(s, t)$, then $d^\prime(s, u) \geq d^\prime(s, t)$. If $d(u, t) \geq d(s, t)$, then $d^\prime(u, t) \geq d^\prime(s, t)$.
  \end{enumerate}
  \label{lemma:monotone}
\end{lemma}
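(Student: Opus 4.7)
The plan is to prove this monotonicity lemma via the standard ``swapping'' argument, comparing shortest paths in $G(S)$ and $G(S')$ after augmenting along the shortest $(s,t)$-path $P$. I will focus on the statements about $d(s, u)$ and $d'(s, u)$; the symmetric statements about $d(u,t)$ and $d'(u,t)$ follow by reversing all edges (which corresponds to swapping the roles of $\cM_1$ and $\cM_2$).

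The first step is a structural lemma describing how $E(G(S))$ and $E(G(S'))$ relate. I would show that any edge $(x,y)$ appearing in $G(S')$ but not in $G(S)$ must have at least one endpoint on $V(P) \cup \{s,t\}$. This is done by a four-way case analysis on the edge type ($E_1, E_2, E_s, E_t$). For instance, if $(x,y) \in E_1(G(S'))$ with $x \in S'$, $y \notin S'$, and both endpoints lie outside $V(P)$, then $x \in S$ and $y \notin S$, and an exchange-property argument using the fact that $S$ and $S'$ differ only by $V(P)$ shows that $S - x + y \in \cI_1$ already held, so $(x,y) \in E_1(G(S))$ as well. Analogous statements handle the other three edge types. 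Moreover, by tracking the BFS layers of $G(S)$, I would show that the new edges incident to vertices on $P$ can only go from a later layer to an earlier layer (i.e., they are ``backward'' edges with respect to the $G(S)$-layering), since $P$ only uses edges between consecutive layers.

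The second step proves part~1 by contradiction. Suppose some $u$ with $d(s,u) < d(s,t)$ satisfies $d'(s,u) < d(s,u)$; choose such a $u$ minimizing $d'(s,u)$. Let $Q$ be a shortest $(s,u)$-path in $G(S')$ and let $v$ be its predecessor of $u$. Then $d'(s,v) = d'(s,u) - 1$, and by the minimality of $u$, either $v$ also violates the conclusion (contradicting minimality of $d'(s,u)$) or $d'(s,v) \geq d(s,v)$. In the latter case the edge $(v,u)$ gives $d(s,u) \leq d(s,v) + 1 \leq d'(s,v) + 1 = d'(s,u)$, a contradiction \emph{unless} $(v,u)$ is a new edge not in $G(S)$. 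By the structural lemma, $(v,u)$ is then incident to $V(P)$ and is a ``backward'' edge in the $G(S)$-layering; splicing the prefix of $Q$ with the relevant portion of $P$ yields an $(s,u)$-walk in $G(S)$ shorter than $d(s,u)$, the desired contradiction. (The symmetric statement $d'(u,t) \geq d(u,t)$ for $d(u,t) < d(s,t)$ follows by reversing edges.)

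The third step proves part~2. Suppose $d(s,u) \geq d(s,t)$ and consider a shortest $(s,u)$-path $Q$ in $G(S')$; I would show this path must ``cross'' the level where $t$ sits in $G(S')$. Concretely, apply part~1 to each prefix of $Q$: any intermediate vertex $w$ of $Q$ with $d(s,w) < d(s,t)$ satisfies $d'(s,w) \geq d(s,w)$, so $Q$ cannot reach the ``high-distance region'' $\{w : d(s,w) \geq d(s,t)\}$ any sooner in $G(S')$ than it would in $G(S)$. Combined with the fact that $t$ itself is a vertex reachable at distance $d'(s,t)$, this forces $d'(s,u) \geq d'(s,t)$. The symmetric statement for $d'(u,t)$ follows again by reversing the roles of $s$ and $t$.

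The main obstacle, I expect, is the case analysis in the first step---verifying that a ``new'' edge $(x,y)$ in $G(S') \setminus G(S)$ with both endpoints off $P$ really cannot exist. This relies on carefully invoking the submodularity of the rank function together with the augmenting-path structure, and getting the independence-preservation argument right across all four edge types is where the bulk of the work lies.
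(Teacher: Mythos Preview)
First, note that the paper does not prove this lemma at all; it is stated with citations to \cite{cunningham1986improved,price2015,chakrabarty2019faster} and used as a black box, so there is no in-paper proof to compare your attempt against.

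On the merits of your outline: the overall shape (structural control on new arcs, then a minimality/induction argument on $d'$) is the standard one, but your first step contains a genuine error. The claim that every arc $(x,y)\in E(G(S'))\setminus E(G(S))$ has an endpoint on $V(P)$ is \emph{false}. In the $E_1$ case with $x\in S\cap S'$ and $y\notin S\cup S'$, the test ``$S-x+y\in\cI_1$'' depends on all of $S$, not only on $x$ and $y$; replacing $\{a_2,a_4,\dots\}$ by $\{a_1,a_3,\dots\}$ can flip it even when $x,y\notin V(P)$. Concretely, in a linear matroid let $y$ be parallel to $a_2$: the unique circuit in $S+y$ is $\{a_2,y\}$, so $S-x+y$ is dependent for every $x\in S\setminus\{a_2\}$, yet $S'-x+y$ can be independent once $a_2\notin S'$. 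Your ``exchange-property argument'' cannot close this case.

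What the circuit argument \emph{does} give here is weaker but sufficient: in the situation above, the circuit of $y$ in $S$ (which avoids $x$) must meet $S\setminus S'=\{a_2,a_4,\dots\}$, so some path vertex $a_{2j}$ already has $(a_{2j},y)\in E_1(G(S))$. Combined with the fact that $P$ is \emph{shortest} (so the $a_i$ sit in consecutive layers) and a parallel analysis of the other arc types, one obtains the correct structural statement that no arc of $G(S')$ advances more than one $G(S)$-layer. With that corrected lemma your step~2 goes through; in step~3 you are also implicitly using $d'(s,t)\ge d(s,t)$, which is not a literal instance of part~1 and needs its own (short) justification.
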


\paragraph{Augmenting Sets.}
The following notion of \emph{augmenting sets}, introduced by \cite{chakrabarty2019faster}, models a collection of ``mutually compatible'' augmenting paths, i.e., paths that can be augmented sequentially without interfering with each other.

\begin{definition}[Augmenting Set~{\cite[Definition 24]{chakrabarty2019faster}}]
  Let $S \in \cI_1 \cap \cI_2$ satisfy $d_{G(S)}(s, t) = d_t$ and let $L_1, L_2, \ldots, L_{d_t - 1}$ be the distance layers of $G(S)$.
  A collection $\Pi := (D_1, D_2, \cdots, D_{d_t - 1})$ is an \emph{augmenting set} in $G(S)$ if
  \begin{enumerate}[noitemsep, label=(\roman*)]
    \item $D_\ell \subseteq L_\ell$ holds for each $1 \leq \ell < d_t$,
    \item $|D_1| = |D_2| = \cdots = |D_{d_t - 1}|$,
    \item $S + D_1 \in \cI_1$,
    \item $S + D_{d_t - 1} \in \cI_2$,
    \item $S - D_\ell + D_{\ell + 1} \in \cI_1$ holds for each \emph{even} $1 \leq \ell < d_t - 1$, and
    \item $S - D_{\ell + 1} + D_{\ell} \in \cI_2$ holds for each \emph{odd} $1 \leq \ell < d_t - 1$.
  \end{enumerate}
  \label{def:augmenting-sets}
\end{definition}

One can think of the concept of augmenting sets as a generalization of augmenting paths.
Indeed, an augmenting path is an augmenting set where $|D_1| = \cdots = |D_{d_t - 1}| = 1$.
The term ``mutually compatible'' augmenting paths is formalized as follows.

\begin{definition}[Consecutive Shortest Paths~{\cite[Definition 28]{chakrabarty2019faster}}]
  A collection of vertex-disjoint shortest $(s, t)$-paths $\cP = (P_1, \ldots, P_k)$ in $G(S)$ is a collection of \emph{consecutive shortest paths} if $P_i$ is a shortest augmenting path in $G(S \oplus P_1 \oplus \cdots \oplus P_{i - 1})$ for each $1 \leq i \leq k$.
\end{definition}

The following structural lemmas of \cite{chakrabarty2019faster} will be useful for us, particularly in deriving \cref{lemma:key} in \cref{sec:matroid-intersection}.

\begin{lemma}[{\cite[Theorem 25]{chakrabarty2019faster}}]
  Let $\Pi := (D_1, \ldots, D_{d_t - 1})$ be an augmenting set in $G(S)$.
  Then, $S^\prime := S \oplus \Pi := S \oplus D_1 \oplus \cdots \oplus D_{d_t - 1}$ is a common independent set.
  \label{lemma:aug-set}
\end{lemma}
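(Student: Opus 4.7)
The plan is to establish $S^\prime \in \cI_1$ and $S^\prime \in \cI_2$ separately; since the definition of an augmenting set is symmetric under reversing the layer order $(D_1,\ldots,D_{d_t-1}) \mapsto (D_{d_t-1},\ldots,D_1)$ combined with swapping $\cI_1 \leftrightarrow \cI_2$ (conditions (iii)--(iv) and (v)--(vi) of \cref{def:augmenting-sets} interchange), it suffices to prove $S^\prime \in \cI_1$.

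First I would pin down the bipartite layering of $G(S)$: by \cref{def:exchange-graph}, edges out of $s$ land in $\bar S$ via $E_s$, while $E_1$ and $E_2$ alternate between $S$ and $\bar S$, so $L_\ell \subseteq \bar S$ for odd $\ell$ and $L_\ell \subseteq S$ for even $\ell$; because only elements of $\bar S$ reach $t$ via $E_t$, the distance $d_t$ is even. Writing $D_{\mathrm{odd}} := \bigcup_{\ell\text{ odd}} D_\ell$ and $D_{\mathrm{even}} := \bigcup_{\ell\text{ even}} D_\ell$, the set $S^\prime = (S \setminus D_{\mathrm{even}}) \cup D_{\mathrm{odd}}$ is well-defined (the $D_\ell$ are pairwise disjoint, being subsets of distinct layers) and has size $|S| + |D_1|$ by condition (ii).

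The core task is to combine the local $\cI_1$-hypotheses $S + D_1 \in \cI_1$ and $S - D_\ell + D_{\ell+1} \in \cI_1$ (for even $\ell$) into the global statement $S^\prime \in \cI_1$. I would proceed by induction on $d_t$: the base case $d_t = 2$ is immediate from condition (iii). For the inductive step, I would ``peel off'' the innermost swap pair by introducing the intermediate independent set $\tilde S := S - D_{d_t-2} + D_{d_t-1} \in \cI_1$, and then verify that the truncated tuple $(D_1,\ldots,D_{d_t-3})$ still satisfies the augmenting-set $\cI_1$-conditions relative to $\tilde S$, so that the inductive hypothesis yields $\tilde S \oplus D_1 \oplus \cdots \oplus D_{d_t-3} = S^\prime \in \cI_1$.

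The main obstacle is exactly this transfer of local exchange conditions from $S$ to $\tilde S$. Since in a general matroid $S - x + y \in \cI_1$ and $S - x^\prime + y^\prime \in \cI_1$ do \emph{not} imply $S - x - x^\prime + y + y^\prime \in \cI_1$, a direct submodularity argument on $\rank_1$ is insufficient and one must exploit the specific layered structure. The tool I would invoke is the symmetric/strong exchange axiom applied to the equicardinal independent sets $I := S - D_\ell + D_{\ell+1}$ and $J := \tilde S$: they share the large common independent subset $S - D_\ell - D_{d_t-2}$, and the sets $D_{\ell+1}$ (new elements of $I$ relative to $J$) and $D_{d_t-1}$ (new elements of $J$ relative to $I$) are disjoint, so the exchange axiom produces a bijection between $I \setminus J$ and $J \setminus I$ witnessing that the combined swap $\tilde S - D_\ell + D_{\ell+1} \in \cI_1$ is realisable. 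Iterating this lifting over all even $\ell < d_t - 2$ verifies each of the hypotheses needed for the truncated augmenting set on base $\tilde S$, closing the induction. The symmetric argument, running from the opposite end using conditions (iv) and (vi), establishes $S^\prime \in \cI_2$ and completes the proof.
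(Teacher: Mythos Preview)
The paper does not prove this lemma; it is imported verbatim from \cite[Theorem~25]{chakrabarty2019faster}. So there is no in-paper argument to compare against, and the question is simply whether your proposed proof is correct. It is not: the inductive step has a genuine gap.

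You want to transfer the local condition $S - D_\ell + D_{\ell+1} \in \cI_1$ from $S$ to $\tilde S = S - D_{d_t-2} + D_{d_t-1}$, and you claim the symmetric exchange axiom applied to $I := S - D_\ell + D_{\ell+1}$ and $J := \tilde S$ does this. But the exchange axiom only guarantees \emph{some} bijection between $I \setminus J = D_{\ell+1} \cup D_{d_t-2}$ and $J \setminus I = D_\ell \cup D_{d_t-1}$ along which single swaps are valid; it does not force the bijection to respect the specific block structure $D_{\ell+1} \leftrightarrow D_\ell$, $D_{d_t-2} \leftrightarrow D_{d_t-1}$ that you need in order to land on $\tilde S - D_\ell + D_{\ell+1}$. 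More tellingly, your inductive step never invokes condition~(i) of \cref{def:augmenting-sets}, and without it the statement you are trying to prove is simply false. In the linear matroid on $\mathbb{R}^3$ take $S = \{e_1,e_2\}$, $D_1 = \{e_3\}$, $D_2 = \{e_1\}$, $D_3 = \{e_2+e_3\}$: then $S + D_1$ and $S - D_2 + D_3$ are both independent, yet $S + D_1 - D_2 + D_3 = \{e_2,e_3,e_2+e_3\}$ is dependent.

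What makes the result true is precisely the layer constraint $D_\ell \subseteq L_\ell$: it rules out ``shortcut'' $\cI_1$-exchange edges from $D_{\ell'}$ to $D_{\ell+1}$ for any even $\ell' < \ell$ (such an edge would place $D_{\ell+1}$ at distance $\le \ell'+1$). The proof in \cite{chakrabarty2019faster} exploits this no-shortcut structure via a Krogdahl-type lemma (a unique perfect matching in the exchange bipartite graph on $D_{\mathrm{even}} \cup D_{\mathrm{odd}}$ forces $S - D_{\mathrm{even}} + D_{\mathrm{odd}} \in \cI_1$), which is the missing ingredient in your argument.
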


For two augmenting sets $\Pi = (D_1, D_2, \ldots, D_{d_t - 1})$ and $\Pi^\prime = (D_1^\prime, D_2^\prime, \ldots, D_{d_t - 1}^\prime)$, we use $\Pi \subseteq \Pi^\prime$ to denote that $D_{\ell} \subseteq D_{\ell}^\prime$ hold for each $1 \leq \ell < d_t$.
In this case, let $\Pi^\prime \setminus \Pi := (D_1^\prime \setminus D_1, \ldots, D_{d_t - 1}^\prime \setminus D_{d_t - 1})$.
We will hereafter abuse notation and let $\Pi$ also denote the set of elements $D_1 \cup \cdots \cup D_{d_t - 1}$ in it.
In particular, $\ground \setminus \Pi$ denotes $\ground \setminus \left(D_1 \cup \cdots \cup D_{d_t - 1}\right)$.

\begin{lemma}[{\cite[Theorem 33]{chakrabarty2019faster}}]
  For two augmenting sets $\Pi \subseteq \Pi^\prime$ in $G(S)$, $\Pi^\prime \setminus \Pi$ is an augmenting set in $G(S \oplus \Pi)$.
  \label{lemma:difference-is-aug-set}
\end{lemma}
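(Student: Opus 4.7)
The plan is to verify each of the six defining properties of an augmenting set (\cref{def:augmenting-sets}) for $\Pi' \setminus \Pi = (D'_1 \setminus D_1, \ldots, D'_{d_t-1} \setminus D_{d_t-1})$ with respect to the common independent set $S \oplus \Pi$. I will start with the bookkeeping. Condition (ii) (equal layer sizes) is immediate since $|D'_\ell \setminus D_\ell| = |D'_\ell| - |D_\ell| = |D'_1| - |D_1|$ holds for every $\ell$, using that $\Pi$ and $\Pi'$ themselves have uniform-sized layers. It is also useful to note that, because $\Pi$ and $\Pi' \setminus \Pi$ are disjoint as element sets, $(S \oplus \Pi) \oplus (\Pi' \setminus \Pi) = S \oplus \Pi'$, and by \cref{lemma:aug-set} both $S \oplus \Pi$ and $S \oplus \Pi'$ are common independent sets, serving as a useful reference point.

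For the distance-layer condition (i), I first observe that any element $u \in D'_\ell \setminus D_\ell$ sits on the correct side of $S$ (odd $\ell \Rightarrow u \in \bar{S}$, even $\ell \Rightarrow u \in S$) and is untouched by $\Pi$, so its side relative to $S \oplus \Pi$ is unchanged. To pin $u$ to distance exactly $\ell$ in $G(S \oplus \Pi)$, I invoke \cref{lemma:monotone}: since $\Pi$ can be realized by a sequence of consecutive shortest-path augmentations, distances from $s$ and to $t$ cannot shrink for elements outside $\Pi$. Combined with the upper bound coming from the existence of an actual $(s,t)$-route through $u$ inside the putative augmenting set $\Pi' \setminus \Pi$, this will sandwich $d_{G(S \oplus \Pi)}(s, u)$ to equal $\ell$.

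The core technical content is conditions (iii)-(vi). A direct manipulation shows, for instance, that $(S \oplus \Pi) + (D'_1 \setminus D_1) = S \oplus (D'_1, D_2, D_3, \ldots, D_{d_t-1})$, i.e.\ the set obtained from $S$ by applying a \emph{hybrid} sequence in which the first layer of $\Pi$ is enlarged to $D'_1$. I plan to show this hybrid is $\cI_1$-independent by re-running the rank-submodularity proof of \cref{lemma:aug-set} applied to the hybrid: the base step $S + D'_1 \in \cI_1$ is property (iii) of $\Pi'$, and the subsequent exchange steps $S - D_{2i} + D_{2i+1} \in \cI_1$ are property (v) of $\Pi$. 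Summing the corresponding rank equalities $\rank_1(S + D'_1) = |S| + |D'_1|$ and $\rank_1(S - D_{2i} + D_{2i+1}) = |S|$ and applying submodularity pairwise lets the terms telescope, leaving a lower bound on $\rank_1$ of the hybrid set that matches its cardinality $|S| + |D'_1 \setminus D_1|$. The same template, now enlarging layers $\ell$ and $\ell+1$ together, handles (v); properties (iv) and (vi) are fully symmetric for $\cI_2$.

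The main obstacle I expect is carrying out the submodularity bookkeeping cleanly for the hybrid sets in (iii)-(vi), specifically making sure that the substitutions brought in from $\Pi'$ (the ``primed'' layers) interact consistently with the $\cI_1$- (respectively $\cI_2$-) exchanges certified by $\Pi$. Concretely, I will apply $\rank_1(A) + \rank_1(B) \geq \rank_1(A \cup B) + \rank_1(A \cap B)$ to appropriately chosen pairs $(A_i, B_i)$ that encode the per-layer exchange certifications, and verify that after all cancellations the unions telescope to exactly the hybrid set, while the intersections telescope to $S$ itself. This mirrors CLSSW's argument for \cref{lemma:aug-set} but the chosen pairs must be adapted so that the single enlarged layer (or pair of enlarged layers) is absorbed correctly; once this accounting is in place, conditions (iii)-(vi) follow, completing the verification that $\Pi' \setminus \Pi$ is an augmenting set in $G(S \oplus \Pi)$.
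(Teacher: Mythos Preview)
The paper does not give its own proof of this lemma; it is stated with a citation to \cite[Theorem~33]{chakrabarty2019faster} and used as a black box, so there is nothing in-paper to compare against.

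On the merits of your proposal: the plan to verify (ii)--(vi) directly via rank submodularity and a telescoping of the per-layer exchange certificates is sound and is the natural route. The one point that needs repair is your handling of condition~(i). You propose to upper-bound $d_{G(S\oplus\Pi)}(s,u)$ for $u\in D'_\ell\setminus D_\ell$ by appealing to ``the existence of an actual $(s,t)$-route through $u$ inside the putative augmenting set $\Pi'\setminus\Pi$''. That is circular: exhibiting such a route in $G(S\oplus\Pi)$ is precisely what being an augmenting set there would give you. The lower bound via \cref{lemma:monotone} (together with \cref{lemma:aug-paths-from-aug-set} to realize $\Pi$ as consecutive shortest paths) is fine; for the upper bound you should instead first establish (iii)--(vi), and then use those independence guarantees to build the path to $u$ layer by layer inside $G(S\oplus\Pi)$ (each step is a single exchange certified by the relevant condition among (iii)/(v) or (iv)/(vi)). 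The same extracted path, continued to $t$, also certifies $d_{G(S\oplus\Pi)}(s,t)=d_t$ (monotonicity gives $\ge d_t$; the path gives $\le d_t$), which you need for the definition to apply in the first place. The case $\Pi=\Pi'$ is trivial, so you may assume $\Pi'\setminus\Pi$ is nonempty when doing this.
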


\begin{lemma}[{\cite[Theorem 29]{chakrabarty2019faster}}]
  Given a collection of consecutive shortest paths $P_1, \ldots, P_k$ in $G(S)$, where $P_i = (s, a_{i,1}, \ldots, a_{i, d_t - 1}, t)$, the collection $\Pi = (D_1, \ldots, D_{d_t - 1})$, where $D_i = \{a_{1,i}, \ldots, a_{k, i}\}$, is an augmenting set in $G(S)$.
  \label{lemma:aug-set-from-aug-paths}
\end{lemma}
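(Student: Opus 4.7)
The plan is to prove this by induction on $k$. For the base case $k=1$, the collection $\Pi$ consists of singletons $D_\ell = \{a_{1,\ell}\}$, and each of the six defining properties of \cref{def:augmenting-sets} follows directly from the fact that $P_1$ is a shortest $(s,t)$-path in $G(S)$: property~(iii) from the edge $(s, a_{1,1}) \in E_s$, property~(iv) from $(a_{1, d_t - 1}, t) \in E_t$, and properties~(v) and~(vi) from the fact that the edges of $P_1$ alternate between $E_1$ and $E_2$ (with the $E_1$-edges leaving even-indexed layers). Properties~(i) and~(ii) are immediate.

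For the inductive step, I would set $\tilde{S} := S \oplus P_1$. By definition of consecutive shortest paths, $P_2, \ldots, P_k$ are consecutive shortest paths in $G(\tilde{S})$; moreover, since $P_2 \in G(\tilde{S})$ has length $d_t + 1$ while \cref{lemma:monotone} applied to the augmentation along $P_1$ gives $d_{G(\tilde{S})}(s, t) \geq d_t$, we conclude $d_{G(\tilde{S})}(s, t) = d_t$. The inductive hypothesis then provides an augmenting set $\tilde{\Pi} = (\tilde{D}_1, \ldots, \tilde{D}_{d_t - 1})$ in $G(\tilde{S})$ with $\tilde{D}_\ell = \{a_{2, \ell}, \ldots, a_{k, \ell}\}$, and the remaining goal is to show that $\Pi$ with $D_\ell = \{a_{1, \ell}\} \cup \tilde{D}_\ell$ is an augmenting set in $G(S)$.

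Property~(ii) is immediate from vertex-disjointness of the paths, and property~(i) follows because \cref{lemma:monotone} gives $d_{G(S)}(s, a_{j, \ell}) \leq d_{G(\tilde{S})}(s, a_{j, \ell}) = \ell$ for $j \geq 2$, and symmetrically from $t$; combined with $d_{G(S)}(s, t) = d_t$ this forces $a_{j, \ell} \in L_\ell$, while $a_{1, \ell} \in L_\ell$ is direct from $P_1$ being shortest.

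The technical crux lies in verifying the matroid conditions~(iii)-(vi). The strategy is to splice the single-element exchange provided by the relevant edge of $P_1$ into the multi-element exchange encoded by $\tilde{\Pi}$. For~(iii), for instance, we have $\tilde{S} + \tilde{D}_1 \in \cI_1$ from the inductive hypothesis and $S + a_{1, 1} \in \cI_1$ from $(s, a_{1, 1}) \in E_s$, and must combine these with the fact that $\tilde{S}$ differs from $S$ only by the swap along $P_1$ (removing even-indexed and adding odd-indexed vertices of $P_1$) to derive $S + D_1 \in \cI_1$. This composition will proceed via the circuit/exchange axiom of $\cM_1$, exploiting that the $E_1$-edges of $G(\tilde{S})$ used by $\tilde{\Pi}$ correspond precisely to the swaps in $\cI_1$ that remain legal after the $P_1$-augmentation; the vertex-disjointness of the paths ensures that the singleton swap and the multi-swap do not interfere in $\cI_1$. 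The same scheme handles~(iv)-(vi), with $\cI_2$ replacing $\cI_1$ where appropriate. I expect this splicing step---turning the composition of a single exchange in $G(S)$ with a multi-exchange in $G(\tilde{S})$ into a single multi-exchange in $G(S)$---to be the main obstacle, while the remainder of the argument is bookkeeping on the layer structure.
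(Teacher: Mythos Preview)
The paper does not supply its own proof of this lemma; it is cited from \cite[Theorem~29]{chakrabarty2019faster}, with a remark that the formulation here is ``clear from their proof''. So there is no in-house argument to compare your attempt against.

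Evaluated on its own, your inductive framework (peel off $P_1$, apply the hypothesis in $G(\tilde S)$) is sound, and your treatment of properties (i)--(ii) via \cref{lemma:monotone} is correct. The genuine gap is exactly where you flag it: the splicing step for (iii)--(vi). What you write---apply the circuit/exchange axiom once, with vertex-disjointness ensuring the single swap and the multi-swap ``do not interfere''---does not suffice, because the two sets you must relate differ along the \emph{entire} path $P_1$, not in one pair. For instance, $\tilde S+\tilde D_1$ and $S+D_1$ disagree on $\Theta(d_t)$ elements (one contains $a_{1,3},a_{1,5},\ldots$, the other $a_{1,2},a_{1,4},\ldots$), so a single exchange cannot bridge them.

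There is a span argument that rescues (iii) and (iv), though you have not written it: every $a_{1,j}$ with $j$ odd and $j\ge 3$ lies in layer $\ge 3$ of $G(S)$, hence in $\sspan_1(S)\subseteq\sspan_1(S+D_1)$; since $\tilde S+\tilde D_1\subseteq (S+D_1)\cup\{a_{1,3},a_{1,5},\ldots\}$, one gets $\rank_1(S+D_1)=\rank_1\bigl((S+D_1)\cup\{a_{1,3},\ldots\}\bigr)\ge\rank_1(\tilde S+\tilde D_1)=|S|+k$, which is exactly (iii). For (v)--(vi), however, this argument breaks. With $T=S-D_\ell+D_{\ell+1}$, the extra elements $\tilde T\setminus T$ now include $a_{1,1}\notin\sspan_1(S)$, and for odd $3\le j<\ell$ the fundamental $\cM_1$-circuit of $a_{1,j}$ in $S$ may meet $D_\ell$, so these elements need not lie in $\sspan_1(T)$. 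Closing this requires the ``no-shortcut'' structure of the BFS layers (each element of that circuit sits in a layer $\ge j-1$) together with a more delicate rank count; that is what the CLSSW proof supplies, and your sketch stops short of it.
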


The converse of \cref{lemma:aug-set-from-aug-paths} also holds.

\begin{lemma}[{\cite[Theorem 34]{chakrabarty2019faster}}]
  Given an augmenting set $\Pi$ in $G(S)$, there is a collection consecutive shortest paths $P_1, \ldots, P_k$ in $G(S)$ where $P_i = (s, a_{i,1}, \ldots, a_{i,d_t - 1}, t)$ such that $D_i = \{a_{1,i}, \ldots, a_{k,i}\}$.
  \label{lemma:aug-paths-from-aug-set}
\end{lemma}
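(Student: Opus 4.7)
The plan is to prove the claim by induction on $k = |D_1| = |D_2| = \cdots = |D_{d_t-1}|$, the ``width'' of the augmenting set. The base case $k = 0$ is trivial. For the inductive step, I would extract a single shortest augmenting path $P_1$ that visits exactly one element of each $D_\ell$; view this $P_1$ as a width-$1$ augmenting set $(\{a_{1,1}\}, \ldots, \{a_{1,d_t-1}\}) \subseteq \Pi$; apply \cref{lemma:difference-is-aug-set} to conclude that $\Pi \setminus P_1$ is an augmenting set in $G(S \oplus P_1)$; and then invoke the inductive hypothesis.

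The heart of the argument is extracting $P_1$, which I would do via a Brualdi-style symmetric exchange. For each even $1 \leq \ell < d_t - 1$, condition (v) of \cref{def:augmenting-sets} gives $S, S - D_\ell + D_{\ell+1} \in \cI_1$ with $|D_\ell| = |D_{\ell+1}|$; extending these to bases of $\cM_1$ and invoking Brualdi's bijection theorem yields a bijection $\sigma_\ell \colon D_\ell \to D_{\ell+1}$ such that $S - u + \sigma_\ell(u) \in \cI_1$ for every $u \in D_\ell$, i.e.\ $(u, \sigma_\ell(u)) \in E_1$ in the exchange graph. For each odd $1 \leq \ell < d_t - 1$, condition (vi) similarly yields a bijection $\tau_\ell \colon D_{\ell+1} \to D_\ell$ with $S - v + \tau_\ell(v) \in \cI_2$; setting $\sigma_\ell := \tau_\ell^{-1}$ then gives a bijection $D_\ell \to D_{\ell+1}$ whose graph lies in $E_2$. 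Picking any $a_{1,1} \in D_1$ and defining $a_{1,\ell+1} := \sigma_\ell(a_{1,\ell})$ successively produces a walk $s \to a_{1,1} \to \cdots \to a_{1,d_t-1} \to t$ in $G(S)$, where the terminal edges $(s, a_{1,1}) \in E_s$ and $(a_{1,d_t-1}, t) \in E_t$ are justified by downward closure applied to conditions (iii) and (iv). Since the $D_\ell$'s are disjoint, this walk is actually an $(s,t)$-path $P_1$ of length $d_t$, hence a shortest augmenting path in $G(S)$.

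With $P_1$ in hand, the inclusion of augmenting sets $P_1 \subseteq \Pi$ together with \cref{lemma:difference-is-aug-set} immediately yields that $\Pi \setminus P_1$ is an augmenting set in $G(S \oplus P_1)$. By \cref{lemma:monotone} we have $d_{G(S \oplus P_1)}(s,t) \geq d_t$, while $\Pi \setminus P_1$ still contains paths of length $d_t$, so in fact $d_{G(S \oplus P_1)}(s,t) = d_t$; this certifies that $\Pi \setminus P_1$ is an augmenting set of the same layer-structure, so the inductive hypothesis applies and produces $k-1$ consecutive shortest paths $P_2, \ldots, P_k$ in $G(S \oplus P_1)$ whose layered elements partition $\Pi \setminus P_1$. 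Prepending $P_1$ gives the desired consecutive shortest path decomposition; vertex-disjointness is automatic since the $D_\ell$'s are disjoint.

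The main obstacle is guaranteeing the simultaneous existence of the bijections $\sigma_\ell$ (and justifying the well-definedness of their composition). This is exactly where Brualdi's symmetric exchange theorem does the work: it converts the global conditions $S - D_\ell + D_{\ell+1} \in \cI_1$ (respectively $\cI_2$) into layer-by-layer perfect matchings of exchange edges, so that the composition along any starting element $a_{1,1} \in D_1$ is forced to yield a genuine path in $G(S)$ rather than merely a sequence of compatible edges.
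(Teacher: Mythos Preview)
The paper does not supply its own proof of this lemma; it is quoted from \cite{chakrabarty2019faster} (Theorem~34), with the remark that the stated variant ``is clear from their proof.'' So there is no in-paper argument to compare against, and your task is really to reproduce the CLSSW argument.

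Your inductive scheme---peel off one path $P_1$ via Brualdi's symmetric-exchange bijections, invoke \cref{lemma:difference-is-aug-set} to see that $\Pi\setminus P_1$ is an augmenting set in $G(S\oplus P_1)$, then recurse---is exactly the right skeleton and matches how the result is typically proved. The Brualdi step is correctly set up: restricting $\cM_1$ (resp.\ $\cM_2$) to $S\cup D_{\ell+1}$ (resp.\ $S\cup D_\ell$) makes both relevant sets bases, because elements of $D_{\ell+1}$ for $\ell+1\ge 3$ odd lie outside $L_1$ and hence in $\sspan_1(S)$ (similarly on the $\cM_2$ side).

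One point deserves more care. The definition of consecutive shortest paths in this paper literally asks that each $P_i$ be a shortest $(s,t)$-path \emph{in $G(S)$}, not only in $G(S\oplus P_1\oplus\cdots\oplus P_{i-1})$. Your induction delivers the latter directly for $P_2,\ldots,P_k$, but not obviously the former: an edge of $P_2$ in $G(S\oplus P_1)$ need not be an edge of $G(S)$. The cleanest fix is to use the bijections $\sigma_\ell$ to extract \emph{all} $k$ paths in $G(S)$ at once (compose the $\sigma_\ell$ starting from each element of $D_1$); this makes every $P_j$ manifestly a length-$d_t$ path in $G(S)$. The ``consecutive'' clause then follows by the same induction: $P_1$ (as a width-one augmenting set) is contained in $\Pi$, so by \cref{lemma:difference-is-aug-set} $\Pi\setminus P_1$ is an augmenting set in $G(S\oplus P_1)$, and $P_2$ (which is a path in $G(S)$ through one element of each $D_\ell\setminus\{a_{1,\ell}\}$, obtained from the \emph{same} bijections $\sigma_\ell$) is again a width-one augmenting set contained in $\Pi\setminus P_1$; by \cref{lemma:aug-set} applied in $G(S\oplus P_1)$ it is therefore a valid augmenting path there. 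Iterating yields the full statement. Your phrase ``$\Pi\setminus P_1$ still contains paths of length $d_t$'' to pin down $d_{G(S\oplus P_1)}(s,t)=d_t$ is also slightly circular as written; argue instead directly from condition~(iv) of the augmenting set $\Pi\setminus P_1$ that some element at distance $d_t-1$ has an edge to $t$.
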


\begin{remark}
  Note that \cref{lemma:aug-set-from-aug-paths,lemma:aug-paths-from-aug-set} are not equivalent to the exact statements of {\cite[Theorems 29 and 34]{chakrabarty2019faster}} (in particular, they did not specify how $\Pi$ and $P_i$ are constructed), but our versions are clear from their proof.
\end{remark}

\begin{claim}
  Let $S \in \cI_1 \cap \cI_2$ with $d_{G(S)}(s, t) = d_t$ and $\Pi \subseteq \Pi^\prime$ be two augmenting sets in $G(S)$.
  Let $u \not\in \Pi^\prime$ be an element.
  If $u$ is not on any augmenting path of length $d_t$ in $G(S \oplus \Pi)$, then $u$ is not on any augmenting path of length $d_t$ in $G(S \oplus \Pi^\prime)$ either.
  \label{claim:no-path}
\end{claim}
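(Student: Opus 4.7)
The plan is to prove the contrapositive: assuming $u$ lies on some $(s,t)$-path $P'$ of length $d_t$ in $G(S \oplus \Pi')$, I will show that $u$ must also lie on a length-$d_t$ $(s,t)$-path in $G(S \oplus \Pi)$. The main tool will be iterated application of the monotonicity lemma (\cref{lemma:monotone}).

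By \cref{lemma:difference-is-aug-set}, $\Pi' \setminus \Pi$ is an augmenting set in $G(S \oplus \Pi)$, and by \cref{lemma:aug-paths-from-aug-set} it decomposes into consecutive shortest paths $Q_1, \ldots, Q_m$ in $G(S \oplus \Pi)$, each of length $d_t$, whose cumulative augmentation transforms $G(S \oplus \Pi)$ into $G(S \oplus \Pi')$. Similarly decomposing $\Pi$ in $G(S)$ and iterating \cref{lemma:monotone} across that sequence yields $d_{G(S \oplus \Pi)}(s,t) \geq d_t$. Writing $d_i := d_{G(S \oplus \Pi \oplus Q_1 \oplus \cdots \oplus Q_i)}$ (so $d_0 = d_{G(S \oplus \Pi)}$ and $d_m = d_{G(S \oplus \Pi')}$), a case split on whether $d_i(s,v) < d_i(s,t)$ (invoking the two parts of \cref{lemma:monotone}) gives the one-step bound
\[
  d_{i+1}(s,v) \;\geq\; \min\bigl(d_i(s,v),\, d_{i+1}(s,t)\bigr),
\]
and the symmetric statement for $(v,t)$-distances. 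Iterating, together with the fact that the $d_i(s,t)$ are non-decreasing and at least $d_t$, produces
\[
  d_{G(S \oplus \Pi')}(s,u) \;\geq\; \min\bigl(d_{G(S \oplus \Pi)}(s,u),\, d_t\bigr), \qquad d_{G(S \oplus \Pi')}(u,t) \;\geq\; \min\bigl(d_{G(S \oplus \Pi)}(u,t),\, d_t\bigr).
\]

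Since $u$ appears at some position $1 \leq \ell \leq d_t - 1$ on the length-$d_t$ path $P'$, we have $d_{G(S \oplus \Pi')}(s,u) \leq \ell < d_t$ and $d_{G(S \oplus \Pi')}(u,t) \leq d_t - \ell < d_t$. Plugging these into the displayed bounds forces $d_{G(S \oplus \Pi)}(s,u) \leq \ell$ and $d_{G(S \oplus \Pi)}(u,t) \leq d_t - \ell$. The directed triangle inequality then gives $d_{G(S \oplus \Pi)}(s,t) \leq d_t$, which combined with $d_{G(S \oplus \Pi)}(s,t) \geq d_t$ forces equality throughout; concatenating a shortest $(s,u)$-path and a shortest $(u,t)$-path in $G(S \oplus \Pi)$ therefore yields a length-$d_t$ $(s,t)$-walk through $u$, and any repeated vertex in this walk would contradict $d_{G(S \oplus \Pi)}(s,t) = d_t$, so the walk is a simple path of length $d_t$ containing $u$, establishing the contrapositive.

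I expect the only subtle step to be establishing the one-step monotonicity bound $d_{i+1}(s,v) \geq \min(d_i(s,v), d_{i+1}(s,t))$ from the case analysis in \cref{lemma:monotone}, and correspondingly verifying that the $(s,t)$-distance itself is non-decreasing under each $Q_i$-augmentation (which is an easy corollary of \cref{lemma:monotone} applied to near-$t$ vertices, since any shortcut would force some $\ground$-vertex to violate monotonicity); once these are in hand, the iteration and the final triangle-inequality argument are routine.
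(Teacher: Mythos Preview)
Your proof is correct and follows essentially the same route as the paper's: decompose $\Pi' \setminus \Pi$ into consecutive shortest paths in $G(S \oplus \Pi)$ via \cref{lemma:difference-is-aug-set} and \cref{lemma:aug-paths-from-aug-set}, then invoke the distance monotonicity of \cref{lemma:monotone} along this sequence. The paper compresses all of this into two sentences, whereas you unpack the contrapositive and the iteration of the one-step bound explicitly; the underlying argument is the same.
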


\begin{proof}
  Let $S^\prime := S \oplus \Pi$ and $S^{\prime\prime} := S \oplus \Pi^\prime$.
  Since $S^{\prime\prime}$ can be obtained by augmenting $S^\prime$ along a series of shortest augmenting paths
  (by \cref{lemma:difference-is-aug-set,lemma:aug-paths-from-aug-set}),
  the claim follows from the fact that the $(s, u)$-distance and $(u, t)$-distance are monotonic (\cref{lemma:monotone}).
\end{proof}

\begin{claim}
  Let $\Pi = (D_1, \ldots, D_{d_t - 1})$ be an augmenting set in $G(S)$ and $P = (s, a_1, \ldots, a_{d_t - 1}, t)$ be an augmenting path in $G(S \oplus \Pi)$.
  Then, $\Pi^\prime = (D_1 + a_1, \ldots, D_{d_t - 1} + a_{d_t - 1})$ is an augmenting set in $G(S)$.
  \label{claim:put-aug-path-in-aug-set}
\end{claim}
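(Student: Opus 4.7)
The plan is to use the bijection between augmenting sets in $G(S)$ and collections of consecutive shortest paths (\cref{lemma:aug-set-from-aug-paths,lemma:aug-paths-from-aug-set}). First, I would apply \cref{lemma:aug-paths-from-aug-set} to decompose $\Pi$ into consecutive shortest paths $P_1, \ldots, P_k$ in $G(S)$ with $P_i = (s, b_{i,1}, \ldots, b_{i,d_t-1}, t)$ and $D_\ell = \{b_{1,\ell}, \ldots, b_{k,\ell}\}$. Then I would show that $P_1, \ldots, P_k, P$ is still a valid collection of consecutive shortest paths in $G(S)$, and apply \cref{lemma:aug-set-from-aug-paths} to the extended collection to conclude that the resulting augmenting set is exactly $(D_1 + a_1, \ldots, D_{d_t-1} + a_{d_t-1}) = \Pi'$.

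Verifying that the extended collection is valid reduces to checking (a) that $P$ is a shortest augmenting path in $G(S \oplus \Pi)$ of length $d_t$, and (b) that $P$ is vertex-disjoint from $P_1, \ldots, P_k$. Part (a) follows by iteratively applying the monotonicity lemma (\cref{lemma:monotone}) along $P_1, \ldots, P_k$: $d_{G(S \oplus \Pi)}(s, t) \geq d_{G(S)}(s, t) = d_t$, so any $(s,t)$-path of length $d_t$ in $G(S \oplus \Pi)$ must be shortest. The same induction also shows that each intermediate graph $G(S \oplus P_1 \oplus \cdots \oplus P_i)$ has $(s, t)$-distance exactly $d_t$, which is what licenses the iterated use of \cref{lemma:monotone}.

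The main obstacle is part (b), for which it suffices to prove $a_\ell \notin \Pi$ for each $\ell \in \{1, \ldots, d_t - 1\}$. Suppose for contradiction that $a_\ell \in D_{\ell'} \subseteq L_{\ell'}$ of $G(S)$, so $d_{G(S)}(s, a_\ell) = \ell'$ and $d_{G(S)}(a_\ell, t) = d_t - \ell'$, while $d_{G(S \oplus \Pi)}(s, a_\ell) = \ell$ and $d_{G(S \oplus \Pi)}(a_\ell, t) = d_t - \ell$ since $a_\ell$ sits at position $\ell$ on the shortest path $P$. Iteratively applying \cref{lemma:monotone} in both the $s \to a_\ell$ and $a_\ell \to t$ directions yields $\ell \geq \ell'$ and $d_t - \ell \geq d_t - \ell'$, which forces $\ell = \ell'$, i.e.\ $a_\ell \in D_\ell$. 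This contradicts the alternation structure of $P$ in $G(S \oplus \Pi)$: if $\ell$ is odd then $D_\ell \cap S = \emptyset$, so $a_\ell \notin S$ and $a_\ell \in \Pi$ together place $a_\ell$ in $S \oplus \Pi$, contradicting that odd-position vertices on $P$ lie outside $S \oplus \Pi$; the even case is symmetric. Hence $a_\ell \notin \Pi$, establishing vertex-disjointness, and together with (a) this completes the argument.
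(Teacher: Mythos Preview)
Your proposal is correct and follows the same route as the paper. The paper's proof is the one-liner ``This directly follows from \cref{lemma:aug-paths-from-aug-set,lemma:aug-set-from-aug-paths},'' i.e.\ decompose $\Pi$ into consecutive shortest paths $P_1,\ldots,P_k$, append $P$, and recombine; you have simply spelled out the two verifications (that $P$ is shortest in $G(S\oplus\Pi)$ via iterated monotonicity, and that $P$ is vertex-disjoint from $\Pi$ via the parity/layer argument) that the paper leaves implicit.
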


\begin{proof}
  This directly follows from
  \cref{lemma:aug-paths-from-aug-set,lemma:aug-set-from-aug-paths}.
\end{proof}

\paragraph{Using Dynamic Oracle.}
In the following sections except for \cref{appendix:dynamic-matroid-intersection-ind} where we discuss independence-query algorithms, all algorithms and data structures will run in the \emph{dynamic-rank-oracle} model (see \cref{def:dyn-oracle}).
In other words, we will simply write ``in $t$ time'' for ``in $t$ time and dynamic rank queries''.
We will use the term \emph{query-sets} to refer to the sets $S_i$ in \cref{def:dyn-oracle}.
In particular, constructing a query-set means building the corresponding set from $S_0 = \emptyset$ with the $\textsc{Insert}(\cdot)$ operation.
Insertion/Deletion of an element into/from a query-set is done via the $\textsc{Insert}$/$\textsc{Delete}$ operations.
Using the $\textsc{Query}$ operation, we assume that we know the ranks of all the query-sets we construct in our algorithms.

\section{Binary Search Tree} \label{sec:bst}

In this section, we give the core data structure of our algorithms which allows us to do binary searches and find free elements (elements $x$ such that $S + x \in \cI$) and exchange pairs (pairs $(x,y)$ such that $S-x+y\in \cI$, corresponding to edges in the exchange graph) efficiently.
We also support updating the common independent set $S$ that the exchange relationship is based upon.
For a matroid $\cM = (\ground, \cI)$, the data structure has the following guarantee ($s, t \not\in U$ denote the two distinguished vertices of the exchange graph as defined in \cref{def:exchange-graph}).

\begin{theorem}
  For any integer $\beta \geq 1$, there exists a data structure that supports the following operations.
  \begin{itemize}
    \item $\textsc{Initialize}(\cM, S, Q_S, X)$: Given $S \in \cI$, the query-set $Q_S$ that corresponds to $S$, and $X \subseteq \bar{S}$ (respectively, $X \subseteq S$ or $X = \{t\}$), initialize the data structure in $\tO(|X|)$ time. The data structure also maintains $S$.
    \item $\textsc{Find}(y)$: Given $y \in S \cup \{s\}$ (respectively, $y \in \bar{S}$),
      \begin{itemize}
        \item if $y \in S$ (respectively, $X \subseteq S$), then return an $x \in X$ such that $S - y + x \in \cI$ (respectively, $S - x + y \in \cI$), or
        \item if $y = s$ (respectively, $X = \{t\}$), then return an $x \in X$ such that $S + x \in \cI$ (respectively, return the only element $x = s$ or $x = t$ in $X$ if $S + y \in \cI$ and $\bot$ otherwise).
      \end{itemize}
      The procedure returns $\bot$ if such an $x$ does not exist.
      The procedure takes $\tO(\beta)$ time if the result is not $\bot$, and $\tO(1)$ time otherwise.
    \item $\textsc{Delete}(x)$: Given $x \in X$, if $x \not\in \{s, t\}$, delete $x$ from $X$ in $O(\log{n})$ time.
    \item $\textsc{Replace}(x, y)$: Given $x \in X$ and $y \not\in X$, replace $x$ in $X$ by $y$ in $O(\log{n})$ time.
    \item $\textsc{Update}(\Delta)$: Update $S$ to $S \oplus (\Delta \setminus \{s, t\})$ in amortized $\tO(\frac{|X| \cdot |\Delta|}{\beta})$ time.
  \end{itemize}
  \label{thm:bst}
\end{theorem}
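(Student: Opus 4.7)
The plan is to organize the data structure as a balanced binary tree with $|X|$ leaves, one per element of $X$, and depth $O(\log n)$. For each internal node $v$, let $X_v \subseteq X$ denote the set of leaves in its subtree; at $v$ we maintain a ``base'' query-set $T_v$, which equals $S \cup X_v$ when $X \subseteq \bar S$ and $S \setminus X_v$ when $X \subseteq S$ (the degenerate cases $X = \{s\}$ or $X = \{t\}$ reduce to a single-node tree). A single rank query on $T_v$, after a one-element adjustment by $y$, then tells us whether some $x \in X_v$ forms an exchange with $y$: one verifies that, for $X \subseteq \bar S$ and $y \in S$, an $x \in X_v$ with $S - y + x \in \cI$ exists iff $\rank(T_v - y) = |S|$; for $X \subseteq S$ and $y \in \bar S$, an $x \in X_v$ with $S - x + y \in \cI$ exists iff $\rank(T_v + y) = |S| - |X_v| + 1$, which follows from the span/circuit characterization (such an $x$ exists iff $y \notin \sspan(S \setminus X_v)$); the $y = s$ and $X = \{t\}$ subcases reduce to single rank tests of the form $\rank(T_v) > |S|$ or $\rank(Q_S + y) > |S|$. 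For $\textsc{Initialize}$, we first obtain $T_{\text{root}}$ from $Q_S$ by inserting (or removing) the elements of $X$ in $O(|X|)$ operations, and then build each child's query-set from its parent's by deleting only the elements of the complementary half of $X_v$; summed level by level this gives $O(|X| \log |X|) = \tO(|X|)$ operations.

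The $\textsc{Find}(y)$ procedure performs a standard top-down binary search: at each visited node $v$, run the rank test on each child and recurse into the one that still contains a valid exchange, terminating at a leaf which is the desired $x$. Without updates this would cost $O(\log n)$ operations. To cope with pending updates to $S$, I will keep a buffer $\Delta_{\text{pending}}$ of the symmetric-difference operations that have been applied to $S$ since the last rebuild; when node $v$ is visited, we apply $\Delta_{\text{pending}}$ to $T_v$ on the fly (so it temporarily matches the current $S \cup X_v$ or $S \setminus X_v$), issue the rank query, and immediately undo the changes so $T_v$ returns to its stored form. Each visit costs $O(|\Delta_{\text{pending}}|) \le O(\beta)$, giving $O(\beta \log n) = \tO(\beta)$ operations along the root-to-leaf path.

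For $\textsc{Update}(\Delta)$, the tree itself is left untouched; we simply append $\Delta$ to $\Delta_{\text{pending}}$ and separately maintain an up-to-date copy $T_{\text{root}}^{\star}$ of the root's query-set, updating it in $O(|\Delta|)$ operations. This extra copy lets $\textsc{Find}$ decide in $O(1)$ whether any exchange exists at all by applying a single adjustment by $y$ to $T_{\text{root}}^{\star}$ and querying its rank, yielding the claimed $\tO(1)$ bound in the $\bot$ case and avoiding descent when there is nothing to find. Once $|\Delta_{\text{pending}}|$ exceeds $\beta$ we rebuild the entire tree from scratch in $\tO(|X|)$ operations and reset the buffer; amortizing the rebuild cost over the $\Theta(\beta)$ units of update it absorbs contributes $\tO(|X|/\beta)$ per unit of $|\Delta|$, which matches the claimed $\tO(|X| \cdot |\Delta|/\beta)$ bound. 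The operations $\textsc{Delete}(x)$ and $\textsc{Replace}(x, y)$ are handled locally by walking the $O(\log n)$ ancestors of the affected leaf (and of the newly introduced leaf, for $\textsc{Replace}$) and updating each $T_v$ and $T_{\text{root}}^{\star}$ accordingly.

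The main obstacle will be getting the rank characterizations exactly right across the three regimes ($X \subseteq \bar S$, $X \subseteq S$, $X = \{t\}$) and the subcases $y \in S$, $y = s$, $y \in \bar S$, so that the binary search is valid at every internal node of the tree; in particular the $X \subseteq S$ case relies on the span argument above and requires verifying that the rank test at a non-root node $v$ (where $T_v = S \setminus X_v$, not $S \setminus X$) still correctly detects the existence of an exchange \emph{inside} $X_v$. A secondary technicality is arguing that the ``apply-then-undo'' scheme for $\Delta_{\text{pending}}$ keeps all stored $T_v$ consistent with the definitions above across an arbitrary interleaving of $\textsc{Update}$, $\textsc{Find}$, $\textsc{Delete}$, and $\textsc{Replace}$ calls; this is essentially bookkeeping, but it must be done carefully to ensure that no operation sees a partially modified node.
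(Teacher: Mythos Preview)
Your proposal is correct and follows essentially the same approach as the paper: a balanced binary tree of query-sets over $X$, the rank characterizations of \cref{obs:exchange} at each node to drive the binary search, a buffered $\Delta_{\text{pending}}$ with periodic rebuilds once it exceeds $\beta$, and an always-up-to-date root query-set to get the $\tO(1)$ cost in the $\bot$ case. The only cosmetic difference is that in the dynamic-oracle model there is no need to ``undo'' after applying $\Delta_{\text{pending}}$ to a stored $T_v$, since each $\textsc{Insert}/\textsc{Delete}$ creates a fresh query-set and leaves $T_v$ intact.
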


\begin{remark}
  To make sense of the seemingly complicated input and casework of \cref{thm:bst}, one should focus on the first item of $\textsc{Find}(\cdot)$.
  We will use \cref{thm:bst} to explore the exchange graphs, and thus we need to find an exchange element $x$ of $y$ as in the first case.
  The additional complication is included solely because we also have to deal with edges incident to $s$ or $t$.
  For instance, say $X \subseteq \bar{S}$, then $\textsc{Find}(s)$ finds an edge in $G(S)$ directed from $s$ to $X$.
  This will make our algorithms presented later cleaner (see \cref{alg:blocking-flow} for example).
  \label{remark:bst}
\end{remark}

Sometimes, we will omit the $Q_S$ parameter of $\textsc{Initialize}$, meaning that we explicitly build the query-set $Q_S$ from $S$ in $O(r)$ time before running the actual initialization.
In such cases, $|X|$ will be $\Omega(r)$, and thus this incurs no overhead.

We will later refer to the case of $X \subseteq \bar{S}$ as the \emph{co-circuit binary search tree} and the case of $X \subseteq S$ as the \emph{circuit binary search tree}.
The data structure follows from the binary search algorithm of \cite[Lemma 10]{chakrabarty2019faster}, which is based on the following observation.

\begin{observation}[\cite{chakrabarty2019faster}]
  \label{obs:exchange}
  To find free elements and exchange pairs, we can use the following observations.
\begin{enumerate}[label=(\roman*)]
\item\label{item:free-element}
\emph{Free element}:
  There exists an $x \in X$ such that $S + x \in \cI$ if and only if $\rank(S + X) > |S|$.
\item\label{item:cocircuit-exchange}
\emph{Co-circuit exchange}:
  Given $y\in S$, there exists an $x \in X$ such that $S - y + x \in \cI$ if and only if $\rank(S - y + X) \geq |S|$.
  \item\label{item:circuit-exchange}
\emph{Circuit exchange}:
  Given $y\not\in S$,
  there exists an $x \in X$ such that $S - x + y \in \cI$ if and only if $\rank(S - X + y) = |S-X+y|$.
  \end{enumerate}
\end{observation}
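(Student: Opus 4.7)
The plan is to derive all three equivalences directly from the exchange property of matroids, with the forward directions being essentially trivial (exhibit a witness independent set contained in the queried set) and the reverse directions following by growing a smaller independent set using elements from a larger one.

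For \ref{item:free-element}, if some $x \in X$ satisfies $S + x \in \cI$, then $S + x \subseteq S + X$ is independent of size $|S| + 1$, so $\rank(S + X) > |S|$. Conversely, if $\rank(S + X) > |S|$, take an independent $I \subseteq S + X$ with $|I| > |S|$; applying the exchange property to $S$ and $I$ yields some $x \in I \setminus S$ with $S + x \in \cI$, and $x \in I \setminus S \subseteq X$.

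For \ref{item:cocircuit-exchange}, if $S - y + x \in \cI$ for some $x \in X$, this is an independent subset of $S - y + X$ of size $|S|$. For the converse, pick an independent $I \subseteq S - y + X$ with $|I| \geq |S|$. Since $|S - y| = |S| - 1 < |I|$ and $S - y$ is independent (as a subset of $S$), the exchange property applied to $S - y$ and $I$ gives some $x \in I \setminus (S - y)$ with $(S - y) + x \in \cI$; because $X \subseteq \bar{S}$ here, we have $I \setminus (S - y) \subseteq X$, so $x \in X$ as required.

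For \ref{item:circuit-exchange}, note $X \subseteq S$, so $|S - X + y| = |S| - |X| + 1$. If some $x \in X$ satisfies $S - x + y \in \cI$, then $S - X + y \subseteq S - x + y$ is independent and $\rank(S - X + y) = |S - X + y|$. Conversely, suppose $S - X + y$ is independent. We iteratively extend it using elements of the larger independent set $S$: the exchange property provides an element of $S \setminus (S - X + y) = X \setminus \{y\} = X$ (since $y \notin S$) that can be added while preserving independence. After $|X| - 1$ such extensions we reach an independent set of size $|S|$ of the form $S - x + y$ for the unique remaining $x \in X$.

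The only real subtlety lies in \ref{item:circuit-exchange}, where one must carefully observe that when applying the exchange property from the side of $S$, only elements of $X$ can be inserted (the rest of $S$ is already present in $S - X + y$), which forces the final independent set to take the required form $S - x + y$. The boundary case $|X| \leq 1$ is handled directly, and \ref{item:free-element}, \ref{item:cocircuit-exchange} follow the same two-line template once the witness independent set is identified.
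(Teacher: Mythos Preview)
Your proof is correct. The paper itself does not supply a proof of this observation; it is stated as a known fact with a citation to \cite{chakrabarty2019faster}, so there is nothing to compare against beyond noting that your direct argument from the exchange axiom is the standard one and works as written.
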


The data structure of \cref{thm:bst} is built upon the following similar data structure whose independent set $S$ is ``static'' in the sense that its update will be specified for each query.
We construct the data structure of \cref{lemma:bst} first, and then use it for \cref{thm:bst} 
later in the section.

\begin{lemma}
  There exists a data structure that supports the following operations.
  \begin{itemize}
    \item $\textsc{Initialize}(\cM, S, Q_S, X)$: Given $S \in \cI$, a query-set $Q_S$ corresponding to $S$, and $X \subseteq \bar{S}$ (respectively, $X \subseteq S$ or $X = \{t\}$), initialize the data structure in $\tO(|X|)$ time.
    \item $\textsc{Find}(y, \Delta)$: Given $\Delta \subseteq V$, let $S^\prime := S \oplus \Delta$. It is guaranteed that $S^\prime \in \cI$. Given $y \in S^\prime \cup \{s\}$ (respectively, $y \in \bar{S^\prime}$),
      \begin{itemize}
        \item if $y \in S^\prime$ (respectively, $X \subseteq S^\prime$), then return an $x \in X$ such that $S^\prime - y + x \in \cI$ (respectively, $S^\prime - x + y \in \cI$), otherwise
        \item if $y = s$ (respectively, $X = \{t\}$), then return an $x \in X$ such that $S^\prime + x \in \cI$ (respectively, return the only element $x = s$ or $x = t$ in $X$ if $S^\prime + y \in \cI$ and $\bot$ otherwise),
      \end{itemize}
      in $\tO(\beta)$ time.
      The procedure returns $\bot$ if such an $x$ does not exist.
    \item $\textsc{Delete}(x)$: Given $x \in X$, delete $x$ from $X$ in $O(\log{n})$ time.
    \item $\textsc{Replace}(x, y)$: Given $x \in X$ and $y \not\in X$, replace $x$ by $y$ in $X$ in $O(\log{n})$ time.
  \end{itemize}  
  \label{lemma:bst}
\end{lemma}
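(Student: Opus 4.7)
The plan is to implement the data structure as a balanced binary search tree $T$ over the elements of $X$. Each leaf corresponds to one element of $X$ and each internal node $v$ corresponds to the subset $X_v \subseteq X$ of elements at the leaves of its subtree, so the root corresponds to $X$ itself. At each node $v$ I will maintain a dynamic-oracle query-set $Q_v$ whose rank, after $O(1)$ on-the-fly edits, decides whether the sought exchange element for $y$ lies inside $X_v$. This will drive a binary search down $T$, and the Delete/Replace operations will propagate modifications of $X_v$ up the root-to-leaf path.

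The rank tests come directly from \cref{obs:exchange}. In the co-circuit case ($X \subseteq \bar S$) I take $Q_v$ to represent $S \cup X_v$: for $y = s$ a free element exists in $X_v$ iff $\rank(Q_v) > |S|$, and for $y \in S'$ an exchange element exists iff $\rank(Q_v - y) \geq |S|$, using one extra \textsc{Delete} on the oracle. In the circuit case ($X \subseteq S$) I take $Q_v$ to represent $S \setminus X_v$ and, given $y \not\in S'$, test whether $\rank(Q_v + y) = |(S - X_v) + y|$ via one extra \textsc{Insert}. The degenerate case $X = \{t\}$ collapses to a single rank test at the root. At each level, the outcome of the test identifies which child still contains a valid $x$; the binary search ends at a leaf (returning $x$), or if the root test already fails, returns $\bot$ in $\tO(1)$ time.

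For \textsc{Initialize}, I build $T$ structurally in $O(|X|)$ time and construct each $Q_v$ from $Q_S$ using $O(|X_v|)$ dynamic-oracle operations, giving a total of $\sum_v |X_v| = O(|X|\log |X|) = \tO(|X|)$ as required. For \textsc{Find}$(y)$, I descend at most $O(\log |X|)$ levels, and at each visited node $v$ derive a temporary query-set from $Q_v$ by applying (a) the $|\Delta|$ edits needed to move from $S$ to $S' = S \oplus \Delta$ and (b) the $O(1)$ edits dictated by $y$, then ask for its rank. This yields $O((|\Delta|+1)\log|X|)$ per call, matching the claimed $\tO(\beta)$ bound under the understanding that $|\Delta| \leq \beta$ is enforced by the caller, while a failing root test terminates in $\tO(1)$. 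For \textsc{Delete}$(x)$ and \textsc{Replace}$(x,y)$ I walk from $x$'s leaf up to the root and apply one (respectively two) dynamic-oracle edits to each of the $O(\log|X|)$ query-sets on that path so that each $Q_v$ continues to represent the new $X_v$.

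The main subtlety to get right is the interplay between $\Delta$ and the per-node query-sets. An element $w \in \Delta$ may or may not lie in $X_v$, so converting $Q_v$ from a representation based on $S$ to one based on $S' = S \oplus \Delta$ requires, for each $w \in \Delta$, deciding by a simple ancestry test in $T$ whether $w$ is currently inside $Q_v$ and then calling \textsc{Insert} or \textsc{Delete} accordingly. This is careful bookkeeping rather than a conceptual obstacle, and it keeps the per-node overhead at $O(|\Delta|)$ oracle operations, so the cost of each visited node stays $O(|\Delta|+1)$ as needed for the overall Find bound.
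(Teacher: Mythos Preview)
Your proposal is correct and follows essentially the same approach as the paper: a balanced binary tree over $X$ with a query-set $Q_v$ stored at each node, used to drive a binary search via \cref{obs:exchange}, with \textsc{Delete}/\textsc{Replace} propagated along the leaf-to-root path. The only minor differences are that the paper derives each child's $Q_v$ from its parent's rather than from $Q_S$ (same $\tO(|X|)$ total), and your remark that a failing root test terminates in $\tO(1)$ is slightly off---applying $\Delta$ at the root already costs $O(|\Delta|)$---but the lemma only claims $\tO(\beta)$ for \textsc{Find}, so this is immaterial.
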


We present the co-circuit version of the data structures as the circuit version is analogous (their difference is essentially stated in the two cases \ref{item:cocircuit-exchange} and \ref{item:circuit-exchange} of \cref{obs:exchange}).
The data structure of \cref{lemma:bst} is a balanced binary tree in which every node $v$ corresponds to a subset $X_v$ of $X$.
The subsets corresponding to nodes at the same level form a disjoint partition of $X$.
There are $|X|$ leaf nodes, each of which corresponds to a single-element subset of $X$.
An internal node $v$ with children $u_1$ and $u_2$ has $X_v = X_{u_1} \sqcup X_{u_2}$.
Each node $v$ is also associated with a query-set $Q_v := S + X_v$, for which we have prepared a dynamic oracle (see \cref{def:dyn-oracle}).

\paragraph{Initialization.}
In the initialization stage, we first compute the query-set of the root node $Q_r := S + X$ from $Q_S$ in $O(|X|)$ time.
As long as the current node $v$ has $|X_v| > 1$, we split $X_v$ into two equally-sized subsets $X_{u_1}, X_{u_2}$, compute $Q_{u_1}, Q_{u_2}$ from $Q_v$, and then recurse on the two newly created nodes $u_1$ and $u_2$.
Computing $Q_{u_1}$ and $Q_{u_2}$ from $Q_v$ takes $O(|X_v|)$ time in total, and thus the overall running time for initialization is $\tO(|X|)$.

\paragraph{Query.}
To find an exchange element of $y \in S^\prime$, we perform a binary search on the tree.
For each node $v$, we can test whether such an element exists in $X_v$ via \cref{obs:exchange}\ref{item:cocircuit-exchange} by computing the query-set $Q^\prime_v := S^\prime - y + X_v$ from $Q_v := S + X_v$ in $1+|\Delta|$ dynamic-oracle queries.
If such an element does not exist for the root $X_r = X$, then we return $\bot$.
Otherwise, for node $v$ initially being $r$, there must exist one of the child nodes $u_i$ of $v$ where such an exchange $x$ exists in $X_{u_i}$.
We then recurse on $u_i$ until we reach a leaf node, at which point we simply return the corresponding element.
Similarly, to find a free element, we compute the rank of $Q^\prime_v := S^\prime + X_v$ instead (see \cref{obs:exchange}\ref{item:free-element}).
Since we need to compute $Q^\prime_v$ for each of the visited nodes, the running time is $O(|\Delta|\log{n})$.

\paragraph{Update.}
For deletion of $x$, we simply walk up from the leaf node corresponding to $x$ to the root node and remove $x$ from each of the $X_v$ and $Q_v$.
This takes time proportional to the depth of the tree, which is $O(\log{n})$.
Replacement of $x$ by $y$ follows similarly from deletion of $x$: instead of simply removing $x$ from $X_v$ and $Q_v$, we add $y$ to them as well.

\begin{remark}
  Note that the above binary search tree is static in the sense that we only deactivate elements from a fixed initial set.
  We can extend this data structure to support a dynamically changing input set $X$ by using a dynamic binary search tree based on partial rebuilding~\cite{Andersson89,Andersson91} instead.
  The amortized time complexity remains the same since rebuilding a subtree takes time proportional to the number of nodes of it.
\end{remark}

\subsection{Periodic Rebuilding} \label{sec:periodic-rebuild}

Here we extend \cref{lemma:bst} to prove \cref{thm:bst}. Recall that the difference between the two data structures is that we need to support a dynamically changing independent set in \cref{thm:bst} (which we will need since $S$ changes after each augmentation in our matroid algorithms).
How we achieve this is to essentially employ a batch-and-rebuild approach to the binary search tree of \cref{lemma:bst}.

\begin{proof}
  We maintain a binary search tree $\cT$ constructed with $\textsc{Initialize}(\cM, S, Q_S, X)$ of \cref{lemma:bst} and a collection of ``batched'' updates $\Delta_{\text{batch}}$ of size at most $\beta$.
  Throughout the updates, we also maintain the query-set corresponding to the current $S$ starting from the given $Q_S$ and the query-set corresponding to $S + X$, which initially can be computed from $Q_S$ in $O(|X|)$ time.
  Each call to $\textsc{Find}(y)$ is delegated to $\cT.\textsc{Find}(y, \Delta_{\text{batch}})$, which runs in time $\tO(|\Delta_{\text{batch}}|) = \tO(\beta)$ time.
  Note that we can test whether the result of $\cT.\textsc{Find}(\cdot)$ will be $\bot$ in $\tO(1)$ time by simply checking if \cref{obs:exchange}\ref{item:cocircuit-exchange} (or \ref{item:free-element} if $y = s$) holds with the query-set corresponding to $S + X$ we maintain.
  
  Each call to $\textsc{Delete}(x)$ and $\textsc{Replace}(x, y)$ translates simply to $\cT.\textsc{Delete}(x)$ and $\cT.\textsc{Replace}(x, y)$.
  For an update to $S$ with $\Delta$, we set $\Delta_{\text{batch}} \gets \Delta_{\text{batch}} \cup \Delta$ and update $S$ and the query-sets accordingly.
  If the size of $\Delta_{\text{batch}}$ exceeds $\beta$, then we rebuild the binary search tree with the input common independent set being the up-to-date $S$ we maintain.
  Note that we will pass query-set $Q_S$ to $\textsc{Initialize}$ to not pay the extra $O(r)$ factor.
  Finally, since the binary search tree is now up-to-date, we set $\Delta_{\text{batch}}$ to be $\emptyset$.
  The rebuilding takes $\tO(|X|)$ time and is amortized to $\tO(\frac{|X| \cdot |\Delta|}{\beta})$ per update operation with $|\Delta|$ changes.
\end{proof}

\section{Matroid Intersection} \label{sec:matroid-intersection}

In this section, we present a matroid intersection algorithm in the dynamic-rank-oracle model that matches the state-of-the-art algorithm~\cite{chakrabarty2019faster} in the traditional model.

\begin{restatable}{theorem}{matroidintersection}
  For two matroids $\cM_1 = (\ground, \cI_1)$ and $\cM_2 = (\ground, \cI_2)$, it takes $\tO(n\sqrt{r})$ time to obtain the largest $S \in \cI_1 \cap \cI_2$ in the dynamic-rank-oracle model.
  \label{thm:dynamic-matroid-intersection-rank-main}
\end{restatable}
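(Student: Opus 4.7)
The plan is to implement the two-stage blocking-flow framework of Chakrabarty-Lee-Sidford-Singla-Wong in the dynamic-rank-oracle model, using the binary search tree of \cref{thm:bst} as the main exploration primitive. In the first stage, for $\epsilon = 1/\sqrt{r}$, I would run $O(1/\epsilon) = O(\sqrt{r})$ blocking-flow phases: in each phase, given the current $S \in \cI_1 \cap \cI_2$, compute the BFS distance layers $L_1,\ldots,L_{d_t-1}$ of the exchange graph $G(S)$ and repeatedly peel off a maximal set of vertex-disjoint shortest augmenting paths (a ``blocking flow'') by DFS through the layers. After the first stage, \cref{lemma:approx} guarantees $|S| \geq (1-O(\epsilon))r$, so the second stage finishes within $O(\epsilon r) = O(\sqrt{r})$ additional augmentations, each found by a single BFS/DFS in $G(S)$.

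For each layer $L_\ell$, I would maintain an instance of the binary search tree of \cref{thm:bst} (a co-circuit tree when $\ell$ is odd and a circuit tree when $\ell$ is even) over the elements currently assigned to that layer, with parameter $\beta$ to be chosen. The BFS then proceeds layer-by-layer: to discover which elements belong to $L_{\ell+1}$ we invoke $\textsc{Find}$ from each newly settled vertex in $L_\ell$, and to mark an explored element as visited we call $\textsc{Delete}$ on the tree that owns it. A DFS for an augmenting path in the level graph is executed analogously, using $\textsc{Find}$ to advance, $\textsc{Delete}$ to prune dead ends, and $\textsc{Replace}$ whenever an element swaps layers. Each augmentation along a shortest path $P$ then becomes an $\textsc{Update}(V(P))$ on the affected trees. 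Because the total length of all augmenting paths we will ever find is $O(r\log r)$ by \cref{lemma:augmenting-path-lengths}, the cumulative ``work'' charged to the trees is well controlled.

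The main subtlety, and what makes the exponent $1/2$ come out, is how to balance the batch parameter $\beta$ against the current phase depth $d$. A naive setting of $\beta$ makes the rebuilding term $\tO(nr/\beta)$ and the per-query term $\tO(\beta)$ impossible to balance to $\tO(n\sqrt{r})$ simultaneously with the $\tO(n/\epsilon)$ blocking-flow term. The way out is the ``augmenting sets'' machinery (\cref{lemma:aug-set,lemma:difference-is-aug-set,lemma:aug-paths-from-aug-set,claim:no-path,claim:put-aug-path-in-aug-set}): inside a single phase of depth $d$, the consecutive shortest paths produced by the blocking flow form an augmenting set, hence the updates to $S$ inside layer $L_\ell$ seen by a $\textsc{Find}$ call at layer $L_\ell$ are bounded by the intra-layer changes, which average out to roughly a $1/d$ fraction of the batch. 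This lets a $\textsc{Find}$ at depth $d$ cost only $\tO(\beta/d)$ amortized rather than $\tO(\beta)$, so summing over $d = 1,\ldots,1/\epsilon$ gives a harmonic $\tO(n\beta)$ factor for all blocking-flow queries instead of $\tO(n\beta/\epsilon)$. The resulting bound
\[
\tO\!\left(\tfrac{n}{\epsilon}\right) + \tO(n\beta) + \tO(\epsilon\, n r) + \tO\!\left(\tfrac{n r}{\beta}\right)
\]
is optimized at $\epsilon = \beta^{-1} = 1/\sqrt{r}$ to give $\tO(n\sqrt{r})$.

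The hard part will be making the bookkeeping around ``layers vs.\ augmenting sets'' rigorous: I need a precise invariant stating which query-set each binary search tree currently represents during a phase, and a charging argument (based on \cref{claim:no-path,claim:put-aug-path-in-aug-set}) showing that elements removed by $\textsc{Delete}$ during the blocking-flow DFS genuinely cannot participate in any remaining length-$d$ augmenting path within that phase, so that maximality is preserved. I would also need to carefully handle the endpoint-cases of edges to $s$ and to $t$ by instantiating the degenerate variants of \cref{thm:bst}, and to argue that the total $\textsc{Update}$ cost across the algorithm telescopes via \cref{lemma:augmenting-path-lengths}. Once these invariants are in place, plugging in $\beta = \sqrt{r}$ and summing the four terms above yields the claimed $\tO(n\sqrt{r})$ bound in the dynamic-rank-oracle model.
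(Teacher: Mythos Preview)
Your proposal is correct and follows essentially the same approach as the paper: per-layer binary search trees from \cref{thm:bst}, per-layer updates justified via the augmenting-sets machinery (the paper packages this as \cref{lemma:key}), a harmonic sum over the phase depths for the \textsc{Find} cost, and \cref{lemma:augmenting-path-lengths} to bound the total rebuild/update cost. The only cosmetic difference is parameterization: the paper sets the tree parameter to $\beta=\sqrt{r}/d$ per phase (so \textsc{Find} costs $\tO(\sqrt{r}/d)$ directly), whereas you keep a global $\beta=\sqrt{r}$ and argue that each layer's accumulated delta is only a $1/d$ fraction; these are equivalent and yield the same $\tO(n\sqrt{r})$ bound.
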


The algorithm follows the blocking-flow framework of \cite{chakrabarty2019faster} similar to the Hopcroft-Karp algorithm for bipartite matching~\cite{HopcroftK73}, which goes as follows.
Initially, they start with $S = \emptyset$.

\begin{enumerate}
  \item\label{item:step1} First, they obtain a common independent set that is of size at least $(1 - \epsilon)r$ by eliminating all augmenting paths of length $O(1/\epsilon)$. In each of the $O(1/\epsilon)$ iterations, they first compute the distance layers of $G(S)$ along which they find a maximal set of compatible shortest augmenting paths using an approach similar to a depth-first-search from $s$.
  Augmenting paths are searched in a depth-first-search manner.
  Whenever an element has no out-edge with respect to the current common independent set to the next layer, they argue that it can be safely removed as it will not be on a shortest augmenting path anymore in this iteration.
  Augmenting along these augmenting paths increases the $(s, t)$-distance of $G(S)$ by at least one.
  \item\label{item:step2} With the current solution which is only $\epsilon$ fraction away from being optimal, they find the remaining $O(\epsilon r)$ augmenting paths one at a time.
\end{enumerate}

A proper choice of $\epsilon$ (in this case it is $\epsilon = 1/\sqrt{r}$) that balances the cost between the two steps results in their algorithm.

\subsection{Building Distance Layers}

Building distance layers and finding a single augmenting path
in Step~\ref{item:step2}
is immediate by replacing binary searches in \cite[Algorithm 4]{chakrabarty2019faster} with the binary search trees of \cref{thm:bst}.

\begin{lemma}
  It takes $\tO(n)$ time to compute the $(s, u)$-distance for each $u \in \ground$ and find the shortest $(s, t)$-path in $G(S)$ or determine that $t$ is unreachable from $s$.
  \label{lemma:bfs}
\end{lemma}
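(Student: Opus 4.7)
The plan is to implement a breadth-first search from $s$ in $G(S)$ using the binary search trees of \cref{thm:bst}, exploiting the fact that $S$ does not change during the search (so we can take the batching parameter $\beta=1$ and every successful $\textsc{Find}$ costs only $\tO(1)$).

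First I would set up three instances of \cref{thm:bst} with $\beta = 1$, each initialized in $\tO(n)$ total time after explicitly building the query-set $Q_S$ from $S$ once in $O(r)$ time:
\begin{itemize}
  \item $T_1 := \textsc{Initialize}(\cM_1, S, Q_S, \bar S)$, a co-circuit tree that will expose the edges in $E_s \cup E_1$ (from $s$ or from a vertex of $S$ into $\bar S$);
  \item $T_2 := \textsc{Initialize}(\cM_2, S, Q_S, S)$, a circuit tree that will expose the edges in $E_2$ (from a vertex of $\bar S$ into $S$);
  \item $T_t := \textsc{Initialize}(\cM_2, S, Q_S, \{t\})$, a one-element circuit tree that decides membership of edges of $E_t$ (from a vertex of $\bar S$ to $t$).
\end{itemize}
Then I would run a standard BFS: maintain a queue together with arrays of distances and parent pointers, initializing $d(s)=0$. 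When popping a vertex $v$ from the queue, I enumerate its out-neighbors in $G(S)$ by repeatedly calling $\textsc{Find}(v)$ on the appropriate tree (tree $T_1$ if $v = s$ or $v \in S$; trees $T_2$ and $T_t$ if $v \in \bar S$), and after each successful return of a new vertex $x$ I record $d(x) = d(v)+1$ and $\text{parent}(x) = v$, push $x$ onto the queue, and call $\textsc{Delete}(x)$ on every tree that still contains $x$ so that $x$ is never rediscovered. The loop on $v$ stops as soon as $\textsc{Find}(v)$ returns $\bot$. I stop the whole BFS the first time $t$ is dequeued (or once the queue is empty), and reconstruct the shortest $(s,t)$-path by walking parent pointers from $t$ to $s$.

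For correctness, by \cref{obs:exchange} the $\textsc{Find}$ operations on $T_1$, $T_2$, and $T_t$ exactly enumerate, respectively, the edges of $G(S)$ in $E_s \cup E_1$ out of $v$, in $E_2$ out of $v$, and in $E_t$ out of $v$ to $t$, which covers all edge types of \cref{def:exchange-graph}. Deleting a vertex from every tree as soon as it is first discovered ensures each vertex is enqueued at most once, so this is exactly a BFS on $G(S)$, returning the correct distances from $s$ and, by \cref{lemma:augmenting-path}, correctly handling the unreachability case.

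For the running time, every vertex $u \in \ground$ contributes at most one successful $\textsc{Find}$ and at most $O(1)$ $\textsc{Delete}$ calls across the three trees, while each vertex that is processed triggers exactly one final unsuccessful $\textsc{Find}$. With $\beta = 1$, \cref{thm:bst} charges $\tO(1)$ per successful $\textsc{Find}$, $\tO(1)$ per unsuccessful $\textsc{Find}$, and $O(\log n)$ per $\textsc{Delete}$, so the whole BFS costs $\tO(n)$ on top of the $\tO(n)$ initialization. The only subtlety worth watching is that because $S$ never changes in this procedure, we never have to call $\textsc{Update}$, which is precisely what allows us to take $\beta = 1$ and charge only $\tO(1)$ per successful find; this is the step I expect to require the most care when writing out formally, since one must verify that the three trees used for the four edge types of $G(S)$ together cover all neighbor relations and that the delete-on-discovery bookkeeping keeps the total number of $\textsc{Find}$ calls linear.
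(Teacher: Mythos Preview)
Your proof is correct and essentially the same as the paper's: both set up the binary search trees of \cref{thm:bst} with $\beta=1$ (one co-circuit tree over $\bar S$ and one circuit tree over $S$) and run a standard BFS from $s$, deleting each discovered element so that every vertex is found at most once. Your version is slightly more explicit in handling the edges to $t$ via a third tree $T_t$, whereas the paper folds this into the BFS description without singling it out; this is a cosmetic difference only.
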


\begin{proof}
  First, we build two binary search trees via \cref{thm:bst} with $\beta = 1$, a circuit binary search tree $\T_1 := \textsc{Initialize}(\cM_1, S, X_1)$ where $X_1 = S$ for the first matroid and a co-circuit binary search tree $\T_2 := \textsc{Initialize}(\cM_2, S, X_2)$ where $X_2 = \bar{S}$ for the second matroid. Initializing these takes $\tO(n)$ time.
  These two binary search trees allow us to explore the exchange graph efficiently.
  
  Then we run the usual BFS algorithm from the source $s$ (or equivalently, all $u \in \bar{S}$ with $S + u \in \cI_1$).
  For each visited element $u$, if $u \in S$, then we repeatedly find $x \in X_2$ such that $S - u + x \in \cI_2$ using $\T_2.\textsc{Find}(u)$, mark $x$ as visited, and remove $x$ from $X_2$ via $\T_2.\textsc{Delete}(x)$ (until $\bot$ is returned).
  Similarly, for $u \in \bar{S}$, we find $x \in X_1$ with $S - x + u \in \cI_1$, mark $x$ as visited, and remove $x$ from $X_1$ using $\T_1$.
  This explores all the unvisited out-neighbors of $u$ in $G(S)$.
  Since each element will be visited at most once, the total running time is $\tO(n)$.
\end{proof}

\subsection{Blocking Flow}

In this section, we prove the following lemma regarding a single phase of blocking-flow computation.

\begin{lemma}
  Given an $S \in \cI_1 \cap \cI_2$ with $d_{G(S)}(s, t) = d$, it takes $\tO\left(n + \frac{n\sqrt{r}}{d} + \frac{(|S^\prime| - |S|) \cdot nd}{\sqrt{r}}\right)$ time to obtain an $S^\prime \in \cI_1 \cap \cI_2$ with $d_{G(S^\prime)}(s, t) > d_{G(S)}(s, t)$.
  \label{lemma:blocking-flow}
\end{lemma}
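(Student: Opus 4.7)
The plan is to implement the blocking-flow DFS of \cite{chakrabarty2019faster} in the dynamic-rank-oracle model by maintaining one binary search tree from \cref{thm:bst} per distance layer. First, I would invoke \cref{lemma:bfs} to compute the distance layers $L_1, \ldots, L_{d-1}$ of $G(S)$ in $\tO(n)$ time. For each layer $L_\ell$, initialize a binary search tree with parameter $\beta := \sqrt{r}/d$ and candidate set $X = L_\ell$: a co-circuit tree on $\cI_1$ when $\ell$ is even (so $L_\ell \subseteq \bar S$), a circuit tree on $\cI_2$ when $\ell$ is odd (so $L_\ell \subseteq S$), and two analogous trees of size $O(n)$ for the edges incident to $s$ and $t$. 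Since $\sum_\ell |L_\ell| \leq n$, this initialization costs $\tO(n)$ in total.

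Next, I would run the standard blocking-flow DFS starting at $s$: at each step, call $\textsc{Find}$ on the tree of the next layer to try to extend the current tentative path; on a dead end, call $\textsc{Delete}$ on the current element and backtrack; on reaching $t$, augment $S$ along the discovered path and continue. The crux is that every shortest augmenting path touches exactly one element per layer, so each layer's tree only needs to absorb $O(|S'|-|S|)$ updates over the phase rather than the $d\cdot(|S'|-|S|)$ global changes to $S$. By the augmenting-sets framework (\cref{lemma:difference-is-aug-set,claim:put-aug-path-in-aug-set,claim:no-path}), the validity of a prospective exchange edge between $L_\ell$ and $L_{\ell+1}$ in the post-augmentation exchange graph depends only on the state of $S$ at those two layers, so it is sound to feed each tree through $\textsc{Update}$ only with its local changes.

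With this setup the cost analysis is immediate. The DFS issues at most $n$ failed $\textsc{Find}$ calls, each costing $\tO(1)$, plus at most $O(n + d\cdot(|S'|-|S|))$ successful $\textsc{Find}$ calls at $\tO(\beta)$ each (successful calls either lie on one of the $|S'|-|S|$ found paths of length $d$ or eventually lead to a dead end). Each tree experiences $O(|S'|-|S|)$ updates so amortizes to $\tO(|X|\cdot(|S'|-|S|)/\beta)$, summing over layers to $\tO(n(|S'|-|S|)/\beta)$. Plugging in $\beta = \sqrt{r}/d$ gives a total of
\[
\tO\!\left(n + \frac{\sqrt{r}}{d}\bigl(n + d(|S'|-|S|)\bigr) + \frac{nd(|S'|-|S|)}{\sqrt{r}}\right)
= \tO\!\left(n + \frac{n\sqrt{r}}{d} + \frac{nd(|S'|-|S|)}{\sqrt{r}}\right),
\]
where the $(|S'|-|S|)\sqrt{r}$ contribution is absorbed by the last term since $r \le n \le nd$.

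The main obstacle is rigorously justifying the second paragraph: that updating each per-layer tree only with its local changes still yields a maximal collection of consecutive shortest augmenting paths, and hence strictly increases $d_{G(S)}(s,t)$ at the end of the phase. I would package this as an auxiliary structural lemma tracking the augmenting set $\Pi$ implicitly grown by the DFS, using \cref{claim:put-aug-path-in-aug-set} to append each newly found path to $\Pi$ and \cref{claim:no-path} to argue that an element marked dead by the DFS remains off every length-$d$ augmenting path in $G(S \oplus \Pi')$ for every later extension $\Pi' \supseteq \Pi$. This ensures correctness of the blocking-flow phase under the localized updates and produces the $S'$ claimed in the statement.
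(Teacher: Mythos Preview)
Your proposal is essentially the paper's own proof: build the layers via \cref{lemma:bfs}, instantiate one \cref{thm:bst} tree per layer with $\beta=\sqrt{r}/d$, run the DFS of \cite{chakrabarty2019faster}, and feed each tree only the two-element local update $\{a_{\ell-1},a_\ell\}$ after every augmentation; correctness then follows from the augmenting-set machinery exactly as you outline (the paper packages your last paragraph as \cref{lemma:key}). One small slip: you have the parities reversed --- odd layers lie in $\bar S$ (co-circuit trees on $\cM_1$) and even layers in $S$ (circuit trees on $\cM_2$), not the other way around --- and your count of successful $\textsc{Find}$ calls is loose (it is just $O(n)$, since every successfully found element is subsequently deleted), but neither affects the final bound.
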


Before proceeding to prove \cref{lemma:blocking-flow}, we first use it to finish our matroid intersection algorithm.
Like Hopcroft-Karp bipartite matching algorithm \cite{HopcroftK73} and the matroid intersection algorithm of \cite{chakrabarty2019faster}, we run several iterations of blocking-flow, and then keep augmenting until we get the optimal solution.

\begin{proof}[Proof of \cref{thm:dynamic-matroid-intersection-rank-main}]
  Starting from an empty set $S = \emptyset$, we run the blocking-flow algorithm until $d_{G(S)}(s,t) \geq \sqrt{r}$.
  This, by \cref{lemma:blocking-flow}, takes
  \begin{equation}
    \tO\left(n\sqrt{r}\right) + \tO\left(\sum_{d = 1}^{\sqrt{r}}\frac{n\sqrt{r}}{d}\right) + \tO\left(\frac{n}{\sqrt{r}}\cdot\left(\sum_{d = 1}^{\sqrt{r}}{d \cdot (|S_d| - |S_{d - 1}|)}\right)\right)
    \label{eq:timebound}
  \end{equation}
  time, where $S_d$ is the size of the $S$ we get after augmenting along paths of length $d$.
  Observe that $\sum_{d = 1}^{\sqrt{r}}{d \cdot (|S_d| - |S_{d - 1}|)}$ is the sum of lengths of the augmenting paths that we use, and thus the third term in \cref{eq:timebound} is $\tO(\frac{n}{\sqrt{r}} \cdot r) = \tO(n\sqrt{r})$ by \cref{lemma:augmenting-path-lengths}.
  The second term also sums up to $\tO(n\sqrt{r})$ (by a harmonic sum), and therefore the total running time of the blocking-flow phases is $\tO(n\sqrt{r})$.
  The current common independent set $S$ has size at least $r - O(\sqrt{r})$ by \cref{lemma:approx}, and thus finding the remaining $O(\sqrt{r})$ augmenting paths one at a time takes a total running time of $\tO(n\sqrt{r})$ via \cref{lemma:bfs}.
  This concludes the proof of \cref{thm:dynamic-matroid-intersection-rank-main}.
\end{proof}

The rest of the section is to prove \cref{lemma:blocking-flow}.
Our blocking-flow algorithm is a slight modification to \cite[Algorithm 5]{chakrabarty2019faster}, as shown in \cref{alg:blocking-flow}.
It takes advantage of the data structure of \cref{thm:bst} to explore an out-edge from the current element $a_{\ell}$ to $A_{\ell + 1}$---the set of ``alive'' elements in the next layers---while (approximately) keeping track of the current common independent set $S$.
An element $u$ is ``alive'' if it has not been included in the augmenting set $\Pi := (D_1, \ldots, D_{d_t - 1})$ yet, nor has the algorithm determines that there cannot be any shortest augmenting path through $u$.

\begin{algorithm}[!ht]
  \SetEndCharOfAlgoLine{}
  \SetKwInput{KwData}{Input}
  \SetKwInput{KwResult}{Output}
  
  \caption{Blocking flow}
  \label{alg:blocking-flow}
  \KwData{$S \in \cI_1 \cap \cI_2$}
  \KwResult{$S^\prime \in \cI_1 \cap \cI_2$ with $d_{G(S^\prime)}(s, t) > d_{G(S)}(s, t)$}
  
  Build the distance layers $L_1, \ldots, L_{d_t - 1}$ of $G(S)$ with \cref{lemma:bfs}\;
  $L_0 \gets \{s\}$ and $L_{d_t} \gets \{t\}$\;
  $A_\ell \gets L_{\ell}$ for each $0 \leq \ell \leq d_t$\;
  $\cT_{\ell} \gets \textsc{Initialize}(\cM_{1}, S, Q_S, L_{\ell})$ for each \emph{odd} $1 \leq \ell \leq d_t$ by \cref{thm:bst} with $\beta = \sqrt{r} / d$\;
  $\cT_{\ell} \gets \textsc{Initialize}(\cM_{2}, S, Q_S, L_{\ell})$ for each \emph{even} $1 \leq \ell \leq d_t$ by \cref{thm:bst} with $\beta = \sqrt{r} / d$\;
  $\ell \gets 0$, $a_0 \gets s$, and $D_{\ell} \gets \emptyset$ for each $1 \leq \ell < d_t$\;
  \While{$\ell \geq 0$} {
    \If{$\ell < d_t$} {
      \lIf{$A_{\ell} = \emptyset$} {
        \textbf{break}
      }
      $a_{\ell + 1} \gets \cT_{\ell + 1}.\textsc{Find}(a_\ell)$\;\label{line:search}
      \If{$a_{\ell + 1} = \bot$} {
        $\cT_{\ell}.\textsc{Delete}(a_{\ell})$\;
        $A_\ell \gets A_\ell - a_\ell$ and $\ell \gets \ell - 1$\;
      }
      \Else {
        $\ell \gets \ell + 1$
      }
    }
    \Else {
        \tcp{Found augmenting path $a_1, a_2, \ldots a_\ell$}
      \For{$i \in \{1, 2, \ldots, d_t - 1\}$} {
          $D_{i} \gets D_{i} + a_{i}$ and $A_{i} \gets A_{i} - a_{i}$\;
          $\cT_{i}.\textsc{Delete}(a_i)$ and $\cT_{i}.\textsc{Update}(\{a_{i - 1}, a_{i}\})$\;\label{line:update}
      }
      $\ell \gets 0$\;
    }
  }
  \textbf{return} $S^\prime := S \oplus \Pi$, where $\Pi := (D_1, D_2, \ldots, D_{d_t - 1})$\;
\end{algorithm}

We emphasize that the difference between \cref{alg:blocking-flow} and \cite[Algorithm 5]{chakrabarty2019faster} is exactly in the replacement of binary searches with the data structure of \cref{thm:bst}.
Note that indeed by the specification stated in \cref{thm:bst}, the binary search trees let us explore edges in the exchange graph (see \cref{remark:bst}).
As a result, our proof will focus on showing that such a replacement does not affect the correctness.
For this, we need the concept of \emph{augmenting sets} (see \cref{def:augmenting-sets}) which characterizes a collection of ``mutually compatible'' augmenting paths---i.e.\ a ``blocking flow''.
The structural results in \cref{sec:prelim} culminate in the following lemma that is key to the correctness of our algorithm.
It models when we can safely ``remove'' an element since there will be no augmenting path through it in the future.
This is in particular required for us (as opposed to the simpler argument used in \cite{chakrabarty2019faster}) because the set $S$ is not fully updated after each augmentation (at least in the binary search trees that we use to explore the exchange graphs).

\begin{lemma}
  Let $\Pi \subseteq \Pi^\prime$ be augmenting sets in $G(S)$ with distance layers $L_1, \ldots, L_{d_t - 1}$ where $d_{G(S)}(s, t) = d_t$.
  For $x \in L_\ell$, if there is no $y \in L_{\ell + 1}$ such that
  \begin{equation}
    (S \oplus D_{\ell} \oplus D_{\ell + 1}) \oplus \{x, y\} \in \cI,\;\text{where}\;\cI := \begin{cases}\cI_1, & \text{if $\ell$ is even} \\ \cI_2, & \text{if $\ell$ is odd}\end{cases},
    \label{eq:condition}
  \end{equation}
  then there is no augmenting path of length $d_t$ through $x$ in $G(S \oplus \Pi^\prime)$.
  \label{lemma:key}
\end{lemma}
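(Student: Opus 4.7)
The plan is to argue by contradiction: assume $x$ lies on an augmenting path $P$ of length $d_t$ in $G(S \oplus \Pi^\prime)$, and produce some $y \in L_{\ell+1}$ with $(S \oplus D_\ell \oplus D_{\ell+1}) \oplus \{x, y\} \in \cI$, contradicting the hypothesis. First, since $\Pi^\prime$ arises from $S$ by a series of shortest-path augmentations (via \cref{lemma:difference-is-aug-set,lemma:aug-paths-from-aug-set}), iterating \cref{lemma:monotone} gives $d_{G(S \oplus \Pi^\prime)}(s, x) \geq \ell$ and $d_{G(S \oplus \Pi^\prime)}(x, t) \geq d_t - \ell$; both bounds must be tight, so $x$ is the $\ell$th internal vertex $a_\ell$ of $P$. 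Applying \cref{claim:put-aug-path-in-aug-set} to $\Pi^\prime$ and $P$ shows that $\Pi^\star := (D^\prime_1 + a_1, \ldots, D^\prime_{d_t - 1} + a_{d_t - 1})$ is an augmenting set in $G(S)$, and \cref{lemma:difference-is-aug-set} then yields that $\widetilde{\Pi} := \Pi^\star \setminus \Pi$ is an augmenting set in $G(S \oplus \Pi)$ with $x \in \widetilde{D}_\ell$.

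The heart of the proof is to upgrade this into a shortest augmenting path in $G(S \oplus \Pi)$ itself passing through $x$ at position $\ell$. Decomposing $\widetilde{\Pi}$ into consecutive shortest paths $Q_1, \ldots, Q_m$ in $G(S \oplus \Pi)$ via \cref{lemma:aug-paths-from-aug-set}, the element $x$ appears at position $\ell$ on some $Q_j$; however, $Q_j$ is only guaranteed to be shortest in the downstream graph $G(S \oplus \Pi \oplus Q_1 \oplus \cdots \oplus Q_{j-1})$. I would bypass this via a distance sandwich: applying \cref{lemma:monotone} in both directions gives
\[
  \ell \;=\; d_{G(S)}(s, x) \;\leq\; d_{G(S \oplus \Pi)}(s, x) \;\leq\; d_{G(S \oplus \Pi \oplus Q_1 \oplus \cdots \oplus Q_{j-1})}(s, x) \;=\; \ell,
\]
and analogously $d_{G(S \oplus \Pi)}(x, t) = d_t - \ell$. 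Concatenating a shortest $s$-to-$x$ path with a shortest $x$-to-$t$ path in $G(S \oplus \Pi)$ yields a simple augmenting path $Q^\prime$ of length $d_t$; simplicity is automatic because shortest $s$-$t$ paths in the exchange graph only use layer-forward edges and hence visit each BFS layer of $G(S \oplus \Pi)$ exactly once. Let $y$ be the successor of $x$ on $Q^\prime$; the same sandwich applied to $y$ together with the triangle inequality $d_{G(S)}(s, y) + d_{G(S)}(y, t) \geq d_t$ pins $d_{G(S)}(s, y) = \ell + 1$, so $y \in L_{\ell+1}$.

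Finally, a second application of \cref{claim:put-aug-path-in-aug-set} to $\Pi$ and $Q^\prime$ certifies that $\Pi + Q^\prime$ is an augmenting set in $G(S)$ whose $\ell$th layer is $D_\ell + x$ and $(\ell+1)$th layer is $D_{\ell+1} + y$. Invoking condition~(v) of \cref{def:augmenting-sets} for even $\ell$ (or condition~(vi) for odd $\ell$, noting that by the parity of layers $x \in T$ and $y \notin T$, where $T := S \oplus D_\ell \oplus D_{\ell+1}$) produces
\[
  (S \oplus D_\ell \oplus D_{\ell+1}) \oplus \{x, y\} \;=\; S - (D_\ell + x) + (D_{\ell+1} + y) \;\in\; \cI,
\]
contradicting the hypothesis and completing the argument. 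I expect the main obstacle to be the middle step: \cref{lemma:aug-paths-from-aug-set} only delivers a path in a downstream exchange graph, so one has to use the monotonicity sandwich to relocate a shortest augmenting path through $x$ back into $G(S \oplus \Pi)$, then reconstruct it by concatenation rather than lifting $Q_j$ directly.
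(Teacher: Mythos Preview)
Your argument is correct, but it takes a noticeably longer route than the paper's. The paper proceeds in the opposite direction: it first shows there is no augmenting path of length $d_t$ through $x$ in $G(S \oplus \Pi)$ (the \emph{smaller} augmenting set), and then invokes \cref{claim:no-path} as a black box to transfer this conclusion to $G(S \oplus \Pi^\prime)$. For the first step, the paper simply assumes such a path $P$ exists in $G(S \oplus \Pi)$, applies \cref{claim:put-aug-path-in-aug-set} once to obtain the augmenting set $\tilde{\Pi} = \Pi + P$, and reads off $y = a_{\ell+1}$ directly from the augmenting-set axioms. This avoids your entire middle block: you start from $G(S \oplus \Pi^\prime)$, pass through $\Pi^\star$ and $\widetilde{\Pi}$, decompose into consecutive shortest paths, and then use the monotonicity sandwich plus a concatenation argument to relocate a path through $x$ back into $G(S \oplus \Pi)$ --- all of which is essentially a bespoke re-proof of \cref{claim:no-path} in this specific situation.

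Both approaches ultimately rest on the same core step (adding a path through $x$ to $\Pi$ and invoking \cref{def:augmenting-sets}(v)/(vi)), so your final paragraph matches the paper exactly. The paper's version is cleaner because it factors out the $\Pi \to \Pi^\prime$ passage into the already-proved \cref{claim:no-path}, whereas your version has the advantage of making the distance bookkeeping fully explicit (in particular, that $x$ sits at position $\ell$ and $y$ at position $\ell+1$, which the paper leaves implicit).
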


\begin{proof}
  We claim that there is no augmenting path of length $d_t$ through $x$ in $G(S \oplus \Pi)$: If there is such a $P$, then we can put $P$ into $\Pi$ and get an augmenting set $\tilde{\Pi} := (\tilde{D}_1, \ldots, \tilde{D}_{d_t - 1})$ by \cref{claim:put-aug-path-in-aug-set}.
  By definition of the augmenting set, this means that there is such a $y \in \tilde{D}_{k + 1} \setminus D_{k + 1}$ satisfying \eqref{eq:condition}, a contradiction to our assumption.
  The lemma now follows from \cref{claim:no-path}. 
\end{proof}

We are now ready to prove \Cref{lemma:blocking-flow}.

\begin{proof}[Proof of \cref{lemma:blocking-flow}]
  First, We analyze the running time of \cref{alg:blocking-flow}.
  Similar to \cite[Lemma 15]{chakrabarty2019faster}, in each iteration, we use $\cT_\ell.\textsc{Find}(\cdot)$ to find an out-edge of $a_{\ell}$, taking $\tO(\beta) = \tO(\sqrt{r}/d)$ time by \cref{thm:bst}.
  In each iteration, we either increase $\ell$ and extend the current path by a new element, decrease $\ell$ and remove one element, or find an $(s, t)$-path (then remove everything in it), and each element can participate in each of the event at most once.
  Thus, there are only $O(n)$ iterations, and the total cost of $\cT_{\ell}.\textsc{Find}(\cdot)$ is consequently $\tO(\frac{n\sqrt{r}}{d})$ by our choice of $\beta$.
  For each of the augmenting path, $\cT_{\ell}.\textsc{Update}(\cdot)$ takes $\tO\left(\frac{|L_\ell| \cdot d}{\sqrt{r}}\right)$ time, contributing to a total running time of $\tO\left(\frac{(|S^\prime| - |S|) \cdot nd}{\sqrt{r}}\right)$ since $L_\ell$'s are disjoint.
  
  We then argue the correctness of the algorithm.
  Observe that at any point in time, $\cT_\ell$ is a data structure capable of finding a replacement element with respect to the independent set $S \oplus D_{\ell - 1} \oplus D_\ell$, due to the updates that we gave it.
  This means that the collection $\Pi := (D_1, D_2, \ldots, D_{d_t - 1})$ remains an augmenting set in $G(S)$ because $S \oplus (D_{\ell - 1} + a_{\ell - 1}) \oplus (D_{\ell} + a_\ell)$ is independent for each $\ell$ whenever a path is found.
  As a result, when the algorithm terminates, $S^\prime := S \oplus \Pi$ is indeed a common independent set as guaranteed by \cref{lemma:aug-set}.
  
  It remains to show that $d_{G(S^\prime)}(s, t) > d_{G(S)}(s, t)$ by arguing that for each $a_{\ell}$ not in $\Pi$ but removed from $A_{\ell}$ at time $t$, there is no shortest augmenting path in $G(S^\prime)$ that passes through $a_{\ell}$.
  This is a direct consequence of \cref{lemma:key} since $\Pi^{(t)}$, the augmenting set obtained at time $t$, is contained in $\Pi$.
  The fact that $\cT_{\ell + 1}^{(t)}.\textsc{Find}(a_{\ell})$ returns nothing (equivalently, \cref{eq:condition} is not satisfied) shows that $a_{\ell}$ is not on any shortest augmenting path in $G(S^\prime)$ since the set $X$ maintained in $\cT_{\ell + 1}^{(t)}$ (see \cref{thm:bst}) is $A_{\ell + 1}$ at all time.
  We remark that $x$ might have an out-edge (with respect to $S \oplus D_{\ell}^{(t)} \oplus D_{\ell}^{(t + 1)}$) to a removed element with distance $\ell + 1$ from $s$ (not in $A_{\ell + 1}$), but such an element, by induction, is not on any augmenting path either.
\end{proof}

\section{Dynamically Maintaining a Basis of a Matroid} \label{sec:decremental-basis}

In this section, we construct a data structure that allows us to maintain a basis of a matroid in a decremental set.
The data structure is used for obtaining an $\tO_k(n + r\sqrt{r})$ running time for matroid union, but it may be of independent interest as well.
Specifically, our data structure has the following guarantees.

\begin{theorem}
  For a (weighted) matroid $\cM = (U, \cI)$, there exists a data structure supporting the following operations.
  \begin{itemize}
    \item $\textsc{Initialize}(X)$: Given a set $X \subseteq U$, initialize the data structure and return a (min-weight) basis $S$ of $X$ in $\tO(n)$ time.
    \item $\textsc{Delete}(x)$: Given $x \in X$, remove $x$ from $X$ and return a new (min-weight) basis of $X$ in $\tO(\sqrt{r})$ time. Specifically, the new basis will contain at most one element (the replacement element of $x$) not in the old basis, and this procedure returns such an element if any.
  \end{itemize}
  \label{thm:decremental-basis}
\end{theorem}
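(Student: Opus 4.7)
My plan is to first build an ``unsparsified'' data structure with $\tilde O(\sqrt{n})$ update time and then use a matroid-analogue of the sparsification trick of Eppstein--Galil--Italiano--Nissenzweig to bring the update cost down to $\tilde O(\sqrt{r})$, following the blueprint from the technical overview.

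For the unsparsified structure on a ground set of size $m$ (at the top level we will have $m=n$), I partition $X$ into $\sqrt{m}$ blocks $X_1,\dots,X_{\sqrt m}$ of size roughly $\sqrt m$ each, and greedily compute a basis $S$ of $X$ by scanning the blocks left to right and keeping, for each block $X_i$, a basis of $X_1\cup\dots\cup X_i$ (in the weighted case, scanning within $X_i$ in increasing order of weight, which realises the min-weight basis via the standard greedy). The dynamic oracle lets me build, in $\tilde O(m)$ total time, one query-set $Q_i$ representing the prefix $X_1\cup\dots\cup X_i$ for every $i$, together with its rank. On $\textsc{Delete}(x)$ with $x\in X_i$, I first update every prefix query-set $Q_j$ for $j\ge i$ in $O(\sqrt m)$ oracle operations. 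If $x\notin S$, I am done; otherwise $x$ is the only element possibly leaving $S$ and I must locate its replacement. I binary-search the blocks $X_i,X_{i+1},\dots,X_{\sqrt m}$: the replacement lies in the first block $X_j$ for which the rank of $Q_j$ (just recomputed) exceeds the rank of $Q_{j-1}$ in the basis we maintained. Once $j$ is identified, I scan $X_j$ to find the concrete replacement element by asking, element by element, whether adding it to the current partial basis preserves independence (again using \cref{obs:exchange}\ref{item:free-element}). Crucially, blocks $X_{j+1},\dots,X_{\sqrt m}$ need no updating to their internal bases, because any basis extending the prefix up to $X_j$ remains extendable by exactly the same elements from later blocks; this is the matroid exchange property, and it is the key reason the left-to-right greedy is compatible with this local repair. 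The total cost per deletion is $\tilde O(\sqrt m)$.

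For sparsification, I build a balanced binary tree of depth $O(\log n)$ whose leaves are the elements of $U$. At each internal node $v$ I store an instance of the unsparsified data structure whose input is the union of the bases stored at $v$'s two children; by induction each such input has size at most $2r$, so the ground set at every node has size $O(r)$, and the unsparsified structure there supports updates in $\tilde O(\sqrt r)$. The basis maintained at the root is a basis of $X$: this follows from an iterated application of \cref{lemma:basis-rank}, which guarantees that replacing any subset of the ground set by a basis of it does not change the rank of the union. On $\textsc{Delete}(x)$, I walk from the leaf $x$ up to the root; at each node on this path the change to its input amounts to the removal of one element and possibly the insertion of one replacement element (the element newly entering the basis at the child), so I issue at most one deletion and one insertion into that node's unsparsified structure. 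Since the path has length $O(\log n)$ and each of these updates costs $\tilde O(\sqrt r)$, the total cost per deletion is $\tilde O(\sqrt r)$; initialisation builds the tree bottom-up in $\tilde O(n)$ time. The procedure also returns the (at most one) replacement element at the root, as required.

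The two places I expect to require the most care are: (i) verifying that inside the unsparsified structure the left-to-right greedy can truly be repaired \emph{locally} after a deletion, i.e.\ that not updating blocks beyond the replacement's block is safe; this needs a careful invariant saying that the basis restricted to any prefix of blocks is a basis of that prefix, proved using downward closure plus the exchange property. And (ii) extending the argument from the unweighted to the min-weight case: I will order elements inside each block (and the sparsification must propagate weights unchanged) so that the greedy sweep realises Kruskal's rule on the matroid; the replacement element in a deletion is then the minimum-weight element in $X_j$ whose addition to the partial basis preserves independence, which I locate by scanning $X_j$ in increasing weight order. The insertion step needed for the sparsification (a replacement element ``bubbling up'' from a child) also has to be handled; I will support it in the unsparsified structure by reopening the affected block, which costs an additional $\tilde O(\sqrt m)$ and therefore does not affect the overall bound.
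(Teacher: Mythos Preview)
Your proposal is correct and follows essentially the same two-stage approach as the paper: a baseline $\tilde O(\sqrt{m})$ block-decomposition structure combined with the Eppstein--Galil--Italiano--Nissenzweig-style sparsification tree whose nodes each hold a baseline instance of size $O(r)$. The one place to be careful (which the paper handles via its Claim~6.2) is verifying that at each sparsification node the min-weight basis of the union of the two children's bases equals the min-weight basis of the full subtree; your appeal to \cref{lemma:basis-rank} only gives \emph{a} basis, so you will need the stronger statement that if $e\notin S_X$ then $e$ is spanned by lighter elements of $S_X$ and hence cannot enter the min-weight basis of any superset---this also yields the guarantee (which the paper's baseline \textsc{Insert} relies on) that the replacement element bubbling up into the parent is never in the parent's current min-weight basis, so the insertion truly costs only $\tilde O(\sqrt{r})$ without cascading.
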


Our data structure for \cref{thm:decremental-basis} will consist of two parts.
The first part, introduced in \cref{sec:baseline}, is a baseline, unsparsified data structure that supports the $\textsc{Delete}$ operation in $\tO(\sqrt{n})$ time, and the second one is a sparsification structure which brings the complexity down to $\tO(\sqrt{r})$, as presented in \cref{sec:sparsification}.

As hinted by the statement of \cref{thm:decremental-basis}, to make things simpler, we will assign an arbitrary but unique weight $w(x)$ to each $x \in X$.
Now, instead of maintaining an arbitrary basis of $X$, we maintain the \emph{min-weight} basis instead.
The min-weight basis is well-known to be unique (as long as the weights are) and can be obtained greedily as shown in \cref{alg:greedy} (see, e.g.,~\cite{edmonds1971}).

\begin{algorithm}[!ht]
  \SetEndCharOfAlgoLine{}
  \SetKwInput{KwData}{Input}
  \SetKwInput{KwResult}{Output}
  
  \caption{Greedy algorithm for computing the min-weight basis}
  \label{alg:greedy}
  \KwData{A set $X \subseteq U$ of size $k$}
  \KwResult{The min-weight basis $S$ of $X$}
  Order $X = (x_1, x_2, \ldots, x_{k})$ so that $w(x_1) < w(x_2) < \cdots < w(x_{k})$\;
  $S \gets \emptyset$\;
  \For{$i \in [1, k]$} {
    \If{$\rank(S + x_i) > \rank(S)$} {\label{line:check}
      $S \gets S + x_i$\;
    }
  }
  \textbf{return} $S$\;
\end{algorithm}

Moreover, suppose we remove $x \in S$ from the set $X$. Then the new min-weight basis is either
(i) $S - x + y$ where $y$ is the minimum weight element in $X - x$ that makes $S - x + y$ independent or
(ii) simply $S - x$ if such a $y$ does not exist.
In case (i), $y$ is called the \emph{replacement} element of $x$.
Note that $w(y) > w(x)$ must hold.

It is useful to note that the $S$ in Line~\ref{line:check} of \cref{alg:greedy} is interchangeable with $X_{i - 1} = \{x_1, \ldots, x_{i - 1}\}$, since
$\sspan(X_{i - 1}) = \sspan(S \cap X_{i - 1})$, so the sets $X_{i-1}$ and $S\cap X_{i-1}$ have the same rank.
In other words, in each iteration $i$, we can imagine that \cref{alg:greedy} has chosen every element before $x_i$.

\begin{observation}
  In \cref{alg:greedy}, $x_i \in S$ if and only if $\rank(X_{i}) > \rank(X_{i - 1})$.
  \label{obs:greedy}
\end{observation}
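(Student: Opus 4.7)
The plan is to prove the equivalence by first establishing the invariant that, at every stage of the greedy algorithm, the partial solution has the same span as the set of elements already considered. Write $S^{(i)}$ for the value of $S$ after processing $x_1, \ldots, x_i$; since \cref{alg:greedy} only ever inserts elements of $X$, we have $S^{(i)} = S \cap X_i$. The key invariant I would prove by induction on $i$ is that $\sspan(S^{(i)}) = \sspan(X_i)$, which is exactly the equality asserted in the sentence preceding the observation.

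The base case $i=0$ is trivial. For the inductive step, if $x_i$ passes the check on Line~\ref{line:check} and is added, then $S^{(i)} = S^{(i-1)} + x_i$ and $X_i = X_{i-1} + x_i$, and the two spans agree by the inductive hypothesis. If $x_i$ is rejected, the check tells us $\rank(S^{(i-1)} + x_i) = \rank(S^{(i-1)})$, so $x_i \in \sspan(S^{(i-1)}) = \sspan(X_{i-1})$ by induction, whence $\sspan(X_i) = \sspan(X_{i-1}) = \sspan(S^{(i-1)}) = \sspan(S^{(i)})$.

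Having the invariant in hand, the observation is a one-liner. Using \cref{fact:span-does-not-increase-rank}, two sets with the same span have the same rank after adjoining any fixed element; applied to $X_{i-1}$ and $S^{(i-1)}$ together with $x_i$, this yields
\[ \rank(X_i) - \rank(X_{i-1}) \;=\; \rank(S^{(i-1)} + x_i) - \rank(S^{(i-1)}). \]
The right-hand side is positive exactly when the greedy's test succeeds and $x_i$ is added to $S$, while the left-hand side is positive exactly when $\rank(X_i) > \rank(X_{i-1})$, giving the biconditional. I do not foresee a real obstacle: the only subtlety is the span invariant, whose proof needs nothing beyond the definition of $\sspan$ and the fact that adding an element outside the span strictly increases the rank.
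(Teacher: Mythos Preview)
Your proposal is correct and follows exactly the approach the paper takes: the sentence immediately preceding the observation asserts the span invariant $\sspan(X_{i-1}) = \sspan(S \cap X_{i-1})$ (which is your $\sspan(S^{(i-1)})$), and the observation is then immediate. You have simply spelled out in full the inductive proof of that invariant and the one-line deduction that the paper leaves implicit.
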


\subsection{Baseline Data Structure} \label{sec:baseline}

Our baseline data structure supports the operations of \cref{thm:decremental-basis}, except in time $\tO(\sqrt{k})$ where $k = |X|$ instead of $\tO(\sqrt{r})$.

\begin{lemma}
  For a weighted matroid $\cM = (U, \cI)$, there exists a data structure supporting the following operations.
  \begin{itemize}
    \item $\textsc{Initialize}(X)$: Given a set $X \subseteq U$ with $|X| = k$, initialize the data structure and return the min-weight basis $S$ of $X$ in $\tO(k)$ time.
    \item $\textsc{Delete}(x)$: Given $x \in X$, remove $x$ from $X$ and return the new min-weight basis of $X$ in $\tO(\sqrt{k})$ time. Specifically, the new basis will contain at most one element (the replacement element of $x$) not in the old basis, and this procedure returns such an element if any.
    \item $\textsc{Insert}(x)$: Given $x \not\in X$, add $x$ to $X$. It's guaranteed that $x$ is not in the min-weight basis of the new $X$ and the size of $X$ does not exceed $2k$.
  \end{itemize}
  \label{lemma:decremental-basis_baseline}
\end{lemma}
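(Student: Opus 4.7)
The plan is to use a two-level decomposition of $X$ by weight, following the sketch in \cref{sec:overview}. \textbf{Setup.} Sort $X = (x_1, \ldots, x_k)$ so that $w(x_1) < \cdots < w(x_k)$ and partition the sorted list into $\sqrt{k}$ contiguous blocks $B_1, \ldots, B_{\sqrt{k}}$, each of size about $\sqrt{k}$. For every $i \in \{0, 1, \ldots, \sqrt{k}\}$, maintain a dynamic-oracle query-set $P_i$ corresponding to the prefix $B_1 \cup \cdots \cup B_i$, built incrementally from $P_{i-1}$ by $\sqrt{k}$ insertions. To identify the min-weight basis $S$ during initialization, I would walk through $x_1, x_2, \ldots, x_k$ in weight order using a running query-set and apply \cref{obs:greedy}: include $x_j$ in $S$ iff the rank strictly increases when $x_j$ is inserted. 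This runs in $\tO(k)$ time and queries.

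\textbf{Deletion of $x$.} Say $x \in B_i$. First update the stored prefix sums $P_i, P_{i+1}, \ldots, P_{\sqrt{k}}$ by deleting $x$ from each, for a total of $O(\sqrt{k})$ oracle operations (branching from each stored query-set independently). If $x \notin S$ the basis is unchanged and we are done. Otherwise, I would locate the block $B_{m^*}$ containing the replacement element: removing $x$ drops $\rank(P_m)$ by exactly $0$ or $1$ for every $m \geq i$, and $B_{m^*}$ is the smallest $m \geq i$ for which the rank is \emph{not} dropped. A linear scan over the $\sqrt{k}$ block indices (one rank query on each updated $P'_m$) finds $m^*$. Inside $B_{m^*}$, starting from $P'_{m^*-1}$, insert the elements $y_1, y_2, \ldots$ of $B_{m^*}$ one by one in weight order and return the first $y_\ell$ for which the rank jumps; by \cref{obs:greedy} applied to the updated greedy execution this is precisely the replacement element. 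The total deletion cost is $\tO(\sqrt{k})$.

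\textbf{Insertion of $x$.} Place $x$ into the block $B_j$ consistent with its weight and extend the stored prefix sums $P_j, \ldots, P_{\sqrt{k}}$ by one insertion each, at cost $O(\sqrt{k})$. The precondition that $x$ is not in the new min-weight basis means that $S$ itself needs no update. Since at most $k$ insertions happen before $|X|$ reaches $2k$, no block grows beyond $O(\sqrt{k})$ in size (this can be enforced explicitly by a lazy rebalance whenever a block doubles), so subsequent $\textsc{Delete}$ operations continue to run in $\tO(\sqrt{k})$.

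\textbf{Main obstacle.} The essential difficulty is to justify that the boundary snapshots $P'_{m^*-1}$ and $P'_{m^*}$ carry enough information to reconstruct the greedy execution inside $B_{m^*}$, without rebuilding earlier blocks or doing a global binary search. This is exactly where the flexibility of the dynamic-rank-oracle is used: every query-set can be derived from \emph{any} previously stored one, so the block-wise scan may reuse the $\Omega(k)$-sized prefix $P'_{m^*-1}$ without paying to rebuild it, and \cref{obs:greedy} is robust enough to apply inside a single block because the elements strictly smaller in weight are already represented in $P'_{m^*-1}$.
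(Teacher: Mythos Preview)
Your approach is essentially identical to the paper's: sort $X$ by weight, partition into $\sqrt{k}$ blocks, maintain the prefix query-sets $P_i$, and on deletion of a basis element locate the block $B_{m^*}$ whose prefix rank is \emph{not} dropped and rescan that single block from $P'_{m^*-1}$; rebalancing by splitting oversized blocks is also what the paper does. One small slip to fix: inside $B_{m^*}$, the first $y_\ell$ for which the rank jumps is merely the first element of $B_{m^*}$ lying in the \emph{new} basis, which may well be an element that was already in the old $S$; the replacement you want is the first such $y_\ell$ \emph{not already in} $S$ (equivalently, recompute the whole portion of $S$ inside $B_{m^*}$ and compare, which is what the paper does and still costs $O(\sqrt{k})$).
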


\paragraph{Initialization.}
In the initialization stage, we order $X$ by the weights and split the sequence into $\sqrt{k}$ blocks $X_1, X_2, \ldots, X_{\sqrt{k}}$ from left to right, where each block has roughly the same size $O(\sqrt{k})$.
That is, $X_1$ contains the $\sqrt{k}$ elements with the smallest weights while $X_{\sqrt{k}}$ contains elements with the largest weights.
We also compute the basis $S$ of $X$ from left to right as in \cref{alg:greedy} together with $\sqrt{k}$ query-sets $Q_1, Q_2, \ldots, Q_{\sqrt{k}}$, where $Q_j = \bigcup_{i = 1}^{j}X_i$ is the union of the first $j$ blocks.
This takes $\tO(k)$ time in total.

\paragraph{Deletion.}
For each deletion of $x$ located in the block $X_i$, we first update the query-sets $Q_i, \ldots, Q_{\sqrt{k}}$ by removing $x$ from them.
Let $Q_1^\prime, \ldots, Q_{\sqrt{k}}^\prime$ denote the old query-sets before removing $x$.
If $x$ is not in the basis $S$ we currently maintain, then $S$ remains the min-weight basis of the new $X$ and nothing further needs to be done.
Otherwise, we would like to find the min-weight replacement element $y$ of~$x$.
We know that such a $y$, if it exists, can only be located in blocks $X_i, X_{i + 1}, \ldots, X_{\sqrt{k}}$.
As such, we find the first $j \geq i$ with $\rank(Q_j) = \rank(Q_j^\prime)$ and recompute the portion of $S$ inside $X_j$.
This can be done by running \cref{alg:greedy} with the initial set $S$ being $Q_{i - 1}$, the union of the first $i - 1$ blocks (see \cref{obs:greedy}).
Thus, the deletion takes $\tO(\sqrt{k})$ time.

\paragraph{Insertion.}
For insertion of $x$, we simply add $x$ to a block where it belongs (according to $w(x)$) and then update $Q_i$'s appropriately.
This takes $\tO(\sqrt{k})$ as well.

\paragraph{Rebalancing.}
To maintain an update time of $\tO(\sqrt{k})$, whenever the size of a block $X_i$ grows larger than $2\sqrt{k}$, we split it into two blocks and recompute $Q_i$ and $Q_{i+1}$.
Similarly, to avoid having too many blocks, whenever the size of a block $X_i$ goes below $\sqrt{k} / 2$, we merge it with an adjacent block and remove $Q_i$.
Each of the above operations takes $\tO(\sqrt{k})$ time, which is subsumed by the cost of an update.
\\

\noindent
We have shown how to implement each operation of \cref{lemma:decremental-basis_baseline} in its desired running time, and the correctness of the data structure is manifest as we always follow the greedy basis algorithm (\cref{alg:greedy}).

\subsection{Sparsification} \label{sec:sparsification}

In this section, we prove \cref{thm:decremental-basis} by ``sparsifying'' the input set of the data structure for \cref{lemma:decremental-basis_baseline} in a recursive manner, similar to what \cite{EppsteinGIN97} did to improve \cite{Frederickson85}'s $O(\sqrt{|E|})$ dynamic MST algorithm to $O(\sqrt{|V|})$.
The following claim asserts that such sparsification is valid.

\begin{claim}
  Let $S_X$ and $S_Y$ be the min-weight basis of $X$ and $Y$, respectively, where $w(x) < w(y)$ holds for each $x \in X$ and $y \in Y$.
  Then, the min-weight basis of $S_X + S_Y$ is also the min-weight basis of $X + Y$.
  \label{claim:sparsification}
\end{claim}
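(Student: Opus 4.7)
The plan is to run the greedy algorithm (\cref{alg:greedy}) on both $X+Y$ and $S_X+S_Y$ in increasing order of weight, and show that both executions produce exactly the same output set. Since the greedy algorithm returns the unique min-weight basis, this establishes the claim.

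First I would use the weight assumption $w(x) < w(y)$ for all $x \in X, y \in Y$ to note that greedy on $X+Y$ processes all of $X$ before any of $Y$. When restricted to $X$, greedy produces exactly $S_X$. Similarly, greedy on $S_X+S_Y$ processes $S_X$ first; since $S_X$ is independent, every element of $S_X$ is picked. So after the ``$X$-phase'', both executions hold the same accumulated set $S_X$. It remains to show that both executions pick the same subset of $Y$-elements in the ``$Y$-phase''.

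The key observation is \cref{obs:greedy}: whether an element $y$ is picked depends only on whether its addition increases the rank of the set of elements processed so far (not the set accumulated so far). So in greedy on $X+Y$, an element $y \in Y$ is picked iff $\rank(X \cup Y_{<y} + y) > \rank(X \cup Y_{<y})$, where $Y_{<y} := \{y' \in Y : w(y') < w(y)\}$. For elements $y \in Y \setminus S_Y$, the definition of $S_Y$ and \cref{obs:greedy} applied to the greedy execution on $Y$ give $y \in \sspan(Y_{<y})\subseteq \sspan(X\cup Y_{<y})$, so $y$ is not picked in greedy on $X+Y$; and such $y$ does not appear in greedy on $S_X+S_Y$ at all. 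So I only need to match the decisions on $y \in S_Y$.

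For $y \in S_Y$, greedy on $S_X+S_Y$ picks $y$ iff $y \notin \sspan(S_X \cup (S_Y \cap Y_{<y}))$. Here I will use that $S_Y \cap Y_{<y}$ is a basis of $Y_{<y}$ (which follows from \cref{obs:greedy} applied to the greedy execution on $Y$ paused right before $y$), together with \cref{lemma:basis-rank} applied to $S_X$ (basis of $X$) and $S_Y \cap Y_{<y}$ (basis of $Y_{<y}$), to conclude
\[
\rank\bigl(S_X \cup (S_Y \cap Y_{<y})\bigr) = \rank(X \cup Y_{<y}),
\]
and analogously with $y$ appended on each side. Combining these, the inclusion condition for $y$ in greedy on $S_X+S_Y$ is equivalent to that in greedy on $X+Y$. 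Therefore the two executions pick exactly the same subset of $Y$-elements, so they output the same set --- namely the min-weight basis of both $X+Y$ and $S_X+S_Y$.

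The only non-routine step is the span/rank equivalence in the last paragraph; once \cref{lemma:basis-rank} is invoked with the correct bases, everything else is straightforward bookkeeping about the greedy algorithm.
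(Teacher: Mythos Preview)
Your proof is correct and takes a genuinely different route from the paper's. The paper argues by contradiction: letting $S$ be the min-weight basis of $X+Y$, it shows $S\cap X = S_X$ and then supposes some $y^*\in S\cap Y$ lies outside $S_Y$; it then invokes the \emph{diminishing returns} (submodularity) property of the rank function to deduce that greedy on $Y$ alone would also have picked $y^*$, contradicting $y^*\notin S_Y$. Thus $S\subseteq S_X+S_Y$, from which the conclusion follows. Your approach is instead a direct element-by-element comparison of the two greedy executions, using \cref{lemma:basis-rank} (on the pair $(S_X, S_Y\cap Y_{<y})$ and then $(S_X, S_Y\cap Y_{\le y})$) as the key tool rather than submodularity. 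The paper's argument is shorter and hinges on a single clean observation, but leaves implicit why $S\subseteq S_X+S_Y$ suffices for the full claim; your argument is more explicit and self-contained, at the cost of a little more bookkeeping. One small point worth stating explicitly in your write-up: the ``analogously with $y$ appended'' step works because for $y\in S_Y$ the set $(S_Y\cap Y_{<y})+y = S_Y\cap Y_{\le y}$ is a basis of $Y_{\le y}$ (greedy on $Y$ paused right after $y$), so \cref{lemma:basis-rank} applies again.
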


\begin{proof}
  Consider running the greedy \cref{alg:greedy} on the set $X + Y$ to obtain the min-weight basis $S$ of it.
  Clearly, we have $S_X \subseteq S$ since $X$ contains the elements of smaller weights (in fact $S \cap X = S_X$).
  Assume for contradiction that $S \cap Y \not\subseteq S_Y$, i.e., there exists a $y^{*} \in S \cap Y$ which does not belong to $S_Y$.
  Then, it must be the case that there exists a $y \in S_Y$ with $w(y) > w(y^{*})$, as otherwise (i.e., $y^{*}$ is ordered after everything in $S_Y$) by \cref{lemma:basis-rank} the greedy algorithm stops before seeing $y^{*}$.
  We claim that the greedy algorithm on $Y$ chooses $y^{*}$ before all such $y$'s, thereby contradicting the fact that $S_Y$ is the min-weight basis of $Y$.
  This is true by the diminishing returns property\footnote{The diminishing returns property of submodular functions states that $f(z+X)-f(X) \geq f(z+Y)-f(Y)$ holds for each $Y \subseteq X \subseteq \ground$ and $z \not\in X$.} of the rank function: Let $Y^{*}$ be elements in $Y$ with weights smaller than $w(y^{*})$.
  Since $y^{*} \in S$, it follows that $\rank(X + Y^{*} + y^{*}) > \rank(X + Y^{*})$, implying $\rank(Y^{*} + y^{*}) > \rank(Y^{*})$ and the greedy algorithm run on $Y$ picks $y^{*}$.
\end{proof}

We are now ready to present our sparsification data structure.

\begin{proof}[Proof of \cref{thm:decremental-basis}]
  Our data structure is a balanced binary tree where the leaf nodes correspond to elements in $X$ and each internal node corresponds to the set consisting of elements in leaf nodes of this subtree.
  We will abuse notation and use a node $v$ to also refer to the elements contained in the subtree rooted at $v$.
  
  We first build the binary tree top-down, starting with the root node containing $X$ and recursively splitting the current set into two subsets of roughly the same size and recursing on them.\footnote{Note that unlike in \cref{sec:bst}, we are not building query-sets here.}
  We then build the min-weight basis of each node in a bottom-up manner, starting from the leaves.
  For each node $v$ with children $u_1$ and $u_2$, we initialize the data structure $\cD_v$ for \cref{lemma:decremental-basis_baseline} with input set $S_{u_1} + S_{u_2}$, the min-weight basis of $u_1$ and $u_2$ which are obtained from $\cD_{u_1}$ and $\cD_{u_2}$.
  By \cref{claim:sparsification}, the basis $\cD_v$ maintains is the min-weight basis of $v$.
  Thus, by induction, the basis maintained in the root node is indeed the min-weight basis of the whole set $X$.
  The data structure for \cref{lemma:decremental-basis_baseline} takes time near-linear in the size of the input set to construct, and since the sparsified input is a subset of elements in the subtree, the initialization takes time near-linear in the sum of sizes of the subtrees, which is $\tO(n)$ (indeed, every element occurs in at most $\log n$ nodes).
  
  To delete an element $x \in X$, we first identify the leaf node $v_x$ of the binary tree which corresponds to $x$.
  Going upward, for each ancestor $p$ of $v_x$, we delete $x$ from $\cD_p$.
  If we find a replacement element $y$ for $x$, we insert $y$ into $\cD_{q}$, where $q$ is $p$'s parent, before proceeding to $q$ ($y$ is not in $\cD_p$ so such an insertion is valid by \cref{claim:sparsification}).
  Since $x$ will be removed from $\cD_q$ shortly, the input set of $\cD_q$ remains the union of the min-weight bases of $q$'s children.
  This takes $\tO(\sqrt{r})$ time since $\cD_p$ is of size $O(r)$.
  Inductively, since the min-weight bases of the child nodes are updated, by \cref{claim:sparsification}, the min-weight basis of each of the affect nodes (hence the min-weight basis of $X$) is correctly maintained.
\end{proof}

\section{Matroid Union} \label{sec:matroid-union}

In this section, we present our improved algorithm for matroid union.
Our main focus of this algorithm is on optimizing the $O(n\sqrt{r})$ term to $O(r\sqrt{r})$.
Thus, for simplicity of presentation, we will treat $k$ as a constant (the dependence on $k$ will be a small polynomial) and express our bounds using the $O_k(\cdot)$ and $\tO_k(\cdot)$ notation.

\begin{theorem}
  In the dynamic-rank-oracle model, given $k$ matroids $\cM_i = (U_i, \cI_i)$ for $1 \leq i \leq k$, it takes $\tO_k(n + r\sqrt{r})$ time to find a basis $S \subseteq U_1 \cup \cdots \cup U_k$ of $\cM = \cM_1 \vee \cdots \vee \cM_k$ together with a partition $S_1, \ldots, S_k$ of $S$ in which $S_i \in \cI_i$ for each $1 \leq i \leq k$.
  \label{thm:dynamic-matroid-union}
\end{theorem}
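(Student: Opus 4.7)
The plan is to start from the standard reduction from matroid union to matroid intersection: take $U^\star := U_1 \sqcup \cdots \sqcup U_k$, let $\cM^\star_1$ be the partition matroid on $U^\star$ that forbids picking two copies of the same original element, and let $\cM^\star_2$ be the direct sum $\cM_1 \oplus \cdots \oplus \cM_k$. A maximum common independent set in $\cI^\star_1 \cap \cI^\star_2$ projects onto the desired basis of $\cM_1 \vee \cdots \vee \cM_k$ together with its partition into pieces. A black-box invocation of \cref{thm:dynamic-matroid-intersection-rank-main} already yields $\tO_k(n\sqrt{r})$; my job is to shave the factor of $n$ down to $n + r\sqrt{r}$.

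The structural observation specific to matroid union is that in the exchange graph $G(S)$ almost every non-$S$ element sits in the first BFS layer $L_1$: at most $O_k(r)$ elements can lie \emph{outside} $L_1$, because any $(u,i) \notin L_1$ must already be spanned by $S$ in the partition matroid $\cM^\star_1$. I therefore plan to run the blocking-flow machinery of \cref{sec:matroid-intersection} on a \emph{sparsified} exchange graph in which $L_1$ has been replaced by a basis $B_1 \subseteq L_1$ (in $\cM^\star_2$); the matroid exchange property then guarantees that every augmenting path or augmenting set through $L_1$ can be rerouted through an element of $B_1$, so nothing is lost and $|B_1| = O_k(r)$. On this sparsified graph I would drive the layers $L_2, L_3, \ldots$ by \cref{alg:blocking-flow} essentially verbatim on a BST of \cref{thm:bst} of size $O_k(r)$ with $\beta = \sqrt{r}/d$, while a second, separate BST of \cref{thm:bst} built on $B_1$ handles the first-layer searches.

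The main obstacle is maintaining $B_1$ cheaply and preventing a binary search per DFS descent into $L_1$ from blowing the budget. The set $L_1$ is intrinsically decremental---both across phases (as $S$ grows, more elements become $\cM^\star_1$-spanned) and inside a phase (as \cref{alg:blocking-flow} prunes)---so I would plug it into the decremental-basis data structure of \cref{thm:decremental-basis}, paying $\tO_k(n)$ once at initialisation and $\tO_k(\sqrt{r})$ per deletion; the at most one replacement element produced by each deletion is then inserted into the first-layer BST. To block \emph{failing} first-layer searches in $\tO(1)$, I would use the $\bot$-fast path of \cref{thm:bst} (a single rank check on the cached query-set for $(S - y) \cup B_1$). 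Hence only the successful first-layer searches pay $\tO_k(\sqrt{r})$, and there are at most $O_k(r/d)$ of these per phase because they are in bijection with the augmenting paths found, whose number is controlled by \cref{lemma:approx,lemma:monotone}.

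Summing over the $\tO(1/\epsilon)$ blocking-flow phases plus the final $O(\epsilon r)$ one-at-a-time augmentations on the sparsified graph, the total cost is
\[
\tO_k\!\left( n \;+\; \frac{r}{\epsilon} \;+\; r\sqrt{r} \;+\; \epsilon r^2 \right),
\]
where $r/\epsilon$ absorbs the $\tO_k(r)$ per-phase BFS and layer setup, the $r\sqrt{r}$ term absorbs both the harmonic sum $\sum_d \tO_k(r\sqrt{r}/d)$ from BST searches and basis updates and the telescoping rebuild cost $\sqrt{r}\cdot\tO(r)$ controlled by \cref{lemma:augmenting-path-lengths}, and $\epsilon r^2$ comes from the final augmentations. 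Setting $\epsilon = 1/\sqrt{r}$ balances the two outer terms at $r\sqrt{r}$ and yields $\tO_k(n + r\sqrt{r})$. Correctness follows by confirming that the sparsified exchange graph retains augmenting sets at every shortest $s$--$t$ distance (via the exchange argument between $\cM^\star_2$-bases of $L_1$), with the remaining invariants inherited from \cref{lemma:key} exactly as in \cref{sec:matroid-intersection}.
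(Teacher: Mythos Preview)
Your proposal is correct and follows essentially the same approach as the paper: reduce to matroid intersection with the partition matroid against the direct sum, observe that all but $O_k(r)$ elements sit in the first BFS layer, sparsify that layer by maintaining a basis of it via the decremental-basis data structure of \cref{thm:decremental-basis}, and run the blocking-flow machinery of \cref{sec:matroid-intersection} on the resulting $O_k(r)$-size (union) exchange graph with the same parameter choices and the same summation via \cref{lemma:augmenting-path-lengths}. The only cosmetic differences are that the paper keeps $k$ per-matroid bases $B_i$ of $U_i\setminus S$ (equivalent to your single basis of $L_1$ in the direct sum $\cM^\star_2$) and, in its final \cref{alg:blocking-flow-union}, dispenses with a BST on $B_1$ altogether by simply picking an arbitrary element of $A_1$ at $\ell=0$; neither affects correctness or the time bound.
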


In \cref{appendix:matroid-union-fold}, we present an optimized (for the parameter $k$) version of the above algorithm which solves the important special case when all the $k$ matroids are the same---i.e.\ \emph{$k$-fold matroid union}---with applications in matroid packing problems. For example, the problem of finding $k$ disjoint spanning trees in a graph falls under this special case. In particular, in \cref{appendix:matroid-union-fold}, we obtain the following \cref{thm:dynamic-matroid-union-fold}, and we discuss some immediate consequences for the \emph{matroid packing, matroid covering}, and \emph{$k$-disjoint spanning trees problems} in \cref{sec:packing,sec:spanning}.

\begin{restatable}{theorem}{matroidunionfold}
  In the dynamic-rank-oracle model, given a matroid $\cM = (U, \cI)$ and an integer $k$, it takes $\tO(n + kr\sqrt{\min(n, kr)} + k\min(n, kr))$ time to find the largest $S \subseteq U$ and a partition $S_1, \ldots, S_k$ of $S$ in which $S_i \in \cI$ for each $1 \leq i \leq k$.
  \label{thm:dynamic-matroid-union-fold}
\end{restatable}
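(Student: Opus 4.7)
The plan is to reduce $k$-fold matroid union to matroid intersection and then adapt the matroid union algorithm from \cref{sec:matroid-union} (\cref{thm:dynamic-matroid-union}) to take full advantage of the uniform structure. Define
$\cM_1 = (U \times [k], \cI_{\text{part}})$, the partition matroid where $S$ is independent iff $|\{(u,i) \in S : i = j\}| \le 1$ for all $j$, and $\cM_2 = (U \times [k], \hat{\cI})$, the direct sum of $k$ copies of $\cM$, so that $\hat S \in \hat{\cI}$ iff the projection onto each copy is independent in $\cM$. A common independent set $S \in \cI_{\text{part}} \cap \hat{\cI}$ of size $m$ corresponds exactly to a partition $S_1,\dots,S_k$ with $S_i \in \cI$ and $\bigcup S_i$ consisting of $m$ distinct elements of $U$. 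The ground-set size is $kn$ and the rank of the intersection is $M := \min(n, kr)$, so naively invoking \cref{thm:dynamic-matroid-intersection-rank-main} on this instance already gives $\tO(kn\sqrt{M})$, which is too slow when $n \gg kr$. I will therefore run the blocking-flow framework on a \emph{sparsified} exchange graph rather than on all $kn$ elements.

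The key structural observation is the same as in \cref{sec:matroid-union}: in the exchange graph $G(S)$ for this intersection, the edge $s \to (u,i)$ is present precisely when the original element $u$ is not represented in $S$, so the first BFS layer $L_1$ consists of all $k$ copies of all ``unused'' $u$'s. Consequently $|L_1|$ can be $\Theta(kn)$, but $L_1$ lies inside the matroid $\hat{\cM}_2$ of rank $kr$, and for any $y \in S$ the existence of an exchange $(u,i) \in L_1$ with $S - y + (u,i) \in \hat{\cI}$ depends only on $\rank_2((S - y) \cup L_1)$, which is preserved when $L_1$ is replaced by any $\hat{\cM}_2$-basis $B \subseteq L_1$. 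Such a basis has size at most $M$ (at most $kr$ by matroid rank, and at most $|L_1| \le n$ by ground-set size). I will maintain $B$ under the decremental updates caused by augmentations (each augmentation removes one element from $L_1$) using the data structure of \cref{thm:decremental-basis} applied to $\cM_2$, which costs $\tO(n)$ to initialize on $L_1$ and $\tO(\sqrt{M})$ per deletion. For the binary-search-tree infrastructure of \cref{sec:bst}, I will use two trees: one of size $O(M)$ built on $B$ to resolve the first-layer edges, and another of size $O(kM)$ built on the remaining elements (the $|S| \le M$ elements of $S$ together with the $(k-1)|S| = O(kM)$ copies $(u,i)$ with $u$ used but this copy outside $S$) to handle all layers from the second onward.

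With this setup the blocking-flow algorithm of \cref{sec:matroid-intersection} runs essentially verbatim, and the running time decomposes as follows. Initialization costs $\tO(n)$: reading the ground set and building the initial basis $B$. Each blocking-flow phase at distance $d$ and the final $O(M/d_{\text{max}})$ single augmentations collectively invoke the binary-search trees at most $\tO(kM/d + (\text{work from deeper layers}))$ times, since for a first-layer binary search to reach a leaf it must correspond to an actual augmenting path, of which there are only $O(M/d)$ per phase, while unsuccessful probes are filtered out in $\tO(1)$ time by a single rank check against $B$. Summing over $d = 1, \dots, \sqrt{M}$ yields a harmonic term $\tO(kr\sqrt{M})$ from the tree of size $O(M)$ built over $B$, and an additional $\tO(kM)$ from the $\tO(M)$ basis updates at $\tO(\sqrt{M})$ each, plus the cost of maintaining the $O(kM)$-sized secondary tree. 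Choosing the rebuild parameter $\beta = \sqrt{M}$ as in \cref{sec:matroid-intersection} and balancing the remaining $O(\sqrt{M})$ augmentations gives the claimed $\tO(n + kr\sqrt{M} + kM)$.

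The main obstacle will be the interface between the dynamic basis data structure and the blocking-flow execution: after each augmentation, the first layer $L_1$ loses (at most) one element, but the binary search tree built on $B$ must reflect both this deletion and any replacement element brought in by \cref{thm:decremental-basis}, and the augmenting-set correctness argument (\cref{lemma:key}) must still apply when we explore via $B$ rather than all of $L_1$. This requires verifying that probing only $B$ cannot miss an augmenting path, which follows from $\rank_2((S-y)\cup B) = \rank_2((S-y)\cup L_1)$, and that the amortized $\tO(\sqrt{M})$ basis-update cost can be distributed across augmentations without disturbing the $\tO(kr\sqrt{M})$ blocking-flow bound. Once these interface lemmas are in place, the time accounting matches the statement.
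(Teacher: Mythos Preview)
Your high-level plan---reduce to matroid intersection, sparsify the first BFS layer by a dynamically maintained basis, and run blocking flow on the reduced graph---is indeed the backbone of the paper's argument, and is essentially what \cref{thm:dynamic-matroid-union} already does for general matroid union. But the $k$-fold case requires one additional idea that your proposal does not contain, and without it the stated $k$-dependence does not come out.

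The issue is the cost of iterating through the $k$ per-copy binary search trees. In a blocking-flow phase, every element $a_\ell$ you explore must probe $\cT_{\ell+1}^{(i)}$ for each $i\in[k]$ to look for an out-edge; even though a failed probe costs $\tO(1)$, this is still an $O(k)$ overhead per explored element. Layers $2,\dots,d_t-1$ contain $\Theta(|S|)$ elements, which can be $\Theta(kr)$, and an ``essentially verbatim'' blocking flow touches every one of them, giving $\Theta(k^2 r)$ probes per phase and hence $\tO(k^2 r\sqrt{M})$ overall---off by a factor of $k$. Your accounting (``$\tO(kM/d)$ tree invocations per phase'') is not justified: the DFS visits every alive element regardless of how many augmenting paths are eventually found.

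The paper's fix (\cref{alg:blocking-flow-union-fold}, Line~\ref{line:check-not-spanned}) is a \emph{spanned check}: before exploring from $a_\ell$, test in $\tO(1)$ whether $a_\ell\in\sspan(L_1\cup\cdots\cup L_{\ell-1}\cup R_\ell)$, where $R_\ell$ collects the elements already removed from layer $\ell$ for lack of out-edges. If so, discard $a_\ell$ immediately, since any out-neighbor of $a_\ell$ is also an out-neighbor of some already-handled element. Because $B\cup R_2\cup\cdots\cup R_{d_t-1}\in\cI$ (\cref{claim:is-basis}), at most $r$ elements per phase survive this check and incur the $O(k)$ iteration cost (plus the $O((|S'|-|S|)\,d_t)$ elements on augmenting paths). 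This is exactly what makes term~\ref{item:term3} of \cref{lemma:blocking-flow-union-fold} equal $\tO(((|S'|-|S|)\,d_t+r)\cdot k)$ and yields $\tO(kr\sqrt{M}+kM)$ after summing over phases. A smaller point: the paper works in the contracted union exchange graph $H(S)$ and maintains a single $\cM$-basis of $U\setminus S$ of size $\le r$, not an $\hat\cM_2$-basis of $L_1\subseteq U\times[k]$ as you suggest; your claim $|L_1|\le n$ is also off, since $L_1$ contains all $k$ copies of every unused element.
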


The rest of this section will focus on the proof of 
\cref{thm:dynamic-matroid-union} (again, where the number of matroids $k$ is treated as a constant).
Our algorithm is based on the matroid intersection algorithm in \cref{sec:matroid-intersection}, in which we identify and optimize several components that lead to the improved time bound.

\subsection{Reduction to Matroid Intersection} \label{sec:reduction}

For completeness, we provide a standard reduction from matroid union to matroid intersection.
For an in-depth discussion, see \cite[Chapter 42]{schrijver2003}.
Let $\cM_i = (U_i, \cI_i)$ be the given $k$ matroids and $U = U_1 \cup \cdots \cup U_k$ be the ground set of the matroid union $\cM = \cM_1 \vee \cdots \vee \cM_k$.
We first relabel each element in the matroids with an identifier of its matroid, resulting in $\hat{\cM}_i = (\hat{U}_i, \cI_i)$, where $\hat{U}_i = \{(u, i) \mid u \in U_i\}$.
Let $\hat{\cM} = (\hat{U}, \hat{\cI}) = \hat{\cM}_1 \vee \cdots \vee \hat{\cM}_k$ be over the ground set $\hat{U} = \hat{U}_1 \sqcup \cdots \sqcup \hat{U}_k$.

In other words, in $\hat{\cM}$, we duplicate each element that is shared among multiple matroids into copies that are considered different, effectively making the ground sets of the $k$ matroids disjoint.
After this modification, an independent set in $\hat{\cM}$ is now simply the union of $k$ independent sets, one from each matroid.
However, that might not be what we want since these independent sets may overlap, i.e., contain copies that correspond to the same element. 
We therefore intersection $\hat{\cM}$ with a partition matroid $\cM_{\text{part}} = (\hat{U}, \cI_{\text{part}})$ given by
\[ \cI_{\text{part}} = \{S \subseteq \hat{U} \mid \left|S \cap \{(u, i)\;\text{for $1 \leq i \leq k$} \mid u \in U_i\}\right| \leq 1\;\text{holds for each $u \in U$} \} \]
to restrict different copies of the same element to be chosen at most once.
The matroid union problem is thus reducible to the matroid intersection problem in the sense that the intersection of $\hat{\cM}$ and $\cM_{\text{part}}$ maps exactly to the independent sets of the matroid union $\cM$.

Notation-wise, given the above mapping between the two worlds, whenever we write $S \in \cI_{\text{part}} \cap \hat{\cI}$, a subset set of $\hat{U}$, we will equivalently regard $S$ as a subset of $U$ with an implicit partition $S_1, \ldots, S_k$ where $S_i \in \cI_i$.

\subsection{Specialized Matroid Intersection Algorithm}

Given the reduction, to prove \cref{thm:dynamic-matroid-union}, it suffices to compute the intersection of $\hat{\cM}$ and $\cM_{\text{part}}$ in the claimed time bound.
In the following, we will set $\cM_1$ to be $\cM_{\text{part}}$ and $\cM_2$ to be $\hat{\cM}$ when talking about exchange graphs and other data structures.
Our main goal is to optimize the $O(n\sqrt{r})$ term to $O(r\sqrt{r})$, so it might be more intuitive to think of $r \ll n$.
We first show that for an $S \in \cI_\text{part} \cap \hat{\cI}$, the exchange graph $G(S)$ is quite unbalanced in the sense that most elements appear in the first distance layer.
In fact, the first distance layer of $G(S)$ contains all duplicates of elements $u$ in $U$ that do not appear in $S$.
This is by definition of $G(S)$ and the fact that $\cM_1 = \cM_{\text{part}}$ is the partition matroid.
In the following, when the context is clear, we let $d_t$ denote the $(s, t)$-distance of $G(S)$ and $L_1, \ldots, L_{d_t - 1}$ denote the distance layers.

\begin{fact}
  It holds that $L_1 = \{(u, i) \mid (u, i) \in \hat{U}\;\text{and}\;(u, j) \not\in S\;\text{for any $1 \leq j \leq k$}\}$.
  \label{fact:large-first-layer}
\end{fact}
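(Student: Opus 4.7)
The plan is a direct unpacking of definitions. By the definition of the exchange graph (\cref{def:exchange-graph}) specialized to $\cM_1 = \cM_{\text{part}}$, an element $(u,i) \in \hat{U} \setminus S$ lies in $L_1$ if and only if $(s,(u,i)) \in E_s$, which holds if and only if $S + (u,i) \in \cI_{\text{part}}$. So the first step is to rewrite the claim as a condition on $\cI_{\text{part}}$: I need to show that for $(u,i) \in \hat{U}$, we have $S + (u,i) \in \cI_{\text{part}}$ and $(u,i) \notin S$ if and only if $(u,j) \notin S$ for every $j \in \{1,\dots,k\}$.

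Second, I invoke the structure of the partition matroid. Recall $\cI_{\text{part}}$ has one block per ground-set element $u \in U$, namely $B_u = \{(u,j) : 1 \le j \le k,\; u \in U_j\}$, and a set is independent exactly when it selects at most one element from each block. Since $S \in \cI_{\text{part}}$ already respects this constraint, adding $(u,i)$ keeps the set independent if and only if $S$ contains no element of $B_u$, i.e., $(u,j) \notin S$ for every $j$. In particular this condition automatically forces $(u,i) \notin S$, so the auxiliary constraint $(u,i) \in \bar S$ is absorbed into the stated characterization.

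Combining the two steps, $(u,i) \in L_1$ iff $(u,j) \notin S$ for all $j \in \{1,\dots,k\}$, matching the set in the statement. There is no real obstacle: the argument is entirely a definition chase, and the only small care is to note that ``no copy of $u$ appears in $S$'' subsumes both the partition-matroid independence of $S + (u,i)$ and the requirement that $(u,i)$ itself is not already in $S$.
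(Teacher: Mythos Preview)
Your proposal is correct and matches the paper's approach exactly: the paper does not give a separate proof but simply notes, immediately before stating the fact, that it ``is by definition of $G(S)$ and the fact that $\cM_1 = \cM_{\text{part}}$ is the partition matroid''---precisely the definition chase you carry out. Your observation that the condition ``no copy of $u$ lies in $S$'' already subsumes $(u,i)\notin S$ is the only point requiring a moment's thought, and you handle it cleanly.
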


Similarly, the odd layers of $G(S)$ (that corresponds to $\bar{S}$) are well-structured in the sense that they consist of elements whose one of the duplicates appears in $S$.
By definition of $G(S)$, we also know that elements in odd layers have only a single in-edge, which is from their corresponding duplicate in $S$.
These elements thus all have the same distance from $s$.

\begin{fact}
  It holds that $L_3 \cup L_5 \cup \cdots \cup L_{d_t - 1} = \{(u, i) \mid (u, i) \in \hat{U}\;\text{and}\;(u, j) \in S\;\text{for some}\;i \neq j\}$,
  and for each $(u, i) \in L_3 \cup \cdots \cup L_{d_t - 1}$, we have $d_{G(S)}(s, (u, i)) = d_{G(S)}(s, (u, j)) + 1$ where $(u, j) \in S$.
  \label{fact:structured-odd-layers}
\end{fact}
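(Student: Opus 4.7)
The plan is to exploit the rigidity of the partition matroid $\cM_1 = \cM_{\text{part}}$: any swap preserving independence in $\cI_{\text{part}}$ must respect the ``one copy per underlying element'' rule, so each ``duplicate'' element in $\bar{S}$ has essentially a unique in-edge in $G(S)$. This structural observation will both confine the odd layers to duplicates and pin down their distance.

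First I would prove the forward containment together with the distance identity. Let $(u, i)$ belong to some odd layer $L_\ell$ with $\ell \geq 3$. By the bipartite structure of $G(S)$ (odd layers lie in $\bar{S}$, even layers in $S$), we have $(u, i) \in \bar{S}$; and since $\ell \geq 3$, $(u, i) \notin L_1$, so by \cref{fact:large-first-layer} the element $u$ must already be covered in $S$, say $(u, j) \in S$ with $j \neq i$. Next I would determine the in-neighborhood of $(u, i)$: it can only receive edges via $E_s$ or $E_1$. The $E_s$-edge $s \to (u, i)$ does not exist because $S + (u, i) \notin \cI_{\text{part}}$ (two copies of $u$). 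For any candidate $E_1$-edge $x \to (u, i)$ with $x \in S$, independence of $S - x + (u, i)$ in $\cM_{\text{part}}$ forces $x = (u, j)$, since otherwise $(u, j)$ and $(u, i)$ would coexist in $S - x + (u, i)$. Hence $(u, j)$ is the unique in-neighbor of $(u, i)$ in $G(S)$, yielding $d_{G(S)}(s, (u, i)) = d_{G(S)}(s, (u, j)) + 1$.

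For the reverse containment, I would note that the edge $(u, j) \to (u, i)$ always exists in $E_1$ whenever $(u, j) \in S$ and $(u, i) \in \hat{U}$ with $i \neq j$, because $S - (u, j) + (u, i)$ merely swaps one copy of $u$ for another and hence remains in $\cI_{\text{part}}$. Consequently, any such duplicate $(u, i)$ whose corresponding $(u, j) \in S$ is reachable from $s$ at an even distance at most $d_t - 2$ is itself reachable at the next odd layer, placing it in $L_3 \cup L_5 \cup \cdots \cup L_{d_t - 1}$.

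The main obstacle is bookkeeping rather than algebraic content: one has to keep the bipartite-layer parities straight and carefully rule out every non-$(u, j)$ in-edge using the partition-matroid constraint. The one-in-edge property for duplicate elements is the real workhorse, simultaneously fixing the distance of every odd-layer element and forbidding any alternative route to it.
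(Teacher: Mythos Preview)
Your argument is correct and matches the paper's reasoning: the paper gives no formal proof but simply notes before the statement that odd-layer elements ``have only a single in-edge, which is from their corresponding duplicate in $S$,'' which is precisely your partition-matroid analysis of the in-neighborhood. Your conditional phrasing of the reverse containment (requiring $(u,j)$ to lie in some even layer $\le d_t-2$) is also apt—the paper leaves this direction implicit, and only the forward containment together with the distance identity are actually used downstream.
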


\paragraph{Union Exchange Graph.} Given the above facts, we introduce another notion of exchange graphs which is commonly used for matroid union (see, e.g., \cite{edmonds1968matroid,cunningham1986improved}).
For the given $k$ matroids $\cM_i = (U_i, \cI_i)$ and a subset $S \subseteq U$ that can be partitioned into $k$ independent sets $S_1, \ldots, S_k$ with $S_i \in \cI_i$, the \emph{union exchange graph} is a directed graph $H(S) = (U \cup \{s, t\}, E)$ with two distinguished vertices $s, t \not\in U$ and edge set $E = E_s \cup E_t \cup E_{\text{ex}}$, where

\begin{align*}
  E_s &= \{(s, u) \mid u \not\in S\}, \\
  E_t &= \{(u, t) \mid S_i + u \in \cI_i\;\text{for some $1 \leq i \leq k$}\},\;\text{and} \\
  E_{\text{ex}} &= \{(u, v) \mid S_i - v + u \in \cI_i\;\text{where $v \in S_i$}\}. \\
\end{align*}

We can see that the exchange graph $G(S)$ with respect to $S \in \cI_{\text{part}} \cap \hat{\cI}$ (as a subset of $\hat{U}$) and the union exchange graph $H(S)$ with respect to $S \subseteq U$ is essentially the same in the sense that $H(S)$ can be obtained from $G(S)$ by contracting all copies of the same element in the first layers and skipping all other odd layers.
In particular, for each $(u, i) \in S$, in $G(S)$, there might be a direct edge from $(u, i)$ to $(u, j)$ and an edge from $(u, j)$ to $(v, j)$, where $(v, j) \in S_j$ and $S_j - v + u \in \cI_j$.
Correspondingly, in $H(S)$, we skip the intermediate vertex $(u, j)$ and meld the above two edges as one direct edge from $u \in S_i$ to $v \in S_j$.
We also merge all edges from $s$ to some $(u, i)$ of the same $u$ in the first layer to a single edge from $s$ to $u$ (\cref{fact:large-first-layer}).
This simplification does not impact the distance layers of $H(S)$ since all such $(u, j)$ have the same distance from $s$ (\cref{fact:structured-odd-layers}).

From now on, for simplicity, our algorithms will run on the union exchange graphs $H(S)$, i.e., we will perform blocking-flow computation and augment $S$ along paths in $H(S)$.
On the other hand, to not repeat and specialize all the lemmas to the case of union exchange graphs, proofs and correctness will be argued implicitly in the perspective of the exchange graph $G(S)$ for matroid intersection.
For instance, for $P = (s, a_1, \ldots, a_{d_t - 1}, t)$ a shortest $(s, t)$-path in $H(S)$, ``augmenting $S$ along $P$'' means moving $a_i$ to the independent set that originally contains $a_{i + 1}$ for each $i \geq 1$, and thus effectively enlarge the size of $S$ by one via putting $a_1$ in it.\footnote{One can show that the matroid union $\cM$ is a matroid~\cite[Chapter 42]{schrijver2003}. As such, a basis can be obtained by trying to include each element into $S$. From the union exchange graph perspective, the independence test of $S + x$ corresponds to asking whether ``there is a path in $H(S)$ from $x$ to $t$''.}
One can verify that this is indeed what happens if we map $P$ back to a path $P^\prime$ in $G(S)$, and then perform the augmentation of $S$ (as a subset of $\hat{U}$) along $P^\prime$.

Our main idea to speed up the matroid union algorithm to $\tO_k(r\sqrt r)$ (instead of $\tO_k(n \sqrt{r})$) is to ``sparsify'' the first layer of $H(S)$ by only considering a subset of elements contained in some basis.  We formalize this in the following \cref{lemma:bfs-union,lemma:blocking-flow-union} together with \cref{alg:bfs-union,alg:blocking-flow-union}.

\begin{lemma}
  Given $S \in \cI_\text{part} \cap \hat{\cI}$ and $k$ bases $\{B_i\}_{i = 1}^{k}$ of $U_i \setminus S$, it takes $\tO_k(r)$ time to construct the distance layers $L_2, \ldots, L_{d_t - 1}$ of $H(S)$.
  \label{lemma:bfs-union}
\end{lemma}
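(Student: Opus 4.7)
The plan is to run breadth-first search on $H(S)$ from $s$ but never materialize the massive first layer $L_1 = \bar{S}$, instead representing it implicitly through the given bases $B_i$. By \cref{fact:structured-odd-layers} together with the observation that every $E_{\text{ex}}$-edge lands in $S$, the layers $L_2, L_3, \ldots, L_{d_t - 1}$ all lie inside $S$, whose total size is at most $kr$; thus the BFS output has size $O_k(r)$ and the goal is to bound the exploration work by $\tO_k(r)$ as well.

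To compute $L_2$, observe that $v \in S_i$ belongs to $L_2$ iff some $u \in \bar{S}$ satisfies $S_i - v + u \in \cI_i$. Because $B_i$ is a basis of $U_i \setminus S$ in $\cM_i$, every $u \in U_i \setminus S$ lies in $\sspan_i(B_i)$, so elements of $(U_i \setminus S) \setminus B_i$ belong to $\sspan_i(B_i) \subseteq \sspan_i((S_i - v) + B_i)$ and cannot raise the rank; hence $\rank_i((S_i - v) + B_i) = \rank_i((S_i - v) + (U_i \setminus S))$, and such an exchange exists in $U_i \setminus S$ iff one exists in $B_i$. Accordingly, build a query-set $Q_{S_i}$ for each $i$ in $O_k(r)$ operations and initialize, via \cref{thm:bst} at $\beta = 1$, a co-circuit BST $\cT_i^{B}$ over $B_i$ with underlying independent set $S_i$; then decide $L_2$-membership of each $v \in S$ by at most $k$ calls $\cT_i^{B}.\textsc{Find}(v)$.

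For $\ell \geq 3$, additionally initialize a circuit BST $\cT_i^{S}$ over $S_i$ (with underlying independent set $S_i$) per matroid and run standard BFS: when processing $v$ placed in layer $\ell - 1$, for every $i$ with $v \notin S_i$, repeatedly invoke $\cT_i^{S}.\textsc{Find}(v)$ and follow each returned $w$ with $\cT_i^{S}.\textsc{Delete}(w)$ until $\bot$, assigning each newly discovered $w$ to layer $\ell$. An edge $(v, t)$ is detected by a single rank check $\rank_i(S_i + v) = |S_i| + 1$ for each $i \neq j(v)$, where $v \in S_{j(v)}$. The search halts as soon as $t$ is reached.

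For the cost accounting, BST initializations contribute $\tO(\sum_i |B_i| + \sum_i |S_i|) = \tO_k(r)$ by \cref{thm:bst}; during BFS each element of $S$ is discovered and processed once, triggering $O(k)$ failed $\textsc{Find}$ calls (each $\tO(1)$) plus successful finds charged to newly discovered elements together with $O(\log n)$ deletions, summing to $\tO_k(r)$. The step I expect to take the most care is the spanning-based equivalence that justifies skipping $L_1$ while simultaneously handling all $k$ matroids; once that is in place the remaining BFS is essentially the exchange-graph exploration of \cref{lemma:bfs} restricted to the compact universe $S$, and the $\tO_k(r)$ bound falls out.
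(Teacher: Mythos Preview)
Your argument is correct and rests on the same key observation as the paper: replacing the huge first layer $\bar{S}$ by the bases $B_i$ is valid because $\rank_i((S_i-v)+B_i)=\rank_i((S_i-v)+(U_i\setminus S))$, so all edges from $L_1$ into $L_2$ are witnessed already by edges out of the $B_i$'s. The correctness and the $\tO_k(r)$ accounting go through.

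The organization, however, differs from the paper's in a way worth noting. The paper runs a \emph{single} BFS: it initializes only circuit binary search trees $\cT_i$ over $S_i$, enqueues $B_1\cup\cdots\cup B_k$ with distance~$1$, and then explores outgoing exchange edges uniformly via $\cT_i.\textsc{Find}(u)$ for every popped $u$ and every $i$ with $u\in U_i\setminus S_i$. In contrast, you split the computation into two stages---a co-circuit BST $\cT_i^B$ over $B_i$ to test $L_2$-membership of each $v\in S$, and separate circuit BSTs $\cT_i^S$ over $S_i$ for layers $\ell\ge 3$---which doubles the number of data structures and requires a small extra care you glossed over: after identifying $L_2$ you must either delete its elements from the $\cT_i^S$ or explicitly skip already-discovered vertices during the later BFS (your phrase ``newly discovered'' suggests the latter, but it should be stated). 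The paper's single-pass formulation avoids this seam entirely and needs no co-circuit trees at all; your version works but is more elaborate than necessary.
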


Note that we know exactly what elements are in the first distance layer, so computing $L_2, \ldots, L_{d_t - 1}$ suffices.

\begin{algorithm}
  \SetEndCharOfAlgoLine{}
  \SetKwInput{KwData}{Input}
  \SetKwInput{KwResult}{Output}
  
  \caption{BFS in a union exchange graph}
  \label{alg:bfs-union}
  \KwData{$S \subseteq U$ which partitions into $S_1, \ldots, S_k$ of independent sets and $k$ bases $\{B_i\}_{i = 1}^{k}$ of $U_i \setminus S$}
  \KwResult{The $(s, u)$-distance $d(u)$ in $H(S)$ for each $u \in S \cup \{t\}$}

  $\mathsf{queue} \gets B_1 \cup \cdots \cup B_k$\;
  $d(u) \gets \infty$ for each $u \in S \cup \{t\}$, and $d(u) \gets 1$ for each $u \in B_1 \cup \cdots \cup B_k$\;
  $\cT_i \gets \textsc{Initialize}(\cM_i, S_i, S_i)$ (\cref{thm:bst} with $\beta = 1$)\;
  \While{$\mathsf{queue} \neq \emptyset$} {
    $u \gets \mathsf{queue}.\textsc{Pop}()$\;
    \For{$i \in \{1, 2, \ldots, k\}$ where $u \in U_i$ and $u \not\in S_i$} {
      \While{$v := \cT_i.\textsc{Find}(u) \neq \bot$} {
        $d(v) \gets d(u) + 1$ and $\mathsf{queue}.\textsc{Push}(v)$\;
        $\cT_i.\textsc{Delete}(v)$\;
      }
      \lIf{$S_i + u \in \cI$ and $d(t) = \infty$} {
        $d(t) \gets d(u) + 1$
      }
    }
  }
  \textbf{return} $d(u)$ for each $u \in S \cup \{t\}$\;
\end{algorithm}

\begin{proof}
  The algorithm is presented as \cref{alg:bfs-union}, and it is essentially a breadth-first-search (BFS) starting from $B_1 \cup \cdots \cup B_k$ instead of $s$.
  Out-edges in $H(S)$ are explored via $k$ binary search trees $\cT_1, \cT_2, \ldots, \cT_k$ of \cref{thm:bst}, one for each matroid $\cM_i$ and independent set $S_i$.
  Let's analyze the running time first.
  Building $\cT_i$ takes a total of $\tO(|S|) = \tO_k(r)$ time.
  Exploring the graph takes $\tO(|S \cup B_1 \cup \cdots \cup B_k| \cdot k) = \tO_k(r)$ time in total since each element in $S$ is found at most once by $\cT_i.\textsc{Find}(\cdot)$ because $S_i$'s are disjoint, and we also spend $O(k)$ time for each element in $S \cup B_1 \cup \cdots \cup B_k$ iterating over $\cT_i$.
  
  It remains to show that starting from $B_1 \cup \cdots \cup B_k$ instead of $U \setminus S$ does not affect the correctness of the BFS.
  For this, it suffices to show that we successfully compute $d(u)$ for all $u \in S$ with distance $2$ from $s$.
  By definition, $u \in S_i$ is of distance $2$ from $s$ if and only if there exists an $x \in U_i \setminus S$ such that $S_i - u + x \in \cI_i$.
  This is equivalent to $\rank_i(S_i - u + (U_i \setminus S)) > \rank_i(S_i)$ by \cref{obs:exchange}.
  But then by \cref{lemma:basis-rank}, we have $\rank_i(S_i - u + (U_i \setminus S)) = \rank_i(S_i - u + B_i)$, and so such an $x$ exists in $B_i$ as well.
  This concludes the proof of \cref{lemma:bfs-union}.
\end{proof}

\begin{lemma}
  Given an $S \in \cI_{\text{part}} \cap \hat{\cI}$ with $d_{H(S)}(s, t) = d_t$ together with data structures $\cD_i$ of \cref{thm:decremental-basis} that maintains a basis of $U_i \setminus S$ for each $1 \leq i \leq k$, it takes $\tO_k(r + \frac{r\sqrt{r}}{d_t} + (|S^\prime| - |S|) \cdot d_t\sqrt{r})$ time to obtain an $S^\prime \in \cI_{\text{part}} \cap \hat{\cI}$ with $d_{H}(S^\prime)(s, t) > d_t$, with an additional guarantee that $\cD_i$ now maintains a basis of $U_i \setminus S^\prime$ for each $1 \leq i \leq k$.
  \label{lemma:blocking-flow-union}
\end{lemma}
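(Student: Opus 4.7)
The plan is to adapt the blocking-flow algorithm \cref{alg:blocking-flow} to the union exchange graph $H(S)$, exploiting two features specific to matroid union: the layers $L_2,\ldots,L_{d_t-1}$ of $H(S)$ lie in $S$ and together contain $O_k(r)$ elements, and the first layer can be faithfully ``sparsified'' to the bases $B_1,\ldots,B_k$ that $\cD_1,\ldots,\cD_k$ maintain, whose total size is also $O_k(r)$. I first invoke \cref{lemma:bfs-union} with the input bases to build $L_2,\ldots,L_{d_t-1}$ in $\tO_k(r)$ time. Setting $\beta := \sqrt{r}/d_t$, for each matroid $\cM_i$ I initialize a co-circuit BST $\cT_i^{(1)}$ from \cref{thm:bst} over $B_i$ (so that $\cT_i^{(1)}.\textsc{Find}(y)$ returns a first-layer exchange for $y\in S_i$), and in parallel I initialize BSTs $\cT_\ell$ over the $S$-layers $L_\ell$ for $\ell\ge 2$ exactly as in \cref{alg:blocking-flow}. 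Initialization cost is $\tO_k(r)$.

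The main loop then runs the DFS-based augmenting-path procedure of \cref{alg:blocking-flow}, but on the sparsified graph obtained from $H(S)$ by replacing the first layer with $B := B_1\cup\cdots\cup B_k$. Whenever the DFS completes a path $P=s\to u_0\to v_1\to\cdots\to v_{d_t-1}\to t$, I (a) augment $S$ along $P$, (b) notify each $\cT_\ell$ of the symmetric difference via its $\textsc{Update}$ routine, and (c) for every $i$ with $u_0\in U_i$ call $\cD_i.\textsc{Delete}(u_0)$, passing any returned replacement $u_0'$ into $\cT_i^{(1)}$ via $\textsc{Replace}(u_0,u_0')$ (or $\textsc{Delete}(u_0)$ if none). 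Dead-end backtracking and element removal from the $\cT$'s are verbatim from \cref{alg:blocking-flow}. Summing up: there are $O_k(r)$ DFS iterations in total (each element appears in $B$ or some $L_\ell$ at most once, the total size being $O_k(r)$ augmented by at most one replacement per path), so the $\textsc{Find}$ cost is $O_k(r)\cdot\tO(\sqrt{r}/d_t)=\tO_k(r\sqrt{r}/d_t)$. Each augmentation triggers $\textsc{Update}$ calls costing $\tO_k\bigl(\sum_\ell |L_\ell|\cdot d_t/\sqrt{r}\bigr)=\tO_k(d_t\sqrt{r})$, for a total of $\tO_k((|S'|-|S|)\,d_t\sqrt{r})$; the $\cD_i$ updates add a subsumed $\tO_k((|S'|-|S|)\sqrt{r})$.

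Two correctness obligations remain, the second being the main obstacle. First, on the sparsified graph the DFS produces a valid augmenting set and each $\bot$-return safely deactivates the corresponding element; this is inherited verbatim from the proof of \cref{lemma:blocking-flow} via \cref{lemma:key}, since $\cT_\ell.\textsc{Update}(\cdot)$ keeps $\cT_\ell$ consistent with $S\oplus D_{\ell-1}\oplus D_\ell$. The harder step is showing $d_{H(S')}(s,t)>d_t$ in the \emph{full} (non-sparsified) graph: suppose for contradiction some length-$d_t$ path in $H(S')$ begins $s\to u_0\to v_1$ with $u_0\in (U_i\setminus S')\setminus B_i$ and $v_1\in S_i'$. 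By \cref{obs:exchange}\ref{item:cocircuit-exchange} together with \cref{lemma:basis-rank}, the current basis $B_i$ maintained by $\cD_i$ must also contain some $u_0'$ with $S_i'-v_1+u_0'\in\cI_i$, so the path can be rerouted through $u_0'\in B_i$, contradicting the maximality of the blocking flow that the DFS produced on the sparsified graph. Since $\cD_i$ is refreshed after every augmentation and $U_i\setminus S$ only loses the element $u_0$ (whenever $u_0\in U_i$) per augmentation, at termination $\cD_i$ correctly maintains a basis of $U_i\setminus S'$, completing the proof.
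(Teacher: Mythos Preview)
Your proof is correct and follows essentially the same approach as the paper's (which presents \cref{alg:blocking-flow-union} and argues the sparsification step exactly as you do, via \cref{lemma:basis-rank} and \cref{lemma:key}). One minor point: the co-circuit trees $\cT_i^{(1)}$ over $B_i$ that you describe are not actually needed---the paper simply keeps $A_1 := B_1\cup\cdots\cup B_k$ as a plain set and picks an arbitrary element of $A_1$ whenever $\ell=0$, using BSTs only for layers $\ell\ge 2$; your described $\textsc{Find}(y)$ for $y\in S_i$ would locate an \emph{in}-edge to $y$ from the first layer, which the forward DFS never requires.
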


\begin{algorithm}
  \SetEndCharOfAlgoLine{}
  \SetKwInput{KwData}{Input}
  \SetKwInput{KwResult}{Output}
  \SetKwInput{KwGuarantee}{Guarantee}
  
  \caption{Blocking flow in a union exchange graph}
  \label{alg:blocking-flow-union}
  \KwData{$S \subseteq U$ which partitions into $S_1, \ldots, S_k$ of independent sets and a dynamic-basis data structure $\cD_i$ of $U_i \setminus S$ for each $1 \leq i \leq k$}
  \KwResult{$S^\prime \in \cI_{\text{part}} \cap \cI_k$ with $d_{H(S^\prime)}(s, t) > d_{H(S)}(s, t)$}
  \KwGuarantee{$\cD_i$ maintains a basis of $U_i \setminus S^\prime$ at the end of the algorithm for each $1 \leq i \leq k$}
  
  Build the distance layers $L_2, \ldots, L_{d_t - 1}$ of $H(S)$ with \cref{lemma:bfs-union}\;
  $L_0 \gets \{s\}$ and $L_{d_t} \gets \{t\}$\;
  $B_i \gets$ the basis maintained by $\cD_i$ and $L_1 \gets B_1 \cup \cdots \cup B_k$\;
  $A_\ell \gets L_\ell$ for each $0 \leq \ell \leq d_t$\;
  $\cT_{\ell}^{(i)} \gets \textsc{Initialize}(\cM_i, S_i, Q_{S_i}, A_{\ell} \cap S_i)$ for each $2 \leq \ell < d_t$ and $1 \leq i \leq k$ (\cref{thm:bst} with $\beta = \sqrt{r} / d_t$)\;
  $D_{\ell} \gets \emptyset$ for each $1 \leq \ell < d_t$\;
  $\ell \gets 0$ and $a_0 \gets s$\;
  \While{$\ell \geq 0$} {
    \If{$\ell < d_t$} {
      \lIf{$A_{\ell} = \emptyset$} {
        \textbf{break}
      }
      \lIf{$\ell = 0$} {
        Find an $a_{\ell + 1} := \cT_{\ell + 1}^{(i)}.\textsc{Find}(a_\ell) \neq \bot$ for some $1 \leq i \leq k$
      }
      \lElse {
        $a_{\ell + 1} \gets$ an arbitrary element in $A_1$
      }
      \If{such an $a_{\ell + 1}$ does not exist} {
        \lIf{$\ell \geq 2$} {
          $\cT_{\ell}^{(j)}.\textsc{Delete}(a_{\ell})$ where $a_{\ell} \in S_j$
        }
        $A_\ell \gets A_\ell - a_\ell$ and $\ell \gets \ell - 1$\;
      }
      \Else {
        $\ell \gets \ell + 1$
      }
    }
    \Else {
    \tcp{Found augmenting path $a_1, a_2, \ldots a_\ell$}
      $D_1 \gets D_1 + a_1$ and $A_1 \gets A_1 - a_1$\;
      \For{$i \in \{1, 2, \ldots, k\}$ where $a_1 \in U_i$} {
        $B_i \gets B_i - a_1$\;
        \If{$\cD_i.\textsc{Delete}(a_1)$ returns a replacement $x$} {\label{line:delete}
          $B_i \gets B_i + x$ and $A_1 \gets A_1 \cup \{x\}$\;
        }
      }
      \For{$i \in \{2, \ldots, d_t - 1\}$} {
          $D_{i} \gets D_{i} + a_{i}$ and $A_{i} \gets A_{i} - a_{i}$\;
          $\cT_{i}^{(j)}.\textsc{Delete}(a_i)$ and $\cT_{i}^{(j)}.\textsc{Update}(\{a_{i - 1}, a_{i}\})$ where $a_i \in S_j$\;
      }
      Augment $S$ along $P = (s, a_1, \ldots, a_{d_t - 1}, t)$\;
      $\ell \gets 0$\;
    }
  }
  \textbf{return} $S$\;
\end{algorithm}

\begin{proof}[Proof of \Cref{lemma:blocking-flow-union}]
  Our blocking-flow algorithm for matroid union is presented as \cref{alg:blocking-flow-union}.
  As it is equivalent to \cref{alg:blocking-flow} running on $G(S)$ except that the first layer $L_1 := B_1 \cup \cdots \cup B_k$ is now only a subset (which is updated after each augmentation) of $\ground \setminus S$, we skip most parts of the proof and focus on discussing this difference.
  That is, we need to show that if $A_1$ becomes empty, then there is no augmenting path of length $d_t$ in $H(S^\prime)$ anymore.
  Given how $A_1$ and $B_i$'s are maintained and \cref{lemma:key} (note that the set $X$ maintained in $\cT_{\ell}^{(i)}$ is always $A_\ell \cap S_i$ with respect to the current $S_i$ and thus it lets us explore out-edges to $A_{\ell} \cap S_i$ satisfying \cref{eq:condition}), $A_1$ is always the subset of $B_1 \cup \cdots \cup B_k$ consisting of elements that still potentially admits augmenting path of length $d_t$ in $H(S^\prime)$ through them.
  That means if $A_1 = \emptyset$, then there is no augmenting set of length $d_t$ in $G(S^\prime)$, that starts from some $b \in B_1 \cup \cdots \cup B_k$.
  This would imply that there is no such path even if we start from $x \in (\ground_i \setminus S) \setminus D_1$ as $B_i$ is a basis of it: if $S_i + x - y \in \cI$ for some $x \in (\ground_i \setminus S) \setminus D_1$ and $y \in S_i$, then there is a $b \in B$ with $S + b - y \in \cI$, and thus a path starting from $x$ can be converted into a path starting from $b$.
  On the other hand, all elements in $D_1$ are not on a such path by \cref{lemma:monotone} either.
  This shows that indeed $d_{H(S^\prime)}(s, t) > d_{H(S)}(s, t)$.
  
  The guarantee that $\cD_i$ now operates on $U_i \setminus S^\prime$ is clear: Augmenting along $P = (s, a_1, \ldots, a_{d_t - 1}, t)$ corresponds to adding $a_1$ into $S$, and since we call $\cD_i.\textsc{Delete}(a_1)$ in Line~\ref{line:delete} after each such augmentation, $\cD_i$ indeed stays up-to-date. 
  
  It remains to analyze the running time of \cref{alg:blocking-flow-union}.
  Computing distance layers with \cref{lemma:bfs-union} takes $\tO_k(r)$ time.
  The number of elements that have ever been in some $A_i$ is $O_k(r + |S^\prime| - |S|)$ since
  (i) $L_2 \cup \cdots \cup L_{d_t - 1}$ has size $O_k(r)$,
  (ii) the initial basis $B_i$ of $\ground_i \setminus S$ for each $1 \leq i \leq k$ has total size $O_k(r)$, and
  (iii) each of the $|S^\prime| - |S|$ augmentations adds at most $O_k(1)$ elements to $A_1$.
  Similar to \cref{lemma:blocking-flow}, this means that there are at most $O_k(r)$ iterations, each taking $O_k(\frac{\sqrt{r}}{d_t})$ time in $\cT_{\ell + 1}^{(i)}.\textsc{Find}(\cdot)$ with our choice of $\beta$.
  The algorithm found $|S^\prime| - |S|$ augmenting paths, taking $\tO_k(d_t\sqrt{r} \cdot (|S^\prime| - |S|))$ time in total to update the binary search trees.
  Also, for each such augmentation, we need $\tO_k(\sqrt{r})$ time to update the basis $B_i$ for all $1 \leq i \leq k$, which is subsumed by the cost of updating $\cT_{i}^{(j)}$.
  These components sum up the total running time of
  \[ \tO_k\left(r + \frac{r\sqrt{r}}{d_t} + \left(|S^\prime| - |S|\right) \cdot d_t\sqrt{r}\right). \]
\end{proof}

\cref{thm:dynamic-matroid-union} now follows easily.

\begin{proof}[Proof of \cref{thm:dynamic-matroid-union}]
  We initialize the dynamic-basis data structure $\cD_i$ of \cref{thm:decremental-basis} on $\ground_i$ for each of the matroid $\cM_i$.
  We then run \cref{lemma:blocking-flow-union} for at most $\sqrt{r}$ iterations with $\{\cD_i\}_{i = 1}^{k}$ until $d_{H(S)}(s, t) \geq \sqrt{r}$ and get an $S \in \cI_{\text{part}} \cap \hat{\cI}$ with $\cD_i$ now operating on $\ground_i \setminus S$ for each $1 \leq i \leq k$.
  This takes
  \[ \tO_k\left(r\sqrt{r} + \sum_{d = 1}^{\sqrt{r}}\frac{r\sqrt{r}}{d} + \sum_{d = 1}^{\sqrt{r}}d \cdot \left(|S_d| - |S_{d - 1}|\right)\right) = \tO_k(r\sqrt{r})\]
  time.
  By \cref{lemma:approx}, $S$ is $O_k(\sqrt{r})$ steps away from being optimal, and thus we find the remaining augmenting paths one at a time using \cref{lemma:bfs-union} in $\tO_k(r\sqrt{r})$ time in total.
  Note that since a single augmentation corresponds to adding an element to $S$ (hence removing it from $\ground \setminus S$), we can maintain the basis of $\ground_i \setminus S$ that \cref{lemma:bfs-union} needs in $\tO_k(\sqrt{r} \cdot \sqrt{r})$ total update time, which is subsumed by other parts of the algorithm.
\end{proof}

\subsection{Matroid Packing and Covering}
\label{sec:packing}

A direct consequence of our matroid union algorithm (\cref{thm:dynamic-matroid-union-fold} in particular) is that we can solve the following packing and covering problem efficiently.
As a reminder, the exact dependence on $k$ of our algorithm is $\tO(n + kr\sqrt{\min(n, kr)} + k \min(n, kr))$ by \cref{thm:dynamic-matroid-union-fold}.

\begin{restatable}[Packing]{corollary}{packingalgo}
    For a matroid $\cM = (\ground, \cI)$, it takes $\tO(n\sqrt{n} + \frac{n^2}{r})$ time to find the largest integer $k$ and a collection of disjoint subsets $\cS = \{S_1, S_2, \ldots, S_k\}$ of $\ground$ such that $S_i$ is a basis for each $1 \leq i \leq k$ under the dynamic-rank-query model.
    \label{cor:packing}
\end{restatable}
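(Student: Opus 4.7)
The plan is to reduce matroid base packing to $k$-fold matroid union (\cref{thm:dynamic-matroid-union-fold}) via a binary search on $k$. Let $k^{*}$ denote the optimum number of disjoint bases of $\cM$ that can be packed. Any such packing uses exactly $k^{*} r$ distinct elements of $\ground$, hence $k^{*} \leq \lfloor n/r \rfloor$. For each positive integer $k$, let $f(k)$ be the size of the set $S$ returned by $k$-fold matroid union on $\cM$; by \cref{thm:dynamic-matroid-union-fold} this comes together with a partition $S = S_1 \sqcup \cdots \sqcup S_k$ with each $S_i \in \cI$, and clearly $f(k) \leq kr$. The key observation is that $f(k) = kr$ if and only if $k \leq k^{*}$: for $k \leq k^{*}$ we realize the value $kr$ by taking any $k$ disjoint bases, and conversely, if $f(k) = kr$ then $|S_i| = r$ for every $i$, so the returned disjoint partition is itself a packing of $k$ bases, forcing $k \leq k^{*}$.

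Given this iff relation, the algorithm binary-searches for $k^{*}$ over the interval $[1, \lfloor n/r \rfloor]$, invoking \cref{thm:dynamic-matroid-union-fold} at each probe and testing whether the returned $|S|$ equals $kr$. The partition returned by the last successful call yields the desired collection $\cS = \{S_1, \ldots, S_{k^{*}}\}$ of disjoint bases.

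For the running time, every probe uses a value $k \leq n/r$, so $kr \leq n$ and by \cref{thm:dynamic-matroid-union-fold} the cost of one call is at most
\[
  \tO\bigl(n + kr \sqrt{\min(n, kr)} + k \min(n, kr)\bigr) = \tO\bigl(n\sqrt{n} + n^{2}/r\bigr).
\]
The binary search issues only $O(\log n)$ probes, whose logarithmic overhead is absorbed into the $\tO(\cdot)$ notation, so the total running time is $\tO(n\sqrt{n} + n^{2}/r)$ as claimed. The only step that really needs verification is the iff characterization above, which relies on the disjoint-partition guarantee of \cref{thm:dynamic-matroid-union-fold} and not merely on the size of the output; beyond that, the argument is a routine binary-search-and-plug-in, so no genuine obstacle arises.
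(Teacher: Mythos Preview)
Your proof is correct and follows essentially the same approach as the paper: binary search for the optimal $k$ in $[0, n/r]$, using \cref{thm:dynamic-matroid-union-fold} at each probe and checking whether the returned set has size $kr$. Your write-up is slightly more explicit about the iff characterization and the cost bound per probe, but the argument is the same.
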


\begin{proof}
  It's obvious that $k \leq \frac{n}{r}$ holds.
  We do a binary search of $k$ in the range $[0, \frac{n}{r}]$, and for each $k$, we can determine the largest subset $S$ of $\ground$ which can be partitioned into $k$ disjoint independent sets by \cref{thm:dynamic-matroid-union-fold}.
  If $|S| = kr$, then it means that there are at least $k$ disjoint bases.
  Otherwise, there are less than $k$ disjoint bases.
  The running time is $\tO(n\sqrt{n} + \frac{n^2}{r})$.
\end{proof}

\begin{restatable}[Covering]{corollary}{coveringalgo}
    For a matroid $\cM = (\ground, \cI)$, it takes $\tO(\alpha r\sqrt{n} + \alpha n)$ time to find the smallest integer $\alpha$ and a partition $\cS = \{S_1, S_2, \ldots, S_\alpha\}$ of $\ground$ such that $S_i \in \cI$ holds for each $1 \leq i \leq \alpha$ under the dynamic-rank-query model.
    \label{cor:covering}
\end{restatable}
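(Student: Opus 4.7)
The plan is to reduce the covering problem to a short sequence of $k$-fold matroid union invocations via \cref{thm:dynamic-matroid-union-fold}. The key observation is that $\ground$ can be partitioned into $k$ independent sets if and only if the $k$-fold matroid union of $\cM$ over $\ground$ returns a set of size $n$. Thus the covering number $\alpha$ is the smallest such $k$, and the desired partition $(S_1, \ldots, S_\alpha)$ is produced as a by-product of the corresponding call.

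First I would locate $\alpha$ by exponential search: invoke \cref{thm:dynamic-matroid-union-fold} with $k = 1, 2, 4, \ldots, 2^j, \ldots$ until the returned set first equals $\ground$, at some index $j^{*}$ with $2^{j^{*}-1} < \alpha \leq 2^{j^{*}}$. Then binary-search within $[2^{j^{*}-1}, 2^{j^{*}}]$ for the exact value of $\alpha$, using $O(\log n)$ additional calls, each with parameter $k = O(\alpha)$.

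The arithmetic fact that makes the final bound clean is the pigeonhole inequality $\alpha r \geq n$: any $\alpha$ independent sets have total size at most $\alpha r$ and together must cover $n$ elements. Consequently, for $k = \Theta(\alpha)$ we have $\min(n, kr) = n$, and the per-call bound of \cref{thm:dynamic-matroid-union-fold} simplifies from $\tO(n + kr\sqrt{\min(n, kr)} + k\min(n, kr))$ to $\tO(kr\sqrt{n} + kn) = \tO(\alpha r\sqrt{n} + \alpha n)$, exactly matching the target.

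For the total cost, I would split the doubling phase into the two regimes $kr \leq n$ and $kr \geq n$. In the first regime each call costs $\tO(n + (kr)^{3/2} + k^2 r)$; since $k \leq n/r \leq \alpha$ throughout, we have $(kr)^{3/2} \leq (kr)\sqrt{n} \leq \alpha r\sqrt{n}$ and $k^2 r \leq kn \leq \alpha n$, so the geometric sum over $k$ stays within $\tO(\alpha r\sqrt{n} + \alpha n)$. In the second regime each call costs $\tO(kr\sqrt{n} + kn)$, again geometric in $k$ and dominated by the last call at $k = O(\alpha)$. The additional $O(\log n)$-factor from the binary-search phase is absorbed into $\tO(\cdot)$. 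There is no real obstacle; the only point requiring care is verifying the two-regime analysis through the $\alpha r \geq n$ inequality to avoid a suboptimal $(kr)^{3/2}$ term for intermediate values of $k$.
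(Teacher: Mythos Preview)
Your proposal is correct and follows essentially the same approach as the paper: exponential (doubling) search followed by binary search, each step invoking \cref{thm:dynamic-matroid-union-fold} and using the key inequality $\alpha r \ge n$ to simplify the per-call cost. The paper's proof is terser and does not spell out the two-regime analysis, but the strategy and the arithmetic are the same.
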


\begin{proof}
  We first obtain a $2$-approximation $\alpha^{\prime}$ of $\alpha$ (i.e., $\alpha \leq \alpha^\prime \leq 2\alpha$) by enumerating powers of $2$, running \cref{thm:dynamic-matroid-union-fold} with $k = 2^i$, and checking if the returned $S$ has size $n$: If $|S| = n$, then we know $2^i$ independent sets suffice to cover $\ground$.
  Note that the enumeration stops whenever we found a suitable value of $\alpha^\prime$.
  The exact value of $\alpha$ can then be found by a binary search in $[\frac{\alpha^\prime}{2}, \alpha^\prime]$.
  This takes $\tO(\alpha r\sqrt{n} + \alpha n)$ (note that $\alpha r \geq n$ must hold).
\end{proof}

\subsection{Application: Spanning Tree Packing}
\label{sec:spanning}

We demonstrate the applicability of our techniques by deriving an $\tO(|E| + (k|V|)^{3/2})$ algorithm for the $k$ disjoint spanning tree problem in a black-box manner.
This improves Gabow's specialized $\tO(k^{3/2}|V|\sqrt{|E|})$ algorithm~\cite{gabow1988forests}.
Since all applications of our algorithms follow the same reduction, we only go through it once here.
Refer to \cref{appendix:applications} for other applications of both our matroid union and matroid intersection algorithms.

\begin{theorem}
  Given an undirected graph $G = (V, E)$, it takes $\tO(|E| + (k|V|)^{3/2})$ time to find $k$ edge-disjoint spanning trees in $G$ or determine that such spanning trees do not exist with high probability\footnote{We use \emph{with high probability} to denote with probability at least $1 - |V|^{-c}$ for an arbitrarily large constant $c$.}.
  \label{thm:k-dst}
\end{theorem}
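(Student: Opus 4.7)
The plan is to invoke \cref{thm:dynamic-matroid-union-fold} on the graphic matroid of $G$. Let $\cM = (E, \cI)$ be the graphic matroid of $G$, where $F \subseteq E$ lies in $\cI$ iff $F$ is a forest in $G$. A collection of $k$ edge-disjoint spanning trees corresponds exactly to a set $S \subseteq E$ of size $k(|V|-1)$ that can be partitioned into $k$ independent sets of $\cM$ (each of which, having maximum possible size, is automatically a spanning tree). Thus, running the $k$-fold matroid union algorithm and checking whether the returned $S$ has size $k(|V|-1)$ decides the problem. Here $n = |E|$ and $r = |V| - 1$; we may assume $k \le |V|$, since otherwise no $k$-disjoint spanning trees can exist.

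Next I would implement the dynamic-rank oracle required by \cref{thm:dynamic-matroid-union-fold}. For the graphic matroid the rank of an edge set $F$ equals $|V|$ minus the number of connected components of $(V, F)$, i.e., the size of a spanning forest of $F$. I would maintain this number under \textsc{Insert} and \textsc{Delete} using a fully-dynamic spanning-forest data structure: the randomized algorithm of Kapron--King--Mountjoy \cite{KapronKM13} (or Gibb et al.\ \cite{GibbKKT15}) supports each edge update in $\polylog(|V|)$ time w.h.p.\ against an oblivious adversary, which suffices because the matroid union algorithm's query sequence depends only on rank answers, not on any hidden random bits. Because \cref{def:dyn-oracle} allows queries to branch off any previous query-set, I would wrap this structure in a fully-persistent data structure via the Driscoll--Sarnak--Sleator--Tarjan / Dietz transformation \cite{DriscollSST86,Dietz89}, incurring only an extra $O(\log |V|)$ factor. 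Thus every dynamic-rank-oracle operation used by the matroid algorithm is simulated in $\polylog(|V|)$ time.

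Putting this together, the running time is $\polylog(|V|)$ times the dynamic-rank-query complexity of \cref{thm:dynamic-matroid-union-fold}, i.e.,
\[
\tO\!\left( n + kr\sqrt{\min(n, kr)} + k \min(n, kr) \right) \;=\; \tO\!\left( |E| + k|V| \sqrt{\min(|E|, k|V|)} + k \min(|E|, k|V|) \right).
\]
If $k|V| \le |E|$, then $\min(|E|, k|V|) = k|V|$ and the middle term becomes $\tO((k|V|)^{3/2})$, while the last term $\tO(k^2|V|)$ is dominated by $(k|V|)^{3/2}$ under $k \le |V|$. If $k|V| > |E|$, then $\min(|E|, k|V|) = |E| \le k|V|$, so both the middle and last terms are at most $\tO(k|V|\sqrt{|E|}) \le \tO((k|V|)^{3/2})$. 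In either case, the bound is $\tO(|E| + (k|V|)^{3/2})$, giving the theorem with high probability. The initial pre-processing to build the persistent dynamic-forest data structure takes $\tO(|E|)$ time and is absorbed into the first term.

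The main obstacle is just the bookkeeping to make the oracle simulation correct: \cref{def:dyn-oracle} lets the algorithm branch off any previously created query-set, so the spanning-forest data structure must be made fully persistent (not merely confluently persistent) so that each \textsc{Insert}$(v,i)$ or \textsc{Delete}$(v,i)$ can be applied to the version indexed by $i$. This is handled cleanly by the standard persistence transformations, but it is the one place where care is needed; everything else is a direct substitution of $n,r$ into \cref{thm:dynamic-matroid-union-fold}.
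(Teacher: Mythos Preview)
Your proposal is correct and mirrors the paper's proof: reduce to $k$-fold matroid union on the graphic matroid via \cref{thm:dynamic-matroid-union-fold}, simulate the dynamic-rank oracle with a fully-persistent worst-case dynamic connectivity structure \cite{KapronKM13,GibbKKT15,DriscollSST86,Dietz89}, and absorb the $k^2r$ term into $(kr)^{3/2}$ using $k \le |V|$. The only quibble is your justification that $k > |V|$ precludes $k$ disjoint spanning trees, which is false for multigraphs (two vertices with many parallel edges); the paper instead first notes $k \le |E|/(|V|-1)$ (else trivially infeasible) and then bounds this by $O(|V|)$, implicitly for simple graphs.
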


\begin{proof}
By \cref{thm:dynamic-matroid-union-fold}, it suffices to provide a data structure that supports the three dynamic-oracle operations (\cref{def:dyn-oracle}) in $\polylog(|V|)$ time.
Our black-box reduction makes use of the worst-case connectivity data structure of \cite{KapronKM13,GibbKKT15}, which can be adapted to in $O(\polylog(|V|))$ update time maintain the rank of a set of edges (see \cref{appendix:applications-graphic} for a discussion on how this can be done).

  Let $\cM_G$ be the graphic matroid with respect to $G = (V, E)$.
  $G$ admits $k$ edge-disjoint spanning trees if and only if $\cM_G$ admits $k$ disjoint bases.
  The theorem now follows from \cref{thm:dynamic-matroid-union-fold} with $n = |E|$ and $r = |V| - 1$ since \cref{thm:dynamic-matroid-union-fold} returns a union of $k$ disjoint bases if they exist (we note that $k \le |E|/(|V|-1) \le O(|V|)$, and hence the $O(k^2 r)$ term is dominated by the $O((kr)^{3/2})$ term).
\end{proof}

\section{Super-Linear Query Lower Bounds} \label{sec:lowerbound}

Lower bounds for matroid intersection have been notoriously difficult to prove. The current highest lower bound is due to Harvey \cite{harvey2008matroid} which says that $(\log_2 3) n - o(n)$ queries are necessary for any deterministic independence-query algorithm solving matroid intersection.
Obtaining an $\omega(n)$ lower bound has been called a challenging open question \cite{chakrabarty2019faster}.

In this section, we show the first super-linear query lower bound for matroid intersection, both in our \emph{new dynamic-rank-oracle} model (\cref{def:dyn-oracle}), and also for the \emph{traditional independence-oracle} model, thus answering the above-mentioned open question and improving on the bounds of \cite{harvey2008matroid}.
We obtain our lower bounds by studying the communication complexity for matroid intersection.

\begin{theorem}
If Alice is given a matroid $\cM_1 = (\ground,\cI_1)$
and Bob a matroid $\cM_2 = (\ground,\cI_2)$, any deterministic communication protocol needs $\Omega(n \log n)$ bits of communication to solve the matroid intersection problem.
\label{thm:comm-lb}
\end{theorem}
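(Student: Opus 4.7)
The plan is to prove \cref{thm:comm-lb} by reduction from directed $(s,t)$-connectivity in the two-party communication model, for which Hajnal, Maass, and Tur\'an~\cite{HajnalMT88} established a deterministic lower bound of $\Omega(n \log n)$. In that source problem, Alice holds a directed edge set $E_A$ and Bob holds $E_B$ on a common vertex set $V$ of size $n$ containing two distinguished vertices $s,t$, and they must decide whether $s$ reaches $t$ in $(V, E_A \cup E_B)$.

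Given such an instance, I would construct two gammoid matroids $\cM_1 = (\ground, \cI_1)$ and $\cM_2 = (\ground, \cI_2)$ on a common ground set of size $O(n)$ with $\cM_i$ depending only on the $i$-th player's edges, together with a publicly agreed common independent set $S \in \cI_1 \cap \cI_2$. The construction should ensure that $S$ is a maximum common independent set if and only if there is no $(s,t)$-path in the original graph, i.e., if and only if the exchange graph $G(S)$ contains no $(s,t)$-augmenting path. This is geometrically natural: by \cref{def:exchange-graph}, the edges of $G(S)$ directed from $S$ to $\bar S$ (and out of the exchange-graph source $s$) are governed entirely by $\cM_1$, while those from $\bar S$ to $S$ (and into $t$) are governed entirely by $\cM_2$. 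Concretely, I plan to bipartite-ize the input by vertex-splitting each $v \in V$ into $v^-,v^+$ so that every original edge $u \to v$ becomes $u^+ \to v^-$, and then tailor two gammoids whose exchange relations $S - x + y \in \cI_1$ (respectively $\cI_2$) recover exactly Alice's (respectively Bob's) bipartite edges, with $S$ chosen so that the source and sink of the exchange graph align with $s^+$ and $t^-$. Granted this, any protocol solving matroid intersection on $\cM_1,\cM_2$ decides $(s,t)$-connectivity with the same communication, and since $|\ground|=O(n)$, the HMT bound yields the claimed $\Omega(n \log n)$ bound.

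The main obstacle will be making the gammoid-to-graph correspondence faithful. In particular, I must: (i) arrange $\cM_1,\cM_2$ so that each is fully determined by one party's input only; (ii) choose $S$ that both parties can compute without any communication; and (iii) ensure that the exchange pairs in each matroid correspond precisely to the intended input edges, with no spurious pairs arising from the gammoid's path structure. Point (iii) is the most delicate, since gammoid exchanges correspond to vertex-disjoint alternating paths rather than individual edges; to control this I would take the source sets of the gammoids to be small and arrange the layered digraph so that the only non-trivial exchanges occur along the intended edges. Once this correspondence is in place, the remaining verification---that enlarging $S$ witnesses an $(s,t)$-path and vice versa---follows from the standard augmenting-path characterization of \cref{lemma:augmenting-path}, completing the proof.
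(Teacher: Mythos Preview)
Your proposal is correct and follows essentially the same approach as the paper: reduce from the $\Omega(n\log n)$ communication lower bound for $(s,t)$-connectivity of~\cite{HajnalMT88} by constructing two strict gammoids (one per party) whose exchange graph with respect to a publicly known common independent set coincides with a bipartitized copy of the input graph, so that an augmenting path exists iff $s$ reaches $t$. Two minor remarks: the HMT bound is stated for \emph{undirected} connectivity (which of course reduces to directed reachability, so this is harmless), and the way the paper tames your concern~(iii) is not via ``small source sets'' but by making each gammoid's underlying digraph bipartite with all edges going in one direction, so that any linking path has length at most one and hence exchange pairs are exactly single edges.
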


The communication lower bound of \cref{thm:comm-lb} implies a similar lower bound for the number of independence queries needed. 
We argue that any independence-query algorithm can be simulated by Alice and Bob in the communication setting by exchanging a single bit per query asked. Whenever they want to ask an independence query ``Is $S \in \cI_{i}$?'', Alice or Bob will check this locally and share the answer with the other party by sending one bit of communication.

Unfortunately, this argument does not extend to the traditional rank-oracle model (since each rank query can in fact reveal $\Theta(\log n)$ bits of information, which need to be sent to the other party). However, for the new \emph{dynamic}-rank-oracle model, the $\Omega(n\log n)$ lower bound holds as now each new query only reveals constant bits of information: either the rank remains the same, increases by one, or decreases by one (and Alice or Bob can send which is the case to the other party with a constant number of bits). Our discussion proves the following corollaries, given \cref{thm:comm-lb}.

\begin{corollary}
Any deterministic (traditional) independence-query algorithm solving matroid intersection requires $\Omega(n \log n)$ queries.
\end{corollary}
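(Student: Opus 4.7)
The plan is to derive the corollary from \cref{thm:comm-lb} via the standard simulation argument already sketched in the paragraph preceding the corollary statement. I would formalize it as a contradiction: suppose some deterministic (traditional) independence-query algorithm $A$ solves matroid intersection using $q(n) = o(n\log n)$ queries in the worst case. I will construct from $A$ a deterministic two-party protocol for matroid intersection using $o(n\log n)$ bits of communication, directly contradicting \cref{thm:comm-lb}.

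The simulation is straightforward. Alice, holding $\cM_1 = (\ground, \cI_1)$, and Bob, holding $\cM_2 = (\ground, \cI_2)$, both run $A$ in lockstep on a shared, identical transcript. Because $A$ is deterministic, its next query as well as its eventual output are fully determined by the sequence of (query, answer) pairs observed so far. Thus, as long as both players see the same transcript, they can maintain identical copies of $A$'s internal state with no additional coordination. Whenever $A$ issues an independence query of the form ``Is $S \in \cI_i$?'' for some $i \in \{1,2\}$ and $S \subseteq \ground$, the party owning $\cM_i$ (Alice if $i = 1$, Bob if $i = 2$) evaluates the query locally and transmits the one-bit answer to the other party. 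Both parties then append the same bit to the transcript and continue simulating $A$. When $A$ halts and reports a maximum common independent set, both parties read off this output from the shared transcript, so the protocol terminates with the correct matroid intersection solution.

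The total communication is at most one bit per query, hence at most $q(n) = o(n \log n)$ bits overall. This contradicts \cref{thm:comm-lb}, which asserts that any deterministic communication protocol for matroid intersection requires $\Omega(n \log n)$ bits. Therefore $q(n) = \Omega(n \log n)$, proving the corollary. I do not anticipate any real obstacle here; the only conceptual point worth emphasizing is why the simulation is so cheap: an independence query concerns only a \emph{single} matroid, so the party owning that matroid produces the answer with certainty and a single bit suffices. This is precisely why the same argument fails to yield a matching lower bound in the traditional rank-oracle model (a rank value in $\{0, 1, \ldots, n\}$ carries up to $\Theta(\log n)$ bits), while in the dynamic-rank-oracle model each query changes the rank by at most one, so constant bits per operation again suffice and an analogous $\Omega(n \log n)$ lower bound follows.
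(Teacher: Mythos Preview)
Your proposal is correct and takes essentially the same approach as the paper: the paper's argument (given in the paragraph preceding the corollary rather than in a separate proof environment) is exactly this one-bit-per-query simulation of a deterministic independence-query algorithm by Alice and Bob, yielding a communication protocol whose cost contradicts \cref{thm:comm-lb}.
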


\begin{corollary}
Any deterministic dynamic-rank-query algorithm solving matroid intersection requires $\Omega(n \log n)$ queries.
\end{corollary}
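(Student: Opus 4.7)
The plan is to derive this query lower bound from the communication lower bound of \cref{thm:comm-lb} via a simulation argument. Given any deterministic algorithm solving matroid intersection with at most $q$ operations in the dynamic-rank-oracle model (counting all $\textsc{Insert}$, $\textsc{Delete}$, and $\textsc{Query}$ operations per \cref{def:dyn-oracle}), I will build a deterministic two-party protocol between Alice (holding $\cM_1$) and Bob (holding $\cM_2$) that uses only $O(q)$ bits of communication. Applying \cref{thm:comm-lb} then yields $q = \Omega(n \log n)$.

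In the protocol, both parties locally simulate the algorithm in lockstep. Since the algorithm is deterministic, they agree on every next operation as long as they agree on all rank values revealed so far. I will maintain the invariant that, after processing the $k$-th operation, both parties know both $\rank_1(S_j)$ and $\rank_2(S_j)$ for every query-set $S_j$ constructed up to this point. The base case is immediate: both trivially know $\rank_1(S_0) = \rank_2(S_0) = 0$ for $S_0 = \emptyset$. Whenever the algorithm issues $\textsc{Insert}(v, i)$ or $\textsc{Delete}(v, i)$ to create $S_{k+1}$ from $S_i$, the key observation is that $|\rank_\ell(S_{k+1}) - \rank_\ell(S_i)| \leq 1$ for $\ell \in \{1, 2\}$, since $S_{k+1}$ and $S_i$ differ by a single element. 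Alice therefore sends Bob two bits encoding the change in $\rank_1$, and Bob sends Alice two bits encoding the change in $\rank_2$; by the invariant this is enough for both sides to recover the new absolute ranks. A $\textsc{Query}(i)$ operation then requires no communication at all, since by the invariant both parties already know the requested rank.

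This gives a total cost of $O(1)$ bits per operation, so $O(q)$ bits overall. Combined with \cref{thm:comm-lb}, this forces $q = \Omega(n \log n)$. There is no substantial technical obstacle; the main subtlety worth noting is that $\textsc{Insert}$ and $\textsc{Delete}$ count toward the operation complexity, which is precisely what allows this preemptive, constant-cost-per-operation scheme to work---in contrast to the traditional rank-oracle model, where a single rank query on an arbitrary set may reveal up to $\Theta(\log n)$ bits of information and the same reduction would yield only an $\Omega(n)$ bound.
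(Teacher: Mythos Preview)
Your proposal is correct and follows essentially the same approach as the paper: simulate the deterministic dynamic-rank-query algorithm as a two-party protocol, exploiting the fact that each $\textsc{Insert}$/$\textsc{Delete}$ changes the rank by at most one so that $O(1)$ bits per operation suffice, and then invoke \cref{thm:comm-lb}. Your formalization (maintaining both ranks proactively at every $\textsc{Insert}$/$\textsc{Delete}$ so that $\textsc{Query}$ is free) is a clean rendering of the paper's informal one-paragraph argument.
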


\begin{remark}
We note that our lower bounds are also valid for the \emph{matroid union} problem, due to the standard reductions\footnote{See \cref{sec:reduction} for a reduction from matroid union to matroid intersection. To reduce from matroid intersection to matroid union, consider $\cM = \cM_1 \vee \cM_2^{*}$, where $\cM_2^{*}$ is the \emph{dual matroid} of $\cM_2$ ($S \subseteq U$ is independent in $\cM_2^{*}$ if and only if $U - S$ contains a basis). It's easy to show that the basis $B$ of $\cM$ in $\cM_1$ will be of the form $B = S \cup (U \setminus R)$, where $S$ is the solution to the intersection between $\cM_1$ and $\cM_2$ and $R$ is an arbitrary basis of $\cM_2$ that contains $S$.} between matroid intersection and union.
\end{remark}

\subsection{Communication Setting}

We study the following communication game
which we call \textsf{Matroid-Intersection-with-Candidate}.
Alice and Bob are given matroids $\cM_1 = (\ground,\cI_1)$ respectively $\cM_2 = (\ground,\cI_2)$.
Suppose they are also both given a common independent set $S\in \cI_1\cap \cI_2$, and they wish to determine
whether $S$ is a maximum-cardinality independent
set.
Clearly 
\textsf{Matroid-Intersection-with-Candidate} is an easier version of
the matroid intersection problem, as Alice and Bob can just ignore the candidate $S$.

Our idea is that in order to solve
\textsf{Matroid-Intersection-with-Candidate}, Alice and Bob need to determine if there exists an \emph{augmenting path}---that is an $(s,t)$-path---in the \emph{exchange graph} $G(S)$ (see \cref{def:exchange-graph} and \cref{lemma:augmenting-path}). It is known that \textsf{$(s,t)$-connectivity} in a
graph requires $\Omega(n\log n)$ bits of communication (\cref{lem:lb-conn}, \cite{HajnalMT88}). Using \emph{strict gammoids} as our matroids, we argue that we can choose exactly how the underlying exchange graph looks like, and hence that matroid intersection admits the same lower bound.

\begin{definition}[Strict Gammoid, see~\cite{perfect1968applications,mason1972class}]
Let $H = (V,E)$ be a directed graph and $X\subseteq V$ a subset of vertices.
Then $(H,X)$ defines a matroid $\cM = (V,\cI)$ called a \emph{strict gammoid}, where a set of vertices $Y\subseteq V$ is independent if and only if
 there exists a set of vertex-disjoint directed paths (some of which might just consist of single vertices) in $H$ whose starting points all belong to $X$ and whose ending points are exactly $Y$.
\end{definition}

\begin{claim}
\label{clm:lb-matroids}
Suppose $G=(L,R,E)$ is a directed bipartite graph and $a, b \in R$ are two unique vertices such that $a$ has zero in-degree and $b$ has zero out-degree.
Then there exist two matroids $\cM_1, \cM_2$ over the ground set $L\cup R$ such that
$L$ is independent in both matroids and
the \emph{exchange graph} $G(L)$ is exactly $G$ plus two extra vertices ($s$ and $t$) and two extra edges ($s\to a$ and $b \to t$).
\end{claim}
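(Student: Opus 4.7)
The plan is to realize $\cM_1$ and $\cM_2$ as strict gammoids whose defining digraphs are obtained by splitting $G$ along its two edge orientations, with source sets enriched by $a$ and $b$ respectively. Concretely, for $\cM_1$ I would take the strict gammoid $(H_1, X_0^{(1)})$ on ground set $L \cup R$, where $H_1$ is the digraph on $L \cup R$ whose arcs are precisely the arcs of $G$ going from $L$ to $R$, and $X_0^{(1)} = L \cup \{a\}$. Symmetrically, $\cM_2$ is the strict gammoid $(H_2, X_0^{(2)})$, where the arcs of $H_2$ are the \emph{reverses} of the arcs of $G$ going from $R$ to $L$ (so all arcs in $H_2$ point from $L$ into $R$), with $X_0^{(2)} = L \cup \{b\}$. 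Recall that in a strict gammoid $(H, X_0)$ a set $Y$ is independent iff $H$ carries $|Y|$ vertex-disjoint paths starting in $X_0$ and ending at exactly the vertices of $Y$, where trivial one-vertex paths are allowed.

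First I would verify $L \in \cI_1 \cap \cI_2$ using the trivial paths $\ell \mapsto \ell$ for each $\ell \in L$. Next I would recover the arcs of $G$ from the exchange relations at $L$: for $x \in L$ and $y \in R \setminus \{a\}$, a set of $|L|$ vertex-disjoint paths witnessing $L - x + y \in \cI_1$ must pin $L - x$ to itself via trivial paths and then supply a source-to-$y$ path avoiding $L - x$; since $a$ has no out-arcs in $H_1$ (because its in-degree in $G$ is zero, and $H_1$ is oriented $L \to R$), this path must be a single arc $x \to y$, which exists iff $(x, y) \in E$. The analogous argument for $\cM_2$ recovers exactly the $R \to L$ arcs of $G$ in $E_2$. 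For the $s$- and $t$-edges, $L + a \in \cI_1$ follows from appending a trivial path at $a$, and for $y \in R \setminus \{a\}$ the set $L + y$ is not independent because the only extra source available is $a$, which cannot reach $y$ without going through a vertex of $L$ that is already pinned; symmetrically for $\cM_2$ and $b$. Together these produce all and only the claimed $G(L)$-arcs among $L \cup R$ and the two edges $s \to a$, $b \to t$.

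The main subtlety, and the step I expect to be the obstacle, concerns the edges incident to $a$ and $b$ themselves. By downward closure, $L + a \in \cI_1$ forces $L - x + a \in \cI_1$ for every $x \in L$, so strictly speaking $E_1$ also contains a ``parasitic'' bundle of edges $(x, a)$ for all $x \in L$, and symmetrically $E_2$ contains $(b, x)$ for all $x \in L$. To reconcile this with the literal reading of the claim I would argue that, for the downstream application to $(s, t)$-connectivity in the exchange graph, these parasitic edges are harmless: any simple $(s, t)$-path in $G(L)$ must begin with the unique out-arc of $s$, namely $s \to a$, so $a$ is already consumed and no further in-edge of $a$ can be used; dually, the path must end with $b \to t$, so no out-edge of $b$ is used. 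Hence the exchange graph agrees with $G \cup \{s \to a, b \to t\}$ on every set of simple $(s, t)$-paths, which is all that the reduction to $(s, t)$-connectivity in Theorem~\ref{thm:comm-lb} needs.
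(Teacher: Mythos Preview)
Your construction is identical to the paper's: both take $\cM_1$ to be the strict gammoid on $(H_1, L+a)$ where $H_1$ retains only the $L\to R$ arcs of $G$, and $\cM_2$ the gammoid on $(H_2, L+b)$ where $H_2$ contains the reversed $R\to L$ arcs, and both verify the exchange edges by pinning $L-x$ with trivial paths and tracking the remaining source.

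You are in fact more careful than the paper. The paper's verification of the $(x,y)$ edges asserts that ``the path to $y$ must have started in $x$,'' silently treating the source set as $L$ rather than $L+a$ and thereby skipping the case $y=a$. Your observation is correct: downward closure forces $L-x+a\in\cI_1$ for every $x\in L$, so $G(L)$ literally contains the extra bundle $\{(x,a):x\in L\}$ (and symmetrically $\{(b,x):x\in L\}$), contrary to the word ``exactly'' in the claim. Your resolution---that any simple $(s,t)$-path must begin $s\to a$ and end $b\to t$, so these parasitic arcs cannot participate---is exactly right and is all the reduction in Theorem~\ref{thm:comm-lb} needs. The paper simply elides this point.
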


\begin{proof}
Let $F_{1} = \{(u,v) | (u,v)\in E, u\in L, v\in R\}$ be the directed edges from $L$ to $R$ in $G$,
and
$F_{2} = \{(u,v) | (v,u)\in E, u\in L, v\in R\}$ be the (reversed) directed edges from $R$ to $L$ in $G$.
Also let $H_1 = (L\cup R,F_1)$ and $H_2 = (L\cup R,F_2)$ be the directed graphs with these edges respectively.

We let $\cM_1$, respectively $\cM_2$, be the strict gammoids defined by $(H_1, L+a)$ respectively $(H_2, L+b)$. Now $L$ is independent in both matroids. It is straightforward to verify that the exchange graph $G(L)$ is exactly as described in the claim. We certify that this is the case for the edges defined by $\cM_1$ ($\cM_2$ is similar):

\begin{enumerate}
    \item  $G(L)$ will have an edge from $s$ to $a$, since
    $L+a$ is independent in $\cM_1$. 
    Additionally note that $a$ has in-degree zero in $G$ (and hence is an isolated vertex in $H_1$).
    
    \item
    For any $x\in L, y\in R$, the edge $(x,y)$ exists in $G(L)$
    if and only if $L - x + y$ is independent in $\cM_1$. By definition this is if and only if there exists a vertex-disjoint path starting from $L$ and ending to $L-x+y$ in $H_1$, or equivalently if the edge $(x,y)$ exists in $H_1$ (indeed, all vertices in $L-x$ must be both starts and ends of paths, so the path to $y$ must have started in $x$).
    \qedhere{}
    
\end{enumerate}

\end{proof}

We now proceed to reduce an instance of \textsf{$(s,t)$-connectivity} to that of \textsf{Matroid-Intersection-with-Candidate}, which concludes the proof of \cref{thm:comm-lb}.

\begin{definition}[\textsf{$(s,t)$-connectivity}]
Suppose $G = (V, E_A \cup E_B)$ is an undirected graphs on $n=|V|$ vertices,
where Alice knows edges $E_A$ and Bob knows edges $E_B$.
They are also both given vertices $s$ and $t$, and want to determine
if $s$ and $t$ are connected in $G$.
\label{def:st-conn}
\end{definition}

\begin{lemma}[\cite{HajnalMT88}]
\label{lem:lb-conn}
The deterministic communication complexity of
\textsf{$(s,t)$-connectivity} is $\Omega(n\log n)$.
\end{lemma}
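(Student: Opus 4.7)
The plan is to establish this classical $\Omega(n \log n)$ lower bound (due to Hajnal, Maass, and Tur\'an) via a fooling-set argument whose counting bound $\log_2(m!) = \Theta(m \log m)$ exploits the entropy of a permutation on $m = \Theta(n)$ elements. Concretely, I would exhibit $m!$ pairwise transcript-distinguishable YES-instances of \textsf{$st$-connectivity} on $\Theta(n)$ vertices, which forces any deterministic protocol to use at least $\log_2(m!) = \Omega(n \log n)$ bits.

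The first and main step is a reduction from a permutation-identification task (such as ``is $\pi_A = \pi_B$?'') to \textsf{$st$-connectivity}. I would design a layered graph template with $\Theta(n)$ vertices in which Alice's input encodes a permutation $\pi_A \in S_m$ as a perfect matching on one pair of switch layers, Bob's input encodes $\pi_B \in S_m$ as a matching on another, and the fixed scaffolding routes $s$-to-$t$ connectivity through a gadget that becomes connected iff $\pi_A$ and $\pi_B$ match globally. The key non-triviality is that a naive ``series of coordinate-probes'' reduction would use $\Theta(m^2)$ vertices and only yield an $\Omega(\sqrt n \log n)$ bound; the HMT construction circumvents this with a single cleverly shared routing structure so that all $m$ coordinates are tested through the same linear-sized skeleton (using alternating layers of Alice's and Bob's matchings, with $s$--$t$ paths forced to traverse a ``checking'' layer that only completes when the two matchings are inverses coordinate-wise).

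Given such a reduction, the second step is a standard fooling-set / rectangle argument. Let $T = \{(\pi, \pi) : \pi \in S_m\}$; each instance in $T$ is a YES and $|T| = m!$. For any $\pi \ne \pi'$ the off-diagonal pair $(\pi, \pi')$ is a NO by construction. Since a deterministic protocol's accepting inputs form a disjoint union of combinatorial rectangles, two distinct diagonal YES-instances cannot share a transcript --- otherwise the off-diagonal NO-instances sharing either row or column would also be accepted, contradicting correctness. Therefore at least $m!$ distinct transcripts are used, giving the claimed $\Omega(n \log n)$ lower bound.

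The main obstacle is the graph construction in step one: realizing permutation equality as \textsf{$st$-connectivity} \emph{on a linear-sized graph}. Informally, the graph must ``probe'' every coordinate of both permutations without paying an $m$-fold blow-up in vertices, and it must do so in a way that truly forces disconnection on any off-diagonal mismatch --- no unintended shortcut through the scaffolding can exist. Designing such a gadget and verifying its connectivity semantics under arbitrary $(\pi_A, \pi_B)$ is the technical heart of the argument; the rest of the proof (fooling-set counting and the rectangle property) is black-box.
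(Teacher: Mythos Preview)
The paper does not prove this lemma; it simply cites it from \cite{HajnalMT88} as a known black-box result. So there is no ``paper's own proof'' to compare against.

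As for your proposal itself: the high-level framework (a fooling set of size $m! = \Theta(n)!$ built from permutations, yielding $\log_2(m!) = \Omega(n\log n)$ via the rectangle property) is indeed the shape of the Hajnal--Maass--Tur\'an argument. However, you explicitly leave the central step unspecified: the linear-sized graph gadget in which $s$--$t$ connectivity holds iff $\pi_A = \pi_B$ (or an equivalent permutation predicate). You correctly identify this as ``the main obstacle'' and describe the properties it must have, but you do not give the construction or verify its correctness. That gadget \emph{is} the content of the lemma; the fooling-set bookkeeping you spell out is routine once the gadget exists. So what you have written is a faithful outline of the HMT strategy, but not a proof---which is consistent with how the paper treats the lemma (as a citation rather than something to be reproved).
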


\begin{proof}[Proof of \cref{thm:comm-lb}.]
We show that an instance of \textsf{$(s,t)$-connectivity} can be converted to an instance
of \textsf{Matroid-Intersection-with-Candidate} of roughly the same size.
Suppose the symbols are defined as in  Definition \ref{def:st-conn}.
Let $\bar{V} = \{\bar{v} : v\in V\}$ be a copy of $V$.
We construct a directed bipartite graph $G' = (V, \bar{V}, E'_A \cup E'_B)$
as follows:

\begin{itemize}
    \item $(v,\bar{v}) \in E'_A$ for all $v\in V$.
    \item $(\bar{v},v) \in E'_B$ for all $v\in V$.
    \item $(v,\bar{u}), (u,\bar{v}) \in E'_A$ for all $\{u,v\}\in E_A$.
    \item $(\bar{v},u), (\bar{u},v) \in E'_A$ for all $\{u,v\}\in E_B$.
    \item No other edges exist.
\end{itemize}

Alice knows $E'_A$, and Bob knows $E'_B$. $G'$ has $2n$ vertices and $2n+2m$ edges.

Now let $G''$ be $G'$ but removing all incoming edges from $\bar{s}$ and
all outgoing edges from $\bar{t}$, in order to apply \cref{clm:lb-matroids} on $G''$ with $a = \bar{s}$ and $b = \bar{t}$.
Say we get matroids $\cM_1$ and $\cM_2$. Note that 
Alice knows $\cM_1$ and Bob knows $\cM_2$ by construction.

Now $s$ and $t$ are connected in $G$ if and only if there is a directed $(\bar{s},\bar{t})$-path in $G''$. This happens if and only if $V$ is not a maximum-cardinality common independent set of $\cM_1$ and $\cM_2$ (i.e.\ in the case we found an augmenting path for $V$).

Hence if there is a (deterministic) communication protocol for matroid intersection using $c$ bits of communication, there is also one for \textsf{$(s,t)$-connectivity} using $O(c)$ bits of communication. \cref{lem:lb-conn} then implies the
$\Omega(n\log n)$ communication lower bound for matroid intersection.
\end{proof}

\section{Open Problems} 
\label{sec:openproblems}

Our dynamic-oracle model opens up a new path to achieve fast algorithms for many problems at once, where the ultimate goal is to achieve near-linear time and dynamic-rank-query complexities. This would imply near-linear time algorithms for many fundamental problems. We envision reaching this goal via a research program where the studies of algorithms and lower bounds in our and the traditional models 
complement each other. 
In particular, a major open problem is to improve our algorithms further, which would imply improved algorithms for many problems simultaneously.
A major step towards this goal is improved algorithms in the traditional model, which would already be a breakthrough. Moreover, failed lower bound attempts might lead to new algorithmic insights and vice versa, and we leave improving our lower bounds as another major open problem.
We believe that the communication complexity of graph and matroid problems is an important component in this study since it plays a main role in our lower bound argument. Recently the communication and some query complexities of bipartite matching and related problems were resolved in \cite{blikstadBEMN22}. How about the communication and query complexities of dynamic-oracle matroid problems and their special cases such as colorful spanning trees?
It is also fruitful to resolve some special cases as the solutions may shed more light on how to solve matroid problems in our model. Below are some examples.

\begin{itemize}
  \item \textbf{Disjoint Spanning Trees.} Can we find $k$ edge-disjoint spanning trees in an undirected graph in near-linear time for constant $k$, or even do so for the case of $k = 2$ (which already has application in the Shannon Switching Game)? Our new $\tO(|E|+|V|\sqrt{|V|})$-time algorithm shows that it is possible for sufficiently dense graphs. For the closely related problem of finding $k$ edge-disjoint arborescences (rooted \emph{directed} spanning trees) in a directed graph, the case of $k = 2$ has long been settled by Tarjan's linear time algorithm~\cite{Tarjan76}, and the case of constant $k$ has also been resolved by~\cite{BhalgatHKP08}. It is a very interesting question whether the directed case is actually computationally easier than the undirected case or not.

  \item \textbf{Colorful Spanning Tree.} This problem generalizes the maximum bipartite matching problem, among others. Given the recent advances in max-flow algorithms which are heavily based on continuous optimization techniques, bipartite matching can now be solved in almost-linear time~\cite{ChenKLPGS22} in general and nearly linear time for dense input graphs \cite{BrandLNPSSSW20}. 
  It is very unclear if continuous optimization can be used for colorful spanning tree since its linear program has exponentially many constraints. 
  This reflects the general challenge of using continuous optimization to solve matroid problems and many of their special cases. 
  Thus, improving  Hopcroft-Karp's $O(|E|\sqrt{|V|})$ runtime \cite{HopcroftK73} (which is matched by our dynamic-oracle matroid algorithm) may shed some light on either how to use continuous optimization for these problems or how combinatorial algorithms can break this runtime barrier for colorful spanning tree, bipartite matching, and matroid problems. 

\end{itemize}

\paragraph{Other Problems with Dynamic Oracles.} It also makes sense to define dynamic oracles for problems like submodular function minimization (SFM), which asks to find the minimizer of a submodular function given an evaluation oracle. In this regime, similar to matroid intersection, we want to limit the symmetric difference from the current evaluation query to the previous ones. 
We believe that the recent algorithms for submodular function minimization based on convex optimization and cutting-plane methods, particularly the work of \cite{LeeSW15,JiangLSW20}, can be adapted to the dynamic-oracle setting.  
However, 
we are not aware of any applications of these dynamic-oracle algorithms. 
The first step is thus improving the best bounds in the traditional oracle model. 
The special case of the cut-query setting \cite{RubinsteinSW18,MukhopadhyayN20,LeeSZ21,LeeLSZ21,ApersEGLMN22} is also very interesting; we leave getting algorithms for min $(s,t)$-cut \cite{ChenKLPGS22} and directed global mincut \cite{Cen0NPSQ21} with near-linear time and dynamic-query complexity as major open problems.\footnote{Adapting the cut-query algorithm of \cite{MukhopadhyayN20} to work with dynamic cut oracles and, even better, with a parallel algorithm \cite{AndersonB21,Lopez-MartinezM21}, is also open; though, we suspect that these are not hard.}
Another interesting direction is the quantum setting. For example, can one define the notion of dynamic quantum cut query so that the quantum cut-query algorithm of \cite{ApersEGLMN22} can imply a non-trivial quantum global mincut algorithm?

\paragraph{Improved Lower Bounds.} 
Obtaining improved lower bounds for matroid intersection is also an important open problem. Getting $\Omega(n\log n)$ lower bound for traditional rank-query matroid intersection algorithms is particularly interesting since it would subsume our $\Omega(n\log n)$ lower bounds (traditional rank-query lower bound implies independence-query and dynamic-rank-query lower bounds) and the $\Omega(n \log n)$ SFM lower bound of \cite{ChakrabartyGJS22}. 
For the latter, \cite{ChakrabartyGJS22} showed an $\Omega(n\log{n})$ lower bound for SFM against \emph{strongly}-polynomial time algorithms. Since SFM generalizes matroid intersection in the traditional rank-oracle model (i.e., a rank query of a matroid corresponds to an evaluation of the submodular function), getting the same lower bound for traditional rank-query matroid intersection algorithms would further strengthen the result of \cite{ChakrabartyGJS22} to hold against \emph{weakly}-polynomial time algorithms. 

Additionally, achieving a {\em truly} super-linear lower bound (i.e. an $n^{1+\Omega(1)}$ bound) for any of the above problems is extremely interesting.

\bibliography{Bibliography}

\newcommand{\etalchar}[1]{$^{#1}$}
\begin{thebibliography}{BvdBMN21}

\bibitem[AB21]{AndersonB21}
Daniel Anderson and Guy~E. Blelloch.
\newblock Parallel minimum cuts in \emph{O}(\emph{m}
  log\({}^{\mbox{2}}\)\emph{n}) work and low depth.
\newblock In {\em {SPAA}}, pages 71--82. {ACM}, 2021.

\bibitem[AD71]{aignerD}
Martin Aigner and Thomas~A. Dowling.
\newblock Matching theory for combinatorial geometries.
\newblock {\em Transactions of the American Mathematical Society},
  158(1):231--245, 1971.

\bibitem[AEG{\etalchar{+}}22]{ApersEGLMN22}
Simon Apers, Yuval Efron, Pawel Gawrychowski, Troy Lee, Sagnik Mukhopadhyay,
  and Danupon Nanongkai.
\newblock Cut query algorithms with star contraction.
\newblock In {\em {FOCS}}, pages 507--518. {IEEE}, 2022.

\bibitem[AKL{\etalchar{+}}22]{AbboudK0PST22}
Amir Abboud, Robert Krauthgamer, Jason Li, Debmalya Panigrahi, Thatchaphol
  Saranurak, and Ohad Trabelsi.
\newblock Breaking the cubic barrier for all-pairs max-flow: Gomory-hu tree in
  nearly quadratic time.
\newblock In {\em {FOCS}}, pages 884--895. {IEEE}, 2022.

\bibitem[AKT21a]{AbboudKT21focs}
Amir Abboud, Robert Krauthgamer, and Ohad Trabelsi.
\newblock {APMF} {\textless} apsp? gomory-hu tree for unweighted graphs in
  almost-quadratic time.
\newblock In {\em {FOCS}}, pages 1135--1146. {IEEE}, 2021.

\bibitem[AKT21b]{AbboudKT21stoc}
Amir Abboud, Robert Krauthgamer, and Ohad Trabelsi.
\newblock Subcubic algorithms for gomory-hu tree in unweighted graphs.
\newblock In {\em {STOC}}, pages 1725--1737. {ACM}, 2021.

\bibitem[AKT22]{AbboudKT22}
Amir Abboud, Robert Krauthgamer, and Ohad Trabelsi.
\newblock Friendly cut sparsifiers and faster gomory-hu trees.
\newblock In {\em {SODA}}, pages 3630--3649. {SIAM}, 2022.

\bibitem[AMV20]{AxiotisMV20}
Kyriakos Axiotis, Aleksander Madry, and Adrian Vladu.
\newblock Circulation control for faster minimum cost flow in unit-capacity
  graphs.
\newblock In {\em {FOCS}}, pages 93--104. {IEEE}, 2020.

\bibitem[AMV21]{AxiotisMV21}
Kyriakos Axiotis, Aleksander Madry, and Adrian Vladu.
\newblock Faster sparse minimum cost flow by electrical flow localization.
\newblock In {\em {FOCS}}, pages 528--539. {IEEE}, 2021.

\bibitem[And89]{Andersson89}
Arne Andersson.
\newblock Improving partial rebuilding by using simple balance criteria.
\newblock In {\em {WADS}}, volume 382 of {\em Lecture Notes in Computer
  Science}, pages 393--402. Springer, 1989.

\bibitem[And91]{Andersson91}
Arne Andersson.
\newblock Maintaining {\(\alpha\)}-balanced trees by partial rebuilding.
\newblock {\em Int. J. Comput. Math.}, 38(1-2):37--48, 1991.

\bibitem[AW21]{AlmanW21}
Josh Alman and Virginia~Vassilevska Williams.
\newblock A refined laser method and faster matrix multiplication.
\newblock In {\em {SODA}}, pages 522--539. {SIAM}, 2021.

\bibitem[BF20]{BlumenstockF20}
Markus Blumenstock and Frank Fischer.
\newblock A constructive arboricity approximation scheme.
\newblock In {\em {SOFSEM}}, volume 12011 of {\em Lecture Notes in Computer
  Science}, pages 51--63. Springer, 2020.

\bibitem[BGHK07]{BrodalGHK07}
Gerth~St{\o}lting Brodal, Loukas Georgiadis, Kristoffer~Arnsfelt Hansen, and
  Irit Katriel.
\newblock Dynamic matchings in convex bipartite graphs.
\newblock In {\em {MFCS}}, volume 4708 of {\em Lecture Notes in Computer
  Science}, pages 406--417. Springer, 2007.

\bibitem[BHKP08]{BhalgatHKP08}
Anand Bhalgat, Ramesh Hariharan, Telikepalli Kavitha, and Debmalya Panigrahi.
\newblock Fast edge splitting and edmonds' arborescence construction for
  unweighted graphs.
\newblock In {\em {SODA}}, pages 455--464. {SIAM}, 2008.

\bibitem[Bli21]{blikstad21}
Joakim Blikstad.
\newblock Breaking o(nr) for matroid intersection.
\newblock In {\em {ICALP}}, volume 198 of {\em LIPIcs}, pages 31:1--31:17.
  Schloss Dagstuhl - Leibniz-Zentrum f{\"{u}}r Informatik, 2021.

\bibitem[BvdBE{\etalchar{+}}22]{blikstadBEMN22}
Joakim Blikstad, Jan van~den Brand, Yuval Efron, Sagnik Mukhopadhyay, and
  Danupon Nanongkai.
\newblock Nearly optimal communication and query complexity of bipartite
  matching.
\newblock In {\em {FOCS}}, pages 1174--1185. {IEEE}, 2022.

\bibitem[BvdBMN21]{quadratic2021}
Joakim Blikstad, Jan van~den Brand, Sagnik Mukhopadhyay, and Danupon Nanongkai.
\newblock Breaking the quadratic barrier for matroid intersection.
\newblock In {\em {STOC}}, pages 421--432. {ACM}, 2021.

\bibitem[CGJS22]{ChakrabartyGJS22}
Deeparnab Chakrabarty, Andrei Graur, Haotian Jiang, and Aaron Sidford.
\newblock Improved lower bounds for submodular function minimization.
\newblock In {\em {FOCS}}, pages 245--254. {IEEE}, 2022.

\bibitem[CGL{\etalchar{+}}20]{ChuzhoyGLNPS20}
Julia Chuzhoy, Yu~Gao, Jason Li, Danupon Nanongkai, Richard Peng, and
  Thatchaphol Saranurak.
\newblock A deterministic algorithm for balanced cut with applications to
  dynamic connectivity, flows, and beyond.
\newblock In {\em {FOCS}}, pages 1158--1167. {IEEE}, 2020.

\bibitem[CHLP23]{CenHLP23}
Ruoxu Cen, William He, Jason Li, and Debmalya Panigrahi.
\newblock Steiner connectivity augmentation and splitting-off in
  poly-logarithmic maximum flows.
\newblock In {\em {SODA}}, pages 2449--2488. {SIAM}, 2023.

\bibitem[CKL{\etalchar{+}}22]{ChenKLPGS22}
Li~Chen, Rasmus Kyng, Yang~P. Liu, Richard Peng, Maximilian~Probst Gutenberg,
  and Sushant Sachdeva.
\newblock Maximum flow and minimum-cost flow in almost-linear time.
\newblock In {\em {FOCS}}, pages 612--623. {IEEE}, 2022.

\bibitem[CLN{\etalchar{+}}21]{Cen0NPSQ21}
Ruoxu Cen, Jason Li, Danupon Nanongkai, Debmalya Panigrahi, Thatchaphol
  Saranurak, and Kent Quanrud.
\newblock Minimum cuts in directed graphs via partial sparsification.
\newblock In {\em {FOCS}}, pages 1147--1158. {IEEE}, 2021.

\bibitem[CLP22]{CenLP22}
Ruoxu Cen, Jason Li, and Debmalya Panigrahi.
\newblock Augmenting edge connectivity via isolating cuts.
\newblock In {\em {SODA}}, pages 3237--3252. {SIAM}, 2022.

\bibitem[CLS{\etalchar{+}}19]{chakrabarty2019faster}
Deeparnab Chakrabarty, Yin~Tat Lee, Aaron Sidford, Sahil Singla, and
  Sam~Chiu{-}wai Wong.
\newblock Faster matroid intersection.
\newblock In {\em {FOCS}}, pages 1146--1168. {IEEE} Computer Society, 2019.

\bibitem[CMSV17]{CohenMSV17}
Michael~B. Cohen, Aleksander Madry, Piotr Sankowski, and Adrian Vladu.
\newblock Negative-weight shortest paths and unit capacity minimum cost flow in
  {\~{o}} (\emph{m}\({}^{\mbox{10/7}}\) log \emph{W}) time (extended abstract).
\newblock In {\em {SODA}}, pages 752--771. {SIAM}, 2017.

\bibitem[Cun86]{cunningham1986improved}
William~H. Cunningham.
\newblock Improved bounds for matroid partition and intersection algorithms.
\newblock {\em {SIAM} J. Comput.}, 15(4):948--957, 1986.

\bibitem[Die89]{Dietz89}
Paul~F. Dietz.
\newblock Fully persistent arrays (extended array).
\newblock In {\em {WADS}}, volume 382 of {\em Lecture Notes in Computer
  Science}, pages 67--74. Springer, 1989.

\bibitem[Din70]{dinic1970algorithm}
Efim~A Dinic.
\newblock Algorithm for solution of a problem of maximum flow in networks with
  power estimation.
\newblock In {\em Soviet Math. Doklady}, volume~11, pages 1277--1280, 1970.

\bibitem[DSST86]{DriscollSST86}
James~R. Driscoll, Neil Sarnak, Daniel~Dominic Sleator, and Robert~Endre
  Tarjan.
\newblock Making data structures persistent.
\newblock In {\em {STOC}}, pages 109--121. {ACM}, 1986.

\bibitem[Edm70]{edmonds1970submodular}
Jack Edmonds.
\newblock Submodular functions, matroids, and certain polyhedra.
\newblock In {\em Combinatorial structures and their applications}, pages
  69--87. Gordon and Breach, 1970.

\bibitem[Edm71]{edmonds1971}
Jack Edmonds.
\newblock Matroids and the greedy algorithm.
\newblock {\em Math. Program.}, 1(1):127--136, 1971.

\bibitem[EDVJ68]{edmonds1968matroid}
Jack Edmonds, GB~Dantzig, AF~Veinott, and M~J{\"u}nger.
\newblock Matroid partition.
\newblock {\em 50 Years of Integer Programming 1958--2008}, page 199, 1968.

\bibitem[EF65]{edmonds1965transversals}
Jack Edmonds and Delbert~Ray Fulkerson.
\newblock Transversals and matroid partition.
\newblock {\em J. Res. Nat. Bur. Standards Sect. B}, 69:147--153, 1965.

\bibitem[EGIN97]{EppsteinGIN97}
David Eppstein, Zvi Galil, Giuseppe~F. Italiano, and Amnon Nissenzweig.
\newblock Sparsification - a technique for speeding up dynamic graph
  algorithms.
\newblock {\em J. {ACM}}, 44(5):669--696, 1997.

\bibitem[Fre85]{Frederickson85}
Greg~N. Frederickson.
\newblock Data structures for on-line updating of minimum spanning trees, with
  applications.
\newblock {\em {SIAM} J. Comput.}, 14(4):781--798, 1985.

\bibitem[FS89]{FredericksonS89}
Greg~N. Frederickson and Mandayam~A. Srinivas.
\newblock Algorithms and data structures for an expanded family of matroid
  intersection problems.
\newblock {\em {SIAM} J. Comput.}, 18(1):112--138, 1989.

\bibitem[Gab91]{Gabow91}
Harold~N. Gabow.
\newblock A matroid approach to finding edge connectivity and packing
  arborescences.
\newblock In {\em {STOC}}, pages 112--122. {ACM}, 1991.

\bibitem[Gab95]{Gabow95}
Harold~N. Gabow.
\newblock A matroid approach to finding edge connectivity and packing
  arborescences.
\newblock {\em J. Comput. Syst. Sci.}, 50(2):259--273, 1995.

\bibitem[Gar61]{Gardner61}
Martin Gardner.
\newblock The second scientific american book of mathematical puzzles and
  diversions, 1961.

\bibitem[GKKT15]{GibbKKT15}
David Gibb, Bruce~M. Kapron, Valerie King, and Nolan Thorn.
\newblock Dynamic graph connectivity with improved worst case update time and
  sublinear space.
\newblock {\em CoRR}, abs/1509.06464, 2015.

\bibitem[GLP21]{GaoLP21}
Yu~Gao, Yang~P. Liu, and Richard Peng.
\newblock Fully dynamic electrical flows: Sparse maxflow faster than
  goldberg-rao.
\newblock In {\em {FOCS}}, pages 516--527. {IEEE}, 2021.

\bibitem[GS85]{GabowS85}
Harold~N. Gabow and Matthias F.~M. Stallmann.
\newblock Efficient algorithms for graphic matroid intersection and parity
  (extended abstract).
\newblock In {\em {ICALP}}, volume 194 of {\em Lecture Notes in Computer
  Science}, pages 210--220. Springer, 1985.

\bibitem[GSS93]{graver1993combinatorial}
Jack~E Graver, Brigitte Servatius, and Herman Servatius.
\newblock {\em Combinatorial rigidity}.
\newblock Number~2. American Mathematical Soc., 1993.

\bibitem[GT79]{GabowT79}
Harold~N. Gabow and Robert~Endre Tarjan.
\newblock Efficient algorithms for simple matroid intersection problems.
\newblock In {\em {FOCS}}, pages 196--204. {IEEE} Computer Society, 1979.

\bibitem[GW88]{gabow1988forests}
Harold Gabow and Herbert Westermann.
\newblock Forests, frames, and games: algorithms for matroid sums and
  applications.
\newblock In {\em STOC}, pages 407--421, 1988.

\bibitem[GX89]{GabowX89}
Harold~N. Gabow and Ying Xu.
\newblock Efficient algorithms for independent assignments on graphic and
  linear matroids.
\newblock In {\em {FOCS}}, pages 106--111. {IEEE} Computer Society, 1989.

\bibitem[Har08]{harvey2008matroid}
Nicholas J.~A. Harvey.
\newblock Matroid intersection, pointer chasing, and young's seminormal
  representation of \emph{S\({}_{\mbox{n}}\)}.
\newblock In {\em {SODA}}, pages 542--549. {SIAM}, 2008.

\bibitem[Har09]{Harvey09}
Nicholas J.~A. Harvey.
\newblock Algebraic algorithms for matching and matroid problems.
\newblock {\em {SIAM} J. Comput.}, 39(2):679--702, 2009.

\bibitem[HK73]{HopcroftK73}
John~E. Hopcroft and Richard~M. Karp.
\newblock An n\({}^{\mbox{5/2}}\) algorithm for maximum matchings in bipartite
  graphs.
\newblock {\em {SIAM} J. Comput.}, 2(4):225--231, 1973.

\bibitem[HMT88]{HajnalMT88}
Andr{\'{a}}s Hajnal, Wolfgang Maass, and Gy{\"{o}}rgy Tur{\'{a}}n.
\newblock On the communication complexity of graph properties.
\newblock In {\em {STOC}}, pages 186--191. {ACM}, 1988.

\bibitem[HSV21]{HarrisSV21}
David~G. Harris, Hsin{-}Hao Su, and Hoa~T. Vu.
\newblock On the locality of nash-williams forest decomposition and star-forest
  decomposition.
\newblock In {\em {PODC}}, pages 295--305. {ACM}, 2021.

\bibitem[JLSW20]{JiangLSW20}
Haotian Jiang, Yin~Tat Lee, Zhao Song, and Sam~Chiu{-}wai Wong.
\newblock An improved cutting plane method for convex optimization,
  convex-concave games, and its applications.
\newblock In {\em {STOC}}, pages 944--953. {ACM}, 2020.

\bibitem[Kar98]{Karger98}
David~R. Karger.
\newblock Random sampling and greedy sparsification for matroid optimization
  problems.
\newblock {\em Math. Program.}, 82:41--81, 1998.

\bibitem[KKM13]{KapronKM13}
Bruce~M. Kapron, Valerie King, and Ben Mountjoy.
\newblock Dynamic graph connectivity in polylogarithmic worst case time.
\newblock In {\em {SODA}}, pages 1131--1142. {SIAM}, 2013.

\bibitem[KLS20]{KathuriaLS20}
Tarun Kathuria, Yang~P. Liu, and Aaron Sidford.
\newblock Unit capacity maxflow in almost ${O}(m^{4/3})$ time.
\newblock In {\em {FOCS}}, pages 119--130. {IEEE}, 2020.

\bibitem[Law75]{Lawler75}
Eugene~L. Lawler.
\newblock Matroid intersection algorithms.
\newblock {\em Math. Program.}, 9(1):31--56, 1975.

\bibitem[LLSZ21]{LeeLSZ21}
Troy Lee, Tongyang Li, Miklos Santha, and Shengyu Zhang.
\newblock On the cut dimension of a graph.
\newblock In {\em {CCC}}, volume 200 of {\em LIPIcs}, pages 15:1--15:35.
  Schloss Dagstuhl - Leibniz-Zentrum f{\"{u}}r Informatik, 2021.

\bibitem[LMN21]{Lopez-MartinezM21}
Andr{\'{e}}s L{\'{o}}pez{-}Mart{\'{\i}}nez, Sagnik Mukhopadhyay, and Danupon
  Nanongkai.
\newblock Work-optimal parallel minimum cuts for non-sparse graphs.
\newblock In {\em {SPAA}}, pages 351--361. {ACM}, 2021.

\bibitem[LNP{\etalchar{+}}21]{LiNPSY21}
Jason Li, Danupon Nanongkai, Debmalya Panigrahi, Thatchaphol Saranurak, and
  Sorrachai Yingchareonthawornchai.
\newblock Vertex connectivity in poly-logarithmic max-flows.
\newblock In {\em {STOC}}, pages 317--329. {ACM}, 2021.

\bibitem[LP20]{LiP20}
Jason Li and Debmalya Panigrahi.
\newblock Deterministic min-cut in poly-logarithmic max-flows.
\newblock In {\em {FOCS}}, pages 85--92. {IEEE}, 2020.

\bibitem[LPS21]{0006PS21}
Jason Li, Debmalya Panigrahi, and Thatchaphol Saranurak.
\newblock A nearly optimal all-pairs min-cuts algorithm in simple graphs.
\newblock In {\em {FOCS}}, pages 1124--1134. {IEEE}, 2021.

\bibitem[LS14]{LeeS14}
Yin~Tat Lee and Aaron Sidford.
\newblock Path finding methods for linear programming: Solving linear programs
  in {\~{o}}(sqrt(rank)) iterations and faster algorithms for maximum flow.
\newblock In {\em {FOCS}}, pages 424--433, 2014.

\bibitem[LSW15]{LeeSW15}
Yin~Tat Lee, Aaron Sidford, and Sam~Chiu{-}wai Wong.
\newblock A faster cutting plane method and its implications for combinatorial
  and convex optimization.
\newblock In {\em {FOCS}}, pages 1049--1065. {IEEE} Computer Society, 2015.

\bibitem[LSZ21]{LeeSZ21}
Troy Lee, Miklos Santha, and Shengyu Zhang.
\newblock Quantum algorithms for graph problems with cut queries.
\newblock In {\em {SODA}}, pages 939--958. {SIAM}, 2021.

\bibitem[Mad13]{Madry13}
Aleksander Madry.
\newblock Navigating central path with electrical flows: From flows to
  matchings, and back.
\newblock In {\em {FOCS}}, pages 253--262. {IEEE} Computer Society, 2013.

\bibitem[Mad16]{Madry16}
Aleksander Madry.
\newblock Computing maximum flow with augmenting electrical flows.
\newblock In {\em {FOCS}}, pages 593--602. IEEE, 2016.

\bibitem[Mas72]{mason1972class}
John~H Mason.
\newblock On a class of matroids arising from paths in graphs.
\newblock {\em Proceedings of the London Mathematical Society}, 3(1):55--74,
  1972.

\bibitem[MN20]{MukhopadhyayN20}
Sagnik Mukhopadhyay and Danupon Nanongkai.
\newblock Weighted min-cut: sequential, cut-query, and streaming algorithms.
\newblock In {\em {STOC}}, pages 496--509. {ACM}, 2020.

\bibitem[Ngu19]{nguyen2019note}
Huy~L. Nguyen.
\newblock A note on cunningham's algorithm for matroid intersection.
\newblock {\em CoRR}, abs/1904.04129, 2019.

\bibitem[NR72]{NievergeltR72}
J{\"{u}}rg Nievergelt and Edward~M. Reingold.
\newblock Binary search trees of bounded balance.
\newblock In {\em {STOC}}, pages 137--142. {ACM}, 1972.

\bibitem[NS17]{NanongkaiS17}
Danupon Nanongkai and Thatchaphol Saranurak.
\newblock Dynamic spanning forest with worst-case update time: adaptive, las
  vegas, and o(n\({}^{\mbox{1/2 - {\(\epsilon\)}}}\))-time.
\newblock In {\em {STOC}}, pages 1122--1129. {ACM}, 2017.

\bibitem[NSW17]{NanongkaiSW17}
Danupon Nanongkai, Thatchaphol Saranurak, and Christian Wulff{-}Nilsen.
\newblock Dynamic minimum spanning forest with subpolynomial worst-case update
  time.
\newblock In {\em {FOCS}}, pages 950--961. {IEEE} Computer Society, 2017.

\bibitem[Per68]{perfect1968applications}
Hazel Perfect.
\newblock Applications of menger's graph theorem.
\newblock {\em Journal of Mathematical Analysis and Applications},
  22(1):96--111, 1968.

\bibitem[Pri15]{price2015}
Christopher Price.
\newblock Combinatorial algorithms for submodular function minimization and
  related problems.
\newblock Master's thesis, University of Waterloo, 2015.

\bibitem[Qua23]{Quanrud23}
Kent Quanrud.
\newblock Faster exact and approximation algorithms for packing and covering
  matroids via push-relabel.
\newblock {\em CoRR}, abs/2303.01478, 2023.

\bibitem[RSW18]{RubinsteinSW18}
Aviad Rubinstein, Tselil Schramm, and S.~Matthew Weinberg.
\newblock Computing exact minimum cuts without knowing the graph.
\newblock In {\em {ITCS}}, volume~94 of {\em LIPIcs}, pages 39:1--39:16.
  Schloss Dagstuhl - Leibniz-Zentrum f{\"{u}}r Informatik, 2018.

\bibitem[RT85]{RoskindT85}
James Roskind and Robert~E. Tarjan.
\newblock A note on finding minimum-cost edge-disjoint spanning trees.
\newblock {\em Math. Oper. Res.}, 10(4):701--708, 1985.

\bibitem[Sch03]{schrijver2003}
Alexander Schrijver.
\newblock {\em Combinatorial Optimization: Polyhedra and Efficiency}.
\newblock Springer, 2003.

\bibitem[Sha55]{shannon1955game}
Claude~E Shannon.
\newblock Game playing machines.
\newblock {\em Journal of the Franklin Institute}, 260(6):447--453, 1955.

\bibitem[Tar76]{Tarjan76}
Robert~Endre Tarjan.
\newblock Edge-disjoint spanning trees and depth-first search.
\newblock {\em Acta Informatica}, 6:171--185, 1976.

\bibitem[vdBGJ{\etalchar{+}}22]{BrandGJLLPS22}
Jan van~den Brand, Yu~Gao, Arun Jambulapati, Yin~Tat Lee, Yang~P. Liu, Richard
  Peng, and Aaron Sidford.
\newblock Faster maxflow via improved dynamic spectral vertex sparsifiers.
\newblock In {\em {STOC}}, pages 543--556. {ACM}, 2022.

\bibitem[vdBLL{\etalchar{+}}21]{BrandLLSS0W21-maxflow}
Jan van~den Brand, Yin~Tat Lee, Yang~P. Liu, Thatchaphol Saranurak, Aaron
  Sidford, Zhao Song, and Di~Wang.
\newblock Minimum cost flows, {MDPs}, and $\ell_1$-regression in nearly linear
  time for dense instances.
\newblock In {\em {STOC}}, pages 859--869. {ACM}, 2021.

\bibitem[vdBLN{\etalchar{+}}20]{BrandLNPSSSW20}
Jan van~den Brand, Yin~Tat Lee, Danupon Nanongkai, Richard Peng, Thatchaphol
  Saranurak, Aaron Sidford, Zhao Song, and Di~Wang.
\newblock Bipartite matching in nearly-linear time on moderately dense graphs.
\newblock In {\em {FOCS}}, pages 919--930. {IEEE}, 2020.

\bibitem[vdBNS19]{BrandNS19}
Jan van~den Brand, Danupon Nanongkai, and Thatchaphol Saranurak.
\newblock Dynamic matrix inverse: Improved algorithms and matching conditional
  lower bounds.
\newblock In {\em {FOCS}}, pages 456--480. {IEEE} Computer Society, 2019.

\bibitem[Wel76]{Welsh1976matroid_book}
Dominic~JA Welsh.
\newblock {\em Matroid theory}.
\newblock Academic Press, London, New York, 1976.

\bibitem[Whi88]{Whiteley88rigidity}
Walter Whiteley.
\newblock The union of matroids and the rigidity of frameworks.
\newblock {\em {SIAM} J. Discret. Math.}, 1(2):237--255, 1988.

\bibitem[Wul17]{Wulff-Nilsen17}
Christian Wulff{-}Nilsen.
\newblock Fully-dynamic minimum spanning forest with improved worst-case update
  time.
\newblock In {\em {STOC}}, pages 1130--1143. {ACM}, 2017.

\bibitem[XG94]{XuG94}
Ying Xu and Harold~N. Gabow.
\newblock Fast algorithms for transversal matroid intersection problems.
\newblock In {\em {ISAAC}}, volume 834 of {\em Lecture Notes in Computer
  Science}, pages 625--633. Springer, 1994.

\end{thebibliography}

\appendix

\section{\texorpdfstring{$k$}{k}-Fold Matroid Union} \label{appendix:matroid-union-fold}

In this section, we study the special case of matroid union where we take the $k$-fold union of the same matroid.
That is, a basis of the $k$-fold union of $\cM = (U, \cI)$ is the largest subset $S \subseteq U$ which can be partitioned into $k$ disjoint independent sets $S_1, \ldots, S_k$ of $\cI$.
Many of the prominent applications of matroid union fall into this special case, particularly the $k$-disjoint spanning tree problem.
As a result, here we show an optimized version of the algorithm presented in \cref{sec:matroid-union} with better and explicit dependence on $k$ that works in this regime. 

\matroidunionfold*

Note that in the breadth-first-search and blocking-flow algorithms in \cref{sec:matroid-union}, there is an $O(k)$ overhead where we have to spend $O(k)$ time iterating through the $k$ binary search trees in order to explore the out-neighbors of the $O(kr)$ elements.
Our goal in this section is thus to show that it is possible to further ``sparsify'' the exchange graphs to contain essentially only a basis, hence reducing its size from $\Theta(kr)$ to $O(r)$.
We start with a slight modification to the BFS \cref{alg:bfs-union} which reduces the running time by a factor of $O(k)$. The idea is that if we visit an element $u$ in the BFS which does not increase the rank of all visited elements so far, we can skip searching out-edges from $u$. Indeed, if $(u,v)$ is an edge of the exchange graph, then there must have been some element $u'$ visited earlier in the BFS which also has the edge $(u',v)$.

\begin{algorithm}
  \SetEndCharOfAlgoLine{}
  \SetKwInput{KwData}{Input}
  \SetKwInput{KwResult}{Output}
  
  \caption{BFS in a $k$-fold union exchange graph}
  \label{alg:bfs-union-fold}
  \KwData{$S \subseteq U$ with partition $S_1, \ldots, S_k$ of independent sets and a basis $B$ of $U \setminus S$}
  \KwResult{The $(s, v)$-distance $d(v)$ in $H(S)$ for each $v \in S \cup \{t\}$}

  $\mathsf{queue} \gets B$ and $R \gets \emptyset$\;
  $d(v) \gets \infty$ for each $v \in S \cup \{t\}$, and $d(v) \gets 1$ for each $v \in B$\;
  $\cT_i \gets \textsc{Initialize}(\cM, S_i, S_i)$ (\cref{thm:bst} with $\beta = 1$)\;
  \While{$\mathsf{queue} \neq \emptyset$} {
    $u \gets \mathsf{queue}.\textsc{Pop}()$\;
    \If{$R + u \in \cI$} {
      \For{$i \in \{1, 2, \ldots, k\}$} {
        \While{$v := \cT_i.\textsc{Find}(u) \neq \bot$} {
          $d(v) \gets d(u) + 1$ and $\mathsf{queue}.\textsc{Push}(v)$\;
          $\cT_i.\textsc{Delete}(v)$\;
        }
        \lIf{$S_i + u \in \cI$ and $d(t) = \infty$} {
          $d(t) \gets d(u) + 1$
        }
      }
      $R \gets R + u$\;
    }
  }
  \textbf{return} $d(v)$ for each $v \in S \cup \{t\}$.
\end{algorithm}

\begin{lemma}
  Given $S \in \cI_\text{part} \cap \hat{\cI}$ and a basis $B$ of $U \setminus S$, it takes $\tO(kr)$ time to construct the distance layers $L_2, \ldots, L_{d_t - 1}$ of $H(S)$.
  \label{lemma:bfs-union-fold}
\end{lemma}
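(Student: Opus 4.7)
The plan is to split the argument into correctness and running time. For correctness, most of the work is already done by \cref{lemma:bfs-union}: the application of \cref{lemma:basis-rank} there shows that for any $u \in S_i$ with an exchange partner in $U \setminus S$, an exchange partner already exists inside the basis $B$ of $U \setminus S$, so seeding the BFS with $B$ (rather than all of $U \setminus S$) is still enough to discover every element of $L_2$. Beyond that, the only novelty is the ``sparsification'' step, where any dequeued $u$ satisfying $R + u \notin \cI$ has its out-neighbor search skipped. The heart of the proof will be to show that this skipping is harmless: every out-neighbor of such a $u$ in $H(S)$ is (or has been) discovered through some already-processed element with no worse distance label.

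For the correctness of the skip, I would prove the claim: when $u$ is dequeued with $R + u \notin \cI$, any out-neighbor $v \in S_i$ of $u$ in $H(S)$ still satisfies $d(v) \leq d(u) + 1$. The argument would combine $S_i - v + u \in \cI$ (equivalently $u \notin \sspan(S_i - v)$) with $u \in \sspan(R)$ (which follows from $R + u \notin \cI$). If $v \in R$ already then $v$ has clearly been discovered; otherwise $R \cap S_i \subseteq S_i - v$, so if every $u' \in R \setminus S_i$ lay in $\sspan(S_i - v)$, we would have $R \subseteq \sspan(S_i - v)$, and hence $u \in \sspan(R) \subseteq \sspan(S_i - v)$, contradicting the first statement. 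So some $u' \in R \setminus S_i$ also satisfies $S_i - v + u' \in \cI$, and during the earlier expensive processing of $u'$, the call $\cT_i.\textsc{Find}(u')$ would have returned $v$ (or a substitute producing the same distance label for $v$), giving $v$ the BFS distance $d(u') + 1 \leq d(u) + 1$, which is exactly what we would have assigned via $u$.

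For the running time, the key observation is that $R$ is always an independent set of $\cM$, so $|R| \leq r$ and hence there are at most $r$ expensive processings. Each expensive processing loops over the $k$ matroids, and since we instantiate the binary search trees with $\beta = 1$, every call to $\cT_i.\textsc{Find}$ is amortized $\tO(1)$; the successful finds are globally bounded by $\sum_i |S_i| \leq kr$ (each element of $S_i$ is deleted from $\cT_i$ upon being found), and the failed finds by at most one per $(u, i)$ pair of expensive work, so the expensive work sums to $\tO(kr)$. Initialization of the $k$ binary search trees contributes another $\tO(kr)$, and every cheap (skipped) dequeue amounts to a single rank query on $R + u$, with the queue length bounded by $|B| + |S| = O(kr)$ since each element is enqueued at most once. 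I expect the span-based correctness claim in the middle paragraph to be the only genuinely new ingredient; the running-time accounting is routine once the $|R| \leq r$ bound is in hand.
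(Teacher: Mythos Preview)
Your proposal is correct and follows essentially the same approach as the paper: the paper also argues that a skipped element $v$ (with $R_v + v \notin \cI$) has all its out-neighbors already covered by some $u \in R_v$, using the rank equality $\rank((S_i - x) + R_v) = \rank((S_i - x) + (R_v + v))$ via \cref{lemma:basis-rank} and \cref{obs:exchange}, which is the rank-function phrasing of your span argument. Your running-time accounting ($|R|\le r$ expensive dequeues each costing $\tO(k)$, plus $\tO(kr)$ for initialization and cheap dequeues) matches the paper's as well.
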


\begin{proof}
  The algorithm is presented as \cref{alg:bfs-union-fold}.
  It performs a BFS from a basis $B$ of the first layer and only explores out-edges from the first basis $R$ it found.
  It takes
  (i) $\tO(|S|)$ time to construct the $\cT_i$'s,
  (ii) $\tO(1)$ time to discover each of the $O(kr)$ element, and
  (iii) an additional $\tO(k \cdot |R|)$ time to iterate through all $k$ binary search trees $\cT_i$'s for each $u \in R$.
  The total running time is thus bounded by $\tO(kr)$.
  
  We have shown in \cref{lemma:bfs-union} that it is feasible to replace $U \setminus S$ with simply $B$.
  It remains to show that exploring only the out-neighbors of $u \in R$ does not affect the correctness.
  Consider a $v \in S \setminus R$ (we know that $B \subseteq R$ so it suffices to consider elements in $S$) with an out-neighbor $x \in S_i$, i.e., $S_i - x + v \in \cI$.
  It then follows that $\rank((S_i - x) + R_v) = \rank((S_i - x) + (R_v + v)) \geq \rank(S_i)$ by \cref{obs:exchange} and \cref{lemma:basis-rank}, where $R_v$ is the set $R$ when $v$ is popped out of the queue (in other words, $R_v + v \not\in \cI$).
  This implies that there is a $u \in R_v$ which is visited before $v$ that also has out-neighbor $x$.
  The modification is therefore correct.
\end{proof}

Our blocking-flow algorithm for $k$-fold matroid union is presented as \cref{alg:blocking-flow-union-fold}.
It's essentially a specialization of \cref{alg:blocking-flow-union} to the case where all the $k$ matroids are the same, except that we skip exploring the out-neighbors of $a_{\ell}$ and remove it directly if it is ``spanned'' by the previous layers and the set $R_{\ell} \subseteq L_{\ell}$ of elements that are \emph{not} on any augmenting path of length $d_t$.
With this optimization, we obtain the following lemma analogous to \cref{lemma:blocking-flow-union}.

\begin{algorithm}
  \SetEndCharOfAlgoLine{}
  \SetKwInput{KwData}{Input}
  \SetKwInput{KwResult}{Output}
  \SetKwInput{KwGuarantee}{Guarantee}
  
  \caption{Blocking flow in a $k$-fold union exchange graph}
  \label{alg:blocking-flow-union-fold}
  \KwData{$S \subseteq U$ which partitions into $S_1, \ldots, S_k$ of independent sets and a dynamic-basis data structure $\cD$ of $U \setminus S$}
  \KwResult{$S^\prime \in \cI_{\text{part}} \cap \cI_k$ with $d_{H(S^\prime)}(s, t) > d_{H(S)}(s, t)$}
  \KwGuarantee{$\cD$ maintains a basis of $U \setminus S^\prime$ at the end of the algorithm}
  
  Build the distance layers $L_2, \ldots, L_{d_t - 1}$ of $H(S)$ with \cref{lemma:bfs-union-fold}\;
  $L_0 \gets \{s\}$ and $L_{d_t} \gets \{t\}$\;
  $B \gets$ the basis maintained by $\cD$ and $L_1 \gets B$\;
  $A_\ell \gets L_\ell$ for each $0 \leq \ell \leq d_t$\;
  $\cT_{\ell}^{(i)} \gets \textsc{Initialize}(\cM_i, S_i, Q_{S_i}, A_{\ell} \cap S_i)$ for each $2 \leq \ell < d_t$ and $1 \leq i \leq k$ (\cref{thm:bst} with $\beta = \sqrt{r} / d_t$)\;
  $D_{\ell} \gets \emptyset$ for each $1 \leq \ell < d_t$\;
  $R_{\ell} \gets \emptyset$ for each $2 \leq \ell < d_t$\;
  $\ell \gets 0$ and $a_0 \gets s$\;
  \While{$\ell \geq 0$} {
    \If{$\ell < d_t$} {
      \lIf{$A_{\ell} = \emptyset$} {
        \textbf{break}
      }
      \lIf{$\ell \geq 2$ and $\rank(L_1 \cup \cdots \cup L_{\ell - 1} \cup R_{\ell} \cup \{a_{\ell}\}) = \rank(L_1 \cup \cdots \cup L_{\ell - 1} \cup R_{\ell})$} {\label{line:check-not-spanned}
        $A_{\ell} \gets A_{\ell} - a_{\ell}$ and \textbf{continue}
      }
      \lIf{$\ell > 0$} {
        Find an $a_{\ell + 1} := \cT_{\ell + 1}^{(i)}.\textsc{Find}(a_\ell) \neq \bot$ for some $1 \leq i \leq k$
      }
      \lElse {
        $a_{\ell + 1} \gets$ an arbitrary element in $A_1$
      }
      \If{such an $a_{\ell + 1}$ does not exist} {
        \lIf{$\ell \geq 2$} {
          $R_{\ell} \gets R_{\ell} + a_{\ell}$ and $\cT_{\ell}^{(j)}.\textsc{Delete}(a_{\ell})$ where $a_{\ell} \in S_j$
        }
        $A_\ell \gets A_\ell - a_\ell$ and $\ell \gets \ell - 1$\;\label{line:remove}
      }
      \lElse {
        $\ell \gets \ell + 1$
      }
    }
    \Else {
    \tcp{Found augmenting path $a_1, a_2, \ldots a_\ell$}
      $B \gets B - a_1$, $A_1 \gets A_1 - a_1$, and $D_1 \gets D_1 + a_1$\;
      \If{$\cD.\textsc{Delete}(a_1)$ returns a replacement $x$} {
        $B_i \gets B_i + x$  and $A_i \gets A_i + x$\;
      }
      \For{$i \in \{2, \ldots, d_t - 1\}$} {
          $D_{i} \gets D_{i} + a_{i}$ and $A_{i} \gets A_{i} - a_{i}$\;
          $\cT_{i}^{(j)}.\textsc{Delete}(a_i)$ and $\cT_{i}^{(j)}.\textsc{Update}(\{a_{i - 1}, a_{i}\})$ where $a_i \in S_j$\;
      }
      Augment $S$ along $P = (s, a_1, \ldots, a_{d_t - 1}, t)$\;
      $\ell \gets 0$\;
    }
  }
  \textbf{return} $S$\;
\end{algorithm}

\begin{lemma}
  Given an $S \in \cI_{\text{part}} \cap \hat{\cI}$ with $d_{H(S)}(s, t) = d_t$ together with a data structure $\cD$ of \cref{thm:decremental-basis} that maintains a basis of $U \setminus S$, it takes
  \begin{equation}
    \tO\left(\underbrace{kr}_{\ref{item:term1}} + \underbrace{\left(|S^\prime| - |S|\right) \cdot d_t\sqrt{r}}_{\ref{item:term2}} + \underbrace{\left((|S^\prime| - |S|) \cdot d_t + r\right) \cdot k}_{\ref{item:term3}} + \underbrace{\left(kr + (|S^\prime| - |S|)\right) \cdot \frac{\sqrt{r}}{d_t}}_{\ref{item:term4}}\right)
    \label{eq:complexity}
  \end{equation}
  time to obtain an $S^\prime \in \cI_{\text{part}} \cap \hat{\cI}$ with $d_{H}(S^\prime)(s, t) > d_t$, with an additional guarantee that $\cD$ now maintains a basis of $U \setminus S^\prime$.
  \label{lemma:blocking-flow-union-fold}
\end{lemma}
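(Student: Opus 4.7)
The plan is to establish this lemma by closely paralleling the proof of \cref{lemma:blocking-flow-union} while incorporating two additional features that exploit the $k$-fold structure: the spanning check on the line flagged ``\emph{check-not-spanned}'' in \cref{alg:blocking-flow-union-fold}, and the finer bookkeeping of the per-iteration $O(k)$ overhead made possible by separating the matroids explicitly. First I would verify correctness. Exactly as in \cref{lemma:blocking-flow-union}, the collection $\Pi = (D_1, \ldots, D_{d_t - 1})$ maintained throughout \cref{alg:blocking-flow-union-fold} is an augmenting set in $H(S)$, so $S' := S \oplus \Pi$ is a common independent set in $\cI_{\text{part}} \cap \hat{\cI}$ by \cref{lemma:aug-set}. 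The only nontrivial new ingredient is the spanning check, for which I would adapt the argument of \cref{lemma:bfs-union-fold}: if $a_\ell$ lies in the $\cM$-span of $V_\ell := L_1 \cup \cdots \cup L_{\ell - 1} \cup R_\ell$, then for any candidate exchange edge $(a_\ell, y)$ with $S_i - y + a_\ell \in \cI$, \cref{lemma:basis-rank} together with the submodularity of rank yields $\rank((S_i - y) + V_\ell) \geq \rank(S_i)$, so some $u \in V_\ell$ already provides an exchange edge to $y$. Hence no augmenting path of length $d_t$ is lost by skipping $a_\ell$. Combined with \cref{lemma:key,lemma:monotone}, this shows that $A_\ell = \emptyset$ implies $d_{H(S')}(s, t) > d_t$, mirroring the termination argument of \cref{lemma:blocking-flow-union}.

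Next I would bound the running time by the four under-bracketed terms separately. \emph{Term 1} ($\tO(kr)$) covers the distance-layer computation via \cref{lemma:bfs-union-fold} plus the initialization of the binary search trees $\cT_\ell^{(i)}$, whose total initial size $\sum_{\ell, i} |A_\ell \cap S_i|$ is bounded by $|S| = O(kr)$. \emph{Term 4} ($\tO((kr + (|S'|-|S|)) \sqrt{r}/d_t)$) comes from the successful $\cT_\ell^{(i)}.\textsc{Find}$ calls: the total number of elements ever in some $A_\ell$ is $O(kr)$ initially plus $O(|S'|-|S|)$ added to $A_1$ through the replacement elements produced by $\cD$, and each successful Find costs $\tO(\beta) = \tO(\sqrt{r}/d_t)$ by \cref{thm:bst}. \emph{Term 3} ($\tO(((|S'|-|S|) d_t + r) \cdot k)$) accounts for the per-iteration $O(k)$ scan across the $k$ matroid trees plus the $O(k)$ bookkeeping per augmented element. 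Finally, \emph{Term 2} ($\tO((|S'|-|S|) d_t \sqrt{r})$) bounds the $\cT_\ell^{(j_\ell)}.\textsc{Update}$ operations: for each of the $|S'|-|S|$ augmenting paths, we perform $d_t - 2$ Updates with $|\Delta| = 2$ on the disjoint per-layer trees touched by that path, and I would sum the amortized costs $\tO(|X_\ell^{(j_\ell)}|/\beta)$ across layers using disjointness of the touched $A_\ell \cap S_{j_\ell}$'s. The $\cD.\textsc{Delete}$ calls that maintain the basis $B$ contribute $\tO(\sqrt{r})$ per augmentation by \cref{thm:decremental-basis} and are absorbed.

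The main obstacle will be tightening the amortized analysis of Term 2 so that the per-path cost is $\tO(d_t\sqrt{r})$ rather than the naive $\tO(d_t^2 \sqrt{r})$ one would get by multiplying a worst-case $|X_\ell^{(j_\ell)}| \leq r$ by $d_t$ layers and then by $|\Delta|/\beta = 2d_t/\sqrt{r}$. The improvement rests on using the disjointness of the per-path $X$-subsets to replace a product by a sum (so that the total per-path $X$-mass is controlled by $|S|$ rather than $r \cdot d_t$), and on carefully attributing the remaining $O(k)$ overhead into Term 3 rather than Term 2. A secondary subtlety will be arguing that the spanning check remains safe against the ``out-of-date'' common independent set seen by the intermediate binary search trees; this is handled by applying \cref{lemma:key} relative to the augmenting set $\Pi$ constructed so far, precisely as in the analysis of \cref{lemma:blocking-flow}.
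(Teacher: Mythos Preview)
Your outline follows the paper's structure and the correctness argument is essentially right.  The real gap is in your running-time analysis of Term~\ref{item:term3}.  You merely assert that it ``accounts for the per-iteration $O(k)$ scan across the $k$ matroid trees'', but you never explain why only $O((|S'|-|S|)\,d_t + r)$ iterations incur that $O(k)$ cost.  A priori, every one of the $O(kr)$ elements that ever sits in some $A_\ell$ (for $\ell\ge 2$) would get the $O(k)$ scan over $\cT_{\ell+1}^{(1)},\dots,\cT_{\ell+1}^{(k)}$, yielding $O(k^2 r)$ rather than $O(kr)$ and defeating the entire purpose of the optimization over \cref{lemma:blocking-flow-union}.

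What you are missing is that the spanning check on Line~\ref{line:check-not-spanned} is not primarily a correctness device but the key runtime optimization.  An element $a_\ell$ that fails the check is removed from $A_\ell$ \emph{without} the $O(k)$ loop over the trees.  Only elements that pass the check---those whose addition strictly increases $\rank(L_1\cup\cdots\cup L_{\ell-1}\cup R_\ell)$---proceed to the $\textsc{Find}$ loop.  Each such element either ends up on an augmenting path (at most $(|S'|-|S|)\,d_t$ of them) or is placed into some $R_\ell$ at Line~\ref{line:remove}.  The paper then invokes \cref{claim:is-basis}: the set $B\cup R_2\cup\cdots\cup R_{d_t-1}$ is always independent in $\cM$, so the total number of elements ever placed into the $R_\ell$'s is at most $r$.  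This is precisely what gives the ``$+\,r$'' rather than ``$+\,kr$'' in Term~\ref{item:term3}; without it your Term~3 bound does not hold.

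A secondary remark: you identify Term~\ref{item:term2} as ``the main obstacle'' and worry about a $d_t^2$ blowup, but in the paper this term is handled in one line---each augmentation touches one tree per layer, whose $|X|$-sizes over the $d_t$ layers sum to at most $|S|$, so the amortized Update cost per path is $\tO(|S|/\beta)$.  The genuinely delicate term is~\ref{item:term3}, and that is where the spanning check plus \cref{claim:is-basis} do the work.
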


We need the following observation to bound the running time of \cref{alg:blocking-flow-union-fold}.

\begin{observation}
  In \cref{alg:blocking-flow-union-fold}, it holds that $B \cup R_2 \cup R_{3} \cup \cdots \cup R_{d_t - 1} \in \cI$.
  \label{claim:is-basis}
\end{observation}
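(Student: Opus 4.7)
The plan is to prove the observation by induction on the sequence of steps in \cref{alg:blocking-flow-union-fold}, with the invariant that $X := B \cup R_2 \cup R_3 \cup \cdots \cup R_{d_t - 1}$ remains independent in $\cM$ throughout the execution. The base case holds immediately since $R_\ell = \emptyset$ initially for each $\ell \geq 2$ and $B$ is the basis of $U \setminus S$ returned by $\cD$, so $X = B \in \cI$. Only two kinds of events can modify $X$: $B$ is updated during an augmentation, or some element $a_\ell$ is added to $R_\ell$ just before line~\ref{line:remove}.

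For the augmentation case, $a_1$ is removed from $B$ and (if $\cD.\textsc{Delete}(a_1)$ returns a replacement) an element $x$ is added to $B$. Removing $a_1$ trivially preserves independence of $X$. The new element $x$ satisfies $B - a_1 + x \in \cI$ by the basis-preservation guarantee of $\cD$; moreover $x \in U \setminus S \subseteq \sspan(L_1)$, which combined with the inductive hypothesis should let us conclude that the updated $X$ stays independent.

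For the main case, at the moment just before $a_\ell$ is added to $R_\ell$, the check at line~\ref{line:check-not-spanned} must have failed in the current iteration (otherwise the algorithm would have \texttt{continue}d past $a_\ell$), yielding $a_\ell \notin \sspan(L_1 \cup L_2 \cup \cdots \cup L_{\ell-1} \cup R_\ell)$. I would then establish the span inclusion
\[
  \sspan(X) \subseteq \sspan(L_1 \cup L_2 \cup \cdots \cup L_{\ell - 1} \cup R_\ell),
\]
which combined with the above gives $a_\ell \notin \sspan(X)$ and hence preservation of independence. The easy parts follow from $B \subseteq \sspan(L_1)$ (since $B \subseteq U \setminus S_{\text{current}} \subseteq U \setminus S_{\text{initial}} \subseteq \sspan(L_1)$, using that $L_1$ was set to the basis of $U \setminus S_{\text{initial}}$) and $R_j \subseteq L_j$ for each $j \leq \ell - 1$.

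The main obstacle, and the technical heart of the proof, is the remaining part of the span inclusion: showing $R_j \subseteq \sspan(L_1 \cup L_2 \cup \cdots \cup L_{\ell-1} \cup R_\ell)$ for $j > \ell$. I expect this to require exploiting the depth-first exploration order of the algorithm: any element already placed in a deeper $R_j$ was added at an earlier time when the DFS had extended through some $a_\ell^\prime \in L_\ell$ as its level-$\ell$ ancestor on the stack, and the check that was verified at that earlier moment (together with the inductive hypothesis at that time) should propagate to give the needed containment now. A clean way to formalize this is to strengthen the induction invariant to simultaneously track, for every layer $\ell$, the containment of the current $R_{\ell+1} \cup \cdots \cup R_{d_t-1}$ inside $\sspan(L_1 \cup \cdots \cup L_{\ell-1} \cup R_\ell)$, and verify this strengthened invariant is maintained by both event types above.
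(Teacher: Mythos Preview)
The paper states this as an \emph{Observation} without proof, so there is no paper argument to compare against directly. Nevertheless, your proposal has a genuine gap: the strengthened invariant you propose is false. Consider the first few steps of the DFS with $d_t=4$: it pushes some $a_2\in L_2$ (passing the check $a_2\notin\sspan(L_1\cup R_2)=\sspan(L_1)$), then pushes some $a_3\in L_3$ (passing $a_3\notin\sspan(L_1\cup L_2\cup R_3)=\sspan(L_1\cup L_2)$), then fails to extend and adds $a_3$ to $R_3$. At this moment $R_2=\emptyset$ and $R_3=\{a_3\}$; your invariant for layer $\ell=2$ demands $R_3\subseteq\sspan(L_1\cup R_2)=\sspan(L_1)$, but $a_3\notin\sspan(L_1\cup L_2)\supseteq\sspan(L_1)$. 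So the containment you aim to establish for deeper $R_j$ simply does not hold, and no amount of DFS bookkeeping will recover it.

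The fix is to abandon time-ordered induction altogether and instead induct on the \emph{layer index} $\ell$, which makes the ``deep $R_j$'' issue disappear. Two facts suffice: (i) $B\subseteq\sspan(L_1)$ throughout, since $B\subseteq U\setminus S_{\text{current}}\subseteq U\setminus S_{\text{initial}}\subseteq\sspan(L_1)$ (this already handles the augmentation case); and (ii) for each fixed $\ell\ge 2$, if $u_1,u_2,\ldots$ are the elements of $R_\ell$ in order of insertion, then $u_i\notin\sspan(L_1\cup\cdots\cup L_{\ell-1}\cup\{u_1,\ldots,u_{i-1}\})$, because the check at line~\ref{line:check-not-spanned} uses the current $R_\ell$, which equals $\{u_1,\ldots,u_{i-1}\}$ at that moment (indeed $R_\ell$ is only modified at level $\ell$ and is untouched by augmentations). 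Now prove $B\cup R_2\cup\cdots\cup R_\ell\in\cI$ by induction on $\ell$: since $B\cup R_2\cup\cdots\cup R_{\ell-1}\subseteq\sspan(L_1)\cup L_2\cup\cdots\cup L_{\ell-1}\subseteq\sspan(L_1\cup\cdots\cup L_{\ell-1})$, fact~(ii) gives $u_i\notin\sspan\bigl((B\cup R_2\cup\cdots\cup R_{\ell-1})\cup\{u_1,\ldots,u_{i-1}\}\bigr)$ for each $i$, so adding the $u_i$ one by one preserves independence.
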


\begin{proof}[Proof of \cref{lemma:blocking-flow-union-fold}]
  We analyze the running time of \cref{alg:blocking-flow-union-fold} first.
  In particular, there are four terms in \cref{eq:complexity} which come from the following.
  \begin{enumerate}[label=(\roman*)]
    \item\label{item:term1} $\tO(kr)$: It takes $\tO(kr)$ time to compute the distance layers using \cref{lemma:bfs-union-fold} and initialize all the binary search trees $\cT_{\ell}^{(i)}$'s. Computing the rank of $L_1 \cup \cdots \cup L_{\ell - 1} \cup R_{\ell}$ also takes $\tO(kr)$ time in total since we can pre-compute query-sets of the form $L_1 \cup \cdots \cup L_{k}$ for each $k$ in $\tO(kr)$ time, and each insertion to $R_{\ell}$ takes $\tO(1)$ time.
    \item\label{item:term2} $\tO\left(\left(|S^\prime| - |S|\right) \cdot d_t\sqrt{r}\right)$: For each of the $O(|S^\prime| - |S|)$ augmentations, it takes $\tO(r \cdot \frac{d_t}{\sqrt{r}})$ time to update the binary search trees.
    \item\label{item:term3} $\tO(\left((|S^\prime| - |S|) \cdot d_t + r\right) \cdot k)$: The number of elements whose out-edges are explored is bounded by $O\left((|S^\prime| - |S|) \cdot d_t + r\right)$. This is because for each such element $u$, either $u$ is included in an augmenting path of length $d_t$, or $u$ is removed in Line \ref{line:remove}.
    There are $O((|S^\prime| - |S|) \cdot d_t)$ such $u$'s in the augmenting paths.
    For $u$ removed in Line \ref{line:remove}, if $\ell = 1$, then the number of such $u$'s is $O(|S^\prime| - |S| + r)$ because there are initially $O(r)$ elements in $A_1$, and we add at most one to it every augmentation.
    If $\ell \geq 2$, then we insert it into $R_{\ell}$, and by Line \ref{line:check-not-spanned}, the rank of $L_1 \cup \cdots \cup L_{\ell - 1} \cup R_{\ell}$ increases after including $u$ into $R_{\ell}$.
    By \cref{claim:is-basis}, the number of such $u$'s is bounded by $O(r)$.
    The term then comes from spending $O(k)$ time iterating through the $k$ binary search trees for each of the $O\left((|S^\prime| - |S|) \cdot d_t + r\right)$ elements whose out-neighbors are explored.
    \item\label{item:term4} $\tO(\left(kr + (|S^\prime| - |S|)\right) \cdot \frac{\sqrt{r}}{d_t})$:  The number of elements that are once in some $A_{\ell}$ is bounded by $O(kr + |S^\prime| - |S|)$. Initially, there are $O(kr)$ elements ($A_1$ plus all the $A_\ell$ for $\ell \geq 2$), and each augmentation adds at most one element to $A_1$.
    Each of these elements is discovered by $\cT_{\ell}^{(i)}.\textsc{Find}(\cdot)$ at most once, and thus we can charge the $\tO(\frac{\sqrt{r}}{d_t})$ cost to it, resulting in the fourth term of \cref{eq:complexity}.
  \end{enumerate}
  Note that for each element whose out-neighbors are explored, any failed attempt of $\cT_{\ell}^{(i)}.\textsc{Find}(\cdot)$ costs only $\tO(1)$ instead of $\tO(\frac{\sqrt{r}}{d_t})$ according to \cref{thm:bst}.
  The $\tO(\frac{\sqrt{r}}{d_t})$ cost of a successful search is charged to term \ref{item:term4} instead of \ref{item:term3}.
  
  As for correctness, it suffices to show that each of the $a_{\ell}$ removed from $A_{\ell}$ because it is spanned by $L_1 \cup \cdots \cup L_{\ell - 1} \cup R_{\ell}$ in Line \ref{line:check-not-spanned} is not in any augmenting path of length $d_t$.
  Consider its out-neighbor $a_{\ell + 1}$ with respect to the current $S$, and we would like to argue that $a_{\ell + 1}$ is not on any augmenting path of length $d_t$ anymore.
  This is because we have already explored all the out-neighbors of elements in $R_{\ell}$.
  Since $a_{\ell} \in \sspan(L_1 \cup \cdots \cup L_{\ell - 1} \cup R_{\ell})$, by \cref{lemma:basis-rank}, there must exist some $u \in L_1 \cup \cdots \cup L_{\ell - 1} \cup R_{\ell}$ with a directed edge $(u, a_{\ell + 1})$.
  We consider two cases:
  \begin{itemize}
  \item $u\in R_{\ell}$. This means that we have already explored $a_{\ell+1}$, as we finished exploring all out-neighbors of $u$ already.
  \item $u\in L_{1}\cup \cdots \cup L_{\ell-1}$. We know that by \cref{lemma:monotone}, both $d_{H(S)}(s,v)$ and
  $d_{H(S)}(v,t)$ can only increase after augmentations for all elements $v$.
  Hence $a_{\ell+1}$ cannot be part of an augmenting path of length $d_t$ anymore, since if it was its distance to $t$ must be $d-(\ell+1)$, but then the distance from $u$ to $t$ must be at most $d-\ell$ (which is smaller than its initial distance to $t$ at the beginning of the phase).
  \end{itemize}
  
  As a result, all of $u$'s out-neighbors have either already been explored or do not belong to any augmenting path of length $d_t$.
  This implies that $u$ is not on any such path either, and thus it's correct to skip and remove it from $A_{\ell}$.
  This concludes the proof of \cref{lemma:blocking-flow-union-fold}.
\end{proof}

\cref{thm:dynamic-matroid-union-fold} now follows from analyzing the total running time of $O(\sqrt{\min(n, kr)})$ runs of \cref{lemma:blocking-flow-union-fold}.

\begin{proof}[Proof of \cref{thm:dynamic-matroid-union-fold}]
  We initialize the dynamic-basis data structure $\cD$ of \cref{thm:decremental-basis} on $U$ in $\tO(n)$ time.
  Let $p = \min(n, kr)$ be the rank of the $k$-fold matroid union.
  Using $\cD$, we then run $O(\sqrt{p})$ iterations of \cref{lemma:blocking-flow-union-fold} until $d_{H(S)}(s, t) > \sqrt{p}$.
  Summing the first two terms of \cref{eq:complexity} over these $O(\sqrt{p})$ iterations gives (recall that \cref{lemma:augmenting-path-lengths} guarantees that $\sum_{d = 1}^{\sqrt{p}}d \cdot (|S_d| - |S_{d - 1}|) = \tO(p)$)
  \[
    \tO\left(kr\sqrt{p} + \sqrt{r} \cdot \sum_{d = 1}^{\sqrt{p}}d \cdot \left(|S_d| - |S_{d - 1}|\right)\right) = \tO\left(kr\sqrt{p}\right)
  \]
  since $p\sqrt{r} \leq kr\sqrt{p}$.
  The third term of \cref{eq:complexity} contributes a total running time of
  \[
    \tO\left(\left(\sum_{d = 1}^{\sqrt{p}}dk \cdot \left(|S_d| - |S_{d - 1}|\right)\right) + kr\sqrt{p}\right) = \tO\left(kr\sqrt{p} + kp\right),
  \]
  while the fourth term of \cref{eq:complexity} sums up to
  \[
    \tO\left(\left(\sum_{d = 1}^{\sqrt{p}}kr\frac{\sqrt{r}}{d}\right) + kr\sqrt{r}\right) = \tO\left(kr\sqrt{r}\right).
  \]
  We finish the algorithm by finding the remaining $O(\sqrt{p})$ augmenting paths one at a time with \cref{lemma:bfs-union-fold} in a total of $\tO(kr\sqrt{p})$ time.
  The $k$-fold matroid union algorithm thus indeed runs in $\tO\left(n + kr\sqrt{\min(n, kr)} + k\min(n, kr)\right)$ time, concluding the proof of \cref{thm:dynamic-matroid-union-fold}.
\end{proof}

\section{Dynamic Oracles for Specific Matroids \& Applications} \label{appendix:applications}

In this appendix, we show how to leverage known dynamic algorithms to implement the dynamic rank oracle (\cref{def:dyn-oracle}) efficiently for many important matroids.
What we need are data structures that can maintain the rank of a set dynamically under insertions and deletions in \emph{worst-case} update time (converting a \emph{worst-case} data structure to \emph{fully-persistent} can be done by the standard technique of \cite{DriscollSST86,Dietz89}, paying an overhead of $O(\log{n})$). Additionally, note that the data structures do not need to work against an adaptive adversary since we only ever use the \emph{rank} of the queried sets, which is not affected by internal randomness.

In particular, for \emph{partition, graphic, bicircular, convex transversal}, and \emph{simple job scheduling matroids} it is possible to maintain the rank with polylog$(n)$ update-time, and for \emph{linear matroids} in $O(n^{1.529})$ update-time.

Together with our matroid intersection (\cref{sec:matroid-intersection}) and matroid union (\cref{sec:matroid-union}) algorithms, this leads to a black-box approach to solving many different problems. In fact, we can solve matroid intersection and union on any combination of the above matroids, leading to improved or matching running times for many problems (see the introduction \cref{sec:intro} with \cref{tab:intro:dyn-oracle-implies-fast-algorithms,tab:intro:implications} for a more thorough discussion). For completeness, we define these problems in \cref{appendix:problems}.
The same algorithms are powerful enough to also solve new problems which have not been studied before.

\paragraph{Example Application: Tree Scheduling (or Maximum Forest with Deadlines).} We give an example of a reasonably natural combinatorial scheduling problem, which---to our knowledge---has not been studied before. Suppose we are given a graph $G = (V,E)$ where each edge $e\in E$ has two numbers associated with it: a release date $\ell_e$ and a deadline $r_e$. Consider the problem where we want to for each day pick exactly one edge (say, to build/construct), but we have constraints that edge $e$ can only be picked between days $\ell_e$ and $r_e$. Now the task is to come up with a scheduling plan to build a spanning tree of the graph, if possible.

This problem is exactly a matroid intersection problem between a graphic matroid and a convex transversal matroid. Hence, by a black-box reduction, we know that we can solve this problem in $\tO(|E|\sqrt{|V|})$ time.

\subsection{Partition Matroids}
In a partition matroid $\cM = (\ground,\cI)$, each element $u\in \ground$ is assigned a color $c_u$. We are also, for each color $c$, given a non-negative integer $d_c$, and we define a set of elements $S\subseteq \ground$ to be independent if for each color $c$, $S$ includes at most $d_c$ elements of this color.
Implementing the dynamic oracle for the partition matroid is easy:

\begin{lemma}
One can maintain the rank of a partition matroid in $O(1)$-update time.
\end{lemma}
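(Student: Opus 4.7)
The plan is to store, alongside each query-set $S_i$, just two pieces of information: for each color $c$, a counter $n_c(S_i)$ equal to the number of elements of color $c$ in $S_i$, and a single scalar $R(S_i) := \sum_c \min(n_c(S_i), d_c)$. By definition of the partition matroid, $R(S_i)$ is exactly $\rank(S_i)$, so $\textsc{Query}(i)$ only needs to return the cached value $R(S_i)$.

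To maintain these quantities under an $\textsc{Insert}(u, i)$ operation, I would start from $(n_\cdot(S_i), R(S_i))$, increment $n_{c_u}$ by $1$, and then increment $R$ by $1$ iff the new value of $n_{c_u}$ is at most $d_{c_u}$ (i.e., the newly inserted copy contributes to the rank). The $\textsc{Delete}(u, i)$ case is symmetric: decrement $n_{c_u}$, and decrement $R$ iff the pre-decrement value of $n_{c_u}$ was at most $d_{c_u}$. Each operation inspects and writes only a constant number of cells, so the bookkeeping is $O(1)$ per call.

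The only subtle point is that the dynamic-oracle model (\cref{def:dyn-oracle}) allows queries to be issued from \emph{any} previous query-set, not just the most recent one, so the $(n_\cdot, R)$ table must be fully persistent. This is not really an obstacle, since each operation modifies at most one counter $n_{c_u}$ and the scalar $R$; storing only the delta at each version node and accessing $n_c$ via the standard persistence machinery of \cite{DriscollSST86,Dietz89} preserves the $O(1)$ per-operation cost (up to the uniform persistence overhead already accounted for in the appendix preamble). With this data structure in place, all three operations $\textsc{Insert}$, $\textsc{Delete}$, and $\textsc{Query}$ run in $O(1)$ time, establishing the lemma.
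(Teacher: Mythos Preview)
Your proof is correct and follows essentially the same approach as the paper: maintain a per-color counter $x_c$ and the running rank $r = \sum_c \min(x_c, d_c)$. You are simply more explicit about the $O(1)$ update mechanics and about persistence (which the paper factors out into the appendix preamble via \cite{DriscollSST86,Dietz89}).
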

\begin{proof}
For each color $c$ we maintain a counter $x_c$ of how many elements we have of color $c$. We also maintain $r = \sum \min_c(x_c, d_c)$, which is the rank of the current set.
\end{proof}

\begin{remark}
Bipartite matching can be modeled as a matroid intersection problem of two partition matroids. So our matroid intersection algorithm together with the above lemma match (up to poly-logarithmic factors induced by fully-persistence)---in a black-box fashion---the $O(|E|\sqrt{|V|})$-time bound of the best \emph{combinatorial} algorithm for bipartite matching \cite{HopcroftK73}.
\end{remark}

\subsection{Graphic and Bicircular Matroids}
\label{appendix:applications-graphic}
Given a graph $G = (V,E)$, the \emph{graphic} and \emph{bicircular} matroids are matroids capturing the connectivity structure of the graph.

\paragraph{Graphic Matroid.}
In the graphic matroids $\cM = (E,\cI)$ a subset of edges $E'\subseteq E$ are independent if and only if they do not contain a cycle.
We use the following result to implement the dynamic oracle for this matroid.

\begin{lemma}[\cite{KapronKM13,GibbKKT15}]
  There is a data structure that maintains an initially-empty graph $G = (V, E)$ and supports insertion/deletion of edges $e$ into/from $E$ in worst case $O(\log^4{|V|})$ time and query of the connectivity between $u$ and $v$ in worst case $O(\log{|V|}/\log{\log{|V|}})$ time.
  The data structure works with high probability against an oblivious adversary.
  \label{lemma:worst-case-dynamic-conn-appendix}
\end{lemma}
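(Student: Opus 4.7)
The plan is to prove this by constructing a hierarchical spanning forest data structure in the style of Kapron-King-Mountjoy. First, I would maintain an arbitrary spanning forest $F$ of $G$ using Euler-tour trees, so that connectivity between $u$ and $v$ reduces to checking whether they lie in the same ET-tree. Using ET-trees with fan-out $\Theta(\log |V|)$ rather than binary balancing, this check can be answered in $O(\log |V| / \log \log |V|)$ time. On edge insertion, the algorithm either attaches the edge to $F$ (if its endpoints lie in distinct components of $F$) or registers it as a non-tree edge, propagating bookkeeping data up the ET-trees in polylog time per level.

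The difficult case is deletion of a tree edge $e$, which splits some tree $T$ of $F$ into $T_1$ and $T_2$; the data structure must then locate, in worst-case polylogarithmic time, some non-tree edge crossing the cut $(T_1,T_2)$ if one exists. The idea I would use is the random cut-set sketch: at each ET-tree node, store for $O(\log |V|)$ geometrically decreasing sampling rates a short XOR sketch of the edge-IDs of the non-tree edges incident to that subtree. Aggregating the sketches for $T_1$ and $T_2$ yields a sketch of the cut; at the sampling level whose rate matches $1/|\mathrm{cut}|$, with constant probability exactly one edge is sampled and that edge can be extracted directly from the XOR. Maintaining $O(\log |V|)$ independent parallel sketches boosts the success probability to work with high probability against an oblivious adversary.

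The main obstacle is achieving a \emph{worst-case} rather than amortized polylogarithmic update time. The classical Holm-Lichtenberg-Thorup structure uses a leveling scheme whose analysis is inherently amortized: a single edge deletion can trigger a long cascade of promotions. To convert to worst-case, I would follow the Gibb-Kapron-King-Thorup approach of maintaining several copies of the sketches at different ``staleness'' levels and lazily rebuilding each copy over a long window of future operations, so that at every point in time at least one copy is fresh enough to answer a replacement-edge query correctly. Because the adversary is oblivious, random hash functions used inside the sketches can be reused across phases without being adaptively attacked, and a careful accounting of the rebuild schedule distributes the $\tilde O(|V|)$ rebuild cost over $\tilde\Omega(|V|)$ updates, giving the claimed $O(\log^4 |V|)$ worst-case update time and $O(\log |V|/\log \log |V|)$ query time.
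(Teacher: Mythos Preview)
The paper does not prove this lemma at all: it is stated as a black-box citation of \cite{KapronKM13,GibbKKT15} and used only as an ingredient (via \cref{cor:worst-case-dynamic-components-appendix}) to implement the dynamic rank oracle for graphic matroids. There is nothing in the paper to compare your proposal against.

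Your write-up is a reasonable high-level description of the ideas in those cited works, but a couple of points are slightly off relative to the actual constructions. First, the original Kapron--King--Mountjoy structure already achieves \emph{worst-case} polylogarithmic update time; it is not obtained by de-amortizing the Holm--Lichtenberg--Thorup leveling scheme, and the mechanism is not ``several copies at different staleness levels with lazy rebuilding.'' Rather, KKM maintain a hierarchy of $O(\log |V|)$ levels where each level carries its own cutset sketches, and a deletion triggers at most $O(\log |V|)$ sketch lookups and ET-tree operations per level, giving worst-case bounds directly. Second, the $O(\log^4 |V|)$ bound comes from the product of the number of levels, the number of independent sketch copies, and the ET-tree update cost, not from spreading a rebuild over future operations. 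If you intend this as a genuine proof rather than a pointer to the literature, those structural details would need to be corrected; for the purposes of this paper, simply citing the result (as the authors do) is the appropriate treatment.
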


With a simple and standard extension, we can maintain the number of connected components as well, and hence also the rank (since $\rank(E') = |V|- \#\text{connected components in }G[E']$).

\begin{corollary}
  There is a data structure that maintains an initially-empty graph $G = (V, E)$ and supports insertion/deletion of $e$ into/from $E$ in worst-case $O(\log^4{|V|})$ time.
  After each operation, the data structure also returns the number of connected components in $G$.
  The data structure works with high probability against an oblivious adversary.
  \label{cor:worst-case-dynamic-components-appendix}
\end{corollary}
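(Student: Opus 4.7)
The plan is to run the dynamic connectivity data structure of \cref{lemma:worst-case-dynamic-conn-appendix} as a black box, and augment it with a single integer counter \(c\) tracking the current number of connected components. Since the initial graph has \(|V|\) isolated vertices, we initialize \(c \gets |V|\).

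On each \(\textsc{Insert}(u, v)\), before inserting the edge, I would query the underlying data structure for whether \(u\) and \(v\) are already connected; if they are not, the insertion merges two components and we decrement \(c\) by one. Then we forward the insertion to the data structure. On each \(\textsc{Delete}(u, v)\), I would first forward the deletion to the data structure, and then query connectivity between \(u\) and \(v\); if they are no longer connected, the deletion was a bridge and we increment \(c\) by one. Each operation therefore costs one connectivity query (\(O(\log |V| / \log\log |V|)\) time) plus one update (\(O(\log^4 |V|)\) time), both of which fit within the stated \(O(\log^4 |V|)\) worst-case bound, and inherit the high-probability correctness against an oblivious adversary from \cref{lemma:worst-case-dynamic-conn-appendix}.

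Correctness follows by a straightforward invariant: before the \(k\)th operation, \(c\) equals the number of connected components of the current graph. For insertion, an edge reduces the component count by exactly one iff its endpoints were previously in different components, which is exactly what the pre-query tests. For deletion, an edge increases the component count by exactly one iff it was a bridge, which is detected by the post-query. There is no real obstacle here; the only subtlety is to be careful about the order (query before insertion, query after deletion) so that the connectivity query reflects the state in which the queried endpoints are \emph{not} connected by the edge being modified, and this is handled by the ordering described above.
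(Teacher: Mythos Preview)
Your proposal is correct and matches the paper's proof essentially verbatim: both maintain a counter initialized to $|V|$, query connectivity before an insertion (decrementing if disconnected) and after a deletion (incrementing if disconnected), and absorb the extra $O(\log|V|/\log\log|V|)$ query cost into the $O(\log^4|V|)$ update bound.
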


\begin{proof}
  We maintain the data structure $\cC$ of \cref{lemma:worst-case-dynamic-conn-appendix} and a counter $c := |V|$ representing the number of connected components.
  For insertion of $e = (u, v)$, we first query the connectivity of $u$ and $v$ before inserting $e$ into $\cC$.
  If they are not connected before the insertion, decrease $c$ by one.
  For deletion of $e = (u, v)$, after deleting $e$ from $\cC$, we check if $u$ and $v$ are still connected.
  If not, then we increase $c$ by one.
\end{proof}

\paragraph{Bicircular Matroid.}
In the bicircular matroid $\cM = (E,\cI)$, a subset of edges $E'\subseteq E$ are independent if and only if each connected component in $G[E']$ has at most one cycle. Similar to the graphic matroid, dynamic connectivity algorithms can be used to implement the dynamic rank oracle for bicircular matroids too.

\begin{corollary}
  There is a data structure that maintains an initially-empty graph $G = (V, E)$ and supports insertion/deletion of $e$ into/from $E$ in worst-case $O(\log^4{|V|})$ time.
  After each operation, the data structure also returns the rank of $E$ in the bicircular matroid.
  The data structure works with high probability against an oblivious adversary.
  \label{cor:worst-case-dynamic-components-appendix-bicircular}
\end{corollary}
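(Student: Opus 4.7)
The plan is to reduce maintenance of the bicircular rank to an augmented version of the dynamic-connectivity structure used for \cref{cor:worst-case-dynamic-components-appendix}. The key identity is
\[
\rank_{\text{bic}}(E) \;=\; |V| - \#\{\text{acyclic components of } (V,E)\} \;=\; |F| + B,
\]
where $F$ is any spanning forest of $(V,E)$ and $B$ is the number of connected components containing at least one cycle. This follows because a component $C$ with $|V(C)|=k$ contributes $\min(|E(C)|,k)$ to the rank, which equals $k-1$ when $C$ is a tree and $k$ when $C$ has a cycle. Since $|F|$ changes by at most one per update and the number of components is already tracked by \cref{cor:worst-case-dynamic-components-appendix}, the only new quantity to maintain is $B$.

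To maintain $B$, I would use the dynamic-connectivity structure of \cref{lemma:worst-case-dynamic-conn-appendix} to expose the spanning forest $F$ and the tree/non-tree classification of every edge. On top of $F$, I would run an Euler-tour-tree (or top-tree) data structure augmented with a per-vertex mark count: each non-tree edge contributes one mark at (say) the smaller-id endpoint. The ET-tree maintains, for each tree of $F$, the total number of marks it contains, and a global counter records how many trees have a positive total---this counter is exactly $B$. The structural fact enabling the augmentation is that non-tree edges always have both endpoints inside one tree of $F$, and a tree only splits when no non-tree edge crosses the split, so marks never migrate between distinct trees.

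Each atomic event updates $B$ in $O(\log|V|)$ extra time. (i)~Insertion or deletion of a non-tree edge toggles one vertex's mark count by $\pm 1$; we compare the tree's new total with zero to decide whether $B$ changed. (ii)~A link (a new tree edge unifying two trees) adds the two totals, and $B$ decreases by one iff both summands were positive. (iii)~Deletion of a tree edge that is replaced by a non-tree edge reduces to case~(i) applied to the promoted edge. (iv)~Deletion of a tree edge without a replacement splits a tree into two pieces whose mark sums are read off by a single split-and-aggregate query on the ET-tree; $B$ is then updated by the number of the two new trees with a positive total. Each high-level edge insertion or deletion triggers $O(1)$ such structural events.

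The dominant cost per operation is the $O(\log^4|V|)$ worst-case time of \cref{lemma:worst-case-dynamic-conn-appendix}; the added ET-tree work is $O(\log|V|)$ and is absorbed, while the high-probability guarantee against an oblivious adversary is inherited unchanged because the augmentation only reads the forest changes and edge classification produced by the underlying structure. The main obstacle is purely an interfacing one: I need the KKM/GKKT structure to expose the sequence of link/cut operations on its internal spanning forest and the tree-versus-non-tree status of each graph edge, and to emit ``promotion'' events when a non-tree edge replaces a deleted tree edge. This interface is standard in HDT-style dynamic connectivity and can be extracted from the proof of \cref{lemma:worst-case-dynamic-conn-appendix}; verifying it carefully against the specific KKM/GKKT construction is the one step where the argument is not immediate.
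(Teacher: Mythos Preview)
Your proposal is correct and is essentially the same approach as the paper, though you spell out more of the implementation than the paper does. The paper simply asserts that the connectivity structure of \cref{lemma:worst-case-dynamic-conn-appendix} ``can be adapted to also keep track of the number of edges and vertices in each connected component'' and then uses the formula $\rank_{\text{bic}}(E)=\sum_c \min(|E_c|,|V_c|)$; your identity $\rank_{\text{bic}}(E)=|F|+B=|V|-\#\{\text{acyclic components}\}$ is an equivalent rewriting of the same sum. Your ET-tree-with-marks mechanism is precisely the kind of augmentation the paper is implicitly invoking to get per-component edge counts (indeed, knowing the number of non-tree-edge marks per tree is exactly what is needed to recover $|E_c|$ after a split), so the two proofs differ only in level of detail, not in substance. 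The interfacing concern you flag---that one must extract link/cut and replacement-edge events from the KKM/GKKT structure---is real and applies equally to the paper's terser argument.
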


\begin{proof}
The dynamic connectivity data structure of 
\cite{KapronKM13,GibbKKT15} (\cref{lemma:worst-case-dynamic-conn-appendix}) can be adapted to also keep track of the number of edges and vertices in each connected component. Using this, the data structure can, for each connected component $c$ keep track of a number $x_c$ as the minimum of the number of edges in this component and the number of vertices in this component. Then the rank of the bicircular matroid is just the sum of $x_c$ (as in an independent set each component is either a tree or a tree with an extra edge). In each update two components can merge, a component can be split up into two, or the edge-count of a component may simply change. 
\end{proof}

\begin{remark}[Deterministic Dynamic Connectivity]
The above dynamic connectivity data structures are randomized.
There are also deterministic connectivity data structures, but with slightly less efficient sub-polynomial $|V|^{o(1)}$ update time~\cite{ChuzhoyGLNPS20}.
\end{remark}

\subsection{Convex Transversal and Scheduling Matroids}

Convex transversal and  scheduling matroids are special cases of the \emph{transversal} matroid, with applications in scheduling algorithms.

\begin{definition}[Transversal Matroid~\cite{edmonds1965transversals}]
  A \emph{transversal matroid} with respect to a bipartite graph $G = (L, R, E)$ is defined over the ground set $L$, where each $S \subseteq L$ is independent if and only if there is a perfect matching in $G$ between $S$ and a subset of $R$.
\end{definition}

A bipartite graph $G = (L, R, E)$ is \emph{convex} if $R$ has a linear order $R = \{r_1, r_2, \ldots, r_n\}$ and each $\ell \in L$ corresponds to an interval $1 \leq s(\ell) \leq t(\ell) \leq n$ such that $(\ell, r_i) \in E$ if and only if $s(\ell) \leq i \leq t(\ell)$, i.e., the neighbors of each $\ell$ form an interval.

\begin{definition}[Convex Transversal Matroid and Simple Job Scheduling Matroid]
  A \emph{convex transversal matroid} is a transversal matroid with respect to a convex bipartite graph.
  A \emph{simple job scheduling matroid} is a special case of convex transversal matroids in which $s(\ell) = 1$ for each $\ell \in L$.
\end{definition}

One intuitive way to think about the simple job scheduling matroid is that there is a machine capable of finishing one job per day.
The ground set of the matroid consists of $n$ jobs, where the $i$-th jobs must be done before its deadline $d_i$.
A subset of jobs forms an independent set if it's possible to schedule these jobs on the machine so that every job is finished before its deadline.

\begin{lemma}[Dynamic Convex Bipartite Matching \cite{BrodalGHK07}]
  There is a data structure which, given a convex bipartite graph $G = (L, R, E)$, after $\tO(|L|+|R|)$ initialization, maintains the size of the maximum matching of $G[A \cup R]$ where $A \subseteq L$ is a dynamically changing subset of $L$ that is initially empty.
  The data structure supports insertion/deletion of an $x \in L$ to/from $A$ in worst-case $O(\log^{2}(|L|+|R|))$ update time.
\end{lemma}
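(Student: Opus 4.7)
The plan is to reduce the problem to maintaining an interval-deficiency quantity via a segment tree over $R$. By the deficiency form of Hall's theorem specialized to convex bipartite graphs, the size of the maximum matching of $G[A \cup R]$ equals
\[
|A| - \max_{1 \leq i \leq j \leq |R|}\max\!\left(0,\;\bigl|\{\ell \in A : [s(\ell), t(\ell)] \subseteq [i,j]\}\bigr| - (j-i+1)\right).
\]
Thus it suffices to maintain, under insertions and deletions of intervals into a multiset, the maximum over all sub-intervals $[i,j] \subseteq [1,|R|]$ of $(\text{contained count}) - (j-i+1)$.

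I would build a balanced segment tree $\mathcal{T}$ over the positions of $R$. At each node $v$ representing a range $[i_v, j_v]$, I would store a small constant-size summary that allows one to reconstruct, for the sub-tree rooted at $v$: (i) the total number of $\ell \in A$ whose interval is fully contained in $[i_v, j_v]$, (ii) the maximum deficiency attained by some sub-interval $[i',j'] \subseteq [i_v,j_v]$, and (iii) auxiliary ``left-anchored'' and ``right-anchored'' deficiency maxima (analogous to the classical maximum-subarray segment tree) so that the summary at $v$ can be computed from the summaries of its two children in $O(1)$ time. An element $\ell$ with interval $[s(\ell),t(\ell)]$ is inserted by identifying the $O(\log(|L|+|R|))$ canonical nodes of $\mathcal{T}$ whose ranges partition $[s(\ell),t(\ell)]$ and incrementing a local ``pending weight'' at each; deletion is symmetric. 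After such an increment at depth $d$, we walk from that node up to the root, recomputing each ancestor's summary in $O(1)$; there are $O(\log(|L|+|R|))$ ancestors per canonical node, giving an $O(\log^2(|L|+|R|))$ worst-case update cost. The matching size is then read off as $|A|$ minus the root's stored maximum deficiency in $O(1)$.

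The initialization in $\widetilde{O}(|L|+|R|)$ is straightforward: build $\mathcal{T}$ as a balanced tree over $|R|$ leaves with all counts zero (the multiset $A$ begins empty), which takes $O(|R|)$ time; later insertions of $L$ elements populate the structure lazily as the algorithm demands.

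The main technical obstacle is defining the node summaries so that the maximum-deficiency statistic composes correctly across a merge. A contained-interval in $[i_v, j_v]$ may lie wholly in one child's range or straddle the midpoint, and in the straddling case the composition requires combining a right-anchored deficiency from the left child with a left-anchored one from the right child while also accounting for the $-(j-i+1)$ length penalty. This is the same flavour of bookkeeping used for maximum-subarray segment trees, and it is precisely the issue addressed by the careful two-level data structure of Brodal, Georgiadis, Hagerup, and Katajainen~\cite{BrodalGHK07}; formalising the composition rule and verifying it under deletions is the step that would occupy the bulk of the write-up.
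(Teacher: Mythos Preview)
The paper does not prove this lemma; it is quoted as a black-box result from \cite{BrodalGHK07}, followed only by a short remark explaining why the amortized bound there can be taken as worst-case in the present setting (all jobs are known up front, so a static balanced tree replaces the weight-balanced tree used in \cite{BrodalGHK07}). There is nothing further to compare your argument against in the paper itself.

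That said, your sketch has a genuine gap already at the level of the formula. The identity
\[
|A|-\max_{1\le i\le j\le |R|}\max\!\bigl(0,\ \bigl|\{\ell\in A:[s(\ell),t(\ell)]\subseteq[i,j]\}\bigr|-(j-i+1)\bigr)
\]
is \emph{not} the size of the maximum matching: take $R=\{1,2,3,4\}$ and $A$ consisting of two elements with interval $[1,1]$ and two with interval $[4,4]$. The maximum matching has size $2$, but your expression evaluates to $4-1=3$ (the single-interval deficiency is $1$, attained at $[1,1]$ or $[4,4]$). The defect form of Hall's theorem requires the maximum of $|S|-|N(S)|$ over all $S\subseteq A$, and for convex bipartite graphs this becomes a \emph{sum} of positive deficiencies over a family of disjoint intervals, not a maximum over a single interval. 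So the quantity you set out to maintain is the wrong one.

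Even granting a corrected target, the update you describe---decompose $[s(\ell),t(\ell)]$ into canonical segment-tree nodes and add a unit ``pending weight'' at each---does not track the number of intervals \emph{contained} in a node's range: a single interval is spread over several canonical nodes and so is counted with multiplicity, and the merge across the midpoint must account for inserted intervals that themselves straddle the midpoint, which is a genuinely two-dimensional dependency that the max-subarray recipe does not handle. This is why \cite{BrodalGHK07} do not use a plain segment tree with $O(1)$ merge; their data structure maintains Glover's greedy matching explicitly and propagates match/unmatch status up the tree. Your closing sentence effectively defers to their construction, so as written the proposal is a pointer back to the cited reference rather than an independent proof.
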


\begin{remark}
  The exact data structure presented in \cite{BrodalGHK07} is different from the stated one.
  In particular, they support insertion/deletion of an \emph{unknown} job, i.e., we do not know beforehand what the starting date and deadline of the job are, nor do we know its relative position among the current set of jobs.
  As a result, they used a rebalancing-based or rebuilding-based binary search tree~\cite{NievergeltR72,Andersson89,Andersson91}, resulting in their amortized bound.
  For our use case, all the possible jobs are known and we are just activating/deactivating them, hence a static binary tree with a worst-case guarantee over these jobs suffices.
\end{remark}

\subsection{Linear Matroid}
In a linear matroid $\cM = (\ground, \cI)$, $\ground$ is a set of $n$ vectors (of dimension $r$) in some vector space and the notion of independence is just that of \emph{linear independence}. The dynamic algorithm to maintain the rank of a matrix of \cite{BrandNS19} can be used without modification as the dynamic oracle.

\begin{lemma}[Dynamic Matrix Rank Maintenance~\cite{BrandNS19}]
  There is a data structure which, given an $n \times n$ matrix $M$, maintains the rank of $M$ under row updates in worst-case $O(n^{1.529})$ update time.
\end{lemma}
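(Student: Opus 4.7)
The plan is to reduce rank maintenance to a batched dynamic linear algebra problem and exploit (rectangular) fast matrix multiplication. At a high level, I would maintain an explicit decomposition that lets me test quickly whether a new row lies in the current row span, together with a periodically rebuilt ``base'' data structure so that the expensive part of the work is shared across many updates.

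Concretely, the first step is to maintain, for the current matrix $M$, a basis $B$ of its row space (as a list of row indices) plus a representation $M = T B$ expressing every row of $M$ as a linear combination of rows of $B$. After a row update to row $i$, I need to (a) decide whether the new row lies in $\mathrm{span}(B \setminus \{\text{row }i\})$, (b) if not, insert the new row into $B$ and evict something if necessary, and (c) update $T$. Doing (a)--(c) naively from scratch costs $O(n^\omega)$. Instead, I would buffer the last $k$ row updates: at the start of a phase, precompute a ``resolvent'' with respect to $M_0$ (the matrix at the start of the phase) in $O(n^\omega)$ time, and answer rank after each update by combining this resolvent with the buffered update matrix $U$, which has rank at most $k$. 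A rank query on $M_0 + U$ then reduces to a rank computation on a $k \times O(k)$ submatrix obtained via a rectangular product between the $O(k) \times n$ projection of $U$ and precomputed $n \times O(k)$ blocks.

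The second step is to balance the two costs. Each in-phase update costs roughly the time of an $n \times n \times k$ or $n \times k \times k$ rectangular matrix product, i.e.\ $O(n^{\omega(1,1,\mu)})$ where $k = n^\mu$; each rebuild costs $O(n^\omega)$ amortized as $O(n^\omega / k)$ per update. Setting the two equal and using the best current bounds for rectangular matrix multiplication gives the exponent $1.529$ (with the current value of $\omega$). Row updates that do not change the rank are just bookkeeping in the buffer; those that do are detected by a rank test on the small $k \times k$ Schur-complement-like matrix, and the actual swap in $B$ can be deferred until the next rebuild by keeping the ``virtual'' basis implicitly inside the buffer.

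The main obstacle is obtaining a \emph{worst-case} guarantee rather than amortized. The $O(n^\omega)$ rebuild must be spread over the $k$ updates of a phase so that no single step is expensive. This would be handled by running two interleaved copies of the data structure (one active, one being rebuilt in the background), and by a look-ahead / deamortization argument in the style of standard dynamic matrix inverse maintenance, together with a careful maintenance of the evicted-row replacements so that the basis $B$ stays correct even across rebuild boundaries. Verifying that the interleaving and buffered eviction do not break the invariant $M = TB$---in particular under adversarial update sequences that repeatedly flip the rank---is the most delicate part.
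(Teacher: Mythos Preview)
The paper does not prove this lemma at all: it is quoted verbatim as a black-box result from \cite{BrandNS19} and used without modification as the dynamic oracle for linear matroids. There is therefore no ``paper's own proof'' to compare against.

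That said, your sketch is broadly in the spirit of the cited reference. The approach in \cite{BrandNS19} is indeed based on dynamic matrix-inverse maintenance: one keeps a precomputed structure tied to a base matrix $M_0$, buffers low-rank updates, reduces each query/update to linear algebra on a small $k\times k$ Schur-complement-type matrix via Woodbury-style identities, uses fast rectangular matrix multiplication to evaluate the necessary products, and balances the per-update cost against an amortized rebuild cost to obtain the exponent (with the standard two-copy deamortization for worst-case bounds). So at the level of a plan, you have the right skeleton.

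Where your write-up is thinner than what a full proof needs: the representation $M = TB$ with an explicit row-basis $B$ is not quite the object one maintains; the workhorse in \cite{BrandNS19} is a dynamic inverse (or pseudo-inverse/Schur complement) of a suitably augmented matrix, and rank changes are detected from whether the corresponding Schur complement is singular. Your ``virtual basis in the buffer'' idea is doing the same job informally, but making it precise requires exactly the Woodbury/Schur machinery rather than just bookkeeping on $T$ and $B$. Also, the exponent $1.529$ does not fall out from equating $n^{\omega(1,1,\mu)}$ with $n^{\omega}/n^{\mu}$ alone; in \cite{BrandNS19} there is a two-level batching (with two thresholds) and a more careful optimization over the rectangular-multiplication exponents to reach that number. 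If you intend to actually carry out the proof rather than cite it, those two points are where the real work lies.
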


\subsection{Problems} \label{appendix:problems}
For completeness, here we define the problems we discuss in the introduction, and why they reduce to matroid union or intersection.

\paragraph{$k$-Forest.} In this problem we are given a graph $G = (V,E)$ and asked to find $k$ edge-disjoint forests of the graph, of the maximum total size.
It can be modeled as the $k$-fold matroids union over the graphic matroid of $G$.

\paragraph{$k$-Disjoint Spanning Trees.} This problem is a special case of the above $k$-forest problem where we ask to find $k$ edge-disjoint spanning trees of the graph. Clearly, if such exists, the $k$-forest problem will find them.

\paragraph{$k$-Pseudoforest.} Similar to above, in this problem we are given a graph $G = (V,E)$ and asked to find $k$ edge-disjoint \emph{pseudoforests} of the graph, of the maximum total size. A pseudoforest is an undirected graph in which every component has at most one cycle.
The problem can be modeled as the $k$-fold matroids union over the bicircular matroid of $G$.

\paragraph{$(f,p)$-Mixed Forest-Pseudoforest.} Again, we are given a graph $G = (V,E)$ and asked to find $f$ forests and $p$ pseudoforest (all edge-disjoint), of the maximum total size. 
The problem can be modeled as the matroids union over $f$ graphic matroids and $p$ bicircular matroids.

\paragraph{Tree Packing.} In the tree packing problem, we are given a graph $G = (V,E)$ and are asked to find the maximum $k$ such that we can find $k$-disjoint spanning trees in the graph. This number $k$ is sometimes called the \emph{tree-pack-number} or \emph{strength} of the graph. The problem can be solved with the $k$-disjoint spanning trees problem, by binary searching for $k$ in the range $[0,|E|/(|V|-1)]$, and is an example of a \emph{matroid packing} problem.

\paragraph{Arboricity and Pseudoarboricity.} The arboricity (respectively pseudoarboricity) of a graph $G = (V,E)$ is the least integer $k$ such that we can partition the edges into $k$ edge-disjoint forests (respectively pseudoforests). 
This can be solved with the $k$-forest (respectively $k$-pseudoforest) problem with a binary search over $k$. It is well known that for a simple graph the (pseudo-)arboricity is at most $\sqrt{|E|}$, so we need only search for $k$ in the range $[0,\sqrt{|E|}]$. The problems are examples of \emph{matroid covering} problems.

\paragraph{Shannon Switching Game.} The Shannon switching game is a game played on a graph $G = (V,E)$, between two players ``Short'' and ``Cut''. They alternate turns with Short playing first, and all edges are initially colored white. On Short's turn, he may color an edge of the graph black. On Cut's turn, he picks a remaining non-black edge and removes it from the graph. Short wins if he connects the full graph with black edges, and Cut if he manages to disconnect the graph. It can be shown that Short wins if and only if there exists two disjoint spanning trees in the graph (and these two spanning trees describes a winning strategy for Short). Hence solving this game is a special case of the $k$-disjoint spanning tree problem with $k = 2$.

\paragraph{Graph $k$-Irreducibility.} 
A (multi-)graph $G = (V,E)$ is called $k$-irreducible (\cite{Whiteley88rigidity}) if and only if $|E| = k(|V|-1)$ and for any vertex-induced nonempty, proper subgraph $G[V']$ it holds that $|E(G[V'])| < k(|V'|-1)$.
The motivation behind this definition comes from the rigidity of \emph{bar-and-body frameworks}. A bar-and-body framework where rigid bars are attached to rigid bodied with joints (represented by the graph $G$).
Then any stress put on a $k$-irreducible structure will propagate to all the bars (i.e.\ edges).
\cite{gabow1988forests} show how one can decide if a graph is $k$-irreducible by first determining if its edges can be partitioned into $k$ edge-disjoint trees, and then performing an additional $\tO(k|V|)$ work.

\paragraph{Bipartite Matching.} In the bipartite matching problem, we are given a bipartite graph $G = (L\cup R,E)$, and the goal is to find a \emph{matching} (a set of edges which share no vertices) of maximum size. Bipartite matching can be modeled as a matroid intersection problem over two partition matroids $M_L = (E,\cI_L)$ and $M_R = (E,\cI_R)$. $M_L$ specifies that no two edges share the same vertex on the left $L$ (and $M_R$ is defined similarly on the right  set of vertices $R$).

\paragraph{Colorful Spanning Tree.} In this problem\footnote{sometimes also called \emph{rainbow spanning tree}.}, we are given a graph $G = (V,E)$ together with colors on the edges $c:E \to \bbZ$. We are tasked to find a spanning tree of $G$ such that no two edges in our spanning tree have the same color. This problem can be modeled by the matroid intersection of the graphic matroid of $G$ (ensuring we pick a forest),
and a partition matroid of the coloring $c$ (ensuring that we pick no duplicate colors). We also note that this problem is more difficult than bipartite matching since any bipartite matching instance can be converted to a colorful spanning tree instance on a star-multi-graph.

\paragraph{Graphic Matroid Intersection.} In graphic matroid intersection we are given two graphs $G_1 = (V_1,E_1)$ and $G_2 = (V_2,E_2)$ and a bijection of the edges $\phi : E_1 \to E_2$. The task is to find a forest in $G_1$ of the maximum size, which also maps to a forest in $G_2$. By definition, this is a matroid intersection problem over two graphic matroids. Again, this problem is a further generalization of the colorful spanning tree problem.

\paragraph{Convex Transversal and Simple Job Scheduling Matroid Intersection.} In these problems, we are given a set of unit-size jobs $V$, where each job $v$ has two release times $\ell_1(v)$, $\ell_2(v) \ge 1$ (in simple job scheduling $\ell(v) = 1$) and two deadlines $r_1(v)$,$r_2(v)\le \mu$. The task is to find a set of jobs $S$ of the maximum size such that they can be scheduled on two machines as follows: each job needs to be scheduled at both machines, and at machine $i$ it must be scheduled at time $t \in [\ell_i(v), r_i(v)]$.

\paragraph{Linear Matroid Intersection}. In this problem, we are given two $n\times r$ matrices $M_1$ and $M_2$ over some field.
The task is to find a set of indices $S\subseteq \{1,2,\ldots,n\}$ of maximum cardinality, such that the rows of $M_1$ (respectively $M_2$) indexed by $S$ are independent at the same time. This is a matroid intersection of two linear matroids defined by $M_1$ and $M_2$. We note that partition, graphic, and transversal matroids are special cases of linear matroids.

\section{Independence-Query Matroid Intersection Algorithm} \label{appendix:dynamic-matroid-intersection-ind}

In this section, we show that we can obtain an $\tO(nr^{3/4})$ matroid intersection algorithm in the dynamic-independence-oracle model.
This matches the state-of-the-art traditional independence-query algorithm of Blikstad \cite{blikstad21}.
We will only provide a proof sketch here because our algorithm is mostly an implementation of \cite{blikstad21} in the new model with the help of (circuit) binary search trees.

Using the same construction as \cref{thm:bst} and \cref{obs:exchange}, circuit binary search trees work verbatim in the dynamic-independence-oracle model (however, co-circuit binary search trees do not).
In particular, \cref{obs:exchange}\ref{item:circuit-exchange} can be checked with a single independence query.

\begin{corollary}
  For any integer $\beta \geq 1$, there exists a data structure that supports the following operations.
  \begin{itemize}
    \item $\textsc{Initialize}(\cM, S, Q_S, X)$: Given $S \in \cI$, the query-set $Q_S$ that corresponds to $S$, and $X \subseteq S$ or $X = \{t\}$, initialize the data structure in $\tO(|X|)$ time. The data structure also maintains $S$.
    \item $\textsc{Find}(y)$: Given $y \in \bar{S}$,
      \begin{itemize}
        \item if $X \subseteq S$, then return an $x \in X$ such that $S - y + x \in \cI$, or
        \item if $X = \{t\}$, then return the only element $x = s$ or $x = t$ in $X$ if $S + y \in \cI$ and $\bot$ otherwise.
      \end{itemize}
      The procedure returns $\bot$ if such an $x$ does not exist.
      The procedure takes $\tO(\beta)$ time if the result is not $\bot$, and $\tO(1)$ time otherwise.
    \item $\textsc{Delete}(x)$: Given $x \in X$, if $x \not\in \{s, t\}$, delete $x$ from $X$ in $O(\log{n})$ time.
    \item $\textsc{Replace}(x, y)$: Given $x \in X$ and $y \not\in X$, replace $x$ in $X$ by $y$ in $O(\log{n})$ time.
    \item $\textsc{Update}(\Delta)$: Update $S$ to $S \oplus (\Delta \setminus \{s, t\})$ in amortized $\tO(\frac{|X| \cdot |\Delta|}{\beta})$ time.
  \end{itemize}
  \label{cor:bst-ind}
\end{corollary}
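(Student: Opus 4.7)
The plan is to mirror the construction used to prove \Cref{thm:bst}, restricted to the cases in which the required oracle test collapses to a single independence query. The crucial observation is \cref{obs:exchange}\ref{item:circuit-exchange}: for $X \subseteq S$ and $y \in \bar{S}$, the existence of an $x \in X$ with $S - x + y \in \cI$ is equivalent to $S - X + y$ being independent, which is one independence query. The $X = \{t\}$ branch is trivial: it only requires testing whether $S + y \in \cI$. Neither case requires comparing ranks, so a dynamic-independence oracle suffices.

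First, I would rebuild the static analogue of \Cref{lemma:bst}, but with each internal node $v$ storing the query-set $Q_v := S - X_v$ instead of $S + X_v$. Starting from the input query-set $Q_S$, the root's query-set $Q_r = S - X$ is obtained by $|X|$ deletions; recursing downward, each child's $Q_{u_1} = S - X_{u_1}$ is obtained from its parent $Q_v = S - X_v$ by re-inserting the $|X_{u_2}|$ elements of the sibling's block, so initialization costs $\tO(|X|)$ in total. The $\textsc{Find}(y)$ routine is then the usual binary descent: at each visited node $v$, construct the set $Q_v + y = S - X_v + y$ from $Q_v$ and ask whether it is independent; by \cref{obs:exchange}\ref{item:circuit-exchange} this correctly localises an exchange element to one of the children, and the descent terminates at a leaf in $O(\log n)$ oracle calls. $\textsc{Delete}(x)$ and $\textsc{Replace}(x, y)$ walk from the leaf of $x$ up to the root and update $X_v$ and $Q_v$ at each ancestor in $O(\log n)$ time, exactly as in the rank-oracle construction. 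To promote this static structure to the dynamic-$S$ statement, I would apply verbatim the batch-and-rebuild wrapper of \Cref{sec:periodic-rebuild}: maintain a pending-update buffer $\tilde{\Delta}$ of size at most $\beta$, along with the up-to-date $S$ and the up-to-date query-set $S - X$; delegate each non-$\bot$ $\textsc{Find}(y)$ to the underlying tree composed with $\tilde{\Delta}$ at a cost of $\tO(|\tilde{\Delta}|) = \tO(\beta)$ oracle operations to re-adjust the visited query-sets, and rebuild once $|\tilde{\Delta}|$ exceeds $\beta$ in $\tO(|X|)$ time, amortising to $\tO(|X|\cdot|\Delta|/\beta)$ per update.

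The one thing requiring care, though not a genuine obstacle, is the $\tO(1)$ short-circuit for a $\bot$ answer: it is obtained by maintaining the up-to-date root query-set $S - X$ on the side and testing independence of $S - X + y$ before launching the descent; without this bookkeeping the failure case would cost $\tO(\beta)$ rather than $\tO(1)$. The $X \subseteq \bar{S}$ (co-circuit) branch of \Cref{thm:bst}, by contrast, genuinely does not transfer to the independence-oracle model, since \cref{obs:exchange}\ref{item:cocircuit-exchange} is a strict rank inequality that cannot be resolved by any single independence test. This is precisely why the statement of \Cref{cor:bst-ind} drops the $X \subseteq \bar{S}$ option relative to \Cref{thm:bst}, and all remaining cases are handled by the circuit construction sketched above.
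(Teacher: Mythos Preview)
Your proposal is correct and follows exactly the paper's approach: the paper's justification of \Cref{cor:bst-ind} is simply that the circuit binary search tree of \Cref{thm:bst} works verbatim in the dynamic-independence-oracle model because \cref{obs:exchange}\ref{item:circuit-exchange} is a single independence test, and you have spelled out the details of that verbatim transfer (nodes storing $S - X_v$, the batch-and-rebuild wrapper, the $\tO(1)$ short-circuit) consistently with the construction in \Cref{sec:bst,sec:periodic-rebuild}.
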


\paragraph{Framework.}
The algorithm of \cite{blikstad21} consists of the following three phases.

\begin{enumerate}
  \item First, obtain an $(1 - \epsilon)$-approximate solution $S$ using augmenting sets in $\tO(\frac{n\sqrt{r}}{\epsilon})$ time.
  \item Eliminate all augmenting paths in $G(S)$ of length at most $d$ using Cunningham's algorithm implemented by \cite{chakrabarty2019faster} in $\tO(nd + nr\epsilon)$ time.
  \item Finding the remaining $O(r/d)$ augmenting paths one at a time, using $\tO(n\sqrt{r})$ time each.
\end{enumerate}

With $\epsilon = r^{-1/4}$ and $d = r^{3/4}$, the total running time is $\tO(nr^{3/4})$.
We briefly sketch how to implement the above three steps in the same running time also for the dynamic-independence-oracle model.

Note that the primary difficulty independence-query algorithms face is that we are only capable of checking \cref{obs:exchange}\ref{item:circuit-exchange} (using \cref{cor:bst-ind}), which means that we can only explore the neighbors of $u \in \bar{S}$.
The aforementioned rank-query algorithms for building distance layers and finding blocking-flow style augmenting paths are thus inapplicable in the dynamic-independence-oracle model.

\paragraph{Approximation Algorithm.}
The $O(n\sqrt{r}/\epsilon)$-query $(1-\epsilon)$-approximation algorithm of \cite{blikstad21} needs to first compute distance layers up to distance $O(\frac{1}{\epsilon})$. This is done in a similar way as sketched below for ``Eliminating Short Augmenting Paths''.

Otherwise, the approximation algorithm works through a series of ``refine'' operations
(algorithms \texttt{RefineAB}, \texttt{RefineBA}, and \texttt{RefineABA} in \cite{chakrabarty2019faster,blikstad21}) to build a partial augmenting set. In each such operation, we only need to be able to do the following for some sets $(P,Q)$: start from some set $Q$ and find a maximal set $X\subseteq P$ such that
$Q+X$ is independent. This can be performed with a greedy algorithm in time (and dynamic query) $O(|P|)$, given that we already have queried set $Q$ before (which will be the case). 

Finally, the approximation algorithm falls back to finding a special type of augmenting paths \emph{with respect to} the current augmenting set, in the \texttt{RefinePath} algorithm of \cite{blikstad21}, with $\tO(n)$ queries for each such path. This algorithm can be implemented also in the dynamic-oracle model with the same query complexity. \texttt{RefinePath} relies on the \texttt{RefineAB} and \texttt{RefineBA} algorithms (which we already covered), in addition to a binary search trick to find feasible exchange pairs. This binary search can be implemented with the circuit trees (\cref{cor:bst-ind}), and it takes a total time $\tO(n)$ to build them (since we can keep track of a queried set for $S$, and then we only need to build a circuit tree statically once for each layer in cost proportional to the size of the layer---which sums up to $\tO(n)$).

\paragraph{Eliminating Short Augmenting Paths.}
Using \cite{chakrabarty2019faster}'s implementation of Cunningham's algorithm, we can eliminate all augmenting paths of length $d$, thereby obtaining a $(1 - 1/d)$-approximate solution.
The algorithm relies on \cref{lemma:monotone} to ``fix'' the distance layers after each augmentation.
Initially, all elements have distance $1$ or $2$ from $s$ depending on whether it belongs to $S$ (the common independent set obtained by the above approximation algorithm) or not.
Before the first and after each of the remaining $O(\epsilon r)$ augmentation, we can fix the distance layers as follows.

\begin{itemize}
  \item For each $1 \leq \ell \leq d$ and $u \in L_{\ell}$, if $u$ is not of distance $\ell$ from $s$, i.e., there is no in-edge from $L_{\ell - 1}$ to $u$ anymore, move $u$ from $L_{\ell}$ to $L_{\ell + 2}$. This check is done as follows, depending on the parity of $\ell$.
  \begin{itemize}
    \item If $\ell$ is even, then for each $v \in L_{\ell - 1}$, we find all the unmarked $u \in L_{\ell}$ that $v$ has an edge to and mark $u$.
    In the end, all the unmarked $u \in L_{\ell}$ do not belong to $L_{\ell}$ and should be moved to $L_{\ell + 2}$.
    \item If $\ell$ is odd, then we simply check if there is an in-edge from $L_{\ell - 1}$ to decide whether $u$ should be moved to other layers.
  \end{itemize}
\end{itemize}

Both cases can be implemented efficiently with circuit binary search trees of \cref{cor:bst-ind}: Each time we spend $\tO(1)$ time to either confirm that $u$ has distance $\ell$ from $s$ with respect to the current $S$ (in which case it will not be moved anymore in this iteration), or we increase the distance estimate of $u$.
The total running time is thus $\tO(nd + nr\epsilon)$, where $\tO(nd)$ comes from increasing the distance estimate of each element to at most $d$, and $\tO(nr\epsilon)$ comes from confirming that each element belongs to its distance layer in the $O(\epsilon r)$ iterations.

A caveat here is that we need to support insertion/deletion into the binary search trees.
This can be made efficient by doubling the size of a binary search tree (and re-initializing it) every time when there are not enough leaf nodes left in it.
The cost of re-building will be amortized to $\tO(1)$ time per update (i.e., movement of an element to another layer).

\paragraph{Finding a Single Augmenting Path.}
With the $(1 - 1/d)$-approximate solution obtained in the first two steps, \cite{blikstad21} then finds the remaining $O(r/d)$ augmenting paths one at a time, using the \emph{reachability} algorithm of \cite{quadratic2021}.
The reachability algorithm roughly goes as follows.
First, we initialize two circuit binary search trees (\cref{cor:bst-ind}) over the two matroids for discovering out-edges and in-edges of elements in $\bar{S}$.
We then repeatedly run the following three steps until either an $(s, t)$-path is found (an arbitrary $(s, t)$-path suffices since such a path can be converted into a chordless one in $\tO(r)$ time along which augmentation is valid) or we conclude that $t$ is unreachable from $s$.
We keep track of a set of visited vertices $F$ which we know are reachable from $s$.

\begin{enumerate}[label=(\roman*)]
  \item Identify the set of unvisited \emph{heavy} vertices in $\bar{S}$ that have at least $\sqrt{r}$ unvisited out-neighbors or has a direct edge toward $t$. This is done by sampling a set $R$ of unvisited vertices in $S$ and then computing for each vertex $u$ whether $R \cap \mathsf{OutNgh}(u) = \emptyset$, or equivalently, whether $S - R + u \in \cI$.
  Intuitively, vertices with more out-neighbors are more likely to fail the test.
  This can be tested for a single $R$ and all $u$ in $O(n)$ time in the dynamic-independence-oracle model.
  With $O(\log{n})$ samples, heavy vertices can be successfully identified with high probability.
  \item Discover all the out-neighbors for each light vertex, taking a total of $\tO(n\sqrt{r})$ time using the circuit binary search tree over the whole run of the algorithm. (Each vertex turns from heavy to light at most once.)
  \item Perform a reversed breadth-first-search from all the heavy vertices simultaneously. We can assume that every vertex on the path is light (i.e., we find a ``closest'' heavy vertex reachable from $s$), and thus all its out-neighbors have already been discovered.
  That is, going backward from $S$ to $\bar{S}$, we use the out-edges of light vertices.
  From $\bar{S}$ to $S$, we use the circuit binary search tree.
  This takes $\tO(n)$ time, and we either find a heavy vertex reachable from $s$ (in which case we make progress by visiting at least $\sqrt{r}$ vertices in $S$), or we conclude that all heavy vertices are unreachable from $s$ (in which case $t$ is unreachable either).
\end{enumerate}

The number of iterations is bounded by $O(\sqrt{r})$ since we discover at least $\sqrt{r}$ unvisited vertices in $S$ every time.
The total running time of finding a single augmenting path is thus $\tO(n\sqrt{r})$.

Using the same parameters $\epsilon$ and $d$ as in \cite{blikstad21} to combine the three phases, we obtain the following matroid intersection algorithm in the dynamic-independence-oracle model.

\begin{theorem}
  For two matroids $\cM_1 = (\ground, \cI_1)$ and $\cM_2 = (\ground, \cI_2)$, it takes $\tO(nr^{3/4})$ time to obtain the largest $S \in \cI_1 \cap \cI_2$ with high probability in the dynamic-independence-oracle model.
  \label{thm:dynamic-matroid-intersection-ind-main}
\end{theorem}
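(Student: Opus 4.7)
My plan is to adapt the three-phase framework of Blikstad~\cite{blikstad21} (approximate, eliminate short augmenting paths, complete one path at a time) and show that each phase can be implemented in the dynamic-independence-oracle model within the same query complexity as the traditional independence-oracle setting. Since the co-circuit version of the binary search tree of \cref{thm:bst} relies on rank queries of the form $\rank(S-y+X) \geq |S|$, which are unavailable here, the whole proof will be routed through \emph{circuit} binary search trees only. Fortunately \cref{obs:exchange}\ref{item:circuit-exchange} characterizes circuit exchanges by a single independence test, so \cref{cor:bst-ind} gives us a version of the batching/periodic-rebuilding data structure for the circuit side in $\tO(\beta)$ amortized time per find, matching the properties we need.

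First I would handle Phase~1, obtaining a $(1-\epsilon)$-approximate common independent set in $\tO(n\sqrt{r}/\epsilon)$ time. This uses the \texttt{Refine} primitives of \cite{chakrabarty2019faster,blikstad21}, each of which either (i) greedily extends a queried set $Q$ by a maximal independent subset of some $P$, which costs $\tO(|P|)$ dynamic independence queries if we keep $Q$ persistent, or (ii) runs \texttt{RefinePath}, whose binary searches for exchange pairs can be executed with the circuit binary search trees once we have statically built one per distance layer (total build cost $\tO(n)$ since layers are disjoint). Second, for Phase~2, I would port Cunningham's layered elimination algorithm to the dynamic-independence-oracle model: maintain the distance layers $L_1,\dots,L_d$ explicitly, and after each augmentation ``promote'' vertices whose distance from $s$ has strictly increased, using \cref{lemma:monotone}. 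To do this efficiently, for odd $\ell$ each check is a single circuit test (whether an in-edge from $L_{\ell-1}$ still exists), and for even $\ell$ we scan the out-neighbors of $v \in L_{\ell-1}$ via the circuit tree. The amortized cost is $\tO(nd)$ for promotions (each element promoted at most $d$ times) plus $\tO(nr\epsilon)$ for the confirmation checks across the $O(\epsilon r)$ augmentations; amortized rebuilds handle insertions into the circuit trees.

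Third, for Phase~3 I would implement the reachability routine of \cite{quadratic2021} to find each of the remaining $O(r/d)$ augmenting paths in $\tO(n\sqrt{r})$ time. The routine proceeds in $O(\sqrt{r})$ rounds, where each round (a) identifies unvisited \emph{heavy} vertices (those with $\geq \sqrt{r}$ unvisited out-neighbors or a direct edge to $t$) by probing $O(\log n)$ random subsets $R$ via independence tests $S-R+u \in \cI$, (b) explicitly discovers all out-neighbors of newly light vertices via the circuit tree, amortizing $\tO(n\sqrt{r})$ across the whole phase since a vertex becomes light only once, and (c) runs a reverse BFS from heavy vertices whose backward steps use either the precomputed adjacency lists of light vertices or the circuit tree. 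Each round progresses by visiting at least $\sqrt{r}$ vertices of $S$, giving the $O(\sqrt{r})$ bound. With $\epsilon = r^{-1/4}$ and $d = r^{3/4}$ the three phases balance to $\tO(nr^{3/4})$.

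The hard part of this proposal is Phase~2: \cite{chakrabarty2019faster}'s implementation of Cunningham's algorithm explicitly uses co-circuit queries to quickly identify all out-neighbors in the next layer of an element of $S$, and replacing those with circuit-side binary searches requires handling the asymmetry carefully, in particular justifying that the dynamic independence-query tree supports the required insertions (via the doubling/rebuilding trick) without inflating the amortized cost beyond $\tO(1)$ per layer-change. The other moving parts---\cref{cor:bst-ind}, the persistent queried-set discipline of the dynamic model, and the reachability subroutine---plug in essentially unchanged, so once Phase~2 is reconciled the overall $\tO(nr^{3/4})$ bound follows by the same parameter balance as in the traditional model.
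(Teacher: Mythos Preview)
Your proposal is correct and follows essentially the same approach as the paper: the three-phase framework of \cite{blikstad21} with circuit-only binary search trees (\cref{cor:bst-ind}), the same handling of the odd/even layer asymmetry in Phase~2 (including the doubling/rebuild trick for insertions), the heavy/light reachability routine of \cite{quadratic2021} for Phase~3, and the same parameter choice $\epsilon=r^{-1/4}$, $d=r^{3/4}$. Your identification of Phase~2 as the delicate part, and how to resolve it, matches the paper exactly.
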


\end{document}